\keywords{Algebraic cut elimination, Parameter-free second order logic, MacNeille completion, Omega-rule.}
\theoremstyle{plain} 
\newcommand{\m}{\mathbf} 
\newcommand{\mc}{\mathcal}
\newcommand{\sep}{\; | \;} 
\newcommand{\bnf}{\; ::= \;}
\newcommand{\Def}{\; := \;}
\newcommand{\Iff}{\quad \Longleftrightarrow \quad}
\newcommand{\Implies}{\quad \Longrightarrow \quad}
\newcommand{\fsubseteq}{\subseteq_{\mathsf{fin}}}
\newcommand{\fwp}{\wp_{\mathsf{fin}}}
\newcommand{\ldd}{{\bbslash}}
\newcommand{\GLC}{\mathbf{G}^1\mathbf{LC}}
\newcommand{\LI}{\mathbf{LI}}
\newcommand{\HA}{\mathbf{HA}}
\newcommand{\PA}{\mathbf{PA}}
\newcommand{\PRA}{\mathbf{PRA}}
\newcommand{\ZT}{\mathbf{PA2}}
\newcommand{\HAT}{\mathbf{HA2}}
\newcommand{\PICA}{\Pi^1_1\mbox{-{\bf CA}}_0}
\newcommand{\ID}{\mathbf{ID}}
\newcommand{\IS}{\mathbf{I\Sigma}_1}
\newcommand{\LIT}{\mathbf{LI2}}
\newcommand{\LIP}{\mathbf{LIP}}
\newcommand{\LIO}{\mathbf{LI}\Omega}
\newcommand{\Var}{\mathsf{Var}}
\newcommand{\VAR}{\mathsf{VAR}}
\newcommand{\Tm}{\mathsf{Tm}}
\newcommand{\Fm}{\mathsf{Fm}}
\newcommand{\FM}{\mathsf{FM}}
\newcommand{\FMP}{\mathsf{FMP}}
\newcommand{\ABS}{\mathsf{ABS}}
\newcommand{\SEQ}{\mathsf{SEQ}}
\newcommand{\Fv}{\mathsf{Fv}}
\newcommand{\FV}{\mathsf{FV}}
\newcommand{\Lfp}{\boldsymbol{lfp}}
\newcommand{\Nat}{\mathbb{N}}
\newcommand{\Nn}{\boldsymbol{N}}
\newcommand{\Fix}{\boldsymbol{I}}
\newcommand{\Gal}{\mathcal{G}}
\newcommand{\Rel}{\mathrel{R}}
\newcommand{\Left}{\;\mathsf{left}}
\newcommand{\Right}{\;\mathsf{right}}
\newcommand{\Cut}{\mathsf{cut}}
\newcommand{\Rcf}{\Rightarrow^{cf}}
\newcommand{\Code}[1]{\ulcorner #1 \urcorner}
\newcommand{\Rk}{\mathsf{rank}}
\newcommand{\LLIO}{\boldsymbol{LI}\Omega}
\newcommand{\LLIP}{\boldsymbol{LIP}}
\newcommand{\LLI}{\boldsymbol{LI}}
\newcommand{\Cl}{\boldsymbol{\gamma}}
\newcommand{\II}{\boldsymbol{Ip}}
\newcommand{\Sf}{\mathsf{Sf}}
\newcommand{\TTm}{\boldsymbol{Tm}}
\newcommand{\SSeq}{\boldsymbol{SEQ}}
\newcommand{\FFset}{\boldsymbol{Fset}}
\newcommand{\IF}{\boldsymbol{Ip}_{F}}
\begin{document}

\title[MacNeille completion and Buchholz' Omega rule]{MacNeille completion and Buchholz' Omega rule for parameter-free second order logics}

\author[K.~Terui]{Kazushige Terui}	
\address{Research Institute for Mathematical Sciences, Kyoto University,
Kitashirakawa-Oiwakecho, Sakyoku, Kyoto 606-8502, Japan.}	
\email{terui@kurims.kyoto-u.ac.jp}  
\thanks{Supported by KAKENHI 25330013.}





\begin{abstract}
Buchholz' \emph{$\Omega$-rule} is a way to give a syntactic, possibly 
ordinal-free proof of cut elimination for various subsystems of
second order arithmetic. 
Our goal is to understand it from an algebraic point of view. 
Among many proofs of cut elimination for higher order logics, Maehara and
Okada's algebraic proofs are of particular interest, since the essence of 
their arguments can be algebraically described as the 
(Dedekind-)\emph{MacNeille completion} 
together with Girard's reducibility candidates. Interestingly, 
it turns out that the $\Omega$-rule, formulated as a rule of logical inference, 
finds its algebraic foundation in the MacNeille completion. 

In this paper, we consider 
the \emph{parameter-free} fragments $\{\LIP_n\}_{n<\omega}$
of the second order intuitionistic logic, that correspond to 
the arithmetical theories $\{\ID_n\}_{n<\omega}$
of iterated inductive definitions up to $\omega$. 
In this setting, we observe 
a formal connection between the $\Omega$-rule and the MacNeille completion,
that leads to a way of interpreting second order quantifiers 
in a first order way in Heyting-valued semantics, called the 
\emph{$\Omega$-interpretation}. Based on this,
we give an algebraic proof of 
cut elimination for $\LIP_n$ for every $n<\omega$ 
that can be locally formalized in $\ID_n$.
As a consequence, we obtain an equivalence between
the cut elimination for $\LIP_n$ and the 1-consistency of $\ID_n$ 
that holds in a weak theory of arithmetic.
\end{abstract}

\maketitle


\section*{Introduction}

This paper is concerned with cut elimination for subsystems of 
second order logics. It is of course very well known that 
the full second order classical/intuitionistic logics admit 
cut elimination. Then why are we interested in their subsystems?
A primary reason is that proving cut elimination for a subsystem 
is often very hard if one is sensitive to the metatheory within which
(s)he works. This is witnessed by the vast literature
in the traditional proof theory. In fact, proof theorists are not 
just interested 
in proving cut elimination itself, but in identifying
a \emph{characteristic principle} $P$ 
(e.g.\ ordinals, 
combinatorial principles and inductive definitions)
for each system of logic, arithmetic and set theory,
by proving cut elimination
within a weak metatheory (e.g.\ $\PRA$, $\IS$ or
$\mathbf{RCA}_0$) extended by $P$.
Our long-span goal is to understand
those hard proofs and results from an algebraic perspective.

\subsection*{Various proofs of cut elimination}
One can distinguish several types of cut elimination proofs 
for higher order logics/arithmetic: (i) syntactic proofs by ordinal assignment
(e.g.\ Gentzen's consistency proof for $\PA$),
(ii) syntactic but ordinal-free proofs, (iii) semantic proofs
based on Sch\"{u}tte's semivaluation or its variants (e.g.\ \cite{Tait66}),
(iv) algebraic proofs based on completions (the list is not intended 
to be exhaustive).
Historically (i) and (iii) precede (ii) and (iv), but 
(i) is not easy to follow up due to the heavy use of ordinal notations,
while (iii) is not completely satisfactory for computer scientists
since it involves
\emph{reductio ad absurdum} and weak K\"{o}nig's lemma,
that would destruct the proof structure: the output cut-free proof may 
have nothing to do with the input proof.
Hence we address (ii) and (iv) in this paper.

For (ii), a very useful and versatile technique is Buchholz' 
\emph{$\Omega$-rule}. Although 
introduced in the context of ordinal analysis
\cite{Buchholz81a}, the technique itself can be understood 
without recourse to any ordinals 
\cite{Buchholz01,Akiyoshi_Mints11,Akiyoshi17}.
Still, the $\Omega$-rule is notoriously complicated and is 
hard to grasp its meaning at a glance. Even its semantic soundness is 
not clear at all. 
While Buchholz gives an account based on the BHK interpretation
\cite{Buchholz81a},
we will try to give an algebraic account in this paper.

For (iv), there is a very conspicuous algebraic proof of
cut elimination for higher order logics which may be primarily ascribed to 
Maehara \cite{Maehara91} and Okada \cite{Okada96,Okada02}.
In contrast to (iii), these algebraic proofs 
are fully constructive; no use of reductio ad absurdum or any 
nondeterministic principle. More importantly, it extends 
to proofs of normalization for proof nets and typed lambda calculi 
\cite{Okada99}. While their arguments can be described in various
dialects (e.g.\ phase semantics of linear logic),
apparently most neutral and 
most widely accepted would be to speak in terms of algebraic completions:
the essence of their arguments can be described 
as the (\emph{Dedekind}-)\emph{MacNeille
completion} together with Girard's reducibility candidates, 
as we will explain in Appendix \ref{a-okada}.

\subsection*{Contents of this paper}
Having a syntactic technique on one hand and an algebraic methodology on
the other, it is natural to ask the relationship between them.
To make things concrete, we consider
the \emph{parameter-free} fragments $\{\LIP_n\}_{n<\omega}$
of the standard sequent calculus $\LIT$ for 
the second order intuitionistic logic. These fragments altogether 
constitite an intuitionistic counterpart of 
the classical sequent calculus studied in \cite{Takeuti58}.
Although we primarily work on the intuitionistic basis,
all results in this paper (except Proposition \ref{p-counter2})
carry over to the classical logic too.

As we will see, cut elimination based on the $\Omega$-rule technique works for 
$\LIP_n$ for every $n<\omega$. 
Moreover, it turns out to be intimately related to
the MacNeille completion in that the $\Omega$-rule in our setting is 
not sound in Heyting-valued semantics in general,
but is sound when the underlying algebra is 
the MacNeille completion of the Lindenbaum algebra.
This observation leads to a curious way of interpreting second order formulas 
in a first order way, that we call the \emph{$\Omega$-interpretation}. 
The basic idea already appears 
in Altenkirch and Coquand \cite{Altenkirch_Coquand01}, but 
ours is better founded and accommodates the existential quantifier too.

The $\Omega$-interpretation in conjunction with the MacNeille completion
gives rise to an algebraic proof of (partial) cut elimination for $\LIP_n$,
that is comparable with 
Aehlig's result \cite{Aehlig05} for
the parameter-free, negative fragments
of second order Heyting arithmetic.
The $\Omega$-interpretation is essentially first order. In particular, it 
does not employ the reducibility candidates.  Hence it is 
``locally'' formalizable in (the intuitionistic version of) $\ID_n$. 
This leads to a correspondence between $\ID$-theories in arithmetic 
and parameter-free logics, that we call the \emph{Takeuti correspondence}:
the cut elimination for $\LIP_n$ is equivalent to 
the 1-consistency of $\ID_n$.

\subsection*{Outline}
The rest of this paper is organized as follows.
In Section \ref{s-macneille}
 we recall some basics of the MacNeille completion.
In Section \ref{s-takeuti}
 we review the theories of iterated inductive definitions up to $\omega$,
introduce the parameter-free systems $\LIP_n$ ($n < \omega$), 
and prove one direction of the Takeuti correspondence between 
$\LIP_n$ and $\ID_n$.
In Section \ref{s-omega}
 we introduce the $\Omega$-rules
in our logical setting and explain how they work for 
$\LIP_n$ by giving a syntactic proof of cut elimination. 
In Section \ref{s-omegamacneille}, 
we turn to the algebraic side and establish a connection between
the $\Omega$-rule and 
the MacNeille completion, that leads to the concept of $\Omega$-interpretation.
In Section \ref{s-algebraic}, 
we given an algebraic proof of (partial) cut elimination for
$\LIP_n$ based on the $\Omega$-interpretation. 
In Section \ref{s-formalize}, we sketch a (local) formalization 
of our algebraic argument, that establishes the Takeuti correspondence
between $\LIP_n$ and $\ID_n$ for every $n<\omega$.

\begin{rem}
One often distinguishes \emph{cut elimination} from \emph{cut admissibility}
(or \emph{cut eliminability}). While the former gives a concrete procedure,
the latter only ensures the existence of a cut-free derivation.
Although our algebraic argument will 
only establish cut admissibility, we prefer to use the word
cut elimination. A reason is that the statement of cut admissibility
is $\Pi^0_2$, so a concrete procedure can be extracted from its proof
(especially noting that our proof is fully constructive).
Of course, this does not ensure that the procedure respects
proof equivalence in any sense. 
Hence we do not make any formal claim on this point.
\end{rem}

\section{MacNeille completion}\label{s-macneille}

Let $\m A = \langle A, \wedge, \vee\rangle$ be  a lattice. 
A \emph{completion} of $\m A$ is an embedding
$e: \m A \longrightarrow \m B$ into a complete lattice 
$\m B = \langle B, \wedge, \vee\rangle$.
We often assume that $e$ is an inclusion map so that
$\m A$ is a subalgebra of $\m B$ (notation: $\m A \subseteq \m B$).

Here are two examples.
\begin{itemize}
\item Let $[0,1]_{\mathbb{Q}} := [0,1]\cap \mathbb{Q}$ 
be the chain of rational numbers in the unit interval 
(seen as a lattice).
Then it admits an obvious completion 
$[0,1]_{\mathbb{Q}} \subseteq [0,1]$.
\item Let $\m A$ be a Boolean algebra. Then it also admits a completion
$e: \m A \longrightarrow \m A^\sigma$, 
where $\m A^\sigma := \langle \wp(\mathsf{uf}(\m A)), 
\cap, \cup, -, A, \emptyset\rangle$,
the powerset algebra on the set of ultrafilters of $\m A$,
and $e(a) := \{u \in \mathsf{uf}(\m A) : a \in u\}$.
\end{itemize}

A completion $\m A \subseteq \m B$ is \emph{$\bigvee$-dense}
if $x = \bigvee \{a \in A: a\leq x\}$ holds for every $x\in B$.
It is \emph{$\bigwedge$-dense} if 
$x = \bigwedge \{a \in A: x \leq a\}$.
A $\bigvee$-dense and $\bigwedge$-dense completion is called 
a \emph{MacNeille completion} (or a \emph{Dedekind-Macneille completion}). 
This means that 
any $\m B$-element can be approximated from above and below
by $\m A$-elements. The following is a classical result
\cite{Banaschewski56,Schmidt56}.

\begin{thm}
Every lattice $\m A$ has a MacNeille completion unique up to isomorphism.
Any MacNeille completion is \emph{regular}, 
that is, preserves all joins and meets that already exist in $\m A$.
\end{thm}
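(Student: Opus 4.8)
The plan is to construct the MacNeille completion explicitly via Galois-connected cuts and then verify the listed properties one at a time.

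\medskip

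\noindent\textbf{Construction.} First I would fix a lattice $\m A = \langle A, \wedge, \vee\rangle$ and consider the two polarities between subsets of $A$ induced by the order: for $X \subseteq A$ put $X^u := \{a \in A : x \leq a \text{ for all } x \in X\}$ (the upper bounds) and $Y^l := \{a \in A : a \leq y \text{ for all } y \in Y\}$ (the lower bounds). The composite $X \longmapsto X^{ul}$ is a closure operator on $\wp(A)$, and I would take $\m B$ to be the set of all \emph{closed} sets, i.e.\ those $X$ with $X = X^{ul}$, ordered by inclusion. Meets in $\m B$ are intersections, and joins are $\bigvee_i X_i = \bigl(\bigcup_i X_i\bigr)^{ul}$; this makes $\m B$ a complete lattice. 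The embedding is $e(a) := {\downarrow}a = \{b \in A : b \leq a\}$, which is closed since ${\downarrow}a = \{a\}^{ul}$, and one checks $e$ preserves $\wedge$ and $\vee$ and is injective because $a \leq b \iff {\downarrow}a \subseteq {\downarrow}b$. Density in both directions is then essentially built in: for any closed $X$ one has $X = \bigvee\{e(a) : e(a) \subseteq X\}$ because $a \in X$ iff ${\downarrow}a \subseteq X$, and dually $X = \bigcap\{e(a) : X \subseteq e(a)\} = \bigcap\{e(a) : a \in X^u\}$, giving $\bigwedge$-density.

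\medskip

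\noindent\textbf{Regularity.} Suppose $a = \bigvee_{i} a_i$ already holds in $\m A$. I would show $e(a) = \bigvee_i e(a_i)$ in $\m B$. The inequality $\geq$ is clear. For $\leq$: the join $\bigvee_i e(a_i)$ equals $\bigl(\bigcup_i {\downarrow}a_i\bigr)^{ul}$; an element of $\bigl(\bigcup_i {\downarrow}a_i\bigr)^{u}$ is exactly an upper bound of all the $a_i$, hence (since $a$ is the \emph{least} such bound in $\m A$, when it exists) lies above $a$, so $\bigl(\bigcup_i {\downarrow}a_i\bigr)^{u} = {\uparrow}a$ and therefore $\bigl(\bigcup_i {\downarrow}a_i\bigr)^{ul} = {\downarrow}a = e(a)$. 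The argument for existing meets is order-dual.

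\medskip

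\noindent\textbf{Uniqueness.} Given two MacNeille completions $\m A \subseteq \m B$ and $\m A \subseteq \m B'$, I would define $\varphi : \m B \to \m B'$ by $\varphi(x) := \bigvee\{a \in A : a \leq x\}$ (join taken in $\m B'$) and $\psi : \m B' \to \m B$ symmetrically, then use $\bigvee$-density and $\bigwedge$-density to check $\varphi,\psi$ are mutually inverse and order-preserving: the key computation is that for $x \in \m B$, $\psi(\varphi(x)) = \bigwedge\{a : \varphi(x) \leq a\}$, and $\varphi(x) \leq a$ (in $\m B'$) iff $b \leq a$ for all $b \leq x$, iff (by $\bigwedge$-density of $\m B$) $x \leq a$; so $\psi(\varphi(x)) = \bigwedge\{a : x \leq a\} = x$. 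An order-isomorphism between lattices is automatically a lattice isomorphism, and it fixes $A$ pointwise by construction.

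\medskip

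\noindent The step I expect to require the most care is \emph{regularity}, since it is where the specific universal properties of the cut construction ($X^{ul}$ being the \emph{least} closed set containing $X$, and $X^u$ being closed under passing above existing joins) must be invoked precisely; the completeness of $\m B$ and the density properties are largely formal consequences of the polarity/Galois-connection machinery, and uniqueness is a routine back-and-forth once density is in hand. For the interested reader these are exactly the classical arguments of Banaschewski \cite{Banaschewski56} and Schmidt \cite{Schmidt56}.
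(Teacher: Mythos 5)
Your proposal is correct. Note that the paper does not prove this theorem at all --- it is stated as a classical result with citations to Banaschewski and Schmidt --- so there is no internal proof to compare against; what you construct, the complete lattice of Galois-closed sets of the polarity $\langle A, A, \leq\rangle$ with embedding $e(a) = \{a\}^{ul} = {\downarrow}a$, is precisely the device the paper itself deploys later: Theorem \ref{t-regular} shows that $\gamma : \m A \longrightarrow \m{W}_{\m A}^+$ is a MacNeille completion when $\m A$ is a Heyting algebra, and your existence-and-density argument is the lattice reduct of that proof. Your regularity argument (reducing $\bigl(\bigcup_i {\downarrow}a_i\bigr)^{u}$ to ${\uparrow}\bigvee_i a_i$ via the least-upper-bound property) and your back-and-forth uniqueness argument are the standard ones and go through.

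Two minor points of hygiene, neither affecting correctness. First, in the uniqueness step you define $\psi$ ``symmetrically'' to $\varphi$, i.e.\ by a join formula $\psi(y) = \bigvee\{a \in A : a \leq y\}$, but then compute $\psi(\varphi(x))$ as $\bigwedge\{a : \varphi(x) \leq a\}$; on a MacNeille completion these two expressions agree, but that equality itself needs a one-line check (the join is trivially below the meet, and the converse follows from the density rules of Proposition \ref{p-macneille}) --- it is cleaner to define $\psi$ by the meet formula from the outset. Second, the inference from ``$b \leq a$ for all $b \leq x$'' to ``$x \leq a$'' is an instance of $\bigvee$-density of $\m B$ (the left rule of Proposition \ref{p-macneille}), not of $\bigwedge$-density as you label it.
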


Coming back to the previous examples:
\begin{itemize}
\item $[0,1]_{\mathbb{Q}} \subseteq [0,1]$ is MacNeille, since
$
x = \inf \{ a \in \mathbb{Q} : x \leq a\} = \sup \{a \in \mathbb{Q} : a \leq x\}$
for any $x \in [0,1]$. It is regular since if $q = \lim_{n \rightarrow \infty}
q_n$ holds in $\mathbb{Q}$, then it holds in $\mathbb{R}$ too. 
\item $e: \m A \longrightarrow \m A^\sigma$ is not regular
when $\m A$ is an infinite Boolean algebra. In fact, 
the Stone space 
$\mathsf{uf}(\m A)$ is compact, so collapses any infinite covering of 
a closed set into a finite one. 
It is actually a \emph{canonical extension},
that has been extensively studied in ordered algebra and 
modal logic \cite{JT51,GJ04,GH01}.
\end{itemize}

MacNeille completions behave better than
canonical extensions in the preservation of existing limits,
but the price to pay is the loss of generality.
Let $\mathcal{DL}$ ($\mathcal{HA}$,
$\mathcal{BA}$, resp.) be the variety of distributive lattices
(Heyting algebras, Boolean algebras, resp.).

\begin{thm}\label{t-limit}~
\begin{itemize}
\item $\mathcal{DL}$ is not closed under MacNeille completions
\cite{Funayama44}.
\item $\mathcal{HA}$ and $\mathcal{BA}$ are 
closed under MacNeille completions.
\item $\mathcal{BA}$ is 
the only nontrivial proper subvariety of $\mathcal{HA}$ closed under MacNeille
completions
\cite{HB04}.
\end{itemize}
\end{thm}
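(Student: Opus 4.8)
The three items are logically independent, and I would take them in the order $(2),(1),(3)$. For item~(2), the case of $\mathcal{HA}$, the plan is to work with the concrete presentation of the MacNeille completion $\ov A$ of $\m A$ as the complete lattice of \emph{cuts}: pairs $(L,U)$ with $L$ a downset, $U$ an upset, $L=\{a\in A:a\leq u\text{ for all }u\in U\}$ and $U=\{a\in A:l\leq a\text{ for all }l\in L\}$, ordered by $L_1\subseteq L_2$. It is standard (Banaschewski--Schmidt \cite{Banaschewski56,Schmidt56}) that $\m A$ then sits inside $\ov A$ as a $\bigvee$-dense and $\bigwedge$-dense sublattice with $0,1$ preserved, so the point is to show that $\ov A$ is a \emph{frame}, i.e.\ $x\wedge\bigvee_i y_i=\bigvee_i(x\wedge y_i)$: a frame is automatically a complete Heyting algebra, and the residual it carries extends $\to$ of $\m A$ (for $a,b\in A$ the set $\{c\in A:c\wedge a\leq b\}$ has greatest element $a\to b$, so its $\ov A$-join is $a\to b$), whence $\ov A\in\mathcal{HA}$. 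To get the frame law I would first fix $a\in A$ and show that $a\wedge(-)\colon\ov A\to\ov A$ has a right adjoint, namely $x\mapsto\bigvee\{c\in A:c\wedge a\leq x\}$; the one nonobvious inclusion uses \emph{both} densities and the implication of $\m A$: if $d\in A$ lies below $a\wedge\bigvee\{c\in A:c\wedge a\leq x\}$, then for every $b\in A$ with $x\leq b$ each such $c$ satisfies $c\wedge a\leq b$, hence $c\leq a\to b$, so $d\leq a\to b$; with $d\leq a$ this gives $d\leq b$, and $\bigwedge$-density yields $d\leq x$, whence $\bigvee$-density yields $a\wedge\bigvee\{c:\dots\}\leq x$. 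Having a right adjoint, $a\wedge(-)$ preserves arbitrary joins for $a\in A$, and the general frame law follows by a routine $\bigvee$-density computation. For $\mathcal{BA}$ one argues the same way that $\neg_{\ov A}x=\bigvee\{a\in A:a\wedge x=0\}$ and $\neg\neg x=x$, reducing to Booleanness of $\m A$ and the two densities. The adjunction (equivalently the frame law) is the only genuinely delicate point in this item.

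For item~(1), I would simply invoke Funayama's construction \cite{Funayama44}: a countable distributive lattice $\m D$ whose MacNeille completion is not distributive. The mechanism is that suprema in $\ov D$ of suitably chosen ascending chains of $\m D$ --- new elements of $\ov D$ --- together with their pairwise joins form a copy of the diamond $M_3$ (alternatively one exhibits a pentagon $N_5$ among completion elements), and distributivity of $\m D$ does nothing to prevent it. This is exactly the contrast with item~(2): without an operation like $\to$ there is nothing that forces $\ov D$ to be a frame.

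For item~(3) I would first reduce. Since $\{0,1\}$ is a subalgebra of every nontrivial Heyting algebra, $\mathcal{BA}=\mathbf{HSP}(\mathbf 2)$ is contained in every nontrivial subvariety of $\mathcal{HA}$; combined with item~(2) it therefore suffices to show that any subvariety $\mathcal V$ with $\mathcal{BA}\subsetneq\mathcal V$ that is closed under MacNeille completions equals $\mathcal{HA}$. Step~A: if $\mathcal V\neq\mathcal{BA}$ then some $\m B\in\mathcal V$ is non-Boolean, so there is $a\in B$ with $a\vee\neg a\neq 1$; since $0<a\vee\neg a<1$ and $\{0,\,a\vee\neg a,\,1\}$ is closed under $\wedge,\vee,\to$ (using $\neg(a\vee\neg a)=\neg a\wedge\neg\neg a=0$), it is a subalgebra isomorphic to the three-element chain $\mathbf 3$, so $\mathbf 3\in\mathcal V$ and hence $\mathbf{HSP}(\mathbf 3)\subseteq\mathcal V$. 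Step~B, the heart of the matter and the part I would cite to \cite{HB04}: every finite Heyting algebra $H$ embeds into --- or is a homomorphic image of a subalgebra of --- the MacNeille completion $\ov{H'}$ of some countable $H'\in\mathbf{HSP}(\mathbf 3)$. Granting Step~B, $\ov{H'}\in\mathcal V$ by closure under MacNeille completions, hence $H\in\mathcal V$; and since $\mathcal{HA}$ is generated as a variety by its finite members (finite model property of $\mathbf{IPC}$), this forces $\mathcal V=\mathcal{HA}$, proving the minimality of $\mathcal{BA}$.

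Step~B is where the real work lies. The idea is to represent $H$ as the Heyting algebra $\mathsf{Up}(P)$ of upsets of a finite poset $P$ and to ``blow up'' $P$ --- e.g.\ replacing each point by an $\omega$-chain, or inserting densely ordered pieces --- into an infinite algebra $H'$ that still lies in $\mathbf{HSP}(\mathbf 3)$ but whose MacNeille completion, restricted to the appropriate closed elements, recovers $H$. The subtlety that makes this at all possible, and which I would highlight, is that although $\ov{H'}$ lattice-embeds into any complete lattice extending $H'$, its implication is computed internally to $\ov{H'}$ and need \emph{not} be inherited from an ambient power of $\mathbf 3$; this is precisely what allows a single MacNeille completion to escape $\mathbf{HSP}(\mathbf 3)$ and, cumulatively over all finite $H$, to sweep out all of $\mathcal{HA}$. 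I expect this construction --- rather than the (essentially formal) reduction or Steps~A and the conclusion --- to be the main obstacle.
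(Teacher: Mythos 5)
The paper does not prove Theorem \ref{t-limit} at all: all three items are quoted from the literature (Funayama for the first, Harding--Bezhanishvili for the third, and the second is classical), so there is no in-paper argument to compare yours against. That said, your treatment of the second item is the standard and complete proof: realizing $\ov{\m A}$ as the cut completion, showing $a\wedge(-)$ has a right adjoint for $a\in A$ via the $a\to b$ trick with both densities, propagating the frame law by $\bigvee$-density, and checking that the induced residual restricts to $\to$ on $A$; the Boolean case via $\neg\neg x=\bigwedge\{b\in A: x\leq b\}$-style reasoning is likewise correct. Items (1) and (3) you correctly and explicitly defer to the cited sources, which is exactly what the paper itself does, and your reduction of (3) to ``any MacNeille-closed variety strictly above $\mathcal{BA}$ is all of $\mathcal{HA}$,'' together with Step~A (a non-Boolean algebra contains the three-element chain as a subalgebra, since $\neg(a\vee\neg a)=0$), is sound.

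One concrete caution about your sketch of Step~B: the heuristic of blowing up each point of the dual poset into an $\omega$-chain (or a densely ordered piece) and claiming the result ``still lies in $\mathbf{HSP}(\mathbf 3)$'' cannot work as stated. By J\'onsson's lemma the subdirectly irreducible members of $\mathbf{HSP}(\mathbf 3)$ are only $\mathbf 2$ and $\mathbf 3$, so $\mathbf{HSP}(\mathbf 3)$ contains no chain of length greater than three; any algebra containing an infinite (or even four-element) chain as a subalgebra already escapes the variety, so your proposed $H'$ would not be available as a starting point. The actual construction in \cite{HB04} is genuinely different (and may require iterating the completion rather than a single application), so Step~B really must be taken from the citation rather than from the blow-up picture you describe. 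Since you flag Step~B as the main obstacle and cite it, this does not invalidate the proposal, but the specific mechanism you suggest for it is not viable.
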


As is well known, completion is a standard algebraic way 
to prove conservativity of extending first order logics to 
higher order ones. The above result indicates
that MacNeille completions 
work for classical and intuitionistic logics, but not 
for proper intermediate logics. On the other hand, one finds 
many varieties closed under MacNeille completions
when one moves to the realm of \emph{substructural logics} \cite{CGT12}.
See \cite{TV07} for a comprehensive account on the MacNeille completions.

Now an easy but crucial observation follows.

\begin{prop} \label{p-macneille}
A completion $\m A \subseteq \m B$ is MacNeille 
iff the rules below are valid:
$$
\infer{x\leq y}{ \{ a\leq y\}_{a \leq x}} \qquad
\infer{x\leq y}{ \{ x\leq a\}_{y \leq a}} 
$$
where $x,y$ range over $B$ and $a$ over $A$.
\end{prop}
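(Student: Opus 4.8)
The plan is to observe that each of the two rules, read as a universally quantified implication over $x,y \in B$, is nothing but a reformulation of one of the two density conditions defining a MacNeille completion: the left rule will turn out to be valid iff the completion is $\bigvee$-dense, and the right rule valid iff it is $\bigwedge$-dense. Here I read the premise $\{a\leq y\}_{a\leq x}$ as ``$a\leq y$ for every $a\in A$ with $a\leq x$'', and dually for the other rule. Granting the two equivalences, the proposition is immediate from the definition of MacNeille completion as a $\bigvee$-dense and $\bigwedge$-dense one.

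For the forward direction, suppose $\m A\subseteq\m B$ is MacNeille. Fix $x,y\in B$ and assume every $a\in A$ with $a\leq x$ satisfies $a\leq y$. Then $y$ is an upper bound of $\{a\in A: a\leq x\}$, and since $\bigvee$-density gives $x=\bigvee\{a\in A: a\leq x\}$, we conclude $x\leq y$; this validates the left rule. The right rule is validated in the dual way, using $x=\bigwedge\{a\in A: x\leq a\}$.

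For the converse, assume both rules are valid. Fix $x\in B$; since $\m B$ is a complete lattice, the element $s:=\bigvee\{a\in A: a\leq x\}$ exists, and $s\leq x$ is clear because every member of the indexing set is $\leq x$. To get $x\leq s$ I apply the left rule with $s$ playing the role of $y$: its premise demands that every $a\in A$ with $a\leq x$ satisfy $a\leq s$, which holds by the very definition of $s$ as a supremum. Hence $x\leq s$, so $x=s=\bigvee\{a\in A: a\leq x\}$, i.e.\ $\bigvee$-density holds. Deriving $\bigwedge$-density from the right rule is entirely symmetric, with joins replaced by meets throughout.

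I do not expect a genuine obstacle here; the argument is essentially a matter of unwinding definitions. The only point that must not be glossed over is that it is the completeness of $\m B$ that lets us form $\bigvee\{a\in A: a\leq x\}$ (resp.\ the dual meet) \emph{before} feeding it into the rule — without this, the converse direction would not even get off the ground.
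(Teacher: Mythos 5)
Your proof is correct and follows essentially the same route as the paper, which justifies the proposition by observing that the left rule is valid exactly when $x = \bigvee\{a \in A : a \leq x\}$ and the right rule exactly when $y = \bigwedge\{a \in A : y \leq a\}$. Your converse direction makes explicit the same instantiation (taking $y := \bigvee\{a \in A : a \leq x\}$ in the left rule), so there is nothing to add beyond noting that in the right-rule case the $\bigwedge$-density should be cited at $y$ rather than $x$ — harmless, since density holds for every element of $B$.
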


The left rule has infinitely many premises indexed by 
the set $\{a \in A: a \leq x\}$. It states that 
if $a \leq x$ implies $a\leq y$ for every $a\in A$,
then we may conclude $x\leq y$.
This is valid just in case $x = \bigvee \{a \in A: a \leq x\}$.
Likewise, the right rule states that 
if $y\leq a$ implies $x\leq a$ for every  $a\in A$,
then $x\leq y$.
This is valid just in case $y = \bigwedge \{a \in A: y \leq a\}$.

As we will see,
the above looks very similar to the $\Omega$-rule.
This provides a link between lattice theory and proof theory.

\section{Takeuti correspondence between logic and arithmetic}
\label{s-takeuti}

There is a tight correspondence between systems of higher order logics
and those of arithmetic. A well-known example in computer science is 
that the numerical functions representable in System F 
coincide with the provably total functions of the second order 
Peano arithmetic $\ZT$. In this paper, we rather focus on another type
of correspondence, which we call the Takeuti correspondence, that
concerns with cut elimination in logic and 1-consistency in arithmetic.
One of our goals is to provide an easily accessible proof to 
the correspondence between
the arithmetical theories of iterated inductive definitions 
(up to $\omega$) and the parameter-free fragments of $\LIT$.

We first give some background on arithmetic and second order
logics (in Subsections \ref{ss-arithmetic} and \ref{ss-logic}),
then introduce the parameter-free systems and examine their expressive power
(in Subsections \ref{ss-parameter} and \ref{ss-higher}).

\subsection{Arithmetic}\label{ss-arithmetic}
Let $\IS$, $\PA$ and $\ZT$ be respectively the first order arithmetic 
with $\Sigma^0_1$ induction,
that with full induction,
and the second order 
arithmetic with full induction and comprehension (also called $\mathbf{Z}_2$).
Given a theory $T$ of arithmetic,
$T[X]$ denotes the extension of $T$ 
with a fresh set variable $X$ and atomic formulas
of the form $X(t)$. An expression of the form $\lambda x.\varphi(x)$ with
$\varphi$ a formula and $x$ a variable 
is called an  \emph{abstract}. Given an abstract
$\tau = \lambda x.\varphi(x)$ and a term $t$,
$\tau(t)$ stands for the formula $\varphi(t)$.

A great many subsystems of 
$\ZT$ are considered in the literature.
For instance, the system $\PICA$ is obtained by
restricting the induction and comprehension axiom schemata to 
$\Pi^1_1$ formulas. 
Even weaker are the
theories of iterated inductive definitions $\ID_n$
with $n <\omega$, that are obtained as follows.

$\ID_0$ is just $\PA$.
To obtain $\ID_{n+1}$, consider a formula 
$\varphi(X, x)$ in $\ID_n[X]$ 
which contains no free variables other than $X$ and $x$,
and no negative occurrences of $X$.
It defines a monotone map 
$\varphi^\Nat : \wp(\Nat) \longrightarrow \wp(\Nat)$ 
sending a set $X\subseteq \Nat$ to 
$\{n \in \Nat : \mathbb{N}\models \varphi(X, n)\}$.
By the Knaster-Tarski fixed point theorem, $\varphi^\Nat$ has 
a least fixed point $I^{\mathbb{N}}_\varphi$.
Hence it is reasonable to add a unary predicate symbol
$I_{\varphi}$ for each such $\varphi$ to the language of $\ID_n$ and axioms 
$$
\begin{array}{ll}
(\Lfp_1) & \varphi(I_{\varphi}) \subseteq I_{\varphi},\\
(\Lfp_2) & \varphi(\tau) \subseteq \tau \rightarrow I_{\varphi} \subseteq \tau,\\
\end{array}
$$
for every abstract $\tau = \lambda x.\xi(x)$ in the new language.
Here $\varphi(I_{\varphi})$ is a shorthand for the abstract 
$\lambda x.\varphi(I_\varphi, x)$ and $\tau_1 \subseteq \tau_2$ is 
for $\forall x.\tau_1(x) \rightarrow \tau_2(x)$. The induction schema 
is extended to the new language.
This defines the system $\ID_{n+1}$. Notice that 
$\ID_{n+1}$ does not involve any set variable. It is purely a 
first order theory of arithmetic.
Finally, let $\ID_{<\omega}$ be the union of all $\ID_{n}$ with
$n <\omega$.

Clearly $\ID_{<\omega}$ can be seen as a subsystem of $\PICA$.
In fact, any fixed point atom
$I_\varphi(t)$ can be replaced by its second order definition
$$
\Fix'_{\varphi}(t) \Def
\forall X. \forall x(\varphi(X, x) \rightarrow X(x))
\rightarrow X(t).
$$
This makes the axioms of $\ID_{<\omega}$ all provable in $\PICA$.

The converse is not strictly true, but 
it is known that $\ID_{<\omega}$ 
has the same proof-theoretic strength 
and the same arithmetical consequences with $\PICA$ (see \cite{BFPS81}).

Let us point out that typical use of an inductive definition is 
to define a provability predicate. Let $T$ be a sequent calculus system,
and suppose that we are given a formula $\varphi(X,x)$ 
saying that there is a rule in $T$ with 
conclusion sequent $x$ (coded by a natural number) and premises 
$Y\subseteq X$. Then $I_\varphi^{\mathbb{N}}$ gives the set of all provable sequents 
in $T$. Notice that the premise set $Y$ can be infinite. It is for this reason
that $\ID$-theories are suitable metatheories for infinitary
proof systems. See \cite{BFPS81} for more on inductive definitions.

Finally, let 
$\HAT$ be the second order Heyting arithmetic, and 
$\ID_n^i$ the intuitionistic counterpart of 
$\ID_n$ obtained by changing the underlying logic to the 
first order intuitionistic logic. Thus $\ID_0^i = \HA$ 
(the first order Heyting arithmetic).
The following result is well known (see \cite{BFPS81} for the 
second statement).

\begin{thm}\label{t-ci}
$\ZT$ and $\HAT$ prove exactly 
the same $\Pi^0_2$ sentences. 
Hence the 1-consistency of $\ZT$ is equivalent to 
that of $\HAT$ (provably in $\IS$).
The same holds for 
$\ID_n$ and $\ID_n^i$ for every $n<\omega$.
\end{thm}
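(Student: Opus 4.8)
The plan is to prove Theorem~\ref{t-ci} by the standard double-negation translation plus a realizability (or functional) interpretation that eliminates the double negations on $\Pi^0_2$ statements, and then to note that all the machinery can be carried out inside $\IS$. Concretely, I would establish the two halves of each equivalence separately. For the conservativity direction, suppose $\HAT \vdash \sigma$ for a $\Pi^0_2$ sentence $\sigma = \forall x\exists y\,\theta(x,y)$ with $\theta$ primitive recursive. Since intuitionistic logic is weaker than classical, $\ZT \vdash \sigma$ is immediate; so the content is the other direction, from $\ZT$ to $\HAT$. Here I would apply the Gödel--Gentzen negative translation, which sends a classical proof of $\sigma$ in $\ZT$ to an intuitionistic proof of $\sigma^N = \forall x\,\lnot\lnot\exists y\,\theta(x,y)$ in $\HAT$ (the comprehension and induction schemes survive the translation, using that $\HAT$ has impredicative comprehension so the translated comprehension instances are available). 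To remove the double negation, I would invoke a suitable interpretation of $\HAT$ — either Kreisel's modified realizability or a Dialectica-style functional interpretation adapted to second order arithmetic — under which $\HAT \vdash \forall x\,\lnot\lnot\exists y\,\theta(x,y)$ yields $\HAT \vdash \forall x\exists y\,\theta(x,y)$, because for primitive recursive $\theta$ the formula $\exists y\,\theta(x,y)$ is essentially self-realizing (its truth is decidable relative to a witness) and the interpretation extracts the witness from the realizer of the double negation.

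For the $\ID_n$/$\ID_n^i$ case I would run exactly the same argument, checking that the negative translation behaves well with respect to the fixed-point axioms $(\Lfp_1)$ and $(\Lfp_2)$: the key point is that $\varphi(X,x)$ has no negative occurrences of $X$, so $\varphi^N$ is again a positive operator form (modulo provable equivalence in $\ID_n^i$), and the translated closure and induction axioms for $I_\varphi$ are derivable from the corresponding axioms of $\ID_n^i$. One subtlety is that the negative translation of an atom $I_\varphi(t)$ should be taken to be $\lnot\lnot I_\varphi(t)$, or alternatively one works with a variant (e.g.\ Friedman--Dragalin) translation; either way the fixed-point axioms are preserved because they are, up to the translation, of the same syntactic shape. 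Then the same realizability/Dialectica extraction applies, noting that $\ID_n^i$ proves enough induction for the interpretation to go through at each level $n$.

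The last sentence of the theorem — that the equivalences hold provably in $\IS$ — is handled by observing that every step above is elementary: the negative translation is a primitive recursive syntactic operation, and ``$\ZT \vdash \sigma$ implies $\HAT \vdash \sigma^N$'' together with ``$\HAT \vdash \sigma^N$ implies $\HAT \vdash \sigma$'' are both statements that $\IS$ can formalize and prove, since they amount to the existence of a primitive recursively defined proof transformation whose correctness is verified by a $\Sigma^0_1$-induction on proof length (the soundness of realizability for $\Pi^0_2$ conclusions needs only that much induction). Thus $\IS$ proves the biconditional at the level of provability, hence also the equivalence of the 1-consistency statements, which are themselves $\Pi^0_2$.

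The main obstacle I expect is the precise choice of the removal-of-double-negation interpretation for the \emph{second order} systems $\ZT$ and $\HAT$: ordinary modified realizability for full second order arithmetic requires some care (one must either use a realizability notion that ignores the set quantifiers, treating them ``uniformly'', or appeal to the known fact — recorded in \cite{BFPS81} — that $\HAT$ is $\Pi^0_2$-conservative over its number-theoretic core via such an interpretation). For the $\ID_n^i$ side the analogous obstacle is checking that the realizability interpretation respects the $(\Lfp_2)$ least-fixed-point schema, which quantifies over all abstracts $\tau$; one resolves this by noting that a realizer for the hypothesis $\varphi(\tau)\subseteq\tau$ propagates along the inductive generation of $I_\varphi$, which is exactly the informal content of $(\Lfp_2)$ and is formalizable using the induction available in $\ID_n^i$. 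Since these are all known results, in the paper I would simply cite \cite{BFPS81} for the $\ID_n$ part and give only the brief indication above for the second order part.
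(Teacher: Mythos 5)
The paper gives no proof of this theorem: it is stated as a well-known fact, with \cite{BFPS81} cited for the $\ID_n$/$\ID_n^i$ part, and your closing remark that you would do exactly that for the hard case means your treatment ultimately coincides with the paper's. Your sketch of the standard argument behind the citation (negative translation followed by elimination of the double negation on $\Pi^0_2$ conclusions, all primitive recursive and hence formalizable in $\IS$) is the right outline. One technical caveat: Kreisel's modified realizability will \emph{not} carry out the step from $\vdash \forall x\,\lnot\lnot\exists y\,\theta$ to $\vdash \forall x\,\exists y\,\theta$, because under modified realizability every negated formula is trivially realized, so the realizer of $\lnot\lnot\exists y\,\theta$ carries no witness; closure under Markov's rule must instead be obtained from the Friedman--Dragalin $A$-translation (which you mention only as an aside for the $\ID_n$ case, but which is the tool of choice for $\HAT$ as well) or from a Dialectica-style interpretation that validates Markov's principle. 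With that substitution your argument for the $\ZT$/$\HAT$ half is the standard one; for the $\ID_n$ half the preservation of positivity under the translation is genuinely delicate (the proof in \cite{BFPS81} proceeds through accessibility inductive definitions and is substantially harder than your sketch suggests), so deferring to that reference, as both you and the paper do, is the right call.
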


Notice that the statement of 1-consistency (any provable $\Sigma^0_1$
sentence is true) and that of 
cut elimination are both $\Pi^0_2$.
Hence it does not matter much for our purpose 
whether the logic is classical or intuitionistic.

\subsection{Second order intuitionistic logic}\label{ss-logic}

In this subsection, we formally introduce sequent calculus 
$\LIT$ for the second order intuitionistic logic 
with full comprehension, that is an intuitionistic counterpart 
of Takeuti's classical calculus $\GLC$ for the second order 
classical logic
\cite{Takeuti53}.

Consider a language $L$ that consists of 
(first order) function symbols and predicate symbols.
A typical example is the language $L_{\PA}$ of Peano arithmetic,
which contains a predicate symbol $=$, 
constant $0$, successor $s$
and function symbols for all primitive recursive 
functions. Let
\begin{itemize}
\item $\Var$: a countable set of \emph{term variables} $x, y, z, \dots$,
\item $\VAR$: a countable set of \emph{set variables}
$X, Y, Z, \dots$,
\item $\Tm (L)$: the set of first order terms 
$t, u, v, \dots$ over $L$.
\end{itemize}

The set $\FM (L)$ of second order formulas is defined by:
$$
\varphi, \psi \bnf p(\vec{t}) \sep X(t) \sep \bot 
\sep \varphi \star \psi 
\sep Q x. \varphi 
\sep Q X. \varphi,
$$
where $\vec{t}$ is a list $t_1, \dots, t_n$ of terms over $L$,
$p$ is an $n$-ary predicate symbol in $L$, 
$\star \in \{\wedge, \vee, \rightarrow\}$ and $Q \in \{\forall, \exists\}$.
We define $\top := \bot\rightarrow\bot$.
When the language $L$ is irrelevant, we write 
$\Tm := \Tm(L)$ and 
$\FM := \FM(L)$.
Given $\varphi$, let 
$\FV(\varphi)$ and $\Fv(\varphi)$ be the set of 
free set variables and that of free term variables in $\varphi$, respectively.


We assume the standard variable convention that 
$\alpha$-equivalent formulas are syntactically identical, 
so that substitutions can be 
applied without variable clash.
A \emph{term substitution} is a function $\circ:
\Var\longrightarrow \Tm$.
Given $\varphi \in \FM$, the substitution instance $\varphi^\circ$
is defined as usual.
Likewise, 
a \emph{set substitution} is a function $\bullet: 
\VAR\longrightarrow \ABS$, where 
$\ABS := \{\lambda x.\varphi : \varphi \in \FM\}$.
Instance $\varphi^\bullet$ is obtained by 
replacing each atomic formula
$X(t)$ with $X^\bullet(t)$. 

A \emph{sequent of $\LIT$} is of the form $\Gamma\Rightarrow \Pi$,
where $\Gamma$ is a finite set of $\LIT$-formulas and 
and $\Pi$ is either the empty set or a singleton 
of an $\LIT$-formula.
We write $\Gamma, \Delta$ to denote $\Gamma\cup\Delta$.
The inference rules of $\LIT$ are given in Figure \ref{f-ruleLIT}.
We write $\LIT \vdash \Gamma \Rightarrow \Pi$ (resp.\ 
$\LIT \vdash^{cf} \Gamma \Rightarrow \Pi$) if 
the sequent $\Gamma\Rightarrow\Pi$ is provable 
(resp.\ cut-free provable) in $\LIT$.

\begin{figure}
\begin{screen}
$$
\begin{array}{ll}
\infer[(\mathsf{id})]{\Gamma,\varphi\Rightarrow\varphi}{} &
\infer[(\Cut)]{\Gamma\Rightarrow\Pi}{\Gamma\Rightarrow \varphi
& \varphi, \Gamma\Rightarrow\Pi}  \\[1em]
\infer[(\bot \Left)]{\bot, \Gamma \Rightarrow \Pi}{}
&
\infer[(\bot \Right)]{\Gamma \Rightarrow \bot}{\Gamma \Rightarrow}\\[1em]
\infer[(\wedge \Left)]{
\varphi_1 \wedge \varphi_2, \Gamma\Rightarrow\Pi}{
\varphi_i, \Gamma\Rightarrow\Pi}
&
\infer[(\wedge \Right)]{\Gamma\Rightarrow\varphi_1 \wedge\varphi_2}{
\Gamma\Rightarrow \varphi_1 & \Gamma\Rightarrow \varphi_2}\\[1em]
\infer[(\vee \Left)]{
\varphi_1 \vee \varphi_2, \Gamma\Rightarrow\Pi}{
\varphi_1,\Gamma\Rightarrow\Pi & 
\varphi_2,\Gamma\Rightarrow\Pi} 
&
\infer[(\vee \Right)]{\Gamma\Rightarrow\varphi_1 \vee\varphi_2}{
\Gamma\Rightarrow \varphi_i} \\[1em]
\infer[(\rightarrow \Left)]{
\varphi_1 \rightarrow \varphi_2, \Gamma\Rightarrow\Pi}{
\Gamma\Rightarrow\varphi_1 & 
\varphi_2,\Gamma\Rightarrow\Pi} 
&
\infer[(\rightarrow \Right)]{\Gamma\Rightarrow
\varphi_1 \rightarrow \varphi_2}{
\varphi_1,\Gamma\Rightarrow \varphi_2} \\[1em]
\infer[(\forall x\Left)]{
\forall x.\varphi(x), \Gamma\Rightarrow \Pi}{
\varphi(t),\Gamma\Rightarrow\Pi}
&
\infer[(\forall x\Right)]{
\Gamma\Rightarrow \forall x.\varphi(x)}{
\Gamma\Rightarrow \varphi(y) & y \not\in \Fv(\Gamma)} \\[1em]
\infer[(\exists x\Left)]{
\exists x.\varphi(x), \Gamma\Rightarrow \Pi}{
\varphi(y),\Gamma \Rightarrow\Pi & y\not\in \Fv(\Gamma,\Pi)}
& 
\infer[(\exists x\Right)]{
\Gamma\Rightarrow \exists x.\varphi(x)}{
\Gamma\Rightarrow \varphi(t)} \\[1em]
\infer[(\forall X\Left)]{
\forall X.\varphi(X), \Gamma\Rightarrow \Pi}{
\varphi(\tau),\Gamma\Rightarrow\Pi}
&
\infer[(\forall X\Right)]{
\Gamma\Rightarrow \forall X.\varphi(X)}{
\Gamma\Rightarrow \varphi(Y) & Y \not\in \FV(\Gamma)} \\[1em]
\infer[(\exists X\Left)]{
\exists X.\varphi(X), \Gamma\Rightarrow \Pi}{
\varphi(Y),\Gamma \Rightarrow\Pi & Y\not\in \FV(\Gamma,\Pi)}
& 
\infer[(\exists X\Right)]{
\Gamma\Rightarrow \exists X.\varphi(X)}{
\Gamma\Rightarrow \varphi(\tau)} 
\end{array}
$$
\end{screen}
\caption{Inference rules of $\LIT$} \label{f-ruleLIT}
\end{figure}


In the sequel, we will build parameter-free logical systems 
upon the first order fragment of $\LIT$.
Let $\Fm \subseteq \FM$ be the set of formulas without 
second order quantifiers. The ordinary sequent calculus 
$\LI$ for the first order intuitionistic logic can be obtained from $\LIT$
by restricting the formulas to $\Fm$ and by removing the rules 
for the second order quantifiers.

It is well-known that the cut elimination theorem for $\GLC$ or $\LIT$
implies 
the consistency of $\ZT$ or $\HAT$ finitistically 
(or in $\IS$, formally speaking).
We also have the converse, when 
consistency is replaced with \emph{1-consistency} meaning that
all provable $\Sigma^0_1$ sentences are true
(also called \emph{$\Sigma^0_1$-soundness}).


\begin{thm}\label{t-takeuti}
Let $\mathsf{CE}(\GLC)$ be a $\Pi^0_2$ sentence stating that 
$\GLC$ admits cut elimination, and
$\mathsf{1CON}(\ZT)$ a $\Pi^0_2$ sentence stating that $\ZT$ is 1-consistent.
Then:
$$
\IS \vdash \mathsf{CE}(\GLC) \leftrightarrow 
\mathsf{1CON}(\ZT).$$
\end{thm}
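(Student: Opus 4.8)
The plan is to establish the two implications separately, reading the statement as the $\GLC/\ZT$-instance of the Takeuti correspondence; the direction $\mathsf{CE}(\GLC) \to \mathsf{1CON}(\ZT)$ refines the well-known implication $\mathsf{CE}(\GLC) \to \mathrm{Con}(\ZT)$. Throughout I would use Theorem~\ref{t-ci} to move between $\ZT$ and $\HAT$, and the fact that everything carries over classically to move between $\GLC$ and $\LIT$; since both $\mathsf{CE}(\GLC)$ and $\mathsf{1CON}(\ZT)$ are $\Pi^0_2$, it does not matter on which side one works.

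For $\mathsf{CE}(\GLC) \to \mathsf{1CON}(\ZT)$, I would embed $\ZT$ into $\GLC$ and then analyse cut-free derivations. Over the comprehension built into the rules of $\GLC$, the theory $\ZT$ is axiomatised by the defining axioms of its vocabulary together with the single second order induction axiom $\mathrm{IND} := \forall X(X(0) \wedge \forall x(X(x) \to X(sx)) \to \forall x\, X(x))$. Hence a $\Sigma^0_1$-theorem $\sigma = \exists x\, \theta(x)$ of $\ZT$ yields a $\GLC$-proof of a sequent $\Gamma_0 \Rightarrow \sigma$, where $\Gamma_0$ is a finite set of sentences true in $\mathbb{N}$ (finitely many defining axioms, plus $\mathrm{IND}$). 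Applying $\mathsf{CE}(\GLC)$ gives a cut-free derivation of $\Gamma_0 \Rightarrow \sigma$; by the subformula property its only second order content is the unfolding of $\mathrm{IND}$ via $(\forall X\Left)$ into instances of ordinary induction, so after reading those instances off one has a cut-free derivation from finitely many true premises of the $\Sigma^0_1$-conclusion. A soundness / witness-extraction argument on this derivation then produces a numeral $\bar n$ with $\theta(\bar n)$ true, hence $\sigma$ true; since the whole passage from the $\ZT$-proof to the witness is primitive recursive, it is to be carried out in $\IS$.

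For the converse $\mathsf{1CON}(\ZT) \to \mathsf{CE}(\GLC)$, the plan is to run an algebraic cut-elimination proof for $\GLC$ — the MacNeille completion together with reducibility candidates, as recalled in Appendix~\ref{a-okada} — and to verify that it can be set up in a metatheory whose $\Pi^0_2$-theorems are validated once $\ZT$ is $1$-consistent. One ingredient is already available in $\IS$: $\Sigma^0_1$-soundness of $\ZT$ entails its $\Pi^0_2$-soundness, by specialising a $\Pi^0_2$-theorem at each numeral and invoking $1$-consistency on the resulting $\Sigma^0_1$-sentences. The substantive ingredient is a presentation of the cut-elimination construction whose ambient set-existence and induction principles are matched, for $\Pi^0_2$-purposes, by $\ZT$ together with its $1$-consistency, so that $\mathsf{CE}(\GLC)$ becomes provable in $\IS + \mathsf{1CON}(\ZT)$.

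The step I expect to be the main obstacle, in both directions, is the metatheoretic calibration rather than the combinatorics of cut elimination. In the forward direction, the cut-free derivation may carry arithmetical formulas whose logical complexity grows with the input proof, so the soundness argument must be arranged so as not to presuppose a truth predicate of unbounded complexity — for instance as a primitive recursive analysis of the cut-free derivation in the style of Herbrand's theorem, rather than a naive induction over the derivation with a global truth predicate. In the converse direction the delicate point is precisely the one isolated above: finding a formulation of the (semantic) cut-elimination proof for $\GLC$ that is no stronger, for the purpose at hand, than $\IS + \mathsf{1CON}(\ZT)$, and in particular does not quietly appeal to principles of third order arithmetic through the reducibility candidates.
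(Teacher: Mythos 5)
The paper does not actually prove Theorem \ref{t-takeuti} --- it cites Takeuti \cite{Takeuti58b} for the forward implication and treats the backward one as folklore --- so your attempt has to be judged against the standard argument, which is the one the paper reproduces for the analogous statements it does prove (Theorem \ref{t-cc} and Appendix \ref{a-proof1}). Your forward direction has a genuine gap. You keep the single second order induction axiom $\mathrm{IND}=\forall X(\dots)$ as an antecedent formula and claim that, in a cut-free $\GLC$-derivation of $\mathrm{IND},\Gamma_0'\Rightarrow\sigma$, the subformula property confines the second order content to ``instances of ordinary induction'' that can be read off. This is false: the generalized subformula relation declares $\theta(\tau)$ a subformula of $\forall X.\theta(X)$ for \emph{every} abstract $\tau$, including abstracts containing second order quantifiers, and those are then decomposed further by second order rules introducing yet more abstracts. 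When the end-sequent contains a $\Pi^1_1$ formula such as $\mathrm{IND}$, the subformula property therefore imposes no complexity bound at all; the cut-free derivation is irreducibly second order, and any soundness or witness-extraction argument over it would require a truth definition for second order arithmetic --- unavailable in $\IS$, and indeed in $\ZT$ itself. No Herbrand-style reorganization rescues this. The correct move, which your proposal omits, is the relativization to $\Nn$ (Lemma \ref{l-ind}, Lemma \ref{l-cc2}, Theorem \ref{t-cc}): replacing $Qx$ by $Qx\in\Nn$ makes induction a \emph{theorem} of the logic, so a $\Sigma^0_1$ theorem $\sigma$ of $\ZT$ yields $\GLC\vdash\Gamma\Rightarrow\sigma$ with $\Gamma$ consisting only of true $\Pi^0_1$ axioms and the whole sequent first order of bounded complexity. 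Then $\mathsf{CE}(\GLC)$ (via its partial form $\mathsf{CE}_{\Fm}(\GLC)$) produces a genuinely first order cut-free derivation, for which the soundness argument with a partial truth predicate does formalize in $\IS$. A useful sanity check: if your collapse to first order induction instances were correct, it would yield an $\IS$-verifiable reduction of $\Sigma^0_1$-reflection for $\ZT$ to reflection for (a fragment of) $\PA$, which is incompatible with the fact that $\mathsf{1CON}(\ZT)$ vastly exceeds $\mathsf{1CON}(\PA)$ in strength.

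Your backward direction identifies the right folklore strategy --- formalize the semantic/algebraic cut elimination proof of Appendix \ref{a-okada} \emph{locally} in $\ZT$, i.e.\ show primitive recursively that each $\GLC$-derivation of $\Gamma\Rightarrow\Pi$ yields a $\ZT$-proof of the $\Sigma^0_1$ sentence ``$\Gamma\Rightarrow\Pi$ is cut-free provable,'' and then apply $\mathsf{1CON}(\ZT)$ as $\Sigma^0_1$-reflection --- and your remark that $\Sigma^0_1$-soundness yields $\Pi^0_2$-soundness is correct but not even needed once the statement is localized sequent-wise. However, as written this direction is only a declaration of intent: the substantive point you flag (that the reducibility-candidate quantification is over second order objects under coding, hence expressible in the language of $\ZT$, so that the soundness lemma for a \emph{fixed} derivation is a $\ZT$-theorem) is precisely the content that would need to be carried out, exactly as the paper does for $\LIP_{n+1}$ and $\ID_{n+1}^i$ in Subsection \ref{ss-higher} and Section \ref{s-formalize}.
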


Actually the above theorem holds even if cut elimination is 
replaced with \emph{partial} cut elimination saying that 
any sequent $\Gamma\Rightarrow\Pi$ 
provable in $\LIP_n$ has a cut-free derivation 
\emph{provided that} $\Gamma\cup\Pi\subseteq \Fm$.

An even stronger result holds too,
as pointed out by \cite{Arai18} on the basis of P\"{a}ppinghous'
theorem \cite{Pappinghaus83}: \emph{complete} cut elimination 
is equivalent to 
 \emph{partial} cut elimination in the above sense.
Let  $\mathsf{CE}_{\Fm}(\GLC)$ be
a $\Pi^0_2$ sentence that expresses the statement of 
partial cut elimination for $\GLC$.

\begin{thm}\label{t-partial}
$\IS \vdash \mathsf{CE}(\GLC) \leftrightarrow \mathsf{CE}_{\Fm}(\GLC)$.
\end{thm}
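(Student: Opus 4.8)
The plan is to run the three-step cycle of $\IS$-provable implications
\[
\mathsf{CE}(\GLC)\ \longrightarrow\ \mathsf{CE}_{\Fm}(\GLC)\ \longrightarrow\ \mathsf{1CON}(\ZT)\ \longrightarrow\ \mathsf{CE}(\GLC),
\]
whose last arrow is exactly Theorem~\ref{t-takeuti}. The first arrow needs nothing: partial cut elimination is the verbatim restriction of cut elimination to end-sequents $\Gamma\Rightarrow\Pi$ with $\Gamma\cup\Pi\subseteq\Fm$, so $\mathsf{CE}(\GLC)$ trivially implies $\mathsf{CE}_{\Fm}(\GLC)$, provably in $\IS$. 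Hence the whole content lies in the middle arrow $\mathsf{CE}_{\Fm}(\GLC)\to\mathsf{1CON}(\ZT)$; together with the other two it yields $\IS\vdash\mathsf{CE}(\GLC)\leftrightarrow\mathsf{CE}_{\Fm}(\GLC)$, so all three statements collapse to one over $\IS$.

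For the middle arrow I would argue inside $\IS$. Suppose $\sigma=\exists x\,\theta(x)$ is a $\Sigma^0_1$ sentence with $\ZT\vdash\sigma$. The usual embedding of $\ZT$ into $\GLC$ turns this into a $\GLC$-derivation of a sequent $\Gamma_0\Rightarrow\sigma$, where $\Gamma_0$ gathers the finitely many instances of comprehension, induction and the defining axioms actually used. The obstruction is that $\Gamma_0$ is not first-order, so $\mathsf{CE}_{\Fm}(\GLC)$ does not apply to $\Gamma_0\Rightarrow\sigma$ as it stands. This is precisely where Päppinghaus' theorem is invoked: it provides, provably in $\IS$, a reduction of the $\Sigma^0_1$-reflection of $\ZT$ to the cut-free provability of \emph{purely first-order} $\GLC$-sequents. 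Concretely, one replaces the second-order material by a Herbrand-style, numeral-instantiated first-order analysis, arriving at the need for a cut-free $\GLC$-derivation of a first-order sequent $\Gamma_1\Rightarrow\sigma$ with $\Gamma_1$ a true, essentially quantifier-free list of axiom instances. Now $\mathsf{CE}_{\Fm}(\GLC)$ applies and supplies such a derivation, and by the $\IS$-provable soundness of cut-free first-order derivations with $\Sigma^0_1$ succedent one reads off a numeral $n$ with $\theta(n)$ true. Thus $\sigma$ is true, i.e.\ $\mathsf{1CON}(\ZT)$ holds; composing with Theorem~\ref{t-takeuti} closes the cycle.

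I expect the only genuine difficulty to be this middle arrow, and within it the disciplined use of Päppinghaus' theorem: one must ensure that the passage from the second-order axiom instances in $\Gamma_0$ to a genuinely first-order end-sequent $\Gamma_1\Rightarrow\sigma$ does not reintroduce cuts on second-order formulas — so that what remains truly falls under the scope of $\mathsf{CE}_{\Fm}(\GLC)$ — and that every search, bound and induction used along the way stays $\Sigma^0_1$, so that the entire argument is formalizable in $\IS$. The remaining ingredients (the embedding of $\ZT$ into $\GLC$, the soundness of cut-free first-order derivations, and the complexity bookkeeping) are routine and run parallel to the proof of Theorem~\ref{t-takeuti}, so I would not dwell on them.
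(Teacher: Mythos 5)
Your three-arrow cycle does yield the \emph{statement} of Theorem \ref{t-partial}: the first arrow is indeed trivial, the third is the backward direction of Theorem \ref{t-takeuti}, and once the middle arrow $\mathsf{CE}_{\Fm}(\GLC)\to\mathsf{1CON}(\ZT)$ is in place the three statements collapse over $\IS$. Be aware, however, that this is not the route the paper intends: the point of invoking Arai and P\"{a}ppinghaus is precisely that $\mathsf{CE}_{\Fm}(\GLC)\to\mathsf{CE}(\GLC)$ admits a \emph{direct} derivation-transformation argument that does not detour through the $1$-consistency of $\ZT$, which is what makes it ``an even stronger result'' methodologically; your route instead piggybacks on the two biconditionals with $\mathsf{1CON}(\ZT)$.

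The genuine gap is in your middle arrow, which you yourself identify as carrying all the content. The obstruction you describe --- that the antecedent $\Gamma_0$ of the embedded sequent contains second-order comprehension and induction instances --- does not arise in Takeuti's embedding, and P\"{a}ppinghaus' theorem neither is needed there nor does what you attribute to it. In $\GLC$ full comprehension is a \emph{theorem} (an instance of $(\exists X\Right)$ with the comprehension abstract), so no comprehension instance need appear in the antecedent; induction is obtained by relativizing all first-order quantifiers to $\Nn$ and deriving relativized induction from the second-order definition of $\Nn$, exactly as in Lemma \ref{l-ind} and Appendix \ref{a-proof1} for $\LIP_0$ and $\HA$; what remains in the antecedent is a finite set of $\Pi^0_1$ arithmetical axioms, which is already first order. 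Cutting $\Gamma\Rightarrow\sigma^{\Nn}$ against the provable $\sigma^{\Nn}\Rightarrow\sigma$ then gives a $\GLC$-derivation of the genuinely first-order sequent $\Gamma\Rightarrow\sigma$, to which $\mathsf{CE}_{\Fm}(\GLC)$ applies directly, and soundness of cut-free first-order derivations (available in $\IS$ via the subformula property and a partial truth definition) finishes the argument. Your ``Herbrand-style, numeral-instantiated first-order analysis'' credited to P\"{a}ppinghaus is not a step of any actual argument and leaves the passage from $\Gamma_0$ to $\Gamma_1$ unexplained; P\"{a}ppinghaus' theorem enters this paper only as the engine behind Arai's direct proof of $\mathsf{CE}_{\Fm}(\GLC)\to\mathsf{CE}(\GLC)$, not behind $\mathsf{CE}_{\Fm}(\GLC)\to\mathsf{1CON}(\ZT)$.
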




\begin{rem}
The forward implication of Theorem \ref{t-takeuti}
is due to Takeuti \cite{Takeuti58b}, while the backward one 
is a folklore (see \cite{Girard87} and \cite{Arai18}). 

The same holds when $\ZT$ is replaced by $\HAT$ 
(because of Theorem \ref{t-ci}), and/or
$\GLC$ is replaced by $\LIT$
(because they admit essentially the same proof of cut elimination).

The paper \cite{Arai18} also mentions
the following correspondence:
$$
\IS \vdash \mathsf{CE}(\GLC(\Pi^1_n)) \leftrightarrow 
\mathsf{1CON}(\Pi^1_n\mbox{-{\bf CA}}_0)$$
for every $n<\omega$, where 
$\GLC(\Pi^1_n)$ is the fragment of $\GLC$ obtained by 
restricting the abstract $\tau$ in rules $(\forall X\Left)$ and
$(\exists X\Right)$ to $\Pi^1_n$ abstracts.
\end{rem}

This sort of correspondence 
between 1-consistency in arithmetic and cut elimination in logic
may be called the \emph{Takeuti correspondence}.
A goal of this paper is to provide Takeuti correspondences between
the theories $\HA, \ID_1^i, \ID_2^i, \dots$ of inductive definitions
and certain parameter-free fragments of $\LIT$ that are to be introduced next.



\subsection{Parameter-free second order intuitionistic logics}
\label{ss-parameter}

We here introduce fragments $\LIP_0, \LIP_1, \LIP_2, \dots$ of
$\LIT$. They are parameter-free because 
any formula of the form $\forall X. \xi$ or $\exists X.\xi$ 
is \emph{second order closed},
meaning that it does not contain any set parameter.

First, we write $\FMP_{-1}$ for $\Fm$ (the first order formulas)
for convenience. For each $n \geq 0$,
the set $\FMP_n$ of 
\emph{parameter-free formulas at level $n$} is defined by:
$$
\varphi, \psi \bnf
p(\vec{t}) \sep X(t) \sep \bot 
\sep \varphi \star \psi 
\sep Q x. \varphi
\sep Q X. \xi,
$$
where $\star \in \{\wedge, \vee, \rightarrow\}$, 
$Q \in \{\forall, \exists\}$ and 
$\xi$ is any formula in $\FMP_{n-1}$ such that 
$\FV(\xi) \subseteq \{X\}$. 
We let 
$$
\ABS_n := \{ \lambda x.\varphi : \varphi \in \FMP_n\}.
$$

An important property is the closure under substitution:
\begin{lem}
If $\varphi(X) \in \FMP_n$ and $\tau \in \ABS_n$,
then $\varphi(\tau) \in \FMP_n$.
\end{lem}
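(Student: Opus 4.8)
The plan is to prove the statement by induction on the structure of the formula $\varphi(X) \in \FMP_n$, establishing simultaneously that the substitution $\varphi(\tau)$ lands back in $\FMP_n$. The base cases are immediate: if $\varphi$ is atomic of the form $p(\vec t)$ or $\bot$, then $\varphi(\tau) = \varphi \in \FMP_n$ since there is no free occurrence of $X$; if $\varphi$ is $X(t)$, then $\varphi(\tau) = \tau(t)$, which is a formula in $\FMP_n$ by definition of $\ABS_n$; and if $\varphi$ is $Y(t)$ for a set variable $Y$ distinct from $X$, then again $\varphi(\tau) = \varphi$. The propositional cases $\varphi_1 \star \varphi_2$ and the first-order quantifier cases $Qx.\varphi$ follow directly from the induction hypothesis, since $\FMP_n$ is closed under the connectives $\star \in \{\wedge,\vee,\rightarrow\}$ and the first-order quantifiers $Q \in \{\forall,\exists\}$, and the variable convention ensures that substituting $\tau$ for $X$ does not capture the bound term variable.

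The only delicate case is $\varphi = QY.\xi$ with $Q \in \{\forall,\exists\}$ and $\xi \in \FMP_{n-1}$ satisfying $\FV(\xi) \subseteq \{Y\}$. First I would note that by the variable convention we may assume $Y$ is distinct from $X$ and $Y \notin \FV(\tau)$. Now the key point is that, because $\varphi$ is parameter-free at level $n$, the subformula $\xi$ has no free set variable other than $Y$; in particular $X$ does not occur free in $\xi$. Hence $\varphi(\tau) = QY.\xi(\tau) = QY.\xi = \varphi$, which is already in $\FMP_n$. So in fact the substitution does nothing under the second-order quantifier, and there is no obstacle at all — the parameter-freeness is precisely what rescues the argument.

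I expect the main subtlety, such as it is, to be purely bookkeeping about the side condition $\FV(\xi) \subseteq \{X\}$ in the definition of $\FMP_n$: one must be careful that after substituting $\tau$ for $X$ in an atomic subformula $X(t)$, the resulting occurrence of $\tau$ may introduce free \emph{term} variables (those free in $\tau$), but introduces no new free \emph{set} variables beyond those already permitted, since $\tau \in \ABS_n$ means $\tau = \lambda x.\psi$ with $\psi \in \FMP_n$, and inside a second-order quantifier $QY.\xi$ the variable $X$ simply does not appear, so nothing propagates there. Since the definition of $\FMP_n$ only constrains free \emph{set} variables under second-order quantifiers, the induction goes through cleanly. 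I would therefore present this as a short induction, spending essentially all the prose on the $QY.\xi$ case to make explicit why $X \notin \FV(\xi)$ and hence why $\varphi(\tau) = \varphi$ there.
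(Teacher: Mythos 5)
Your proposal is correct and follows exactly the paper's argument: induction on the structure of $\varphi(X)$, with the atom case $X(t)\mapsto\tau(t)\in\FMP_n$ handled by the definition of $\ABS_n$, the first-order connectives immediate, and the case $\varphi=QY.\xi$ resolved by observing that the side condition $\FV(\xi)\subseteq\{Y\}$ forces $X\notin\FV(\varphi)$, so $\varphi(\tau)=\varphi$. The extra bookkeeping remarks you add are harmless and consistent with the paper's (more terse) proof.
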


\begin{proof}
By induction on the structure of $\varphi(X)$. If it is an atom
$X(t)$, then $\varphi(\tau) = \tau(t) \in \FMP_n$.
The induction steps for first order connectives are easy.
If $\varphi = QY. \xi$, then it does not have a free occurrence of $X$,
so $\varphi(\tau) = \varphi \in \FMP_n$.
\end{proof}

Sequent calculus $\LIP_n$ for 
the \emph{parameter-free second order intuitionistic logic at 
level $n$}
is obtained from $\LIT$ by restricting the formulas 
to $\FMP_n$.
Most importantly, when one applies 
rules $(\forall X\Left)$ and $(\exists X\Right)$,
both the main formula $QX.\varphi$ and 
the minor formula $\varphi(\tau)$ must belong to $\FMP_n$.

Let $\FMP_{<\omega}$ be the union of all $\FMP_n$ and 
$\LIP_{<\omega}$ the sequent calculus associated to it.

\begin{rem}
The idea of restricting to the parameter-free formulas dates back to
\cite{Takeuti58}, which introduces a similar condition called 
isolatedness.
It also appears in more recent papers, such as \cite{Buchholz01,Altenkirch_Coquand01,Aehlig05}.
\end{rem}

A typical formula in $\FMP_0$ is
$$
\begin{array}{lll}
\Nn(t)  & :=  & \forall X.\,
Sub(X) \wedge Suc(X) \wedge X(0) \rightarrow X(t),
\end{array}
$$
where 
$Sub(X):= \forall xy.\, x=y \wedge X(x)\rightarrow X(y)$ and
$Suc(X) := \forall x.\, X(x) \rightarrow X(s(x))$.
Given a formula $\varphi$, let $\varphi^{\Nn}$ be
the formula obtained by replacing all first order quantifiers
$Q x$ with $Q x \in \Nn$. That is, we replace
$\forall x.\varphi$ with $\forall x.\, \Nn(x)\rightarrow\varphi$, and
$\exists x.\varphi$ with $\exists x.\, \Nn(x)\wedge\varphi$.
It is clear that if $\varphi$ is a first order formula,
then $\varphi^{\Nn}$ belongs to $\FMP_0$.

On the other hand, the standard
second order definitions of positive connectives $\{\exists,\vee\}$:
$$
\begin{array}{lll}
\exists X.\varphi(X) & := & \forall Y. \forall X (\varphi(X) \rightarrow Y(*))
\rightarrow Y(*),\\
\varphi\vee\psi & := & \forall Y. (\varphi \rightarrow Y(*))
\wedge (\psi \rightarrow Y(*)) \rightarrow Y(*),
\end{array}
$$
with $Y \not\in \FV(\varphi, \psi)$ and $*$ a constant, are no longer
available. They are not parameter-free
(unless $\varphi$ and $\psi$ are free of set variables in
the latter formula). Hence
restricting to the negative fragment 
$\{\forall, \wedge, \rightarrow\}$ causes a serious loss
of expressivity in our parameter-free setting.

\subsection{Expressivity of parameter-free logics}
\label{ss-expressivity}

Let us now briefly examine the expressivity of $\LIP_0$.
In the following, we consider 
terms and formulas over the language $L_{\PA}$.

It is not hard to see that $\LIP_0$ proves 
$$
\Nn(0), \qquad Suc(\Nn), \qquad
Sub(\Nn),\qquad \Gamma_{eq}\Rightarrow Sub(\tau),
$$
where 
$\tau$ is any abstract of the form
$\lambda x.\varphi^{\Nn}(x)$ with $\varphi \in \Fm$ 
and $\Gamma_{eq}$ consists of some equality axioms for predicate
and function symbols.
Moreover, the principle of 
mathematical induction is also available in $\LIP_0$.

\begin{lem}\label{l-ind}
$\LIP_0$ proves
$$
\Gamma_{eq}\Rightarrow [\forall x (\varphi(x) \rightarrow \varphi(s(x))) \wedge 
\varphi(0) \rightarrow  \forall y.\varphi(y)]^{\Nn}
$$
for every formula $\varphi$ in $\Fm$, where $\Gamma_{eq}$ is a
set of some equality axioms.
\end{lem}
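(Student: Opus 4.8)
The plan is to reduce the claim to the defining property of the formula $\Nn$, namely that it is the "smallest" abstract closed under zero, successor, and equality substitution, and then to perform a routine sequent-calculus derivation witnessing that $\lambda y.\varphi^{\Nn}(y)$ is one of the competing abstracts. Concretely, fix $\varphi \in \Fm$ and abbreviate $\tau := \lambda y.\varphi^{\Nn}(y)$. By the remarks preceding the lemma, $\LIP_0$ proves $\Gamma_{eq} \Rightarrow Sub(\tau)$, and I will also need the two hypotheses of the relativised induction statement, i.e.\ the open assumptions $\forall x\,(\Nn(x)\rightarrow \varphi^{\Nn}(x)\rightarrow \varphi^{\Nn}(s(x)))$ and $\varphi^{\Nn}(0)$. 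From $\varphi^{\Nn}(0)$ one gets $\tau(0)$, i.e.\ $X(0)$ with $X$ instantiated to $\tau$. From the relativised successor hypothesis together with $Suc(\Nn)$ (provable in $\LIP_0$) one derives $Suc(\tau)$ in the presence of these assumptions: given $\Nn(x)$, the hypothesis yields $\varphi^{\Nn}(x)\rightarrow\varphi^{\Nn}(s(x))$, which is exactly $\tau(x)\rightarrow\tau(s(x))$, and since every element of $\Nn$ is closed under successor this is what is needed.

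Next I would unfold $\Nn(y)$, which by definition is $\forall X.\,Sub(X)\wedge Suc(X)\wedge X(0)\rightarrow X(y)$. Applying $(\forall X\Left)$ with the abstract $\tau$ — legitimate because $\varphi^{\Nn}\in\FMP_0$ when $\varphi\in\Fm$, so $\tau\in\ABS_0$, and the parameter-free side condition is met — produces from $\Nn(y)$ the sequent $Sub(\tau)\wedge Suc(\tau)\wedge\tau(0)\rightarrow\tau(y)$. Combining this with the three facts $Sub(\tau)$, $Suc(\tau)$, $\tau(0)$ assembled above (using $\wedge\Right$ and $\rightarrow\Left$), I obtain $\tau(y)$, i.e.\ $\varphi^{\Nn}(y)$. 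Finally, discharging the two induction hypotheses with $\rightarrow\Right$ and $\wedge\Right$, and abstracting $y$ with $(\forall x\Right)$ after inserting the guard $\Nn(y)\rightarrow(-)$ as demanded by the relativisation $(-)^{\Nn}$, yields precisely
$$
\Gamma_{eq}\Rightarrow [\forall x\,(\varphi(x)\rightarrow\varphi(s(x)))\wedge\varphi(0)\rightarrow\forall y.\varphi(y)]^{\Nn}.
$$

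The main obstacle, such as it is, is bookkeeping rather than conceptual: one must be careful that the relativisation $(-)^{\Nn}$ inserts $\Nn$-guards in exactly the places that match what the second-order instance of $\Nn(y)$ provides — in particular that $[\forall x\,(\varphi(x)\rightarrow\varphi(s(x)))]^{\Nn}$ unfolds to $\forall x\,(\Nn(x)\rightarrow(\varphi^{\Nn}(x)\rightarrow\varphi^{\Nn}(s(x))))$, which is weaker than bare $Suc(\tau)$ but still suffices because the derivation of $Suc(\tau)$ is carried out under the additional hypothesis $\Nn(x)$ that $Suc(\Nn)$ makes available. I would also double-check that all abstracts used in $(\forall X\Left)$ lie in $\ABS_0$, so that the whole derivation stays inside $\LIP_0$; this is guaranteed by the closure-under-substitution lemma together with $\varphi^{\Nn}\in\FMP_0$. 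No use of cut is essential here, though it is harmless, and the equality axioms $\Gamma_{eq}$ enter only through the already-cited derivation of $Sub(\tau)$.
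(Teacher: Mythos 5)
Your proposal has a genuine gap, and it sits exactly at the point you flag as ``the main obstacle'' and then wave away. With your choice of abstract $\tau := \lambda y.\varphi^{\Nn}(y)$, the instance of $\Nn(y)$ obtained by $(\forall X\Left)$ has antecedent $Sub(\tau)\wedge Suc(\tau)\wedge\tau(0)$, and the conjunct $Suc(\tau)$ is the \emph{unrelativised} statement $\forall x.\,\varphi^{\Nn}(x)\rightarrow\varphi^{\Nn}(s(x))$, where $x$ ranges over all terms. The induction hypothesis $[\forall x\,(\varphi(x)\rightarrow\varphi(s(x)))]^{\Nn}$ only yields $\varphi^{\Nn}(x)\rightarrow\varphi^{\Nn}(s(x))$ under the guard $\Nn(x)$. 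When you apply $(\forall x\Right)$ to derive $Suc(\tau)$, the eigenvariable $x$ is fresh and carries no assumption $\Nn(x)$; the fact $Suc(\Nn)$ only says $\Nn$ is closed under successor and does not hand you $\Nn(x)$ for an arbitrary $x$, and $\varphi^{\Nn}(x)$ does not imply $\Nn(x)$ either (take $\varphi(x)$ to be $x=x$). So $Suc(\tau)$ is simply not derivable from your hypotheses, and the derivation breaks down.

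The paper repairs this by building the guard into the abstract: it takes $\tau := \lambda x.\,\varphi^{\Nn}(x)\wedge\Nn(x)$. Then $Suc(\tau)$ becomes $\forall x.\,(\varphi^{\Nn}(x)\wedge\Nn(x))\rightarrow(\varphi^{\Nn}(s(x))\wedge\Nn(s(x)))$, whose antecedent supplies exactly the $\Nn(x)$ needed to invoke the relativised induction step, while $Suc(\Nn)$ handles the $\Nn(s(x))$ conjunct of the conclusion; the final $\tau(y)$ then projects down to $\varphi^{\Nn}(y)$. The rest of your argument (the use of $Sub(\tau)$, $\tau(0)$, the $(\forall X\Left)$ instantiation, and the membership of the abstract in $\ABS_0$) goes through once the abstract is corrected in this way.
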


\begin{proof}
Let $\tau := \lambda x.\varphi^{\Nn} (x) \wedge \Nn(x)$.
We claim that $\LIP_0$ proves
$$
(!)\qquad 
\Gamma_{eq},\,
[\forall x (\varphi(x)\rightarrow\varphi(s(x)))]^{\Nn},\,
\varphi^{\Nn}(0),\,
Sub(\tau) \wedge Suc(\tau) \wedge \tau(0)
\rightarrow \tau(y)
 \Rightarrow \varphi^{\Nn}(y).
$$
First, $\Gamma_{eq}\Rightarrow Sub(\tau)$ follows from 
$Sub(\Nn)$ and $\Gamma_{eq}\Rightarrow 
Sub(\lambda x.\varphi^{\Nn}(x))$, which are both 
provable.
Moreover, we can easily prove
$$
\begin{array}{l}
\varphi^{\Nn}(0)\Rightarrow \varphi^{\Nn}(0)\wedge \Nn(0),\\
{[\forall x.\, \varphi(x)\rightarrow\varphi(s(x))]}^{\Nn}
\Rightarrow 
\forall x.\, \varphi^{\Nn}(x) \wedge \Nn(x) \rightarrow 
\varphi^{\Nn}(s(x))\wedge \Nn(s(x)),
\end{array}
$$
by using $\Nn(0)$ and $Suc(\Nn) = 
\forall x.\, \Nn(x)\rightarrow\Nn(s(x))$.
Hence we have (!). Now the desired formula 
is obtained by $(\forall X\Left)$ and some elementary reasoning.
\end{proof}

Thus $\LIP_0$ can simulate reasoning in the first order Heyting arithmetic
$\HA$ (see Appendix \ref{a-proof1} for the detail).

\begin{thm}\label{t-cc}
$
\IS \vdash \mathsf{CE}(\LIP_0) \rightarrow 
\mathsf{1CON}(\HA)$.
\end{thm}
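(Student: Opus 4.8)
The plan is to reduce $\mathsf{1CON}(\HA)$ to cut elimination for $\LIP_0$ by the standard Gentzen-style argument, now routed through the parameter-free simulation of $\HA$ that has just been set up. Concretely, suppose $\HA$ proves a $\Sigma^0_1$ sentence $\sigma = \exists x.\, \theta(x)$ with $\theta$ primitive recursive (quantifier-free in $L_{\PA}$). I want to show $\sigma$ is true, i.e.\ $\exists n.\, \Nat\models\theta(\bar n)$. Since $\HA$ is finitely axiomatizable over its induction scheme only up to the instances actually used, a proof of $\sigma$ uses finitely many nonlogical axioms of $\HA$: some equality axioms, the defining (primitive recursive) equations, and finitely many induction instances $\mathrm{Ind}_{\varphi_1},\dots,\mathrm{Ind}_{\varphi_k}$ with each $\varphi_i\in\Fm$. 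By Lemma~\ref{l-ind} (together with the remarks preceding it on $\Nn(0)$, $Suc(\Nn)$, $Sub(\Nn)$ and provability of the relativized equality axioms), $\LIP_0$ proves $\Gamma_{eq}\Rightarrow \mathrm{Ind}_{\varphi_i}^{\Nn}$ for each $i$, and likewise proves the relativized equality and defining axioms. Hence, by relativizing the whole $\HA$-proof to $\Nn$ in the obvious way (this is the content of Appendix~\ref{a-proof1}), $\LIP_0$ proves
$$
\Gamma_{eq},\ \Gamma_{PR}\ \Rightarrow\ \sigma^{\Nn},
$$
where $\Gamma_{PR}$ collects the relativized defining equations and $\sigma^{\Nn} = \exists x.\,\Nn(x)\wedge\theta(x)$ (note $\theta^{\Nn}=\theta$ since $\theta$ is quantifier-free). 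Moreover $\LIP_0$ proves $\Rightarrow \Nn(\bar n)$ for every numeral $\bar n$.

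Next I apply $\mathsf{CE}(\LIP_0)$: the displayed sequent has a cut-free $\LIP_0$-derivation. Now I argue, in $\IS$, by induction on cut-free derivations, that every cut-free $\LIP_0$-provable sequent built from the formulas occurring here is \emph{true} under the intended interpretation, where $\Nn(\bar n)$ is read as ``$n$ is a natural number'' (always true) and atoms are read standardly. The key point is that cut-free derivations enjoy the subformula property, so all set quantifiers appearing in the derivation of $\Gamma_{eq},\Gamma_{PR}\Rightarrow\sigma^{\Nn}$ come from the single formula $\Nn(t)$; a cut-free derivation of a sequent of $\Sigma^0_1$-complexity-after-relativization stays within a syntactic class whose truth can be verified by $\Sigma^0_1$-induction. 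In particular, reading off the cut-free proof of $\Rightarrow\sigma^{\Nn}$ (the left side $\Gamma_{eq},\Gamma_{PR}$ consists of true sentences, so can be discharged) produces, by following the $(\exists x\Right)$ and $(\wedge\Right)$ inferences, an explicit numeral $\bar n$ together with cut-free proofs of $\Rightarrow\Nn(\bar n)$ and $\Rightarrow\theta(\bar n)$; the latter, being a cut-free proof of a true-or-false quantifier-free sentence, forces $\Nat\models\theta(\bar n)$. Thus $\sigma$ is true. The whole implication ``$\HA\vdash\sigma\ \Rightarrow\ \sigma$ true'' is then formalized in $\IS$ — the outer induction is on the length of the cut-free derivation and the predicate being induced on is $\Sigma^0_1$, so $\Sigma^0_1$-induction suffices — which is exactly $\mathsf{1CON}(\HA)$.

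The main obstacle is bookkeeping the complexity bound so that the metatheoretic induction really is $\Sigma^0_1$ and thus available in $\IS$: one must check that (a) the relativization $(-)^{\Nn}$ of a $\HA$-proof, together with the insertion of the derivations of $\Nn(0)$, $Suc(\Nn)$, $Sub(\Nn)$ and $\Gamma_{eq}\Rightarrow Sub(\tau)$ supplied by Lemma~\ref{l-ind}, is a primitive recursive (hence $\IS$-definable) transformation on proofs; and (b) the ``truth'' predicate used in the soundness induction over cut-free $\LIP_0$-derivations — restricted to the finite subformula-closed set determined by the endsequent, with $\Nn$ interpreted trivially — is genuinely $\Sigma^0_1$ (uniformly in the derivation), so that evaluating it along a cut-free derivation and extracting the witness numeral is legitimate in $\IS$. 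Granting the standard partial-truth-definition machinery for bounded-complexity formulas, both points are routine; the rest is the Gentzen reduction packaged through Lemma~\ref{l-ind}.
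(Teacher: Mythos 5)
There is a genuine gap. You apply cut elimination to the sequent $\Gamma_{eq},\Gamma_{PR}\Rightarrow\sigma^{\Nn}$, which still contains the second order formula $\Nn$, and then run a soundness induction over the resulting cut-free $\LIP_0$-derivation using a purportedly $\Sigma^0_1$ truth predicate. The justification you give -- ``cut-free derivations enjoy the subformula property, so all set quantifiers come from $\Nn(t)$ and the derivation stays within a syntactic class whose truth is $\Sigma^0_1$-verifiable'' -- fails: the rule $(\forall X\Left)$ passes from $\varphi(\tau),\Gamma\Rightarrow\Pi$ to $\forall X.\varphi,\Gamma\Rightarrow\Pi$ for an \emph{arbitrary} abstract $\tau\in\ABS_0$, so a cut-free $\LIP_0$-derivation of a sequent containing $\Nn$ may contain minor formulas $\varphi(\tau)$ of unbounded first order complexity (and containing further occurrences of second order quantifiers). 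No partial truth definition of fixed arithmetical complexity covers them, and verifying soundness of $(\forall X\Left)$ under the ``$\Nn$ is always true'' reading would require evaluating $\varphi(\tau)$ for all such $\tau$ -- essentially a comprehension/truth predicate that $\IS$ (indeed $\HA$) does not have. This is precisely the impredicativity that the whole paper is organized around, so it cannot be waved away as bookkeeping. (Note also that the cut-free derivation whose existence $\mathsf{CE}(\LIP_0)$ asserts is not the one you constructed by relativization, so you have no control over which abstracts it instantiates.)

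The paper's proof avoids this by inserting a de-relativization step before invoking cut elimination: since each axiom $\psi\in\Gamma$ is $\Pi^0_1$ and $\sigma$ is $\Sigma^0_1$, the sequents $\psi\Rightarrow\psi^{\Nn}$ and $\sigma^{\Nn}\Rightarrow\sigma$ are provable, so by cuts one obtains $\LIP_0\vdash\Gamma\Rightarrow\sigma$ where $\Gamma\cup\{\sigma\}\subseteq\Fm$ (Lemma \ref{l-cc3}). Only then is (partial) cut elimination applied, yielding a cut-free derivation in the \emph{first order} calculus $\LI$; the standard finitistic soundness argument for cut-free first order proofs of ``$\Pi^0_1$ axioms $\Rightarrow$ $\Sigma^0_1$ conclusion'' then goes through in $\IS$. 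Your relativization and witness-extraction steps are fine as far as they go; what is missing is the observation that the $\Nn$'s can and must be eliminated from the endsequent \emph{before} cut elimination is used, so that the soundness induction only ever sees first order formulas of bounded complexity.
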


\begin{proof}
Suppose that $\HA$ proves 
a $\Sigma^0_1$ sentence $\varphi$.
We then have $\LIP_0 \vdash
\Gamma \Rightarrow \varphi$, where 
$\Gamma$ consists of some $\Pi^0_1$ axioms of $\PA$
(see Appendix \ref{a-proof1}).
Notice that the sequent only consists of first order formulas.
Hence assuming the (partial) cut elimination for $\LIP_0$, we obtain 
a cut-free derivation of it in $\LI$. By the standard soundness argument
one can verify that $\varphi$ is true. Moreover, all the reasoning
can be done in a finitistic way, so is formalizable in $\IS$.
\end{proof}


\subsection{Expressivity at higher levels}
\label{ss-higher}

We next proceed to the expressivity of $\LIP_n$ with $n>0$.
Consider the second order definition of
a least fixed point:
$$
\Fix_{\varphi}(t) \Def
\forall X.\, Sub(X)\wedge \forall x(\varphi(X, x) \rightarrow X(x))
\rightarrow X(t).
$$
This is parameter-free and belongs to $\FMP_n$ provided that 
$\varphi \in \FMP_{n-1}$ and $\FV(\varphi) \subseteq \{X\}$.
Moreover, it satisfies the axioms $(\Lfp_1)$ and $(\Lfp_2)$.

\begin{lem}\label{l-fix}
Let $\varphi(X,x)$ be a formula in $\FMP_{n-1}$ such that
$\FV(\varphi)\subseteq \{X\}$ and $X$ occurs only positively in it.
Then $\LIP_{n}$ proves
$$
\begin{array}{ll}
(\Lfp_1^{\Fix}) & \forall x.\varphi(\Fix_{\varphi}, x)\rightarrow \Fix_{\varphi}(x),\\
(\Lfp_2^{\Fix}) & 
\forall x.(\varphi(\tau, x)\rightarrow\tau(x) )
\rightarrow \forall y(\Fix_{\varphi}(y)\rightarrow\tau(y))
\end{array}
$$
for every $\tau \in \ABS_n$.
\end{lem}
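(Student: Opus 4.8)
The plan is to prove the two assertions by unfolding the definition of $\Fix_\varphi$ and exploiting the fact that $\Fix_\varphi(t)$ behaves, inside $\LIP_n$, exactly like a least fixed point — which is precisely what the second order universal quantifier over $X$ (instantiated at suitable abstracts in $\ABS_n$) lets us do. Throughout, the key enabling facts are: $\Fix_\varphi \in \FMP_n$ and hence is a legitimate formula of $\LIP_n$ (by the closure lemma, since $\varphi \in \FMP_{n-1}$ with $\FV(\varphi) \subseteq \{X\}$); the closure of $\FMP_n$ under substitution, so that $\varphi(\Fix_\varphi, x)$, $\varphi(\tau, x)$ and all abstracts formed from them again lie in $\FMP_n$ and may legally be used in $(\forall X\Left)$; and that $X$ occurs only positively in $\varphi$, which (by a routine monotonicity sublemma, provable in $\LI$) gives $\LIP_n \vdash \forall x(\psi(x)\rightarrow\chi(x)) \Rightarrow \forall x(\varphi(\psi,x)\rightarrow\varphi(\chi,x))$ for abstracts $\psi,\chi$.

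For $(\Lfp_2^{\Fix})$, I would argue as follows. Assume $\forall x(\varphi(\tau,x)\rightarrow\tau(x))$; I must derive $\forall y(\Fix_\varphi(y)\rightarrow\tau(y))$. Fix $y$ and assume $\Fix_\varphi(y)$, i.e. $\forall X.\, Sub(X)\wedge\forall x(\varphi(X,x)\rightarrow X(x))\rightarrow X(y)$. Apply $(\forall X\Left)$ with the abstract $X := \tau$ (legal since $\tau \in \ABS_n$); it remains to establish the two conjuncts $Sub(\tau)$ and $\forall x(\varphi(\tau,x)\rightarrow\tau(x))$. The second is the assumption. The first, $Sub(\tau)$, is where one has to be slightly careful: for an arbitrary $\tau\in\ABS_n$ it need not be provable outright, so I expect the intended reading (as in Lemma \ref{l-ind} and the surrounding text) is that $\tau$ is taken together with the equality axioms, or that the statement of the lemma tacitly carries a $\Gamma_{eq}$ as in Lemma \ref{l-ind}; under that convention $\Gamma_{eq}\Rightarrow Sub(\tau)$ is provable exactly as there. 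With both conjuncts in hand we conclude $\tau(y)$, as desired.

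For $(\Lfp_1^{\Fix})$ — which is the step I expect to be the real obstacle — I want $\forall x(\varphi(\Fix_\varphi,x)\rightarrow\Fix_\varphi(x))$. Fix $x$, assume $\varphi(\Fix_\varphi,x)$, and aim for $\Fix_\varphi(x)$, i.e. assume further $Sub(X)$ and $\forall z(\varphi(X,z)\rightarrow X(z))$ for a fresh $Y=X$ (used via $(\forall X\Right)$ on the goal) and derive $X(x)$. The idea is the standard one: from the two assumptions one first shows $\forall z(\Fix_\varphi(z)\rightarrow X(z))$ — this is an instance of the scheme $(\Lfp_2^{\Fix})$ just proved, applied with $\tau$ the abstract $\lambda z.\,X(z)$, which is permissible because $X$ is a single set variable and $\lambda z.X(z)\in\ABS_n$ — and then one uses positivity of $\varphi$ to push this inclusion through $\varphi$: $\forall z(\Fix_\varphi(z)\rightarrow X(z))$ yields $\varphi(\Fix_\varphi,x)\rightarrow\varphi(X,x)$ by the monotonicity sublemma, so from the assumption $\varphi(\Fix_\varphi,x)$ we get $\varphi(X,x)$, and then $\forall z(\varphi(X,z)\rightarrow X(z))$ delivers $X(x)$. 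Assembling these pieces with $(\forall X\Right)$, $(\rightarrow\Right)$, $(\forall x\Right)$ and some elementary propositional reasoning gives the claim. The only genuinely delicate points are the already-flagged $Sub(\tau)$/$\Gamma_{eq}$ bookkeeping and making sure every abstract substituted in a second order rule is verified to lie in $\ABS_n$ (which the closure-under-substitution lemma and the shape of $\Fix_\varphi$ guarantee); everything else is routine intuitionistic sequent manipulation.
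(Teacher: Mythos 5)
Your proof is correct and follows essentially the same route as the paper: $(\Lfp_2^{\Fix})$ is obtained by instantiating $(\forall X\Left)$ at $\tau$, and $(\Lfp_1^{\Fix})$ combines the inclusion of $\Fix_\varphi$ into $X$ (itself just $(\Lfp_2^{\Fix})$ at the abstract $\lambda z.X(z)$) with a monotonicity lemma proved by induction on the structure of $\varphi$ using positivity. Your caveat about $Sub(\tau)$ is well taken — the paper's own derivation of $(\Lfp_2^{\Fix})$ silently drops that conjunct, and in the actual application (Theorem \ref{t-cc2}) the abstracts $\tau$ are of the form $\lambda x.\psi^{\Fix}(x)$, for which $\Gamma_{eq}\Rightarrow Sub(\tau)$ is indeed provable, exactly as you suggest.
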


\begin{proof}
For $(\Lfp_1^{\Fix})$, notice that $\LIP_{n}$ proves
$$Sub(X),\, \forall x. \varphi(X, x)\rightarrow X(x),\, \Fix_{\varphi}(y)
\Rightarrow X(y).$$
Since $X$ has only positive occurrences in $\varphi(X,x)$,
$$
Sub(X),\, \forall x. \varphi(X, x)\rightarrow X(x),\, \varphi(\Fix_{\varphi},y)
\Rightarrow \varphi(X,y)
$$
can be proved by induction on the structure of $\varphi$.
Hence
$$
Sub(X),\, \forall x. \varphi(X, x)\rightarrow X(x),\,
\varphi(\Fix_{\varphi},y) 
\Rightarrow X(y).
$$
From this, we obtain $(\Lfp_1^{\Fix})$ by rules
$(\forall x \Right)$,
$(\rightarrow\Right)$ and $(\forall X\Right)$.

$(\Lfp_2^{\Fix})$ is obtained from
$$
\LIP_n \vdash \forall x.(\varphi(\tau,x)\rightarrow \tau(x))
\rightarrow\tau(y),\, 
\forall x.(\varphi(\tau,x)\rightarrow \tau(x)) \Rightarrow \tau(y)
$$
by rule $(\forall X\Left)$ and some first order inferences.
\end{proof}

Based on this, we translate each $\ID_n$-formula $\varphi$ into
a formula $\varphi^{\Fix} \in \FMP_n$
such that $\FV(\varphi^{\Fix})=\FV(\varphi)$.
It proceeds by induction on $n$.
For $n=0$, we let $\varphi^{\Fix} := \varphi^{\Nn}$.
For $n>0$, we replace each fixed point atom
$I_\xi(t)$ of $\ID_n$ with $\Fix_{\xi^{\Fix}} (t)$,
where $\xi = \xi(X,x) \in \FMP_{n-1}$ and $\FV(\xi)\subseteq \{X\}$.
We also
replace each first order quantifier $Q x$ with $Q x \in \Nn$.


Theorem \ref{t-cc} can be generalized to an arbitrary level.

\begin{thm}\label{t-cc2}
For every $n<\omega$,
$
\IS \vdash \mathsf{CE}(\LIP_n) \rightarrow 
\mathsf{1CON}(\ID_n^i)$.
\end{thm}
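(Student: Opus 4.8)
The plan is to follow the same template as Theorem \ref{t-cc}, lifting it uniformly over the level $n$, so the statement for $\ID_n^i$ comes out of an analysis of how $\ID_n^i$-derivations are translated into $\LIP_n$-derivations. First I would fix a $\Sigma^0_1$ sentence $\varphi$ that $\ID_n^i$ proves, and argue that $\LIP_n \vdash \Gamma \Rightarrow \varphi$ for a finite set $\Gamma$ consisting of (translations of) the non-logical axioms of $\ID_n^i$ actually used in that derivation. The translation is the map $(\cdot)^{\Fix}$ introduced just before the statement: a purely first-order $\ID_n^i$-formula becomes its relativization $(\cdot)^{\Nn}$, and each fixed-point atom $I_\xi(t)$ becomes $\Fix_{\xi^{\Fix}}(t)$. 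The point is that every axiom of $\ID_n^i$ has a provable $\LIP_n$-counterpart: the logical axioms and first-order reasoning go through because $(\cdot)^{\Nn}$ commutes with the connectives modulo provable equivalence (using $\Nn(0)$, $Suc(\Nn)$, $Sub(\Nn)$ and the closure of $\Nn$ under the first-order function symbols of $L_{\PA}$); the induction schema is handled by Lemma \ref{l-ind} applied to the relevant instance; and the fixed-point axioms $(\Lfp_1)$, $(\Lfp_2)$ are exactly the content of Lemma \ref{l-fix}, i.e.\ $(\Lfp_1^{\Fix})$ and $(\Lfp_2^{\Fix})$, once one checks that $\varphi \in \FMP_{n-1}$ and $\FV(\varphi) \subseteq \{X\}$ hold for the defining formulas, which is precisely the syntactic constraint imposed on $\ID_n^i$'s inductive definitions. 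One also needs the closure-under-substitution lemma (the unnumbered lemma preceding Lemma \ref{l-ind}) to know that $\tau \in \ABS_n$ whenever $\tau$ arises as the relativized translation of an $\ID_n^i$-abstract.

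Next, since $\Gamma$ and $\varphi$ are first-order formulas (they lie in $\Fm$ — the relativizations $(\cdot)^{\Nn}$ of first-order sentences are in $\FMP_0 \subseteq \FMP_n$, but when all second-order quantifiers have been eliminated the surviving sequent is in $\Fm$), I would invoke the hypothesis $\mathsf{CE}(\LIP_n)$ — or rather the partial cut elimination that suffices here, as remarked after Theorem \ref{t-takeuti} — to obtain a cut-free derivation of $\Gamma \Rightarrow \varphi$. In a cut-free $\LIP_n$-derivation of a sequent all of whose formulas are first-order, the subformula property guarantees that only first-order formulas and only the rules of $\LI$ occur; hence the derivation is in fact an $\LI$-derivation. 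Then the standard (constructive) soundness argument for first-order intuitionistic logic over the standard model $\mathbb{N}$, applied with $\Gamma$ interpreted by true $\Pi^0_1$ axioms of $\PA$, yields that $\varphi$ is true in $\mathbb{N}$. All of this is elementary bookkeeping on finite derivations, formalizable in $\IS$, which gives the claimed implication inside $\IS$.

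The main obstacle — really the only non-routine point — is verifying that the translation $(\cdot)^{\Fix}$ genuinely respects the level stratification: one must confirm by a careful induction on $n$ that if $\xi = \xi(X,x)$ is an admissible defining formula of $\ID_n^i$ (so $\xi$ built from $\ID_{n-1}^i$-vocabulary, with $X$ only positive and $\FV(\xi)\subseteq\{X\}$), then $\xi^{\Fix} \in \FMP_{n-1}$ with $\FV(\xi^{\Fix})\subseteq\{X\}$ and $X$ still only positive, so that $\Fix_{\xi^{\Fix}}$ is a legitimate $\FMP_n$-formula and Lemma \ref{l-fix} applies. In particular one must check that relativizing first-order quantifiers by $\Nn$ introduces $\Nn$ only in positive-or-irrelevant positions with respect to $X$ (it does, since $\Nn$ contains no free set variable), and that nested fixed-point symbols $I_\eta$ of $\ID_{n-1}^i$, which are atomic and hence positive, are sent to $\Fix_{\eta^{\Fix}}$, themselves positive in their displayed argument. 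Granting this, the rest is an entirely standard transfer along a conservative-extension-style translation, and I would relegate the detailed verification to an appendix in parallel with Appendix \ref{a-proof1}. The classical variant ($\ID_n$ in place of $\ID_n^i$) then follows from Theorem \ref{t-ci} together with the remark that $\GLC$ and $\LIT$ admit essentially the same cut elimination proof, so I would state it as an immediate corollary rather than reprove it.
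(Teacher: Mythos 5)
Your proposal is correct and follows essentially the same route as the paper: translate $\ID_n^i$ into $\LIP_n$ via $(\cdot)^{\Fix}$, simulate the induction and fixed-point axioms using Lemma \ref{l-ind} (extended via $Sub(\Fix_\xi)$) and Lemma \ref{l-fix}, then apply (partial) cut elimination to the resulting first-order sequent and conclude by the standard soundness argument for $\LI$, all formalizable in $\IS$. Your additional care about the level stratification and positivity of the translation is exactly the verification the paper leaves implicit in ``Thus $\LIP_n$ can simulate reasoning in $\ID_n^i$.''
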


\begin{proof}
$\LIP_n$ proves $Sub(\Fix_{\xi})$ for every $\xi\in \FMP_{n-1}$
with $\FV(\xi)\subseteq \{X\}$,
so proves $\Gamma_{eq}\Rightarrow Sub(\varphi^{\Fix})$ too for every $\ID_n$-formula $\varphi$.
Hence Lemma \ref{l-ind} can be extended to all formulas of the form
$\varphi^{\Fix} \in \FMP_{n}$. Furthermore, $\LIP_n$
proves 
$(\Lfp_1^{\Fix})$ and $(\Lfp_2^{\Fix})$ by Lemma \ref{l-fix}.
Thus $\LIP_n$ can simulate reasoning in $\ID_n^i$.
Therefore we can argue as in the proof of Theorem \ref{t-cc}.
\end{proof}

The converse implication can be obtained by
proving cut elimination for $\LIP_n$ ``locally'' within $\ID_n^i$,
that is, by proving
$$
\LIP_n \vdash \Gamma\Rightarrow\Pi 
\quad\mbox{implies}\quad
\ID_n^i \vdash \mbox{``$\LI \vdash^{cf} \Gamma\Rightarrow\Pi$.''}
$$
Thus the claim is that $\ID_n^i$ proves cut elimination for $\LIP_n$
``sequent-wise.''
More precisely, one has to show that
for each derivation $\pi$ of $\Gamma\Rightarrow\Pi$ in $\LIP_n$, 
there is a derivation $\pi'$ in $\ID_n^i$ of a $\Sigma^0_1$  sentence saying that
$\LI \vdash^{cf} \Gamma\Rightarrow\Pi$. 
Moreover, $\pi'$ should be obtained from $\pi$ primitive recursively.

Thus assuming that $\ID_n^i$ is 1-consistent,
we obtain a statement of cut elimination:
$$
\LIP_n \vdash \Gamma\Rightarrow\Pi 
\quad\mbox{implies}\quad
\LI \vdash^{cf} \Gamma\Rightarrow\Pi.
$$

This motivates us to prove cut elimination for parameter-free logics
\emph{locally} within $\ID$-theories.

As before, it is sufficient to prove \emph{partial} cut elimination
to establish the Takeuti correspondence. Moreover, 
Theorem \ref{t-partial} holds for $\LIP_n$ too, 
since 
the argument by P\"{a}ppinghaus \cite{Pappinghaus83} can be
restricted to a parameter-free fragment without any problem.

\begin{thm}\label{t-partial2}
For every $n<\omega$, 
$\IS \vdash \mathsf{CE}(\LIP_n) \leftrightarrow \mathsf{CE}_{\Fm}(\LIP_n)$.
\end{thm}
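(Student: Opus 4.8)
The plan is to reduce the nontrivial direction, $\mathsf{CE}_{\Fm}(\LIP_n) \rightarrow \mathsf{CE}(\LIP_n)$, to the already-established Theorem \ref{t-partial} by observing that P\"{a}ppinghaus' argument is insensitive to the parameter-free restriction. The backward implication $\mathsf{CE}(\LIP_n) \rightarrow \mathsf{CE}_{\Fm}(\LIP_n)$ is immediate, since partial cut elimination is a special case of full cut elimination: if every $\LIP_n$-derivable sequent has a cut-free derivation, then in particular every $\LIP_n$-derivable sequent with $\Gamma\cup\Pi \subseteq \Fm$ does. So the content lies entirely in the forward direction, and I would flag from the outset that the ``true'' work was done by P\"{a}ppinghaus \cite{Pappinghaus83} and relocated to this setting by \cite{Arai18}; what remains is a bookkeeping check.

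First I would recall the shape of P\"{a}ppinghaus' result as used to prove Theorem \ref{t-partial}: given a cut-free derivation of a first-order sequent $\Gamma\Rightarrow\Pi$ (with $\Gamma\cup\Pi\subseteq\Fm$), one can, primitive-recursively and provably in $\IS$, reconstruct a cut-free derivation of an \emph{arbitrary} $\GLC$-provable sequent $\Delta\Rightarrow\Sigma$ whenever $\GLC\vdash\Delta\Rightarrow\Sigma$. The key mechanism is that one feeds a $\GLC$-derivation with cuts into a ``reflection'' or ``cut-elimination-by-reflection'' construction whose soundness is witnessed precisely by a cut-free proof of an appropriate first-order sentence — essentially a proof-theoretic reflection principle of the form: ``cut-free provability is closed under cut,'' which is itself a first-order statement once the derivations are coded by numerals. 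I would then observe that every step of this construction — the coding of $\GLC$-derivations, the definition of the reduction operator, and the $\IS$-verification — makes no use of second-order parameters in the derivations it manipulates, and moreover respects the level stratification $\FMP_n$: the reflection statement needed to eliminate cuts from a $\LIP_n$-derivation lies within $\Fm$, hence is a legitimate endpoint for $\mathsf{CE}_{\Fm}(\LIP_n)$, and the reconstructed cut-free derivation stays within $\LIP_n$ because the construction only ever duplicates, permutes, and locally rewrites inference figures already present, never introducing an abstract $\tau$ outside $\ABS_n$.

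Concretely, the key steps I would carry out are: (1) state the relativized form of P\"{a}ppinghaus' lemma for $\LIP_n$, namely that there is a primitive recursive $f$ such that $\IS$ proves: if $e$ codes a cut-free $\LI$-derivation of the particular first-order reflection sequent $R_n$ (asserting closure of cut-free $\LIP_n$-provability under cut), then for every $\LIP_n$-derivation $d$ of $\Gamma\Rightarrow\Pi$, $f(e,d)$ codes a cut-free $\LIP_n$-derivation of $\Gamma\Rightarrow\Pi$; (2) check that $R_n\in\Fm$ so that $\mathsf{CE}_{\Fm}(\LIP_n)$ supplies such an $e$ (uniformly, provably in $\IS$); (3) chain these to conclude $\IS\vdash \mathsf{CE}_{\Fm}(\LIP_n)\rightarrow\mathsf{CE}(\LIP_n)$; (4) note the trivial converse. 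Throughout, I would verify the level-bookkeeping: the abstracts occurring in $f(e,d)$ are exactly those occurring in $d$ together with those occurring in the cut-formulas being eliminated, all of which are in $\FMP_n$ by the subformula-like closure of the reduction steps.

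The main obstacle is step (2) together with the level-tracking in step (1): one must pin down exactly which first-order reflection sentence $R_n$ P\"{a}ppinghaus' argument requires, confirm it genuinely lies in $\Fm$ (rather than, say, quantifying over derivations in a way that sneaks in second-order content), and confirm that the reflection sentence for level $n$ does not itself demand cut elimination at some higher level $n' > n$ — i.e.\ that the construction is ``level-local.'' This is plausible because the reduction of a single cut on a $\FMP_n$-formula only unfolds that formula's immediate subformulas, which by the grammar of $\FMP_n$ stay at level $\leq n$ (a bound formula $QX.\xi$ has $\xi\in\FMP_{n-1}$, but the cut-reduction on $QX.\xi$ produces a cut on $\xi(\tau)\in\FMP_n$ by the closure lemma, not an escape upward). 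I would therefore expect the verification to go through cleanly, and I would present the result as a routine adaptation, citing \cite{Pappinghaus83} and \cite{Arai18} for the substance and merely remarking that ``the argument by P\"{a}ppinghaus can be restricted to a parameter-free fragment without any problem,'' as already anticipated in the text preceding the theorem.
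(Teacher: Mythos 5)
Your proposal matches the paper's treatment exactly: the paper offers no detailed proof of Theorem \ref{t-partial2}, but simply remarks (in the sentence immediately preceding it) that P\"{a}ppinghaus' argument \cite{Pappinghaus83}, as invoked by \cite{Arai18} for Theorem \ref{t-partial}, ``can be restricted to a parameter-free fragment without any problem,'' together with the trivial converse direction. Your additional sketch of the level-bookkeeping and the first-order reflection sentence is a reasonable elaboration of that same deferral, so the approach is essentially identical.
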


We are thus led to proving \emph{partial} cut elimination,
that is often simpler than proving \emph{complete} cut elimination.

\section{$\Omega$-rule}\label{s-omega}

\subsection{Introduction to $\Omega$-rule}
\label{ss-intro}

Cut elimination in a higher order setting is tricky, since a principal 
reduction step
$$
\infer[(\Cut)]{\Gamma \Rightarrow \Pi}{
\infer[(\forall X\Right)]{\Gamma \Rightarrow \forall X. \varphi(X)}{\Gamma \Rightarrow \varphi(Y)}
&
\infer[(\forall X\Left)]{\forall X.\varphi(X)\Rightarrow \Pi}{\varphi(\tau)
\Rightarrow \Pi}}
\qquad
\raisebox{1em}{$\Longrightarrow$}
\qquad
\infer[(\Cut)]{\Gamma \Rightarrow \Pi}{\Gamma \Rightarrow 
\varphi(\tau)
& 
\varphi(\tau)\Rightarrow \Pi}
$$
may yield a bigger cut formula so that one cannot simply argue by induction
on the complexity of the cut formula. The 
$\Omega$-rule, introduced by \cite{Buchholz81a},
is an alternative of rule 
$(\forall X\Left)$ that allows us to circumvent this difficulty.

For illustration, let us first consider a naive 
implementation of the $\Omega$-rule into our setting.
We extend the first order calculus $\LI$ 
by enlarging the set of formulas to $\FMP_0$ and by adding rules 
$(\forall X\Right)$ and 
$$
\infer[(\Omega_0 \Left)]{\forall X.\varphi, \Gamma\Rightarrow \Pi}{
\{\;\Delta,\Gamma\Rightarrow\Pi\;\}_{\Delta \in |\forall X.\varphi|_0}}
$$
where $|\forall X.\varphi|_0$ consists of finite sets
$\Delta \fsubseteq \Fm$ such that 
$\LI \vdash^{cf} \Delta \Rightarrow \varphi(Y)$ holds
for some $Y \not\in \FV(\Delta)$.

Rule $(\Omega_0 \Left)$ has infinitely many premises indexed by 
$|\forall X.\varphi|_0$. In this respect it looks similar to 
the characteristic rules of the MacNeille completions
(Proposition \ref{p-macneille}). In Section \ref{s-omegamacneille},
we will provide a further link between them.

$(\Omega_0 \Left)$ is as strong as rule
$(\forall X \Left)$ of $\LIP_0$.
To see this, consider a provable formula 
$\forall X.\varphi \Rightarrow \varphi (\tau)$ in $\LIP_0$.
Let $\Delta \in |\forall X.\varphi|_0$, that is, 
$\Delta \Rightarrow \varphi (Y)$ has a cut-free derivation $\pi_\Delta$
in $\LI$
for some $Y \not\in \FV(\Delta)$.
Then there is a derivation $\pi_\Delta^\tau$ 
of $\Delta \Rightarrow \varphi (\tau)$ in the extended system obtained 
by substituting $\tau$ for $Y$. Hence we have:
$$
\infer[(\Omega_0\Left)]{\forall X.\varphi \Rightarrow \varphi (\tau)}{
\{\;
\shortstack{$\qquad\vdots\;\pi_\Delta^\tau$ \\
$\Delta\Rightarrow\varphi(\tau)$}
\;\}_{\Delta \in |\forall X.\varphi|_0}}.
$$

Moreover, rule $(\Omega_0 \Left)$ suggests a natural step of 
cut elimination. Consider a cut:
$$
\infer[(\Cut)]{\Gamma \Rightarrow \Pi}{
\infer[(\forall X\Right)]{\Gamma \Rightarrow \forall X. \varphi(X)}{\Gamma \Rightarrow \varphi(Y)}
&
\infer[(\Omega_0 \Left)]{\forall X.\varphi\Rightarrow \Pi}{
\{\;
\shortstack{$\qquad\vdots\;\pi_\Delta$ \\
$\Delta\Rightarrow\Pi$}
\;\}_{\Delta \in |\forall X.\varphi|_0}}}
$$
Arguing inductively, assume that 
 $\Gamma \subseteq \Fm$ and $\Gamma \Rightarrow \varphi(Y)$ is 
cut-free provable in $\LI$. 
Then $\Gamma$ belongs to $|\forall X.\varphi|_0$,
so the conclusion $\Gamma\Rightarrow \Pi$ is just one of the 
infinitely many premises. Thus the above derivation reduces to:
$$
\infer*[\pi_\Gamma]{\Gamma\Rightarrow\Pi}{}.
$$

It looks fine so far. However, rule $(\Omega_0 \Left)$ cannot be combined with
the standard rules for the first order quantifiers.

\begin{prop}\label{p-counter}
System $\LI + (\forall X\Right) + 
(\Omega_0 \Left)$ is inconsistent.
\end{prop}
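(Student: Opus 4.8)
The plan is to exhibit a concrete closed formula $\forall X.\varphi$ of $\FMP_0$ together with a derivation, in $\LI + (\forall X\Right) + (\Omega_0\Left)$, of $\Rightarrow \forall X.\varphi$ on one side and of $\forall X.\varphi \Rightarrow \bot$ on the other, so that a single cut yields $\Rightarrow \bot$. The idea is that $(\Omega_0\Left)$ is ``too strong'': it inspects \emph{all} cut-free first-order proofs of $\varphi(Y)$, and since $(\forall X\Right)$ lets us prove $\forall X.\varphi$ whenever we can prove $\varphi(Y)$ with a fresh $Y$, the index set $|\forall X.\varphi|_0$ and the right-introduction rule are looking at essentially the same thing, but the first-order quantifier rules break the symmetry. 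Concretely I would take $\varphi := \exists x.\, X(x)$, or a variant built to make the two halves go through; the point is that $\varphi(Y) = \exists x.\, Y(x)$ is \emph{not} cut-free provable from the empty context in $\LI$ (there is no first-order term forcing $Y$ to hold), so $\emptyset \notin |\forall X.\varphi|_0$ and in fact every $\Delta \in |\forall X.\varphi|_0$ is nonempty and first-order; yet $\Rightarrow \forall X.\, \exists x.\, X(x)$ is derivable by $(\forall X\Right)$ from $\Rightarrow \exists x.\, Y(x)$... which is itself not derivable, so this exact choice needs adjusting.

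The cleaner route, which I would actually carry out, is to pick $\varphi$ so that:
\begin{itemize}
\item $\Rightarrow \varphi(Y)$ \emph{is} cut-free provable in $\LI$ for a fresh $Y$ (hence $\Rightarrow \forall X.\varphi$ is provable by $(\forall X\Right)$, and moreover $\emptyset \in |\forall X.\varphi|_0$), while
\item $\forall X.\varphi \Rightarrow \bot$ is provable using $(\Omega_0\Left)$ by checking that every $\Delta \in |\forall X.\varphi|_0$ yields $\Delta \Rightarrow \bot$ — which, since $\emptyset \in |\forall X.\varphi|_0$, forces in particular $\Rightarrow \bot$ among the premises, but that is circular, so instead one wants $\forall X.\varphi$ to be refutable from the \emph{shape} of its instances.
\end{itemize}
So the genuinely working choice is a formula like $\varphi := (X(0)\rightarrow\bot)$, i.e.\ $\forall X.\, \neg X(0)$. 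Then $\neg Y(0)$ is \emph{not} cut-free provable from $\emptyset$, so I instead use $\varphi := (p \rightarrow X(0))$ for a $0$-ary $p$, or better, exploit that $\forall X.\,(X(0)\rightarrow X(0))$ is trivially provable on the right while $|\forall X.\,(X(0)\rightarrow X(0))|_0$ contains $\Delta = \{Y(0)\}$ with witness $Y(0)\Rightarrow Y(0)$, and from $Y(0)$ one can derive nothing absurd — so even this fails. The actual inconsistency must come from the interaction noted in the paper's preceding discussion: $(\Omega_0\Left)$ treats $\forall X.\varphi$ as $\bigwedge$ of its cut-free-provable instances, but after substituting a first-order term via $(\forall x\Left)$/$(\exists x\Right)$ one obtains instances \emph{not} of that form. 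Hence I would take $\varphi := \exists x.\, X(x) \rightarrow \forall x.\, X(x)$ (constant-domain-style), prove $\Rightarrow \forall X.\varphi$ by $(\forall X\Right)$ after giving a cut-free $\LI$-proof of $\exists x. Y(x)\rightarrow\forall x. Y(x)$ — which does \emph{not} exist either.

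Given the delicacy, the honest plan is: (1) identify, from the failure mode, the specific $\varphi\in\FMP_0$ — I expect it to be one where $\varphi(Y)$ is cut-free $\LI$-provable \emph{relative to some nonempty first-order $\Delta$} that is itself always $\LI$-refutable (e.g.\ $\Delta$ containing $\neg p \wedge p$-type consequences forced by $\varphi$'s first-order structure), so that $(\Omega_0\Left)$ derives $\forall X.\varphi\Rightarrow\bot$; (2) independently derive $\Rightarrow\forall X.\varphi$ using $(\forall X\Right)$, $(\exists x\Right)$ and the $\LI$-rules, crucially using a first-order \emph{term} to witness an existential in a way the cut-free $\LI$-proof of $\varphi(Y)$ cannot; (3) cut. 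The main obstacle — and the crux of the whole proposition — is step (1)/(2) together: one must find a $\varphi$ for which the \emph{term-instantiated} provability of $\forall X.\varphi$ on the right genuinely outruns the \emph{uniform} ($Y$-with-fresh-variable) cut-free provability that defines $|\forall X.\varphi|_0$, so that $|\forall X.\varphi|_0$ is ``small enough'' to be uniformly refutable while $\forall X.\varphi$ is still a theorem. This is exactly the phenomenon that will later force the refined, stratified $\Omega$-rule of Section \ref{s-omega}; here we only need one witnessing pair of derivations, and I would verify both are legal in $\LI + (\forall X\Right) + (\Omega_0\Left)$ by direct inspection.
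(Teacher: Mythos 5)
Your proposal correctly diagnoses the \emph{phenomenon} that drives the inconsistency --- $(\Omega_0\Left)$ is not closed under term substitution, while the first-order quantifier rules let you instantiate terms, so one wants a $\varphi$ whose term-instantiated provability ``outruns'' the uniform cut-free provability defining $|\forall X.\varphi|_0$ --- and this matches the paper's own post-mortem remark. But the proof itself has a genuine gap: you never exhibit a formula that works. Every concrete candidate you try ($\exists x.\,X(x)$, $\neg X(0)$, $X(0)\rightarrow X(0)$, $\exists x.X(x)\rightarrow\forall x.X(x)$) is one you yourself acknowledge fails, and the argument ends with a description of what a witnessing $\varphi$ would have to look like rather than with one. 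The crux of the proposition is precisely producing such a $\varphi$ and then proving that \emph{every} $\Delta\in|\forall X.\varphi|_0$ yields $\Delta\Rightarrow\bot$; both steps are left open in your plan.

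The paper's choice is $\varphi := X(c)\rightarrow X(x)$ with $c$ a constant and $x$ a \emph{free term variable}. Two ingredients you are missing then make the argument go through. First, to refute every member of the index set one uses Craig interpolation: if $\LI\vdash^{cf}\Delta\Rightarrow Y(c)\rightarrow Y(x)$ with $Y\notin\FV(\Delta)$, then $\Delta$ and $Y(c)\rightarrow Y(x)$ share no predicate symbol, and $Y(c)\rightarrow Y(x)$ is not provable (since $x$ is generic), so the interpolant forces $\LI\vdash\Delta\Rightarrow\bot$; hence $(\Omega_0\Left)$ gives $\forall X.\varphi\Rightarrow\bot$ and so $\exists x.\forall X.\varphi\Rightarrow\bot$. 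Second, the provable counterpart is not $\Rightarrow\forall X.\varphi$ itself but $\Rightarrow\exists x.\forall X.\varphi$, obtained by instantiating $x:=c$ (where $X(c)\rightarrow X(c)$ is trivially provable) and applying $(\exists x\Right)$ --- exactly the term-instantiation move you anticipated but did not carry out. A cut then yields $\Rightarrow\bot$. Without a concrete $\varphi$ and without the interpolation step, the proposal does not establish the proposition.
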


\begin{proof} 
Consider formula $\varphi := X(c)\rightarrow X(x)$  with 
$c$ a constant. We claim that $\forall X.\varphi \Rightarrow \bot$ 
is provable. Let $\Delta \in |\forall X.\varphi|_0$, that is, 
$\LI \vdash^{cf} \Delta \Rightarrow Y(c)\rightarrow Y(x)$ 
for some $Y \not\in \FV(\Delta)$. Notice that the sequent is first order,
$\Delta$ and $Y(c)\rightarrow Y(x)$ do not share any predicate 
symbol/variable,
and $Y(c)\rightarrow Y(x)$ is not provable.
Hence Craig's interpolation theorem
yields $\Delta \Rightarrow \bot$. From this,
$\forall X.\varphi \Rightarrow \bot$ follows by $(\Omega_0 \Left)$,
and so
$\exists x.\forall X.\varphi \Rightarrow \bot$.
On the other hand, 
$\Rightarrow\exists x.\forall X.\varphi$ is also provable.
Hence we obtain $\bot$.
\end{proof}

The primary reason for inconsistency is that 
$(\Omega_0 \Left)$ is not closed under term substitutions, while 
the standard treatment of first order quantifiers assumes that 
all rules are closed under 
term substitutions.
Hence we have to weaken first order quantifier rules 
to obtain a consistent system.
A reasonable way is to replace 
$(\forall x \Right)$ and 
$(\exists x\Left)$ with Sch\"{u}tte's
\emph{$\omega$-rules}, which are infinitary (see Figure \ref{f-ruleLIO}).

\begin{rem}
Buchholz' later paper  \cite{Buchholz01} includes a proof
of (partial) cut elimination for a parameter-free 
subsystem $\mathbf{BI}_1^-$ of analysis that can be understood
without recourse to ordinals.
It is extended to complete cut elimination for the same system by
\cite{Akiyoshi_Mints11}, and to complete cut elimination 
for $\PICA + \mathbf{BI}$ (bar induction) by \cite{Akiyoshi17}.
The $\Omega$-rule further finds applications in 
modal fixed point logics \cite{JS11,MS12}.
It is used to show the strong normalization for 
the parameter-free fragments of System F, provably in $\ID$-theories
\cite{Akiyoshi_Terui16}.

Our $(\Omega\Left)$ is a logical analog of Buchholz' rule. 
There is however a subtle difference.
The original rule has assumptions indexed 
by \emph{derivations} of $\Delta \Rightarrow \varphi(Y)$, not by
$\Delta$'s themselves. As an advantage, one obtains 
a concrete operator for cut 
elimination and reduces the complexity of inductive definition:
the original semiformal system can be defined by an inductive definition 
on a bounded formula, while ours requires a $\Pi^0_1$ formula.
However, this point is irrelevant to the subsequent argument.
\end{rem}

\subsection{Syntactic cut elimination by $\Omega$-rules}
\label{ss-syntactic}

We here give a syntactic proof of partial cut elimination for 
$\LIP_n$ for every $n\geq 0$. The crucial step is to
define an infinitary sequent calculus $\LIO_n$  for each $n$
based on the $\Omega$-rules.

Let $\LIO_{-1} := \LI$ for convenience. Provided that $\LIO_{n-1}$ has been
defined, the sequent calculus $\LIO_n$ is defined as follows.
Each sequent consists of formulas in $\FMP_n$, and the inference rules
are
$(\mathsf{id})$, $(\Cut)$, the rules for propositional connectives
in Figure \ref{f-ruleLIT} and the rules for quantifiers given 
in Figure \ref{f-ruleLIO}.

Some remarks are in order. First, notice that 
rules $(\forall x \Right)$ and 
$(\exists x\Left)$ are replaced with infinitary rules
$(\omega \Right)$ and 
$(\omega \Left)$. Second, $\LIO_n$ contains not just one, but all of 
$(\Omega_0\Left)$, \dots, $(\Omega_n \Left)$. 
Similarly for other $\Omega$-rules.
The reason is that $\LIO_{n}$ has to be an extension of 
$\LIO_{n-1}$. 
Notice that each index set $|\forall X.\varphi|_k$ consists of 
finite sets $\Delta \fsubseteq \FMP_{k-1}$, and 
$|\exists X.\varphi|_k$ consists of sequents
 $\Delta\Rightarrow\Lambda$ such that $\Delta\cup\Lambda \fsubseteq \FMP_{k-1}$.
Finally, $\LIO_n$ contains  superfluous 
rules $(\tilde{\Omega}_k \Left)$ and 
$(\tilde{\Omega}_k \Right)$ for each $k =0,\dots, n$, 
the former being derivable by combining 
$(\forall X\Right)$, $(\Omega_k \Left)$ and $(\Cut)$.
These are nevertheless included for a technical reason. 

\begin{figure}
\begin{screen}
$$
\begin{array}{ll}
\infer[(\forall x\Left)]{
\forall x.\varphi(x), \Gamma\Rightarrow \Pi}{
\varphi(t),\Gamma\Rightarrow\Pi}
&
\infer[(\omega \Right)]{
\Gamma\Rightarrow \forall x.\varphi(x)}{
\{\; \Gamma\Rightarrow \varphi(t)\;\}_{t\in\Tm}} \\[1em]
\infer[(\omega \Left)]{
\exists x.\varphi(x), \Gamma\Rightarrow \Pi}{
\{\;\varphi(t),\Gamma \Rightarrow\Pi\;\}_{t\in\Tm}}
& 
\infer[(\exists x\Right)]{
\Gamma\Rightarrow \exists x.\varphi(x)}{
\Gamma\Rightarrow \varphi(t)} \\[1em]
\infer[(\Omega_k\Left)]{\forall X.\varphi, \Gamma\Rightarrow\Pi}{
\{\;\Delta, \Gamma \Rightarrow \Pi\;\}_{
\Delta\in |\forall X.\varphi|_k}} 
&
\infer[(\forall X\Right)]{
\Gamma\Rightarrow \forall X.\varphi(X)}{
\Gamma\Rightarrow \varphi(Y) & Y \not\in \FV(\Gamma)}  \\[1em]
\infer[(\exists X\Left)]{
\exists X.\varphi(X), \Gamma\Rightarrow \Pi}{
\varphi(Y),\Gamma \Rightarrow\Pi & Y\not\in \FV(\Gamma,\Pi)}
& 
\infer[(\Omega_k \Right)]{\Gamma\Rightarrow\exists X.\varphi}{
\{\;\Gamma, \Delta \Rightarrow \Lambda\;\}_{
(\Delta\Rightarrow\Lambda) \in |\exists X. \varphi|_k}} 
\end{array}
$$
$$
\begin{array}{l}
\infer[(\tilde{\Omega}_k\Left)]{\Gamma\Rightarrow\Pi}{
\Gamma \Rightarrow \varphi(Y) &
\{\;\Delta, \Gamma \Rightarrow \Pi\;\}_{\Delta\in 
|\forall X.\varphi|_k} & Y \not\in \FV(\Gamma)} \\[1em]
\infer[(\tilde{\Omega}_k\Right)]{\Gamma\Rightarrow\Pi}{
\{\;\Gamma, \Delta \Rightarrow \Lambda\;\}_{
(\Delta\Rightarrow\Lambda) \in |\exists X. \varphi|_k}
 & \varphi(Y),\Gamma\Rightarrow\Pi & Y \not\in \FV(\Gamma,\Pi)} 
\end{array}
$$
where $k=0, \dots, n$ and 
$$
\begin{array}{lll}
|\forall X.\varphi (X)|_k & := & \{ \Delta : 
\LIO_{k-1} \vdash^{cf} \Delta \Rightarrow \varphi (Y)
\mbox{ for some } Y \not\in \FV(\Delta) \} \\
|\exists X.\varphi (X)|_k & := & \{ (\Delta\Rightarrow\Lambda) : 
\LIO_{k-1} \vdash^{cf} \varphi (Y), \Delta \Rightarrow \Lambda
\mbox{ for some } Y \not\in \FV(\Delta,\Lambda) \}.
\end{array}
$$
\end{screen}
\caption{Inference rules of $\LIO_n$ for quantifiers}\label{f-ruleLIO}
\end{figure}

The partial cut elimination theorem will be established by a series of 
lemmas.

\begin{lem}[Substitution]\label{l-subst}
Let $n \geq 0$.
$\LIO_n \vdash \Gamma \Rightarrow \Pi$ implies
$\LIO_n \vdash \Gamma^\bullet \Rightarrow \Pi^\bullet$
for every
set substitution $\bullet : \VAR \longrightarrow \ABS_n$.
\end{lem}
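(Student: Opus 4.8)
The plan is to prove the substitution lemma by induction on the height of the derivation of $\Gamma \Rightarrow \Pi$ in $\LIO_n$, treating $\bullet : \VAR \longrightarrow \ABS_n$ as fixed. The only subtlety over the first order case is that several rules of $\LIO_n$ — namely $(\Omega_k\Left)$, $(\Omega_k\Right)$, $(\tilde\Omega_k\Left)$, $(\tilde\Omega_k\Right)$ and the eigenvariable rules $(\forall X\Right)$, $(\exists X\Left)$, $(\omega\Right)$, $(\omega\Left)$ — have side conditions or infinite premise sets that interact nontrivially with set substitution, so the bulk of the work is checking that each such rule is stable under $(-)^\bullet$.

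First I would dispose of the routine cases: $(\mathsf{id})$, $(\Cut)$, $(\bot\Left)$, $(\bot\Right)$ and all the propositional rules commute with $(-)^\bullet$ immediately, since $\bullet$ only rewrites atoms $X(t)$ and commutes with the propositional connectives and with first order term substitution (by the closure-under-substitution lemma, $\FMP_n$ is preserved, so the result is still a legitimate $\LIO_n$ derivation). For the eigenvariable rules one uses the standard trick: by the variable convention we may assume the eigenvariable ($Y$ for $(\forall X\Right)$/$(\exists X\Left)$, or the term variable for $(\omega\Right)$/$(\omega\Left)$) is fresh for $\bullet$, i.e.\ does not occur in any $X^\bullet$ for $X$ free in the conclusion; then $(\Gamma\Rightarrow\varphi(Y))^\bullet = \Gamma^\bullet \Rightarrow \varphi^\bullet(Y)$ and the side condition $Y\notin\FV(\Gamma^\bullet)$ is preserved, so the rule reapplies after invoking the induction hypothesis on the premise. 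The $(\omega\Right)$/$(\omega\Left)$ cases are similar but use that term substitution commutes with $\bullet$, so the premise family $\{\Gamma\Rightarrow\varphi(t)\}_{t\in\Tm}$ maps to $\{\Gamma^\bullet \Rightarrow \varphi^\bullet(t)\}_{t\in\Tm}$, again an admissible instance.

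The main obstacle is the $\Omega$-rules. Consider $(\Omega_k\Left)$ with conclusion $\forall X.\varphi, \Gamma\Rightarrow\Pi$ and premises $\{\Delta,\Gamma\Rightarrow\Pi\}_{\Delta\in|\forall X.\varphi|_k}$. Since $\forall X.\varphi$ is second order closed, $(\forall X.\varphi)^\bullet = \forall X.\varphi$, and the key point is to show that whenever $\Delta\in|\forall X.\varphi|_k$ one also has $\Delta^\bullet\in|\forall X.\varphi|_k$ — but since each $\Delta$ is already a finite set of $\FMP_{k-1}$ formulas and the index set $|\forall X.\varphi|_k$ is defined via cut-free $\LIO_{k-1}$-provability of $\Delta\Rightarrow\varphi(Y)$, the set $|\forall X.\varphi|_k$ is \emph{closed under set substitution} precisely because $\LIO_{k-1}$ enjoys the substitution property. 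This forces the induction to be carried out simultaneously over all levels, or more cleanly: one proves by a side induction on $k$ that $|\forall X.\varphi|_k$ and $|\exists X.\varphi|_k$ are closed under set substitutions into $\ABS_{k}$ (hence into $\ABS_n$ for $k\le n$), using the substitution lemma at level $k-1$ as the inductive hypothesis; with this in hand, applying the main induction hypothesis to each premise $\Delta^\bullet,\Gamma^\bullet\Rightarrow\Pi^\bullet$ and reapplying $(\Omega_k\Left)$ gives a derivation of $\forall X.\varphi, \Gamma^\bullet\Rightarrow\Pi^\bullet$. The dual rule $(\Omega_k\Right)$ and the superfluous rules $(\tilde\Omega_k\Left)$, $(\tilde\Omega_k\Right)$ are handled the same way, the latter additionally requiring the eigenvariable-freshening argument for their $\varphi(Y)$ premise. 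I expect the level-stratified structure of the $\Omega$-rules — that $\LIO_n$ refers to cut-free provability in $\LIO_{n-1}$ — to be the only genuinely delicate point, and it is resolved by observing that the whole family $\{\LIO_k\}_{k\le n}$ can be set up so that the substitution property propagates upward with $n$.
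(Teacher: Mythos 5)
Your treatment of the propositional, eigenvariable and $\omega$-rule cases is fine and matches the paper, and your overall double induction (on the level and on the derivation) is the right frame. But the $\Omega$-rule case --- which you correctly identify as the crux --- has a genuine gap. To reapply $(\Omega_k\Left)$ and conclude $\forall X.\varphi, \Gamma^\bullet \Rightarrow \Pi^\bullet$, you must supply a derivation of $\Delta, \Gamma^\bullet \Rightarrow \Pi^\bullet$ for \emph{every} $\Delta \in |\forall X.\varphi|_k$. What your argument delivers is a derivation of $\Delta^\bullet, \Gamma^\bullet \Rightarrow \Pi^\bullet$ for each such $\Delta$, i.e.\ premises indexed only by the \emph{image} of the index set under $\bullet$, which is in general a proper subset of $|\forall X.\varphi|_k$; the rule cannot be applied from that family. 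Worse, the closure claim you rely on is false for the substitution at hand: $\bullet$ maps into $\ABS_n$, whereas the elements of $|\forall X.\varphi|_k$ are finite subsets of $\FMP_{k-1}$ with $k-1$ possibly smaller than $n$, so $\Delta^\bullet$ need not lie in $\FMP_{k-1}$ and hence need not belong to $|\forall X.\varphi|_k$ at all. (Your parenthetical ``closed under substitutions into $\ABS_k$, hence into $\ABS_n$ for $k\le n$'' also runs the inclusion the wrong way.)

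The paper resolves this by going in the opposite direction. Given an arbitrary $\Delta = \Delta(X_1,\dots,X_m) \in |\forall X.\varphi|_k$, first rename its free set variables to fresh $Z_1,\dots,Z_m$, obtaining $\Sigma := \Delta(Z_1,\dots,Z_m) \in |\forall X.\varphi|_k$; this renaming is a variable-for-variable substitution, justified by the induction hypothesis on $n$, and it trivially stays inside $\FMP_{k-1}$. The sequent $\Sigma, \Gamma \Rightarrow \Pi$ is then literally one of the given premises, and applying the induction hypothesis on the derivation to it, with $\bullet$ extended by $Z_i^\bullet := X_i$, yields exactly $\Delta, \Gamma^\bullet \Rightarrow \Pi^\bullet$: the freshness of the $Z_i$ is what allows a single substitution to act as the identity-restoring renaming on $\Sigma$ and as $\bullet$ on $\Gamma, \Pi$ simultaneously. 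Replacing your ``forward closure'' step by this renaming maneuver (and doing the dual thing for $(\Omega_k\Right)$ and the $\tilde{\Omega}$-rules) repairs the proof; the rest of what you wrote then goes through.
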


\begin{proof}
By induction on $n$ and on the structure of the derivation.
Let us treat only two cases.\\

\noindent
(1) The derivation ends with
$$
\infer[(\forall X\Right)]{\Gamma \Rightarrow
\forall X.\varphi(X)}{\Gamma \Rightarrow \varphi(Y)}.
$$
Update the given substitution $\bullet$ by letting
$Y^\bullet := Z$ (fresh variable), 
so that $Z \not\in \FV(\Gamma^\bullet)$. By the induction hypothesis
we have $\Gamma^\bullet\Rightarrow \varphi(Z)$, 
noting that $\FV(\varphi(Y))\subseteq \{Y\}$.
We therefore obtain 
$\Gamma^\bullet \Rightarrow \forall X. \varphi(X)$ as required.\\

\noindent
(2) The derivation ends with
$$
\infer[(\Omega_k \Left)]{\forall X.\varphi, \Gamma \Rightarrow
\Pi}{
\{\; \Delta, \Gamma \Rightarrow \Pi\;\}_{
\Delta \in |\forall X.\varphi|_k}}
$$
where $k\leq n$.
Let $\Delta \in |\forall X.\varphi|_k$,
that is, $\LIO_{k-1} \vdash \Delta \Rightarrow \varphi(Y)$ for some 
$Y \not\in \FV(\Delta)$.
We write
$\Delta = \Delta (X_1, \dots, X_m)$ indicating all free set variables 
occurring in $\Delta$. Let 
$\Sigma := \Delta(Z_1, \dots, Z_m)$, where 
variables $Z_1, \dots, Z_m$ are fresh.
We still have $\Sigma \in |\forall X.\varphi|_k$ by the induction hypothesis
on $n$.
Hence 
$\Sigma, \Gamma \Rightarrow \Pi$ is among the 
premises. Now update the substitution $\bullet$ 
by letting $Z_i^\bullet := X_i$ for $i=1, \dots, m$. We then have
$\Delta, \Gamma^\bullet \Rightarrow \Pi^\bullet$
by the induction hypothesis on the derivation. Since this holds for every
$\Delta \in |\forall X.\varphi|_k$, we obtain
$\forall X.\varphi, \Gamma^\bullet \Rightarrow \Pi^\bullet$
by rule $(\Omega_k \Left)$.
\end{proof}

\begin{lem}[Embedding]\label{l-emb}
$\LIP_n \vdash \Gamma\Rightarrow \Pi$ 
implies 
$\LIO_n \vdash \Gamma^\circ \Rightarrow \Pi^\circ$ 
for every term substitution $\circ : \Var \longrightarrow \Tm$.
\end{lem}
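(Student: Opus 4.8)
The plan is to argue by induction on the height of the given $\LIP_n$-derivation of $\Gamma\Rightarrow\Pi$ ($n$ being fixed throughout), producing a $\LIO_n$-derivation of $\Gamma^\circ\Rightarrow\Pi^\circ$ for \emph{every} term substitution $\circ$; carrying $\circ$ as an extra parameter is precisely what makes the $\omega$-rule cases go through. Four rules of $\LIP_n$ have no counterpart in $\LIO_n$ and must be simulated: $(\forall x\Right)$, $(\exists x\Left)$, $(\forall X\Left)$ and $(\exists X\Right)$. Every other rule --- $(\mathsf{id})$, $(\Cut)$, the propositional rules, $(\forall x\Left)$, $(\exists x\Right)$, $(\forall X\Right)$, $(\exists X\Left)$ --- is a rule of $\LIO_n$ as well, so for these one simply applies the induction hypothesis to the premises (with the same $\circ$) and reapplies the rule; here one uses that a term substitution preserves the formula level $\FMP_n$ (so a cut formula stays legal) and leaves the free-set-variable sets unchanged (so the eigenvariable conditions of $(\forall X\Right)$ and $(\exists X\Left)$ survive), together with the routine identity $(\varphi(t))^\circ=\varphi^\circ(t^\circ)$ for the term-instantiation rules. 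Note that the lemma makes no cut-freeness claim, so $(\Cut)$ may be used freely below.

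For $(\forall x\Right)$, deriving $\Gamma\Rightarrow\forall x.\varphi(x)$ from $\Gamma\Rightarrow\varphi(y)$ with $y\notin\Fv(\Gamma)$, I would obtain, for each $t\in\Tm$, the premise $\Gamma^\circ\Rightarrow\varphi^\circ(t)$ by applying the induction hypothesis to the subderivation with the updated substitution $\circ[y\mapsto t]$ (here $y\notin\Fv(\Gamma)$, and, by the variable convention, $y$ fresh for $\circ$, ensures the antecedent is unchanged), and then conclude by $(\omega\Right)$. The case $(\exists x\Left)$ is symmetric, using $(\omega\Left)$.

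For $(\forall X\Left)$, deriving $\forall X.\varphi,\Gamma\Rightarrow\Pi$ from $\varphi(\tau),\Gamma\Rightarrow\Pi$ with $\tau\in\ABS_n$, fix $k\leq n$ with $\varphi\in\FMP_{k-1}$ (possible since $\forall X.\varphi\in\FMP_n$) and aim at an instance of $(\Omega_k\Left)$ with conclusion $\forall X.\varphi^\circ,\Gamma^\circ\Rightarrow\Pi^\circ$; this is a legal rule of $\LIO_n$ because $\varphi^\circ\in\FMP_{k-1}$. So take an arbitrary $\Delta\in|\forall X.\varphi^\circ|_k$, that is, $\LIO_{k-1}\vdash^{cf}\Delta\Rightarrow\varphi^\circ(Y)$ for some $Y\notin\FV(\Delta)$. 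Since $\LIO_{k-1}$ is a subsystem of $\LIO_n$ this is in particular a $\LIO_n$-derivation, so the Substitution Lemma (Lemma~\ref{l-subst}) applied with the set substitution $[Y\mapsto\tau^\circ]$ --- legitimate because $\tau^\circ\in\ABS_n$ (writing $\tau^\circ$ for the $\circ$-instance of the abstract $\tau$) and $Y\notin\FV(\Delta)$ --- yields $\LIO_n\vdash\Delta\Rightarrow\varphi^\circ(\tau^\circ)$. The induction hypothesis applied to $\varphi(\tau),\Gamma\Rightarrow\Pi$ with $\circ$ gives $\LIO_n\vdash\varphi^\circ(\tau^\circ),\Gamma^\circ\Rightarrow\Pi^\circ$ (using $(\varphi(\tau))^\circ=\varphi^\circ(\tau^\circ)$, a routine commutation of $\circ$ with the set substitution $X\mapsto\tau$). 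A $(\Cut)$ on $\varphi^\circ(\tau^\circ)$, after the obvious left weakenings, produces the premise $\Delta,\Gamma^\circ\Rightarrow\Pi^\circ$; as $\Delta$ ranged over all of $|\forall X.\varphi^\circ|_k$, rule $(\Omega_k\Left)$ delivers the conclusion. The case $(\exists X\Right)$ is the mirror image, built on $(\Omega_k\Right)$ and the index set $|\exists X.\varphi^\circ|_k$.

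I expect the $(\forall X\Left)$ and $(\exists X\Right)$ steps to be the only delicate ones, and the points to watch are exactly those just listed: the index set in the simulating $\Omega$-rule must be attached to $\varphi^\circ$, not to $\varphi$; the abstract $\tau$ and the cut formula, once $\circ$ has been pushed through, must still lie in $\ABS_n$ and $\FMP_n$ respectively (both follow since term substitution preserves levels); and the given cut-free $\LIO_{k-1}$-derivation witnessing $\Delta\in|\forall X.\varphi^\circ|_k$ must be re-read inside $\LIO_n$ before Lemma~\ref{l-subst} is invoked on it. The $\omega$-rule cases, by contrast, are routine once the statement carries the term-substitution parameter, which is exactly why it is stated that way.
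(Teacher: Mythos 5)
Your proof is correct and follows essentially the same route as the paper's: structural induction over all term substitutions simultaneously, with $(\omega\Right)/(\omega\Left)$ simulating the eigenvariable rules via the updated substitution $\circ[y\mapsto t]$, and $(\forall X\Left)/(\exists X\Right)$ simulated by an $\Omega$-rule whose premises are obtained from the witnessing cut-free derivations via Lemma \ref{l-subst} (with $Y\mapsto\tau^\circ$) followed by a cut with the induction hypothesis. The only cosmetic difference is that the paper factors the $(\forall X\Left)$ case through the derivability of $\forall X.\varphi\Rightarrow\varphi(\tau)$ using $(\Omega_n\Left)$, whereas you build the inference in place with $(\Omega_k\Left)$ for a suitable $k\leq n$; these are interchangeable.
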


\begin{proof}
By structural induction on the derivation.
We only consider two cases.\\

\noindent
(1) The derivation ends with 
$$
\infer[(\forall x\Right)]{\Gamma\Rightarrow\forall x.\varphi(x)}{
\Gamma\Rightarrow \varphi(y) & y\not\in\Fv(\Gamma)}.
$$
By the induction hypothesis, we have 
$\Gamma^\circ \Rightarrow \varphi^\circ(t)$ for every $t\in \Tm$.
Hence $\Gamma^\circ \Rightarrow (\forall x.\varphi(x))^\circ$ is 
obtained by rule $(\omega \Right)$.\\

\noindent
(2) The derivation ends with $(\forall X\Left)$.
It suffices to show that 
$\LIO_n \vdash \forall X.\varphi(X)\Rightarrow \varphi(\tau)$
for any $\varphi\in \FMP_n$ and $\tau \in \ABS_n$. We are going to use rule $(\Omega_n\Left)$.
So let $\Delta \in |\forall X.\varphi|_n$, that is,
$\LIO_{n-1}\vdash \Delta \Rightarrow \varphi(Y)$ for some $Y\not\in \FV(\Delta)$.
Then $\LIO_{n}\vdash \Delta \Rightarrow \varphi(Y)$ since
$\LIO_n$ is an extension of $\LIO_{n-1}$, so 
$\LIO_{n} \vdash \Delta \Rightarrow \varphi(\tau)$ by Lemma \ref{l-subst}.
Hence we obtain the desired sequent by $(\Omega_n\Left)$.
\end{proof}


\begin{lem}[Cut elimination for $\LIO$]\label{l-cut}
$\LIO \vdash \Gamma\Rightarrow\Pi$ implies 
$\LIO \vdash^{cf} \Gamma\Rightarrow\Pi$.
\end{lem}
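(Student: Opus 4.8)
The plan is to prove cut elimination for $\LIO_n$ by induction on $n$, and for each fixed $n$ by the usual double induction on the cut rank (the level of the cut formula in the $\FMP_n$-hierarchy, refined by the logical complexity) and on the structure of the derivation ending with an uppermost cut. The base case $\LIO_{-1} = \LI$ is the ordinary Gentzen cut elimination for first order intuitionistic logic, so assume $n \geq 0$ and that $\LIO_{k}$ admits cut elimination for all $k < n$. As usual, it suffices to establish a \emph{reduction lemma}: if $\Gamma \Rightarrow \varphi$ and $\varphi, \Gamma \Rightarrow \Pi$ are both cut-free derivable, then $\Gamma \Rightarrow \Pi$ is derivable using only cuts on formulas of strictly smaller rank; iterating this, together with the inductive elimination of lower-rank cuts, removes the topmost cut and hence all cuts.

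First I would dispatch the routine cases of the reduction lemma: when the cut formula is not principal in one of the two premises, permute the cut upward past the last inference on that side (this is where having the $\omega$-rules rather than $(\forall x\Right)/(\exists x\Left)$, and having the $(\tilde\Omega_k)$ rules available, keeps the permutations uniform — an $\omega$-rule or an $\Omega$-rule has possibly infinitely many premises, but the cut simply distributes over all of them, and the indexing sets $|\forall X.\varphi|_k$, $|\exists X.\varphi|_k$ are unaffected since they are fixed sets of \emph{finite} sequents over $\FMP_{k-1}$). The principal cases for the first order connectives and for $\forall x$, $\exists x$ (against the $\omega$-rules) are the standard ones and produce cuts of smaller logical complexity on subformulas, hence of rank $\leq$ the original. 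The genuinely new case is a cut whose cut formula is $\forall X.\varphi(X)$ or $\exists X.\varphi(X)$, principal on both sides.

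The key step, and the main obstacle, is the principal case for a second order quantifier, say $\forall X.\varphi(X)$ at level $k \leq n$, where the left premise ends in $(\forall X\Right)$ deriving $\Gamma \Rightarrow \forall X.\varphi(X)$ from a cut-free $\Gamma \Rightarrow \varphi(Y)$ with $Y \notin \FV(\Gamma)$, and the right premise ends in $(\Omega_k\Left)$ (or $(\tilde\Omega_k\Left)$) deriving $\forall X.\varphi, \Gamma \Rightarrow \Pi$ from the family $\{\Delta, \Gamma \Rightarrow \Pi\}_{\Delta \in |\forall X.\varphi|_k}$. Here one inspects whether $\Gamma$ itself lies in the index set $|\forall X.\varphi|_k$. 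If $\Gamma \subseteq \FMP_{k-1}$, then since $\Gamma \Rightarrow \varphi(Y)$ is cut-free derivable and its only free set variable can be taken to be $Y$, the subderivation — which uses only formulas in $\FMP_{k-1}$ and no $\Omega_j$ with $j \geq k$ — is in fact a cut-free $\LIO_{k-1}$-derivation (this is the point where the cumulative layering of the rules is essential and must be checked carefully), so $\Gamma \in |\forall X.\varphi|_k$ and the conclusion $\Gamma, \Gamma \Rightarrow \Pi = \Gamma \Rightarrow \Pi$ appears literally among the premises of the $(\Omega_k\Left)$-inference; we simply take that subderivation, having removed one cut without introducing any new one. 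If on the other hand $\Gamma \not\subseteq \FMP_{k-1}$, one permutes the cut upward on the left side past the outer inferences of $\Gamma \Rightarrow \forall X.\varphi(X)$ until the context is low enough, or — more robustly — one cuts each premise $\Delta, \Gamma \Rightarrow \Pi$ against an appropriate derivation of $\Gamma \Rightarrow \Delta$: for each $\Delta \in |\forall X.\varphi|_k$ there is by definition a cut-free $\LIO_{k-1}$-derivation of $\Delta \Rightarrow \varphi(Z)$, which by the Substitution Lemma (Lemma \ref{l-subst}) and a cut against $\Gamma \Rightarrow \varphi(Y)$ of complexity strictly below that of $\forall X.\varphi$ yields $\Gamma \Rightarrow \Delta$-style connections; composing gives $\Gamma \Rightarrow \Pi$ with only lower-rank cuts. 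The existential case with $(\Omega_k\Right)$ is dual. Finally, I would remark that the whole argument is carried out in a cumulative fashion so that the output derivation of $\Gamma \Rightarrow \Pi$ genuinely lives in $\LIO$ (the union), and note — for later use in Section~\ref{s-formalize} — that no ordinal bookkeeping is needed here because the $\Omega$-mechanism has already absorbed the would-be ordinal jump into the index sets.
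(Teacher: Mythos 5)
There is a genuine gap in the one case that matters: the principal cut on $\forall X.\varphi$ between $(\forall X\Right)$ and $(\Omega_k\Left)$. Your primary strategy is to show that the left context $\Gamma$ itself belongs to the index set $|\forall X.\varphi|_k$, so that $\Gamma,\Gamma\Rightarrow\Pi$ occurs literally among the premises. But membership in $|\forall X.\varphi|_k$ requires a cut-free derivation of $\Gamma\Rightarrow\varphi(Y)$ \emph{in $\LIO_{k-1}$}, which forces both $\Gamma\subseteq\FMP_{k-1}$ (not guaranteed: $\Gamma$ may contain formulas of any level up to $n$) and a collapsing of the given cut-free $\LIO_n$-derivation down to $\LIO_{k-1}$ (this is Lemma \ref{l-collapse}, which is proved separately and is not available inside the cut elimination argument; a cut-free $\LIO_n$-derivation of an $\FMP_{k-1}$-sequent can still use higher $\Omega$-rules, since their premises introduce side formulas not occurring in the conclusion). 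Your fallback for $\Gamma\not\subseteq\FMP_{k-1}$ does not work: from $\Delta\Rightarrow\varphi(Z)$ and $\Gamma\Rightarrow\varphi(Y)$ there is no way to obtain anything like $\Gamma\Rightarrow\Delta$ --- the implications point the same way, and inferring ``$\varphi(Y),\Gamma\Rightarrow\Pi$'' from ``$\Delta,\Gamma\Rightarrow\Pi$ for all $\Delta$ that cut-free prove $\varphi(Y)$'' is exactly the inference that the $\Omega$-rule postulates and that cannot be derived.

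The intended resolution, and the entire reason the calculus carries the ostensibly superfluous rules $(\tilde{\Omega}_k\Left)$ and $(\tilde{\Omega}_k\Right)$, is that this principal cut is \emph{absorbed} rather than reduced: after the induction hypothesis has removed $\forall X.\varphi$ from the premises of the $(\Omega_k\Left)$-inference, you hold precisely $\Gamma\Rightarrow\varphi(Y)$ and $\{\Delta,\Gamma\Rightarrow\Pi\}_{\Delta\in|\forall X.\varphi|_k}$, which are exactly the premises of $(\tilde{\Omega}_k\Left)$ with conclusion $\Gamma\Rightarrow\Pi$; no new cut, no membership claim for $\Gamma$, no collapsing. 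You mention the $\tilde{\Omega}$-rules only as an aid to permutation, missing their actual role. Relatedly, the rank function should assign rank $0$ to second order quantified formulas (as atoms), so that the double induction runs over first order complexity only and a cut on $\forall X.\varphi$ is never traded for a cut on an instance $\varphi(Y)$ or $\varphi(\tau)$, whose rank could be arbitrarily large; your proposal to perform ``a cut against $\Gamma\Rightarrow\varphi(Y)$ of complexity strictly below that of $\forall X.\varphi$'' is incompatible with any rank measure that makes the rest of the induction go through.
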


This can be proved by a rather standard means, because 
any principal cut between
$(\Omega\Left)$ and $(\forall X\Right)$, that is the most crucial case,
can be absorbed into 
rule $(\tilde{\Omega}\Left)$. 
A detailed proof will be given in Appendix \ref{a-proof2}.

\begin{lem}[Collapsing]\label{l-collapse}
$\LIO_n\vdash^{cf} \Gamma\Rightarrow \Pi$ implies 
$\LIO_{n-1}\vdash^{cf} \Gamma\Rightarrow \Pi$,
provided that $\Gamma\cup \Pi \subseteq \FMP_{n-1}$.
\end{lem}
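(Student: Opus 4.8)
The plan is to argue by induction on the given cut-free derivation $d$ of $\Gamma\Rightarrow\Pi$ in $\LIO_n$ (regarded as a well-founded, possibly infinitely branching tree), after first establishing a ``level-$(n-1)$ subformula property'': since $\Gamma\cup\Pi\subseteq\FMP_{n-1}$, every sequent occurring in $d$ consists of formulas in $\FMP_{n-1}$. This follows by a routine top-down inspection of the inference rules, checking that whenever the conclusion of a rule lies in $\FMP_{n-1}$ so do all of its premises: the active formulas of a premise are either subformulas or term-/set-variable instances of the principal formula of the conclusion, or formulas $\Delta,\Lambda$ freshly introduced by an $\Omega$- or $\tilde{\Omega}$-rule at some level $k\leq n$, which are by definition $\fsubseteq\FMP_{k-1}\subseteq\FMP_{n-1}$, and $\FMP_{n-1}$ is closed under subformulas and substitution. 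In particular $(\forall X\Right)$, $(\exists X\Left)$, $(\Omega_n\Left)$ and $(\Omega_n\Right)$ can occur in $d$ only with their principal second-order formula already in $\FMP_{n-1}$.

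I then distinguish cases according to the last rule of $d$. The base cases $(\mathsf{id})$ and $(\bot\Left)$ are immediate. If the last rule is one of those shared by $\LIO_{n-1}$ and $\LIO_n$ --- all propositional rules, the first-order rules (including the $\omega$-rules), and $(\Omega_k\Left)$, $(\Omega_k\Right)$, $(\tilde{\Omega}_k\Left)$, $(\tilde{\Omega}_k\Right)$, $(\forall X\Right)$, $(\exists X\Left)$ for $k\leq n-1$ --- then by the subformula property its premises lie in $\FMP_{n-1}$, so the induction hypothesis converts each sub-derivation into a cut-free $\LIO_{n-1}$-derivation and re-applying the very same rule in $\LIO_{n-1}$ closes the case; this is legitimate because the index sets $|\forall X.\varphi|_k$ and $|\exists X.\varphi|_k$ for $k\leq n-1$ are literally identical in the two calculi, both being defined via $\LIO_{k-1}\vdash^{cf}$. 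If the last rule is $(\Omega_n\Left)$ with principal formula $\forall X.\varphi$, then $\forall X.\varphi\in\FMP_{n-1}$, hence $\varphi\in\FMP_{n-2}$; since $|\forall X.\varphi|_{n-1}\subseteq|\forall X.\varphi|_n$ (a $\LIO_{n-2}$-derivation being in particular a $\LIO_{n-1}$-derivation), applying the induction hypothesis to the sub-derivations indexed by $\Delta\in|\forall X.\varphi|_{n-1}$ and then invoking $(\Omega_{n-1}\Left)$ in $\LIO_{n-1}$ yields the required derivation; $(\Omega_n\Right)$ is symmetric.

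The only genuinely new case is $(\tilde{\Omega}_n\Left)$, with conclusion $\Gamma\Rightarrow\Pi$, side premise $\Gamma\Rightarrow\varphi(Y)$ (with $Y\not\in\FV(\Gamma)$) and main premises $\{\;\Delta,\Gamma\Rightarrow\Pi\;\}_{\Delta\in|\forall X.\varphi|_n}$. Applying the induction hypothesis to the side premise gives $\LIO_{n-1}\vdash^{cf}\Gamma\Rightarrow\varphi(Y)$; together with $Y\not\in\FV(\Gamma)$ this is exactly the assertion $\Gamma\in|\forall X.\varphi|_n$. Hence the main premise indexed by $\Delta:=\Gamma$ is the sequent $\Gamma,\Gamma\Rightarrow\Pi$, i.e.\ $\Gamma\Rightarrow\Pi$ itself, and the induction hypothesis applied to its sub-derivation already delivers $\LIO_{n-1}\vdash^{cf}\Gamma\Rightarrow\Pi$. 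The case of $(\tilde{\Omega}_n\Right)$ is symmetric, using $|\exists X.\varphi|_n$ in place of $|\forall X.\varphi|_n$ and the side premise $\varphi(Y),\Gamma\Rightarrow\Pi$. This last step is precisely the ``technical reason'' for which the superfluous rules $(\tilde{\Omega}_k\Left)$ and $(\tilde{\Omega}_k\Right)$ were built into $\LIO_n$.

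I expect the only real work to be bookkeeping: verifying the level-$(n-1)$ subformula property uniformly across all rule shapes, and, in the $(\Omega_n)$ cases, tracking which index set legitimizes the downgraded rule. The $(\tilde{\Omega}_n)$ case, though it looks like the heart of the matter, reduces to a single observation once one notices that its side premise forces $\Gamma$ into $|\forall X.\varphi|_n$.
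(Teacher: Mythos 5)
Your overall strategy is the paper's own: structural induction on the cut-free derivation, with the crux being $(\tilde{\Omega}_n\Left)$, where the collapsed side premise $\LIO_{n-1}\vdash^{cf}\Gamma\Rightarrow\varphi(Y)$ certifies $\Gamma\in|\forall X.\varphi|_n$, so that the main premise indexed by $\Delta:=\Gamma$ is the goal sequent itself. That case, and your handling of $(\Omega_n\Left)$ by downgrading to $(\Omega_{n-1}\Left)$ over the smaller index set, are fine; the explicit ``level-$(n-1)$ subformula property'' you isolate is left implicit in the paper but is correct and harmless.

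There is, however, one genuine gap: you assert that the $\omega$-rules are among the rules ``shared by $\LIO_{n-1}$ and $\LIO_n$,'' and you close those cases by re-applying the same rule in $\LIO_{n-1}$. This is false for $n=0$, since $\LIO_{-1}=\LI$ is the \emph{finitary} first-order calculus and has $(\forall x\Right)$ and $(\exists x\Left)$ in place of $(\omega\Right)$ and $(\omega\Left)$. The $n=0$ instance is not a corner you can ignore --- it is the final link in the collapsing chain $\LIO_n\to\cdots\to\LIO_0\to\LI$ used in Theorems \ref{t-ce} and \ref{t-last}, and the paper devotes one of its three explicit cases to it. The repair is standard but must be said: when $n=0$ and the derivation ends with $(\omega\Right)$, choose a variable $y\not\in\Fv(\Gamma)$, apply the induction hypothesis to the single premise $\Gamma\Rightarrow\varphi(y)$, and conclude by $(\forall x\Right)$ in $\LI$; dually for $(\omega\Left)$ with $(\exists x\Left)$ and $y\not\in\Fv(\Gamma,\Pi)$. (The remaining second-order rules cannot occur when $n=0$, since by your own subformula property every sequent then lies in $\Fm$, which contains no second-order quantifiers.) With this case added, your argument matches the paper's proof.
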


\begin{proof}
By structural induction on the cut-free derivation of 
$\Gamma\Rightarrow\Pi$ in $\LIO_n$.\\

\noindent
(1) The derivation ends with 
$$
\infer[(\tilde{\Omega_n}\Left)]{\Gamma\Rightarrow\Pi}{
\Gamma \Rightarrow \varphi(Y) &
\{\;\Delta, \Gamma \Rightarrow \Pi\;\}_{\Delta\in 
|\forall X.\varphi|_n} & Y \not\in \FV(\Gamma)}.
$$
We have 
$\LIO_{n-1} \vdash^{cf} \Gamma \Rightarrow \varphi(Y)$
by the induction hypothesis, noting that 
$\Gamma\cup \{\varphi (Y)\} \subseteq \FMP_{n-1}$.
Hence $\Gamma \in |\forall X.\varphi|_n$,
so $\Gamma, \Gamma \Rightarrow 
\Pi$ is among the premises. Therefore
$\LIO_{n-1} \vdash^{cf} \Gamma \Rightarrow \Pi$ by the induction
hypothesis again. \\

\noindent
(2) The derivation ends with $(\tilde{\Omega_k}\Left)$
with $k<n$. It is straightforward from the induction hypotheses.\\

\noindent
(3) $n=0$ and the derivation ends with
$$
\infer[(\omega \Right)]{
\Gamma\Rightarrow \forall x.\varphi(x)}{
\{\; \Gamma\Rightarrow \varphi(t)\;\}_{t\in\Tm}}.
$$
We choose a variable $y$ such that $y \not \in \Fv(\Gamma)$.
We have $\LI\vdash^{cf} \Gamma \Rightarrow \varphi(y)$ by the 
induction hypothesis, so the conclusion sequent is 
obtained by 
$(\forall x \Right)$.

Other cases are treated similarly.
\end{proof}

Lemmas \ref{l-emb}, \ref{l-cut} and \ref{l-collapse} constitute a 
syntactic proof of partial cut elimination for $\LIP_n$.
From a metatheoretical point of view,
the most significant part is to define a provability predicate for
$\LIO_n$. For $n=-1$, a provability predicate
for $\LIO_{-1} = \LI$ can be defined in $\ID_0^i = \HA$ as usual, since 
the proof system is finitary.

For $n= 0$, observe that one can define a formula
$\mathsf{Step}(X, x)$ in $\HA[X]$ such that 
$$
\mathsf{Step}(X, \Code{\Gamma\Rightarrow\Pi}) 
\Iff \mbox{$\Gamma\Rightarrow\Pi$ is obtained from some $Y \subseteq X$ 
by applying a rule of $\LIO_0$,}
$$
where $\Code{~~}$ is a suitable coding function and 
$X$ is supposed to be a set of (the codes of) sequents.
Notice that the above formula relies on 
a provability predicate for $\LIO_{-1} = \LI$. 
Now let $\LLIO_0 := I_{\mathsf{Step}}$, that is available in $\ID_1^i$. We then have
$$
\LLIO_0 (\Code{\Gamma\Rightarrow\Pi})
\Iff \LIO \vdash \Gamma\Rightarrow\Pi.
$$

For $n>0$, a provability predicate $\LLIO_n$ can be defined
by relying on $\LLIO_{-1}, \dots, \LLIO_{n-1}$, thus in $\ID_{n+1}^i$.
Once suitable provability predicates have been defined, 
the rest of the proof can be smoothly formalized,
since it mostly proceeds by structural induction on the derivation
(see also Appendix \ref{a-proof2}).
Hence we obtain:

\begin{thm}[Syntactic cut elimination for $\LIP_n$]\label{t-ce}
Let $n\geq 0$ and $\Gamma\cup \Pi \subseteq \Fm$. Then
$\LIP_n \vdash \Gamma\Rightarrow\Pi$ implies 
$\LI\vdash^{cf} \Gamma\Rightarrow\Pi$. Moreover, this fact can be
proved in $\ID_{n+1}^i$.
\end{thm}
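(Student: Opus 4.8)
The plan is to chain together the three key lemmas already established: Embedding (Lemma~\ref{l-emb}), Cut elimination for $\LIO$ (Lemma~\ref{l-cut}), and Collapsing (Lemma~\ref{l-collapse}), applied iteratively down the hierarchy of levels. Concretely, suppose $\LIP_n \vdash \Gamma\Rightarrow\Pi$ with $\Gamma\cup\Pi\subseteq\Fm$. First I would invoke Lemma~\ref{l-emb} with the identity term substitution to get $\LIO_n \vdash \Gamma\Rightarrow\Pi$. Then Lemma~\ref{l-cut} gives a cut-free derivation $\LIO_n \vdash^{cf}\Gamma\Rightarrow\Pi$. Now, since $\Fm = \FMP_{-1} \subseteq \FMP_{n-1}$, the side condition of Lemma~\ref{l-collapse} is met, so we descend to $\LIO_{n-1}\vdash^{cf}\Gamma\Rightarrow\Pi$. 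Iterating the Collapsing lemma $n+1$ times (the formulas remaining in $\Fm \subseteq \FMP_{k-1}$ at every stage) brings us down to $\LIO_{-1} = \LI$, i.e.\ $\LI \vdash^{cf}\Gamma\Rightarrow\Pi$, which is the desired conclusion.

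The second half of the statement—that this is provable in $\ID_{n+1}^i$—requires checking that each step above can be carried out inside that theory. The crucial point, already flagged in the text, is the availability of provability predicates: $\LLIO_{-1}$ is definable in $\HA = \ID_0^i$ since $\LI$ is finitary; and given $\LLIO_{-1},\dots,\LLIO_{k-1}$, the formula $\mathsf{Step}(X,x)$ coding one inference step of $\LIO_k$ (which refers to cut-free provability in $\LIO_{k-1}$, hence to $\LLIO_{k-1}$) is definable, so $\LLIO_k := I_{\mathsf{Step}}$ is available in $\ID_{k+1}^i$. Since we only ever use $\LLIO_0,\dots,\LLIO_n$, everything fits in $\ID_{n+1}^i$. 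I would then note that each of Lemmas~\ref{l-subst}, \ref{l-emb}, \ref{l-cut}, \ref{l-collapse} proceeds by structural induction on derivations (plus an outer induction on the level $n$), and that such inductions, once the relevant provability predicates and their defining fixed-point axioms $(\Lfp_1),(\Lfp_2)$ are in hand, formalize routinely—$(\Lfp_2)$ in particular supplies exactly the induction principle on $\LIO_k$-derivations that these proofs need, and the primitive-recursive operations on codes of derivations are available already in $\HA$.

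The main obstacle I expect is not the logical skeleton but the bookkeeping in the formalized version of Lemma~\ref{l-cut} (cut elimination for $\LIO$): one must verify that the cut-reduction procedure, including the step that absorbs a principal cut between $(\Omega_k\Left)$ and $(\forall X\Right)$ into $(\tilde\Omega_k\Left)$, is definable by a primitive recursion whose totality and correctness are provable using only the inductive-definition axioms available at level $n+1$, with a suitably chosen induction measure on (cut-free heights of) the infinitary derivations. A secondary subtlety is ensuring that the index sets $|\forall X.\varphi|_k$ and $|\exists X.\varphi|_k$—defined via cut-free provability in $\LIO_{k-1}$, a $\Pi^0_1$ condition—interact correctly with the inductive definition of $\LLIO_k$, so that the equivalence $\LLIO_k(\Code{\Gamma\Rightarrow\Pi}) \Iff \LIO_k\vdash\Gamma\Rightarrow\Pi$ is genuinely provable and not merely true. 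Since the text defers the detailed cut-elimination argument to Appendix~\ref{a-proof2}, I would likewise relegate the full verification there and here simply assemble the three lemmas and remark on the metatheoretic placement.
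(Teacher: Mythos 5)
Your proof is correct and follows exactly the paper's own route: the paper likewise obtains the theorem by chaining Lemma~\ref{l-emb}, Lemma~\ref{l-cut}, and iterated applications of Lemma~\ref{l-collapse}, and formalizes it by defining the provability predicates $\LLIO_k$ as inductive definitions available in $\ID_{k+1}^i$ and observing that the remaining arguments are structural inductions on derivations. Your additional remarks on the bookkeeping in the formalized cut-elimination argument match the concerns the paper defers to Appendix~\ref{a-proof2}.
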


Observe that it is impossible to prove it
within $\ID_n^i$, because of 
Theorem \ref{t-cc2} and the second incompleteness theorem.

\section{$\Omega$-rule and MacNeille completion}
\label{s-omegamacneille}

In this section, we establish a formal connection between 
the $\Omega$-rule and the MacNeille completion. Let us start by 
introducing algebraic semantics for full second order calculus $\LIT$.

Let $L$ be a language.
A (complete) \emph{Heyting-valued prestructure} for $L$ is 
$\mc M = \langle \m A, M, \mc D, \mc L \rangle$ 
where 
$\m A = \langle A, \wedge, \vee, \rightarrow, 
\top, \bot\rangle$ is a complete Heyting algebra,
$M$ is a nonempty set (\emph{term domain}),
$\emptyset \neq \mc D \subseteq A^M$ (\emph{abstract domain}) and 
$\mc L$ consists of a function 
$f^{\mc M} : M^n \longrightarrow M$ for each $n$-ary 
function symbol $f \in L$
and 
$p^{\mc M} : M^n \longrightarrow A$
for each $n$-ary 
predicate symbol $p \in L$.
Thus $p^{\mc M}$ is an $\m A$-valued subset of $M^n$.

It is not our purpose to systematically develop a model theory for 
the intuitionistic logic. We will use prestructures only for
proving conservative extension and cut elimination. 
Hence we only consider \emph{term models} below, in which
$M= \Tm$ and $f^{\mc M} (\Vec{t}) = f(\Vec{t})$.
This assumption simplifies the interpretation of formulas a lot.

A \emph{valuation} on $\mc M$ is a function $\mc V : \VAR \longrightarrow
\mc D$. The \emph{interpretation} of formulas
$\mc V : \FM \longrightarrow \m A$ is inductively defined as follows:
$$
\begin{array}{llllll}
\mc V (p(\vec{t})) & := & p^{\mc M}(\Vec{t}) 
&
\mc V (X(t)) & := & \mc{V}(X)(t) \\
\mc V(\bot) & := & \bot & 
\mc V (\varphi \star \psi) & := & \mc V(\varphi) \star \mc V(\psi) \\
\mc V (\forall x.\varphi(x)) & := & \bigwedge_{t \in \Tm} \mc V(\varphi(t)) &
\mc V (\exists x.\varphi(x)) & := & \bigvee_{t \in \Tm} \mc V(\varphi(t))\\
\mc V (\forall X.\varphi) & := & \bigwedge_{F \in \mc D} \mc V[F/X](\varphi) &
\mc V (\exists X.\varphi) & := & \bigvee_{F \in \mc D} \mc V[F/X](\varphi)
\end{array}
$$
where $\star \in \{\wedge, \vee, \rightarrow\}$ and 
$\mc V[F/X]$ is an update of $\mc V$ that maps $X$ to $F$.
$\mc M$ is called a Heyting-valued \emph{structure} if 
$\mc V (\tau) \in \mc D$ holds for every valuation $\mc V$ and 
every abstract $\tau \in \ABS$.
Clearly $\mc M$ is a Heyting-valued structure if 
$\mc D = {\m A}^\Tm$. Such a structure 
is called \emph{full}.

Given a sequent $\Gamma \Rightarrow \Pi$,
let 
$$
\begin{array}{lll}
\mc V(\Gamma) & := & \bigwedge \{\mc V(\varphi) : \varphi \in \Gamma\},\\
\mc V(\Pi) & := & \bigvee \{\mc V(\psi) : \psi \in \Pi\}.
\end{array}
$$

It is routine to verify:
\begin{lem}[Soundness] \label{l-sound}
If $\LIT \vdash \Gamma\Rightarrow\Pi$, then 
$\Gamma\Rightarrow\Pi$ is \emph{valid}, that is, 
$\mc V(\Gamma^\circ) \leq \mc V(\Pi^\circ)$ holds for every valuation $\mc V$ 
on every Heyting structure $\mc M$ and every term substitution $\circ$.
\end{lem}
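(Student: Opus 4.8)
The plan is to prove Lemma~\ref{l-sound} by a routine induction on the height of the derivation of $\Gamma\Rightarrow\Pi$ in $\LIT$, showing that validity is preserved by every inference rule of Figure~\ref{f-ruleLIT}. Since validity is required to hold for \emph{every} term substitution $\circ$, and since the property ``$\Gamma\Rightarrow\Pi$ is valid'' is closed under term substitution (because applying $\circ$ to a derivation yields another derivation of the same height), it suffices to verify that, for a fixed but arbitrary Heyting structure $\mc M$ and valuation $\mc V$, each rule is \emph{locally sound}: if the premises are valid under $\mc V$ then so is the conclusion. The base cases $(\msf{id})$ and $(\bot\Left)$ are immediate: $\mc V(\Gamma)\wedge\mc V(\varphi)\leq \mc V(\varphi)$ and $\mc V(\bot)=\bot\leq \mc V(\Pi)$.

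For the inductive step I would group the rules. The propositional rules ($\wedge$, $\vee$, $\rightarrow$, $\bot\Right$) reduce to the defining (in)equalities of a Heyting algebra — e.g.\ $(\rightarrow\Right)$ is exactly the residuation law $\mc V(\Gamma)\wedge\mc V(\varphi_1)\leq \mc V(\varphi_2) \iff \mc V(\Gamma)\leq \mc V(\varphi_1)\rightarrow\mc V(\varphi_2)$, and $(\Cut)$ follows from transitivity of $\leq$ together with $\mc V(\Gamma)\wedge\mc V(\varphi)\leq \mc V(\Pi)$ and $\mc V(\Gamma)\leq\mc V(\varphi)$ giving $\mc V(\Gamma)=\mc V(\Gamma)\wedge\mc V(\varphi)\leq\mc V(\Pi)$ after taking meets. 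For the first order quantifiers, $(\forall x\Right)$ and $(\exists x\Left)$ use the eigenvariable condition: since $y\notin\Fv(\Gamma)$ (resp.\ $y\notin\Fv(\Gamma,\Pi)$), the value $\mc V(\Gamma)$ does not depend on the interpretation of $y$, so from $\mc V(\Gamma)\leq\mc V(\varphi(t))$ for all $t\in\Tm$ — obtained by applying the induction hypothesis to the substitution instances $[t/y]$ — we get $\mc V(\Gamma)\leq\bigwedge_{t\in\Tm}\mc V(\varphi(t))=\mc V(\forall x.\varphi(x))$; the $\exists$ case is dual, using that a join is a least upper bound. The rules $(\forall x\Left)$ and $(\exists x\Right)$ are the trivial directions $\bigwedge_{t}\mc V(\varphi(t))\leq \mc V(\varphi(u))$ and $\mc V(\varphi(u))\leq\bigvee_t\mc V(\varphi(t))$.

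The one genuinely delicate point — and the step I expect to be the main obstacle — is the second order quantifiers, specifically $(\forall X\Left)$ and $(\exists X\Right)$, where one must show $\mc V(\forall X.\varphi)=\bigwedge_{F\in\mc D}\mc V[F/X](\varphi)\leq \mc V(\varphi(\tau))$. This requires the \emph{substitution lemma} $\mc V(\varphi(\tau)) = \mc V[\mc V(\tau)/X](\varphi)$, proved by a separate induction on $\varphi$, together with the structure hypothesis that $\mc V(\tau)\in\mc D$ — which is precisely why $\mc M$ is assumed to be a structure and not merely a prestructure, and why $\tau$ must be an admissible abstract. Once $\mc V(\tau)\in\mc D$, it is among the $F$'s over which the meet is taken, so the inequality follows; $(\exists X\Right)$ is dual. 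For $(\forall X\Right)$ and $(\exists X\Left)$ the eigenvariable $Y\notin\FV(\Gamma)$ (resp.\ $\FV(\Gamma,\Pi)$) guarantees $\mc V(\Gamma)$ is independent of $\mc V(Y)$, so ranging over all $F\in\mc D$ in the premise $\mc V[F/Y](\Gamma)\leq\mc V[F/Y](\varphi(Y))$ yields $\mc V(\Gamma)\leq\bigwedge_{F\in\mc D}\mc V[F/Y](\varphi(Y))=\mc V(\forall X.\varphi(X))$, using $\mc D\neq\emptyset$. I would state the substitution lemma explicitly as an auxiliary lemma before the proof (both for first order and set substitutions), since it is used repeatedly, and then the main induction becomes a short case check.
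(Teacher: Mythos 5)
Your proof is correct and follows exactly the routine induction on the derivation that the paper leaves implicit (the paper states only ``It is routine to verify''), and you correctly isolate the two non-trivial ingredients: the substitution lemma $\mc V(\varphi(\tau))=\mc V[\mc V(\tau)/X](\varphi)$ together with the closure condition $\mc V(\tau)\in\mc D$ for $(\forall X\Left)$/$(\exists X\Right)$, and the use of the full quantification over term substitutions and valuations to handle the eigenvariable rules. Nothing further is needed.
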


To illustrate the use of algebraic semantics, 
let us have a look at a proof of an elementary
fact that $\LIT$ is a conservative extension of $\LI$.

Let $\m L$ be the Lindenbaum algebra for $\LI$, that is,
$\m L := \langle \Fm/\!\!\! \sim, \wedge, \vee, 
\rightarrow, \top, \bot \rangle$ where 
$\varphi \sim \psi$ iff $\LI \vdash \varphi \leftrightarrow \psi$.
The equivalence class of $\varphi$ with respect to $\sim$ 
is denoted by $[\varphi]$.
$\m L$ is a Heyting algebra in which 
$$
(*)\qquad [\forall x.\varphi(x)] = \bigwedge_{t \in \Tm} [\varphi(t)], 
\qquad
[\exists x.\varphi(x)] = \bigvee_{t \in \Tm} [\varphi(t)]
$$
hold. Given a sequent $\Gamma\Rightarrow\Pi$,
elements $[\Gamma]$ and $[\Pi]$ in $\m L$ are naturally defined.

Let $\m G$ be a \emph{regular} completion of $\m L$.
Then $\mc M(\m{G}) := \langle \m G, \Tm, {\m G}^\Tm, \mc L\rangle$
is a full Heyting structure,
where $\mc L$ consists of a $\m G$-valued predicate $p^{\mc M (\m G)}$
defined by $p^{\mc M (\m G)} (\vec{t}) := [p(\vec{t})]$ for each $p\in L$
(in addition to the interpretations of function symbols).
Define a valuation $\mc I$ by
$\mc I (X)(t) := [X(t)]$. 
We then have 
$\mc I(\varphi) = [\varphi]$ for every $\varphi \in \Fm$
by regularity
(be careful here: $(*)$ may fail in $\m G$
if it is not regular).

Now, suppose that $\LIT$ proves $\Gamma \Rightarrow\Pi$ 
with $\Gamma\cup\Pi \subseteq \Fm$.
Then we have $\mc I(\Gamma) \leq \mc I(\Pi)$ by Lemma \ref{l-sound}, so 
$[\Gamma] \leq [\Pi]$, that is, $\LI \vdash \Gamma \Rightarrow\Pi$.
This proves that $\LIT$ is a conservative extension of $\LI$.

Although this argument cannot be fully formalized in $\HAT$ because of 
G\"{o}del's second incompleteness, 
it does admit a local formalization in $\ZT$. In contrast, the above 
argument, when applied to $\LIP_0$, 
cannot be locally formalized in the arithmetical counterpart $\HA$. 
The reason is simply that
$\HA$ does not have second order quantifiers,
which are needed to write down the definitions of
$\mc V (\forall X.\varphi)$ and 
$\mc V (\exists X.\varphi)$. To circumvent this, we will make a 
crucial observation that 
$\mc V (\forall X.\varphi)$ and
$\mc V (\exists X.\varphi)$ admit alternative first order definitions 
if the completion is MacNeille. It is here that one finds 
a connection between the MacNeille completion and the 
$\Omega$-rule.

\begin{thm}\label{t-crucial}
Let $\m L$ be the Lindenbaum algebra for $\LI$ and
$\m L \subseteq \m G$ a regular completion.
$\mc M(\m G)$ and $\mc I$ are defined as above.
For every 
\emph{sentence} $\forall X.\varphi$ in $\FMP_0$,
the following are equivalent.
\begin{enumerate}
\item $\mc I(\forall X.\varphi) = \bigvee \{ a\in \m L : 
a \leq \mc I(\forall X.\varphi)\}$.
\item $\mc I(\forall X.\varphi) = \bigvee \{ [\Delta] \in \m L : 
 \Delta \in |\forall X.\varphi|_0 \}$.
\item The inference below is sound for every 
$y \in \m G$:
$$
\infer{\mc I(\forall X.\varphi) \leq y}{
\{\; \mc I(\Delta) \leq y\;\}_{\Delta \in |\forall X.\varphi|_0 }}
$$
\end{enumerate}
If $\m G$ is the MacNeille completion of $\m L$, all the above
hold.
\end{thm}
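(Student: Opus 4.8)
The plan is to prove the chain $(3)\Rightarrow(2)\Rightarrow(1)\Rightarrow(3)$ and then to verify that (1) holds whenever $\m G$ is the MacNeille completion, the last part being immediate from Proposition~\ref{p-macneille}. The only genuinely substantive implication is $(1)\Rightarrow(2)$; the others are essentially bookkeeping. I would first record two facts that will be used repeatedly. \emph{Fact A}: for every finite $\Delta\fsubseteq\Fm$ one has $\mc I(\Delta)=[\bigwedge\Delta]$, since $\mc I$ agrees with $[\cdot]$ on first order formulas by regularity of the completion. \emph{Fact B (the key lemma)}: if $a\in\m L$ and $a\le\mc I(\forall X.\varphi)$, then there is $\Delta\in|\forall X.\varphi|_0$ with $a\le[\Delta]$; in fact one can take $\Delta=\{\psi\}$ for a single first order formula $\psi$ with $[\psi]=a$, because $a=[\psi]\le\mc I(\forall X.\varphi)\le\mc I[F/X](\varphi)$ for \emph{every} $F\in\m G^{\Tm}$, in particular for $F=\mc I(Z):=\lambda t.[Z(t)]$ with $Z$ a fresh set variable; by soundness this gives $\LI\vdash\psi\Rightarrow\varphi(Z)$ (using that $\LI$ is complete for its Lindenbaum algebra and that $\varphi(Z)\in\Fm$), i.e. $\{\psi\}\in|\forall X.\varphi|_0$.

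With these in hand the implications run as follows. For $(1)\Rightarrow(2)$: by Fact~B every $\m L$-element $a\le\mc I(\forall X.\varphi)$ lies below some $[\Delta]$ with $\Delta\in|\forall X.\varphi|_0$, and conversely every such $[\Delta]$ satisfies $[\Delta]=\mc I(\Delta)\le\mc I(\forall X.\varphi)$ by soundness applied to the derivation witnessing $\Delta\in|\forall X.\varphi|_0$ together with $(\forall X\Right)$; hence the two sets $\{a\in\m L:a\le\mc I(\forall X.\varphi)\}$ and $\{[\Delta]:\Delta\in|\forall X.\varphi|_0\}$ are cofinal in each other, so their joins coincide, and by (1) this common join is $\mc I(\forall X.\varphi)$. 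For $(2)\Rightarrow(3)$: suppose $\mc I(\Delta)\le y$ for all $\Delta\in|\forall X.\varphi|_0$; then $\mc I(\forall X.\varphi)=\bigvee\{[\Delta]:\Delta\in|\forall X.\varphi|_0\}\le y$. For $(3)\Rightarrow(1)$: instantiate the sound inference with $y:=\bigvee\{a\in\m L:a\le\mc I(\forall X.\varphi)\}$; each premise $\mc I(\Delta)\le y$ holds because $\mc I(\Delta)=[\Delta]\in\m L$ and $[\Delta]\le\mc I(\forall X.\varphi)$, so the conclusion gives $\mc I(\forall X.\varphi)\le y$; the reverse inequality $y\le\mc I(\forall X.\varphi)$ is trivial. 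Finally, if $\m G$ is the MacNeille completion of $\m L$, then $\bigvee$-density gives $x=\bigvee\{a\in\m L:a\le x\}$ for every $x\in\m G$, in particular for $x=\mc I(\forall X.\varphi)$, so (1) holds; equivalently one can cite Proposition~\ref{p-macneille} directly for (3).

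The main obstacle is Fact~B, and within it the passage from the semantic inequality $[\psi]\le\mc I[\mc I(Z)/X](\varphi)=[\varphi(Z)]$ back to provability $\LI\vdash\psi\Rightarrow\varphi(Z)$. This requires that evaluating $\varphi$ in $\mc M(\m G)$ under the valuation sending $X$ to $\lambda t.[Z(t)]$ reproduces $[\varphi(Z)]$ on the nose — i.e. the substitution-commutes-with-interpretation property — which in turn leans essentially on regularity of $\m L\subseteq\m G$ so that the first order clauses $(*)$ for the quantifiers are respected (this is exactly the caveat flagged before the theorem). One must also be a little careful that $\varphi$ is a \emph{sentence} except for the single free set variable $X$, so that $\varphi(Z)$ is a bona fide first order formula once $Z$ is substituted and no stray term variables leak in; the hypothesis $\forall X.\varphi\in\FMP_0$ with $\FV(\varphi)\subseteq\{X\}$ is precisely what guarantees this. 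Everything else is a routine unwinding of the definitions of $\mc I$, $|\forall X.\varphi|_0$, and soundness (Lemma~\ref{l-sound}).
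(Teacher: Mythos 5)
Your proof is correct and follows essentially the same route as the paper: the heart in both cases is the equivalence, for $a=[\Delta]$, between $a\le\mc I(\forall X.\varphi)$ and $\Delta\in|\forall X.\varphi|_0$, obtained by evaluating $\varphi$ at the valuation $F(t):=[Z(t)]$ with $Z$ fresh in one direction and by soundness in the other, with the remaining implications being the same bookkeeping (your cycle $1\Rightarrow2\Rightarrow3\Rightarrow1$ versus the paper's $1\Leftrightarrow2$, $2\Leftrightarrow3$ is immaterial). The one point you elide in Fact~B is that $|\forall X.\varphi|_0$ is defined via \emph{cut-free} provability in $\LI$, so passing from $\LI\vdash\psi\Rightarrow\varphi(Z)$ to $\{\psi\}\in|\forall X.\varphi|_0$ requires Gentzen's cut elimination theorem for $\LI$, which the paper invokes explicitly at exactly this step.
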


\begin{proof}
((1) $\Leftrightarrow$ (2)) Let $a=[\Delta]$. It is sufficient to 
show
$$a \leq \mc I(\forall X.\varphi) \Iff \Delta \in 
|\forall X.\varphi|_0.$$

Suppose that $a \leq \mc I(\forall X.\varphi)
= \bigwedge_{F \in \m{G}^{\Tm}} \mc I[F/X](\varphi)$.
We choose $Y \not\in \FV(\Delta)$
and define $F_Y \in \m G^{\Tm}$ by $F_Y(t) := [Y(t)]$ for every $t\in \Tm$. 
We then have 
$$[\Delta] \leq \mc I(\forall X.\varphi) \leq
\mc I[F_Y/X](\varphi(X)) = [\varphi(Y)],$$ that is,
$\LI\vdash \Delta\Rightarrow \varphi(Y)$. By the cut elimination for $\LI$, 
we obtain $\LI\vdash^{cf} \Delta\Rightarrow \varphi(Y)$.
Hence $\Delta \in |\forall X.\varphi|_0$.

Conversely, suppose that $\LI\vdash^{cf} \Delta
\Rightarrow \varphi(Y)$ with $Y\not\in \FV(\Delta)$. It implies 
$[\Delta] = \mc I (\Delta) = \mc I [F/Y] (\Delta) \leq \mc I[F/Y](\varphi(Y))$ 
for every $F \in {\m G}^\Tm$ by Lemma \ref{l-sound}. 
Hence $[\Delta] \leq \mc I(\forall X.\varphi(X))$.\\

\noindent
((2) $\Rightarrow$ (3)) Assume the premises of (3).
This means that we have $[\Delta] = \mc I(\Delta) \leq y$
for every $\Delta \in |\forall X.\varphi|_0$. Hence 
the conclusion $\mc I(\forall X.\varphi) \leq y$ follows by (2).\\

\noindent
((3) $\Rightarrow$ (2))  Let $y:= \bigvee \{ [\Delta] \in \m L : 
 \Delta \in |\forall X.\varphi|_0 \}$.
Then $\mc I(\Delta) = [\Delta] \leq y$ holds for every 
$\Delta \in |\forall X.\varphi|_0$, so
$\mc I(\forall X.\varphi) \leq y$ by rule (3). 

On the other hand, 
$\Delta \in |\forall X.\varphi|_0$ implies 
$[\Delta] \leq \mc I(\forall X.\varphi)$ 
as proved above. Hence we also have
$y\leq \mc I(\forall X.\varphi)$.\\

\noindent
Finally, suppose that $\m L \subseteq \m G$ is a MacNeille completion.
Then (1) holds by $\bigvee$-density. So (2) and (3) hold too.
\end{proof}

The equivalence in Theorem \ref{t-crucial} is quite suggestive, 
since (3) is an algebraic interpretation of rule $(\Omega_0\Left)$, 
while (1) is a characteristic of the MacNeille completion.
Equation (2) suggests a way of interpreting second order formulas 
without using second order quantifiers at the meta-level.
All these are true if the completion is MacNeille. 

\begin{rem}
Essentially the same as (2) has been already observed
by Altenkirch and Coquand \cite{Altenkirch_Coquand01}
in the context of lambda calculus (without making any explicit connection to
the $\Omega$-rule and the MacNeille completion).
Indeed, they consider a logic which roughly amounts to the 
negative fragment of our $\LIP_0$ and employ equation (2) to give 
a ``finitary'' proof of a (partial) normalization theorem for 
a parameter-free fragment of System F (see also 
\cite{Aehlig08,Akiyoshi_Terui16} for extensions).
However, their argument is technically based on a
downset completion, that is not MacNeille. As is well known, 
such a naive completion does not 
work well for the positive connectives $\{\exists, \vee\}$. 
In contrast, 
when $\m L \subseteq \m G$ is a MacNeille completion,
we also have
$$\mc I(\exists X.\varphi) = \bigwedge 
\{ [\Delta]\rightarrow[\Lambda] \in \m L : 
(\Delta\Rightarrow\Lambda) \in |\exists X.\varphi|_0 \}.$$

We thus claim that the insight by Altenkirch and Coquand
is further augmented and better understood 
if one employs the MacNeille completion instead of the downset completion
(or the filter completion).
\end{rem}

As a consequence of Theorem \ref{t-crucial}, it is possible to
give an algebraic proof to the conservativity of 
$\LIP_0$ over $\LI$,
that can be locally formalized in $\HA$. 

The argument proceeds as follows.
Let $\m L$ be the Lindenbaum algebra for $\LI$
and $\m G$ be the \emph{MacNeille} completion of $\m L$.
Then $\mc M(\m{G}) := \langle \m G, \Tm, {\m G}^\Tm, \mc L\rangle$
is a full Heyting structure.
Define a valuation $\mc I$ by
$\mc I (X)(t) := [X(t)]$ as before. To extend it inductively to 
the $\FMP_0$ formulas, we use the clauses
$$
\begin{array}{lll}
\mc I(\forall X.\varphi) & := & \bigvee \{ [\Delta] \in \m L : 
 \Delta \in |\forall X.\varphi|_0 \} \\
\mc I(\exists X.\varphi) & := & \bigwedge 
\{ [\Delta]\rightarrow[\Lambda] \in \m L : 
(\Delta\Rightarrow\Lambda) \in |\exists X.\varphi|_0 \}.
\end{array}
$$

Soundness holds with respect to this interpretation 
by Theorem \ref{t-crucial}.
Hence by the same argument as before, we may conclude that 
$\LIT$ is a conservative extension of $\LI$. 
We will not discuss formalization in $\HA$ here, as 
stronger results on cut elimination will be formalized in Section 
\ref{s-algebraic}.

It is interesting to see
that the second order $\forall$ is interpreted by the first order $\bigvee$, 
while the second order $\exists$ is 
by the first order $\bigwedge$. We call this style of 
interpretation
the \emph{$\Omega$-interpretation}, that is the algebraic side
of the $\Omega$-rule, and 
that will play a key role 
in the next section. We conclude our discussion by reporting 
a counterexample for the general soundness.

\begin{prop}\label{p-counter2}
There is a Heyting-valued structure in which 
$(\Omega_0 \Left)$ is not sound.
\end{prop}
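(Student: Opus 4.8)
The plan is to exhibit a single Heyting-valued structure, necessarily \emph{not} the MacNeille completion of the Lindenbaum algebra, in which one premise family of $(\Omega_0\Left)$ is satisfied while the conclusion fails. The natural candidate is a full structure $\mc M(\m G)$ built over a completion $\m G$ of $\m L$ that is \emph{not} $\bigvee$-dense, so that $\mc I(\forall X.\varphi)$ as computed by the genuine second order meet $\bigwedge_{F\in\m G^\Tm}\mc I[F/X](\varphi)$ is strictly larger than $\bigvee\{[\Delta] : \Delta\in|\forall X.\varphi|_0\}$. The discussion after Theorem~\ref{t-crucial} already tells us that equivalence (3) — which is exactly the soundness of $(\Omega_0\Left)$ for the formula $\forall X.\varphi$ with $y$ ranging over $\m G$ — characterizes the $\bigvee$-density clause (1); so any regular completion failing (1) will do, provided we can verify regularity (needed so that $\mc I(\psi)=[\psi]$ for first order $\psi$ and hence so that the index sets $|\forall X.\varphi|_0$ behave as intended).

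First I would pick a concrete $\forall X.\varphi$ in $\FMP_0$ witnessing the gap; a clean choice is $\varphi:=X(x)$ so that $\forall X.\,X(x)$ is (equivalent to) $\bot$ only in degenerate models, or better, mimic the inconsistency formula from Proposition~\ref{p-counter} in a benign way — e.g.\ $\forall X.(X(c)\rightarrow X(x))$, whose canonical first order approximants $[\Delta]$ are all $\bot$ (by the interpolation argument in that proof), while its genuine second order value in a suitable full structure can be made nonzero. Second I would take $\m G$ to be the MacNeille completion \emph{composed with} a non-$\bigvee$-dense regular overalgebra, or more simply a canonical-extension-style completion; but since $\mc BA$ is the only proper subvariety of $\mc HA$ closed under MacNeille completions and we are in the intuitionistic setting, the cleanest route is to take \emph{any} complete Heyting algebra $\m G$ with an embedding $e:\m L\hookrightarrow\m G$ that is regular but not $\bigvee$-dense — for instance an ideal/downset-style completion fails regularity, so instead I would use the direct-limit or a product construction, or simply invoke that $\m L$ has at least two distinct regular completions (MacNeille and, say, its MacNeille completion adjoined with a new element above a non-principal join) and pick the non-dense one. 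Third, in $\mc M(\m G)$ with $\mc I(X)(t):=[X(t)]$, I would compute: on one hand $\mc I(\Delta)=[\Delta]\le\bigvee\{[\Delta']:\Delta'\in|\forall X.\varphi|_0\}=:y_0$ for every $\Delta\in|\forall X.\varphi|_0$, so the premises of $(\Omega_0\Left)$ hold with $y:=y_0$; on the other hand $\mc I(\forall X.\varphi)=\bigwedge_{F\in\m G^\Tm}\mc I[F/X](\varphi)>y_0$ by the failure of (1), so the conclusion $\mc I(\forall X.\varphi)\le y_0$ fails. This contradicts soundness of the rule.

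The main obstacle is producing a \emph{regular} completion of $\m L$ that is provably not $\bigvee$-dense together with a formula for which the gap is witnessed — i.e.\ making sure the strict inequality $y_0<\mc I(\forall X.\varphi)$ is genuinely realized rather than merely possible. One clean way around this: instead of engineering an abstract $\m G$, reuse the interpolation argument of Proposition~\ref{p-counter} directly. In the \emph{full} structure $\mc M(\m G)$ over the MacNeille completion itself, one checks that $\{\Delta:\Delta\in|\forall X.(X(c)\to X(x))|_0\}$ consists only of $\Delta$ with $[\Delta]=\bot$, whereas $\mc I(\forall X.(X(c)\to X(x)))$ computed via the second-order meet over $\m G^\Tm$ need \emph{not} be $\bot$ — the degenerate valuation making it $\bot$ is not forced, e.g.\ take $F$ constantly $\top$. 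Hence taking $y:=\bot$ satisfies all premises of $(\Omega_0\Left)$ while the conclusion $\mc I(\forall X.(X(c)\to X(x)))\le\bot$ fails. So in fact the promised counterexample lives already in the full MacNeille structure, once one interprets $\forall X$ by the \emph{honest} second-order meet rather than by the $\Omega$-interpretation — which is precisely the point the proposition is making. I would present it in that form: fix $\varphi:=X(c)\to X(x)$, work in the full Heyting structure over the MacNeille completion of $\m L_{\PA}$, observe $|\forall X.\varphi|_0\subseteq\{\Delta:[\Delta]=\bot\}$ by Craig interpolation, and note $\bigwedge_{F}\mc I[F/X](\varphi)\neq\bot$.
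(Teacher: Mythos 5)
Your final, concrete counterexample does not work. For $\varphi:=X(c)\rightarrow X(x)$ in the \emph{full} structure over the MacNeille completion, the honest second-order value is $\mc I(\forall X.\varphi)=\bigwedge_{F\in\m G^{\Tm}}\bigl(F(c)\rightarrow F(x)\bigr)$, and since $c$ and $x$ are \emph{distinct} terms of the term model, you may choose $F$ with $F(c)=\top$ and $F(x)=\bot$, which makes one conjunct equal to $\bot$ and hence the whole meet equal to $\bot$. Exhibiting a single $F$ (constantly $\top$) for which the conjunct is $\top$ only shows that one term of the meet is large; it gives no lower bound on the meet. So the conclusion $\forall X.\varphi\Rightarrow\bot$ of your $(\Omega_0\Left)$ instance is in fact satisfied at every valuation, and no unsoundness is exhibited. (The inconsistency in Proposition \ref{p-counter} is not a failure of the rule at a fixed valuation: it arises only after the term substitution $x\mapsto c$ via $(\exists x\Right)$, i.e., from the failure of $(\Omega_0\Left)$ to be closed under term substitution.) Your more abstract first route --- produce a regular but non-$\bigvee$-dense completion of $\m L$ and invoke the equivalence $(1)\Leftrightarrow(3)$ of Theorem \ref{t-crucial} --- is sound as a strategy, but you never actually construct such a completion or a sentence witnessing the gap; you flag exactly this as ``the main obstacle'' and then retreat to the flawed concrete example, so neither route is completed.

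For comparison, the paper's proof avoids the Lindenbaum algebra entirely and works in a tiny finite structure: take the three-element Heyting chain $\{0<0.5<1\}$, a language with a single constant $*$, and $\varphi:=(X(*)\rightarrow\bot)\vee X(*)$. The second-order meet gives $\mc V(\forall X.\varphi)=0.5$ (the value $0.5$ for $X(*)$ makes $\varphi$ evaluate to $0.5$), whereas every $\Delta\in|\forall X.\varphi|_0$ is a first-order set of formulas over a language with no predicate symbols, so under the valuation sending all $X(t)$ to $0$ it takes a value in the Boolean subalgebra $\{0,1\}$; being $\leq 0.5$ by soundness, that value must be $0$. Hence all premises $\Delta\Rightarrow\bot$ are satisfied while the conclusion is not. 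The essential idea you are missing is this interplay between a non-Boolean element (forcing the second-order meet strictly above $\bot$) and the fact that the first-order approximants are confined to a Boolean subalgebra; without some such mechanism the gap between $\bigvee\{[\Delta]\}$ and the honest meet is not actually realized.
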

 
\begin{proof}
Let $\m A$ be the three-element chain $\{0 < 0.5 < 1\}$ seen as a Heyting 
algebra. Here the implication $\rightarrow$ is defined by:
$$
\begin{array}{llll}
a \rightarrow b & := & \top & \mbox{if $a\leq b$,}\\
 & := & b & \mbox{otherwose.}
\end{array}
$$
Consider the language that only consists of 
a term constant $*$.
Then a full Heyting-valued structure 
$\mc{A} := \langle \m A, \Tm, \m{A}^\Tm, \mc L\rangle$
is naturally obtained.
Let $\varphi := (X(*)\rightarrow\bot)\vee X(*)$.
We then have
$\mc V(\forall X.\varphi) = 0.5$ for every valuation $\mc V$.
In fact, $\mc V(\varphi) = 1$ 
if $\mc V(X(*))=0$ or $1$, and 
$\mc V(\varphi) = 0.5$ 
if $\mc V(X(*))=0.5$.

Now consider the following instance:
$$
\infer[(\Omega_0 \Left)]{\forall X.\varphi \Rightarrow \bot}{
\{\; \Delta\Rightarrow \bot\;\}_{\Delta \in |\forall X.\varphi|_0}}.
$$
We claim that it is not sound \emph{provided that}
$\mc V (X(t)) = 0$ for every $X \in \VAR$ and $t\in \Tm$.
Suppose that $\Delta \in |\forall X.\varphi|_0$, i.e., 
$\LI \vdash^{cf} \Delta \Rightarrow \varphi(Y)$ with $Y \not\in \FV(\Delta)$.
Then 
$\mc V(\Delta) \leq \mc V[F/X](\varphi)$ for every $F \in 
\m A^\Tm$ by Lemma \ref{l-sound}. Hence
$$\mc V(\Delta) \leq 
\mc \bigwedge_{F\in \m{A}^\Tm} \mc V[F/X](\varphi) =
\mc V(\forall X.\varphi) = 0.5.$$
But $\Delta$ is first order and does not involve any predicate symbol, 
so only takes value $0$ or $1$
by the assumption on $\mc V$ (and the fact that $\{0 < 1\}$ is 
a Heyting subalgebra of $\m A$).
Hence $\mc V(\Delta) = 0$, that is, 
all the premises $\Delta\Rightarrow \bot$
are satisfied. However, 
$\mc V(\forall X.\varphi) = 0.5 > 0$, that is, the conclusion 
$\forall X.\varphi \Rightarrow \bot$ is
not satisfied.
\end{proof}

This invokes a natural question. Is it possible to find 
a \emph{Boolean-valued} counterexample? In other words, is the $\Omega$-rule
classically sound? This question is left open.

\section{Algebraic cut elimination}\label{s-algebraic}

This section is devoted to an algebraic proof of cut elimination
for parameter-free logics. After introducing a general concept
of Heyting frame
in Subsection \ref{ss-frame}, we consider a syntactic frame
build upon cut-free provability in Subsection \ref{ss-algebraic}.
A soundness argument then establishes the cut elimination theorem 
in Subsection \ref{ss-sound}. A small improvement is given 
in Subsection \ref{ss-modification}, that will be important
when formalizing our proof in $\ID$-theories in Section 
\ref{s-formalize}. An algebraic proof of cut elimination
for $\LIT$ due to \cite{Maehara91,Okada02} is given in Appendix 
\ref{a-okada} for a comparison.

\subsection{Polarities and Heyting frames}\label{ss-frame}

We begin with a very old concept due to Birkhoff \cite{Birkhoff40}, 
that provides 
a uniform framework for both MacNeille completion and cut elimination.

A \emph{polarity} $\m W = \langle W, W', R\rangle$ (a.k.a.\ \emph{formal 
context}) consists of 
two sets $W, W'$ and a binary relation $R \subseteq W\times W'$.
Given $X \subseteq W$ and $Z \subseteq W'$, let
$$
X^\rhd \Def \{ z \in W':  x\mathrel{R} z \mbox{ for every } x\in X\},
\qquad
Z^\lhd \Def \{ x \in W:  x\mathrel{R} z
\mbox{ for every } z\in Z\}.
$$
For example, let $\mathbf{Q}:=\langle \mathbb{Q}, \mathbb{Q}, \leq\rangle$.
Then 
$X^\rhd$ is the set of upper bounds of $X$ and 
$Z^\lhd$ is the set of lower bounds of $Z$. Hence
$X^{\rhd\lhd}$ is the lower part of a Dedekind cut for every nonempty 
$X\subseteq \mathbb{Q}$ bounded above.

The pair $(\rhd, \lhd)$ forms a \emph{Galois connection}:
$$
X \subseteq Z^{\lhd} \quad\Longleftrightarrow\quad X^{\rhd} \supseteq Z
$$
so induces a closure operator  
$\gamma (X)  :=  X^{\rhd\lhd}$ on $\wp(W)$, that is,
$$X \subseteq \gamma(Y) \Iff \gamma(X)\subseteq\gamma(Y)$$ 
holds for any
$X, Y \subseteq W$. Note that $X\subseteq W$ is closed 
iff there is $Z \subseteq W'$ such that $X = Z^{\lhd}$.
In the sequel, we also make use of the property
$$
Z_1 \subseteq Z_2 \Implies Z_2^\lhd \subseteq Z_1^\lhd \qquad
(Z_1, Z_2 \subseteq W').
$$
We write $\gamma (x) := \gamma (\{x\})$,
$x^\rhd := \{x\}^{\rhd}$ and
$z^\lhd := \{z\}^{\lhd}$.
Let 
$$\Gal (\mathbf{W})  := \{X\subseteq W: X = \gamma(X)\},$$
$X \wedge Y := X\cap Y$, 
$ X \vee Y := \gamma(X\cup Y)$, 
$ \top := W$ and 
$\bot := \gamma(\emptyset)$.

\begin{lem}
If $\m W$ is a polarity, then 
$\mathbf{W}^+ := \langle \Gal(\mathbf{W}), \wedge, \vee \rangle$
is a complete lattice.
\end{lem}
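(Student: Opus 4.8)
The plan is to verify that $\mathbf{W}^+ = \langle \Gal(\mathbf{W}), \wedge, \vee\rangle$ is a complete lattice by exhibiting arbitrary meets and joins explicitly. The key structural fact to establish first is that $\Gal(\mathbf{W})$, the set of $\gamma$-closed subsets of $W$, is closed under arbitrary intersections. So given a family $\{X_i\}_{i\in I} \subseteq \Gal(\mathbf{W})$, I would show $\bigcap_{i\in I} X_i$ is again closed. This follows from the closure-operator properties: since $\bigcap_i X_i \subseteq X_j$ for each $j$, and each $X_j$ is closed, monotonicity of $\gamma$ gives $\gamma(\bigcap_i X_i) \subseteq \gamma(X_j) = X_j$ for every $j$, hence $\gamma(\bigcap_i X_i) \subseteq \bigcap_j X_j$; the reverse inclusion is extensivity of $\gamma$. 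Thus $\bigcap_i X_i \in \Gal(\mathbf{W})$, and this set is clearly the infimum of $\{X_i\}$ in the poset $\langle \Gal(\mathbf{W}), \subseteq\rangle$, since intersection is the meet in $\wp(W)$ and all the relevant sets are closed.

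Next I would define the join of a family $\{X_i\}_{i\in I}$ to be $\gamma(\bigcup_{i\in I} X_i)$, which lies in $\Gal(\mathbf{W})$ because $\gamma$ is idempotent (a standard property of closure operators, derivable from $X \subseteq \gamma(Y) \Iff \gamma(X) \subseteq \gamma(Y)$ by taking $Y := X$ to get $\gamma(X) \subseteq \gamma(X)$ hence $\gamma(\gamma(X)) \subseteq \gamma(X)$, with the reverse inclusion from extensivity). To see it is the least upper bound: each $X_j \subseteq \bigcup_i X_i \subseteq \gamma(\bigcup_i X_i)$, so it is an upper bound; and if $Y \in \Gal(\mathbf{W})$ satisfies $X_i \subseteq Y$ for all $i$, then $\bigcup_i X_i \subseteq Y$, so $\gamma(\bigcup_i X_i) \subseteq \gamma(Y) = Y$ by monotonicity and closedness of $Y$. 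Specializing to the binary and empty cases recovers $X \vee Y = \gamma(X \cup Y)$, $\top = W = \gamma(W)$ (note $W$ is closed since $W = \emptyset^\lhd$), and $\bot = \gamma(\emptyset)$ as the least element, matching the definitions in the excerpt.

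Finally I would note that since arbitrary meets and joins exist, $\mathbf{W}^+$ is a complete lattice, and the binary operations $\wedge = \cap$ and $\vee = \gamma(-\cup-)$ agree with the lattice operations induced by the order $\subseteq$. I do not anticipate a genuine obstacle here — the proof is entirely formal manipulation of the Galois connection $(\rhd,\lhd)$ and the induced closure operator $\gamma$. The only point requiring a modicum of care is to use \emph{only} the abstract closure-operator properties stated in the excerpt (extensivity, monotonicity, idempotence, all packaged in the displayed equivalence $X \subseteq \gamma(Y) \Iff \gamma(X) \subseteq \gamma(Y)$) rather than reasoning about specific polarities; and to record that the top element is genuinely closed, which is immediate once one observes $W = \emptyset^\lhd = \gamma(W)$. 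This mild bookkeeping is the extent of the work.
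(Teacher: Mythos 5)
Your proof is correct. The paper itself does not prove this lemma---it simply remarks that it is a well-known fact and points to a reference---so there is no authorial argument to diverge from; your argument is exactly the standard one: closed sets of a closure operator are stable under arbitrary intersections (giving meets), joins are obtained by closing unions, and the closure-operator properties (extensivity, monotonicity, idempotence) all follow from the single displayed equivalence $X \subseteq \gamma(Y) \Leftrightarrow \gamma(X)\subseteq\gamma(Y)$ as you indicate.
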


This is just a well-known fact. See \cite{DP02} for instance.

The lattice $\mathbf{W}^+$ is not always distributive 
because of the use of $\gamma$ in the definition of $\vee$.
To ensure distributivity, we have to impose a further structure 
on $\m W$.


A \emph{Heyting frame} is
$\m W = \langle W, W', R, \circ, \varepsilon, \ldd \rangle$, where
\begin{itemize}
\item $\langle W, W', R \rangle$ is a polarity,
\item $\langle W, \circ, \varepsilon \rangle$ is a monoid,
\item function $\ldd : W \times W' \longrightarrow W'$ satisfies
$$
x \circ y \Rel z \Iff y \Rel x\ldd z
$$
for every $x,y \in W$ and $z\in W'$,
\item the following inferences are valid:
$$
\infer[(e)]{y \circ x \Rel z}{x\circ y \Rel z} 
\qquad
\infer[(w)]{x \Rel z}{\varepsilon \Rel z} 
\qquad
\infer[(c)]{x \Rel z}{x\circ x \Rel z} 
$$
\end{itemize}
Clearly $x \Rel z$ is an analogue of a sequent and 
$(e), (w)$ and $(c)$ correspond to exchange, weakening and contraction
rules. By removing some/all of them, one obtains a
\emph{residuated frame} that works for substructural logics as well
\cite{GJ13,CGT12}.

\begin{lem}\label{l-heyting}
If $\m W$ is a Heyting frame, 
$\mathbf{W}^+ := \langle \Gal(\mathbf{W}), \wedge, \vee, \rightarrow, 
\top, \bot \rangle$
is a complete Heyting algebra, where 
$X \rightarrow Y := \{ y \in W: x \circ y \in Y \mbox{ for every } x\in X\}$.
\end{lem}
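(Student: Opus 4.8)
The plan is to verify each defining property of a complete Heyting algebra for the structure $\mathbf{W}^+ = \langle \Gal(\mathbf{W}), \wedge, \vee, \rightarrow, \top, \bot\rangle$, building on the preceding lemma which already gives that $\langle \Gal(\mathbf{W}), \wedge, \vee\rangle$ is a complete lattice. So the genuinely new content is (i) that $X \rightarrow Y$ as defined is actually a closed set, i.e.\ an element of $\Gal(\mathbf{W})$, (ii) that it is the relative pseudocomplement, i.e.\ satisfies the residuation law $Z \subseteq (X \rightarrow Y) \iff Z \wedge X \subseteq Y$ for all $Z \in \Gal(\mathbf{W})$, and (iii) that the resulting Heyting algebra is \emph{complete} — but completeness of the lattice reduct is already in hand, so this is automatic. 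Distributivity of a complete lattice follows once we have relative pseudocomplements, so there is nothing separate to check there.

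First I would record the basic adjunction coming from $\circ$ and $\ldd$: the condition $x \circ y \Rel z \iff y \Rel x \ldd z$ together with the exchange rule $(e)$ gives the symmetric form, and from it one extracts the set-level identity $x \circ Y^\rhd = (x \rightarrow Y)^\rhd$ in the appropriate sense, or more directly: for $Y$ closed, $y \in (X \rightarrow Y)$ iff $x \circ y \in Y$ for all $x \in X$ iff $x \circ y \Rel z$ for all $x \in X$, $z \in Y^\rhd$ iff $y \Rel x \ldd z$ for all such $x, z$. The last description exhibits $X \rightarrow Y$ as $\{\, x \ldd z : x \in X,\ z \in Y^\rhd \,\}^\lhd$, hence as a set of the form $Z^\lhd$, which is closed. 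That settles (i).

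For (ii), the residuation law, I would argue: for $Z \in \Gal(\mathbf{W})$ we have $Z \subseteq (X \rightarrow Y)$ iff for all $w \in Z$, $x \in X$ we have $x \circ w \in Y$; since $Y$ is closed this is iff $X \circ Z \subseteq Y$ where $X \circ Z := \{x \circ w : x \in X, w \in Z\}$, equivalently $\gamma(X \circ Z) \subseteq Y$. It then remains to identify $\gamma(X \circ Z)$ with $X \wedge Z = X \cap Z$ for closed $X, Z$ — and this is exactly where the structural rules $(w)$ and $(c)$ enter: weakening (via $\varepsilon$) gives $X \circ Z \subseteq \gamma(X) \cap \gamma(Z)$ modulo the monoid unit and the inference $(w)$, while contraction $(c)$ together with exchange gives the reverse inclusion $X \cap Z \subseteq \gamma(X \circ Z)$, because any $w \in X \cap Z$ satisfies $w \circ w \Rel z$ whenever $w \Rel z$, so $w \in (X \circ Z)^{\rhd\lhd}$. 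Hence $\gamma(X \circ Z) = X \wedge Z$, and the residuation law follows; uniqueness of the residual then identifies $X \rightarrow Y$ as \emph{the} Heyting implication, and distributivity of the complete lattice is a formal consequence.

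The main obstacle, and the only step requiring care, is the identification $\gamma(X \circ Z) = X \cap Z$ for closed $X$ and $Z$ — in particular checking that the closure of the pointwise product of two closed sets collapses to their intersection. The two inclusions use $(w)$ (with $\varepsilon$) and $(c)$ respectively, and one has to be careful to apply the Galois machinery (the displayed antitonicity $Z_1 \subseteq Z_2 \Rightarrow Z_2^\lhd \subseteq Z_1^\lhd$ and the closure-operator properties) in the right places; everything else is a routine unwinding of the adjunction $x \circ y \Rel z \iff y \Rel x \ldd z$.
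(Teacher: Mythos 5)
Your proof is correct and takes essentially the same route as the paper's: closedness of $X \rightarrow Y$ via the identity $X \rightarrow Y = (X \ldd Y^{\rhd})^{\lhd}$, and the residuation law via the closure of Galois-closed sets under $(e)$, $(w)$, $(c)$ — your intermediate identity $\gamma(X \circ Z) = X \cap Z$ merely repackages the two inclusions the paper verifies directly. One phrasing slip worth fixing: in the contraction step what you need is the passage from $w \circ w \Rel z$ to $w \Rel z$ (i.e.\ rule $(c)$ itself, applied to $z \in (X\circ Z)^{\rhd}$), not the converse implication as written.
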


\begin{proof}
First of all, observe that 
any $X\in \Gal(\m W)$ is closed under $(e)$, $(w)$ and $(c)$,
that is, the following inferences are all valid:
$$
\infer[(e)]{y \circ x \in X}{x \circ y \in X}
\qquad
\infer[(w)]{x \circ y \in X}{x \in X}
\qquad
\infer[(c)]{x \in X}{x\circ x\in X}
$$
We only verify $(w)$. Suppose that $x \in X$ and 
$z \in X^\rhd$. Then $x\Rel z$, i.e., $x\circ \varepsilon \Rel z$.
So $\varepsilon \Rel x \ldd z$ and 
$y \Rel x \ldd z$ by $(w)$. Hence $x \circ y \Rel z$.
Since this holds for every $z \in X^\rhd$, 
we conclude $x\circ y \in X^{\rhd\lhd} = X$.

Next, we show that $X\rightarrow Y \in \Gal(\m W)$ whenever
$Y \in \Gal(\m W)$. This can be shown by proving
$$
X\rightarrow Y = (X\ldd Y^\rhd )^{\lhd}
$$
where 
$X\ldd Y^\rhd := \{ x\ldd z \in W' : x\in X, z \in Y^\rhd\}$.

For the forward direction, let $y \in X\rightarrow Y$,
$x\in X$ and $z \in Y^\rhd$. Then $x \circ y \in Y$, 
so $x\circ y \Rel z$, hence $y \Rel x\ldd z$.
Since this holds for every $x\ldd z \in
X\ldd Y^\rhd$, we conclude $y \in (X\ldd Y^\rhd )^{\lhd}$.

For the backward direction, let $y \in 
(X\ldd Y^\rhd )^{\lhd}$, $x\in X$ and $z\in Y^\rhd$.
Then we have $y \Rel x\ldd z$, so $x\circ y \Rel z$.
Since this holds for every $z\in Y^\rhd$, we have 
$x\circ y \in Y^{\rhd\lhd} = Y$. Since this holds for every 
$x \in X$, we conclude $y \in X\rightarrow Y$.

We now prove that 
$$
X\cap Y \subseteq Z \Iff X \subseteq Y \rightarrow Z
$$
holds for every $X, Y, Z \in \Gal(\m W)$.
For the forward direction, let 
$x\in X$ and $y \in Y$. Then $x\circ y \in X\cap Y$ by $(e)$ and $(w)$,
so $x\circ y \in Z$ by assumption.
Since this holds for every $y\in Y$, we have 
$X \subseteq Y \rightarrow Z$.

For the backward direction, let $x\in X\cap Y$.
Then $x\circ x \in Z$ by assumption, so $x \in Z$ by $(c)$.
This proves $X\cap Y \subseteq Z$.
\end{proof}

Polarities and Heyting frames are handy devices to obtain MacNeille completions.
Let $\m A = \langle A, \wedge, \vee, \rightarrow, \top, \bot\rangle$ 
be a Heyting algebra. Then 
$\m{W}_{\m A} \Def \langle A, A, \leq, \wedge, \top, \rightarrow\rangle$
is a Heyting frame.
Notice that the third condition for a Heyting frame above amounts to
$x\wedge y \leq z$ iff $y \leq x\rightarrow z$.
For the next theorem,
note that the closure operator $\gamma$ can be seen as a map
$\gamma : \m A \longrightarrow \m{W}_{\m A}^+$ sending $a\in A$
to $a^{\rhd\lhd}$.

\begin{thm}\label{t-regular}
If $\m A$ is a Heyting algebra, then
$\gamma: \m A \longrightarrow 
\m{W}_{\m A}^+$ is a MacNeille completion.
\end{thm}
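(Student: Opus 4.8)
The plan is to verify the three defining properties of a MacNeille completion in turn: that $\gamma$ is a lattice embedding of $\m A$ into $\m{W}_{\m A}^+$, that it is $\bigvee$-dense, and that it is $\bigwedge$-dense. The ambient object $\m{W}_{\m A}^+$ is already known to be a complete Heyting algebra by Lemma \ref{l-heyting}, since $\m{W}_{\m A}$ is a Heyting frame, so the only work is in the map $\gamma$ itself. First I would record the explicit descriptions: for $a\in A$ we have $a^\rhd = {\uparrow}a = \{b\in A : a\leq b\}$ and hence $\gamma(a) = a^{\rhd\lhd} = ({\uparrow}a)^\lhd = \{c\in A : c\leq b \text{ for all } b\geq a\} = {\downarrow}a$. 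So $\gamma(a)$ is simply the principal downset of $a$, and the closed sets of the polarity are exactly those downsets $X$ satisfying $X = ({\uparrow}\text{-bounds of }X)^\lhd$, i.e. the ``normal'' ideals.

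For the embedding, injectivity of $\gamma$ is immediate since $a\in\gamma(a)$ and $\gamma(a)\subseteq\gamma(b)$ forces $a\leq b$; hence $\gamma(a)=\gamma(b)$ iff $a=b$. Preservation of $\wedge$ is the identity ${\downarrow}(a\wedge b) = {\downarrow}a \cap {\downarrow}b$, which is just the definition of meet in a lattice. Preservation of $\vee$ needs a small argument: one checks ${\downarrow}(a\vee b) = \gamma({\downarrow}a \cup {\downarrow}b)$, the join in $\m{W}_{\m A}^+$. The inclusion $\supseteq$ follows because ${\downarrow}(a\vee b)$ is closed and contains ${\downarrow}a\cup{\downarrow}b$; for $\subseteq$ one uses that any $\m{W}_{\m A}^+$-element $X$ containing both $a$ and $b$ has $a\vee b$ as a member — indeed every upper bound of $X$ in $A$ is $\geq a\vee b$, so $a\vee b \in X^{\rhd\lhd}=X$ — and hence contains ${\downarrow}(a\vee b)$; taking $X = \gamma({\downarrow}a\cup{\downarrow}b)$ gives the claim. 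I would also note in passing that $\gamma$ preserves the Heyting implication and the bounds, though strictly only the lattice structure is needed for the MacNeille statement; this follows from the fact that $\m{W}_{\m A}$ is built from $(\wedge,\top,\rightarrow)$ of $\m A$, so the computation of $X\rightarrow Y$ in Lemma \ref{l-heyting} restricts correctly on principal downsets.

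For the two density conditions, take any closed $X\in\Gal(\m{W}_{\m A})$. Since $X$ is a downset, $X = \bigcup_{a\in X}{\downarrow}a = \bigcup\{\gamma(a) : a\in A,\ \gamma(a)\subseteq X\}$, and as the join $\bigvee$ in $\m{W}_{\m A}^+$ is $\gamma$ of the union while $X$ is already closed, we get $X = \bigvee\{\gamma(a) : \gamma(a)\leq X\}$; this is $\bigvee$-density. For $\bigwedge$-density, write $X = Z^\lhd$ for some $Z\subseteq A$ (possible since $X$ is closed, taking $Z = X^\rhd$); then $X = Z^\lhd = \bigcap_{b\in Z}{\downarrow}b = \bigcap\{\gamma(b) : b\in A,\ X\subseteq\gamma(b)\}$, noting $b\in X^\rhd$ iff $X\subseteq{\downarrow}b=\gamma(b)$. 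Since meet in $\m{W}_{\m A}^+$ is intersection, $X = \bigwedge\{\gamma(b) : X\leq\gamma(b)\}$, giving $\bigwedge$-density. Both together say the completion is MacNeille, and by Theorem \ref{t-limit} (closure of $\mathcal{HA}$ under MacNeille completions) it is in fact a Heyting algebra, consistent with Lemma \ref{l-heyting}.

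The routine lattice-theoretic identities above are not the obstacle; the one point deserving genuine care is the verification that joins in $\m{W}_{\m A}^+$ restrict correctly along $\gamma$ — i.e. that $\gamma(a\vee b) = \gamma(a)\vee\gamma(b)$ rather than merely $\gamma(a\vee b)\supseteq\gamma(a)\vee\gamma(b)$. This is where the closure operator could a priori introduce extra elements, and the argument ``every $A$-upper-bound of $\{a,b\}$ dominates $a\vee b$, hence $a\vee b$ lies in the double-dual'' is exactly the place where the specific relation $R={\leq}$ and the presence of binary joins in $\m A$ are used. Everything else is bookkeeping with downsets and Galois connections.
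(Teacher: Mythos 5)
Your proposal is correct and follows essentially the same route as the paper: identify $\gamma(a)=a^{\lhd}={\downarrow}a$, check that $\gamma$ is a lattice embedding, and derive $\bigvee$-density from $X=\bigcup\{\gamma(a):a\in X\}$ and $\bigwedge$-density from $X=\bigcap\{a^{\lhd}:a\in X^{\rhd}\}$. The only difference is that you spell out the join-preservation step that the paper leaves as ``easy to see,'' and your argument for it (every $A$-upper bound of a closed set containing $a$ and $b$ dominates $a\vee b$, so $a\vee b$ survives the double dual) is exactly the right one.
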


\begin{proof}
It is easy to see that $\gamma(a) = a^\lhd$ holds for every $a\in A$.
Based on this, we can show that $\gamma(\bot)=\gamma(\emptyset)$ and
$\gamma(a\star b) = \gamma(a) \star \gamma(b)$ for $\star \in \{\wedge,
\vee,\rightarrow\}$. Furthermore, $a\leq b$ holds if and only if
$\gamma(a)\subseteq\gamma(b)$, meaning that $\gamma$ is an embedding.

Let us verify that the completion is MacNeille.
Let $X \in \Gal(\m{W}_{\m A})$.
For $\bigvee$-density, we have
$$X = \gamma \left( \bigcup \{\gamma(a) : a \in X\} \right)
= \bigvee \{\gamma(a) : \gamma(a)\leq X\}.$$
For $\bigwedge$-density, notice that 
$X = \bigcap \{a^\lhd : a \in X^\rhd\}$ and
$\gamma (a) = a^\lhd$. Hence
$$
X = \bigwedge \{\gamma(a) : X \leq \gamma (a)\}.
$$
\end{proof}


\subsection{A syntactic frame for cut elimination}
\label{ss-algebraic}

We now start an algebraic proof of (partial) cut elimination for 
$\LIP_{n+1}$ (with $n\geq -1$). Although we have already given a proof of 
cut elimination in Subsection \ref{ss-syntactic}, the proof does not 
formalize in $\ID_{n+1}^i$ but only in $\ID_{n+2}^i$ (even locally).
Our goal here is to give another proof that 
locally formalizes in $\ID_{n+1}^i$.

What we actually do is to prove that
$$\LIP_{n+1} \vdash \Gamma\Rightarrow\Pi 
\quad\mbox{implies}\quad
\LIO_{n} \vdash^{cf} \Gamma\Rightarrow\Pi$$ 
provided that $\Gamma \cup \Pi \subseteq \FMP_n$.
In particular when $n= -1$, this means that 
$\LIP_{0} \vdash \Gamma\Rightarrow\Pi$ implies
$\LI \vdash^{cf} \Gamma\Rightarrow\Pi$ provided that 
$\Gamma \cup \Pi \subseteq \Fm$. 
When $n\geq 0$, we may combine it with 
Lemma \ref{l-collapse} 
to obtain partial cut elimination for $\LIP_{n+1}$. 
Notice that any use of a provability predicate at level $n+1$ is 
avoided here. It is for this reason that the argument locally
formalizes in $\ID_{n+1}^i$.

To begin with,
let $\fwp (\Fm)$ be the set of finite sets of first order formulas, 
so that $\langle \fwp (\Fm), \cup, \emptyset\rangle$ is a commutative 
idempotent monoid. 
Let $\SEQ$ be
the set of sequents that consist of formulas in $\FMP_n$. 
There is a natural 
map $\ldd : \fwp (\Fm) \times \SEQ \longrightarrow \SEQ$ 
defined by 
$\Gamma \ldd (\Sigma\Rightarrow \Pi) := (\Gamma,\Sigma \Rightarrow \Pi)$.
So
$$\m{CF} \Def \langle \fwp(\Fm), \SEQ, \Rightarrow^{cf}, \cup, \emptyset,
\ldd \rangle$$
is a Heyting frame, where the binary relation $\Rightarrow^{cf}$
is defined by 
$$\Gamma \Rightarrow^{cf} (\Sigma\Rightarrow\Pi)
\Iff
\LIO_n \vdash^{cf} \Gamma, \Sigma \Rightarrow \Pi.$$
In fact, rules $(e)$, $(c)$ are automatically satisfied because 
the monoid is commutative and idempotent. Rule $(w)$ is satisfied
since the weakening rule is admissible in $\LIO_n$.
Finally, we have:
$$
\Delta\cup\Gamma \Rcf (\Sigma\Rightarrow\Pi)
\Iff
\Delta \Rcf \Gamma \ldd(\Sigma\Rightarrow\Pi).
$$
In the following, we write $\varphi$ for 
sequent $(\emptyset \Rightarrow \varphi) \in \SEQ$.
Thus $\Gamma \Rightarrow^{cf}\varphi$ simply means 
$\LIO_n \vdash^{cf} \Gamma \Rightarrow \varphi$.

$\m{CF}$ is a frame in which 
$\Gamma \in \Pi^\lhd$ holds iff $\Gamma \Rightarrow^{cf} \Pi$.
In particular, $\varphi \in \varphi^\lhd$ always holds,
so 
$$(*)\qquad \varphi \in \gamma(\varphi) \subseteq \varphi^\lhd.$$
It should also be noted that each $X \in \Gal(\m{CF})$ is closed under 
weakening because of $(w)$: 
if $\Delta \in X$ and $\Delta \subseteq \Sigma$, then $\Sigma \in X$.

This yields a full Heyting-valued structure
$$\mc{CF} := \langle \m{CF}^+, \Tm, \Gal(\m{CF})^\Tm, \mc L
\rangle,$$
where $\mc L$ is defined by
 $p^{\mc{CF}} (\Vec{t}) := \gamma(p(\Vec{t}))$ for each predicate symbol
$p \in L$.

Let $\mc{I} : \VAR \longrightarrow \Gal(\m{CF})^\Tm$
be a valuation defined by 
$\mc{I}(X)(t) := \gamma (X(t))$. This can be extended to 
an interpretation
$\mc{I} : \FMP_0
\longrightarrow \Gal(\m{CF})$ by induction, employing 
the $\Omega$-interpretation technique discussed in 
Section \ref{s-omegamacneille}:
$$
\begin{array}{llllll}
\mc I(X(t)) & := & \mc{I} (X)(t) = \gamma (X(t)), &
\mc I(p(\vec{t})) & := & p^{\mc{CF}}(\vec{t}) = \gamma (p(\vec{t})), \\
\mc I(\bot) & := & \bot, &
\mc I (\varphi \star \psi) & := & \mc I(\varphi) \star \mc I(\psi), \\
\mc I (\forall x.\varphi(x)) & := & \bigwedge_{t \in \Tm} \mc I(\varphi(t)), &
\mc I (\exists x.\varphi(x)) & := & \bigvee_{t \in \Tm} \mc I(\varphi(t)),\\
\mc I (\forall X.\varphi) & := & \gamma ( |\forall X.\varphi|_{n+1}), &
\mc I (\exists X.\varphi) & := & |\exists X.\varphi|_{n+1}^\lhd,
\end{array}
$$
where $\star \in \{\wedge, \vee, \rightarrow\}$.
Notice our specific choice of level $n+1$ in the definitions of 
$\mc I (\forall X.\varphi)$ and 
$\mc I (\exists X.\varphi)$.
It can be flexibly changed, however, because of the following property.

\begin{lem}\label{l-down}
For every $0\leq k \leq n$,
$$\gamma(|\forall X.\varphi|_{n+1}) = \gamma(|\forall X.\varphi|_k),
\qquad
|\exists X.\varphi|_{n+1}^\lhd =
|\exists X.\varphi|_k^\lhd$$ 
hold if $\forall X.\varphi, \exists X.\varphi\in \FMP_k$.
\end{lem}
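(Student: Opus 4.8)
The plan is to deduce both identities from a single observation: for every $0\le k\le n+1$ the polar $|\forall X.\varphi|_k^{\rhd}$ equals the set of sequents $(\Sigma\Rightarrow\Pi)$ with $\LIO_n\vdash^{cf}\forall X.\varphi,\Sigma\Rightarrow\Pi$, and dually $|\exists X.\varphi|_k^{\lhd}$ equals the set of $\Gamma\in\fwp(\Fm)$ with $\LIO_n\vdash^{cf}\Gamma\Rightarrow\exists X.\varphi$; in particular neither depends on $k$. Granting this, the first identity follows by applying $(\cdot)^{\lhd}$, since $\gamma(\cdot)=(\cdot)^{\rhd\lhd}$, and the second is immediate, as $|\exists X.\varphi|_k^{\lhd}$ is itself that polar. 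Throughout I use that $\forall X.\varphi$ and $\exists X.\varphi$ are second order closed and that, being in $\FMP_k$, they satisfy $\varphi(Y)\in\FMP_{k-1}\subseteq\FMP_n$.

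First I would record monotonicity: for $j\le n$ every rule of $\LIO_{j-1}$ is a rule of $\LIO_n$, so a cut-free $\LIO_{j-1}$-derivation is a cut-free $\LIO_n$-derivation; together with $\FMP_{j-1}\subseteq\FMP_n$ this gives $|\forall X.\varphi|_k\subseteq|\forall X.\varphi|_{n+1}$ and $|\exists X.\varphi|_k\subseteq|\exists X.\varphi|_{n+1}$, hence $|\forall X.\varphi|_{n+1}^{\rhd}\subseteq|\forall X.\varphi|_k^{\rhd}$ and $|\exists X.\varphi|_{n+1}^{\lhd}\subseteq|\exists X.\varphi|_k^{\lhd}$. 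Second, since $k\le n$ the rules $(\Omega_k\Left)$ and $(\Omega_k\Right)$ belong to $\LIO_n$: if $(\Sigma\Rightarrow\Pi)\in|\forall X.\varphi|_k^{\rhd}$, then the sequents $\{\Delta,\Sigma\Rightarrow\Pi\}_{\Delta\in|\forall X.\varphi|_k}$ are exactly the premises of an $(\Omega_k\Left)$-inference with conclusion $\forall X.\varphi,\Sigma\Rightarrow\Pi$ and are all cut-free derivable, so $\LIO_n\vdash^{cf}\forall X.\varphi,\Sigma\Rightarrow\Pi$; dually $\Gamma\in|\exists X.\varphi|_k^{\lhd}$ gives $\LIO_n\vdash^{cf}\Gamma\Rightarrow\exists X.\varphi$ by $(\Omega_k\Right)$. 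Third, the converse uses cut elimination for $\LIO$ (Lemma~\ref{l-cut}): given $\LIO_n\vdash^{cf}\forall X.\varphi,\Sigma\Rightarrow\Pi$ and $\Delta\in|\forall X.\varphi|_{n+1}$, i.e.\ $\LIO_n\vdash^{cf}\Delta\Rightarrow\varphi(Y)$ with $Y\notin\FV(\Delta)$, rule $(\forall X\Right)$ yields $\LIO_n\vdash^{cf}\Delta\Rightarrow\forall X.\varphi$, and one cut followed by Lemma~\ref{l-cut} gives $\LIO_n\vdash^{cf}\Delta,\Sigma\Rightarrow\Pi$, so $(\Sigma\Rightarrow\Pi)\in|\forall X.\varphi|_{n+1}^{\rhd}$; symmetrically, from $\LIO_n\vdash^{cf}\Gamma\Rightarrow\exists X.\varphi$ and $(\Delta\Rightarrow\Lambda)\in|\exists X.\varphi|_{n+1}$ one applies $(\exists X\Left)$ to $\LIO_n\vdash^{cf}\varphi(Y),\Delta\Rightarrow\Lambda$, then cuts and collapses to get $\LIO_n\vdash^{cf}\Gamma,\Delta\Rightarrow\Lambda$, so $\Gamma\in|\exists X.\varphi|_{n+1}^{\lhd}$. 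Chaining these inclusions closes a cycle and pins down both polars to the stated sets, whence the lemma.

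The step I expect to be the main obstacle is this converse half: one must be certain the single cut introduced is genuinely removable, which is precisely the content of Lemma~\ref{l-cut}, and one must treat the boundary value $k=n+1$ on its own, since $(\Omega_{n+1}\Left)$ and $(\Omega_{n+1}\Right)$ are not rules of $\LIO_n$ — there, however, the monotonicity inclusion already supplies the direction that the $\Omega$-rule would otherwise provide, so nothing new is needed. I would also remark that the universal identity can be obtained more cheaply, bypassing Lemma~\ref{l-cut}: the sets $\Delta$ in $|\forall X.\varphi|_{n+1}$ relevant to $\gamma$ are first order, and since $\varphi(Y)\in\FMP_{k-1}$, iterating the Collapsing lemma~\ref{l-collapse} shows $|\forall X.\varphi|_{n+1}$ and $|\forall X.\varphi|_k$ are literally the same set. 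Only the existential identity truly requires cut elimination, because $|\exists X.\varphi|_{n+1}$ may contain sequents whose formulas reach level $n$, for which no collapsing is available.
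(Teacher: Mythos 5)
Your proof is correct, but it takes a genuinely different route from the paper's. The paper disposes of both identities in a single cut-free step each: given $\Sigma\in|\forall X.\varphi|_{n+1}$ (so $\Sigma\Rcf\varphi(Y)$ with $Y\notin\FV(\Sigma)$) and $(\Gamma\Rightarrow\Pi)\in|\forall X.\varphi|_k^{\rhd}$, the premises of $(\tilde{\Omega}_k\Left)$ are exactly these two pieces of data, so $\Sigma,\Gamma\Rcf\Pi$ follows immediately; dually with $(\tilde{\Omega}_k\Right)$ for the existential case. This lemma is precisely the ``technical reason'' for which the otherwise superfluous rules $(\tilde{\Omega}_k\Left)$ and $(\tilde{\Omega}_k\Right)$ were built into $\LIO_n$. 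You instead simulate $(\tilde{\Omega}_k\Left)$ by hand: $(\Omega_k\Left)$ for one inclusion, and $(\forall X\Right)$ followed by an explicit cut and an appeal to Lemma \ref{l-cut} for the other. That is sound --- Lemma \ref{l-cut} is proved independently in Appendix \ref{a-proof2} and is in any case assumed formalized in $\ID_{n+1}^i$ later on, so there is no circularity and no metamathematical loss --- and your explicit identification of the polars with $\{(\Sigma\Rightarrow\Pi):\LIO_n\vdash^{cf}\forall X.\varphi,\Sigma\Rightarrow\Pi\}$ and $\{\Gamma:\LIO_n\vdash^{cf}\Gamma\Rightarrow\exists X.\varphi\}$ is a pleasant byproduct (essentially Lemma \ref{l-improvement} specialized to the quantified formulas). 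The cost is that you make Lemma \ref{l-down} depend on the full infinitary cut elimination theorem where a purely local, cut-free argument suffices.

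Two of your closing remarks need correction. The claim that the existential identity ``truly requires cut elimination'' is refuted by the paper's argument via $(\tilde{\Omega}_k\Right)$, which introduces no cut. And the proposed shortcut for the universal identity via Lemma \ref{l-collapse} presupposes that the $\Delta$'s in $|\forall X.\varphi|_{n+1}$ entering $\gamma$ are first order; that matches the literal definition of the carrier as $\fwp(\Fm)$, but the rest of Section \ref{s-algebraic} (e.g.\ $\xi\in\mc I(\xi)$ for arbitrary $\xi\in\FMP_n$ in Lemma \ref{l-main}) requires the carrier to consist of finite sets of $\FMP_n$-formulas, and then $|\forall X.\varphi|_{n+1}$ contains $\Delta$'s of level up to $n$ to which collapsing down to $\LIO_{k-1}$ does not apply. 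So the two index sets are not literally equal; only their $\gamma$-closures coincide, and that is what the lemma asserts.
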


\begin{proof}
It is clear that the inclusion $\supseteq$ holds for the first equation.
So let $\Sigma \in |\forall X.\varphi|_{n+1}$ and 
$(\Gamma\Rightarrow\Pi) \in |\forall X.\varphi|_k^\rhd$.
The former means that $\Sigma \Rcf \varphi(Y)$ with $Y \not\in \FV(\Delta)$,
while the latter means that 
$\Delta,\Gamma\Rcf\Pi$ for any $\Delta \in 
|\forall X.\varphi|_k$. Hence we obtain
$\Sigma,\Gamma\Rcf\Pi$ by rule $(\tilde{\Omega}_k\Left)$.
Therefore $\Sigma \in |\forall X.\varphi|_k^{\rhd\lhd} = 
\gamma(|\forall X.\varphi|_k)$.

For the second equation, the inclusion $\subseteq$ holds because
$|\exists X.\varphi|_k \subseteq |\exists X.\varphi|_{n+1}$.
So let $\Sigma \in |\exists X.\varphi|_k^\lhd$ and 
$(\Gamma\Rightarrow\Pi) \in 
|\exists X.\varphi|_{n+1}$. The former means that 
$\Sigma, \Delta\Rcf\Lambda$ for any $(\Delta\Rightarrow\Lambda) \in 
|\exists X.\varphi|_k$, while the latter means that
$\varphi(Y),\Gamma\Rightarrow\Pi$ with $Y \not\in \FV(\Gamma,\Pi)$.
Hence we obtain
$\Sigma,\Gamma\Rcf\Pi$ by rule $(\tilde{\Omega}_k\Right)$.
Therefore $\Sigma \in 
|\exists X.\varphi|_{n+1}^\lhd$.
\end{proof}

Now a crucial lemma follows (called the ``main lemma''
in \cite{Okada02}).

\begin{lem}\label{l-main}
For every formula $\xi$ in $\FMP_n$, 
$\xi \in \mc I(\xi) \subseteq \xi^\lhd$.
\end{lem}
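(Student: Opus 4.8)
The plan is to prove the two inclusions $\xi \in \mc I(\xi)$ and $\mc I(\xi) \subseteq \xi^\lhd$ simultaneously, by induction on the structure of the formula $\xi \in \FMP_n$. (Here $\xi \in \mc I(\xi)$ should be read as: the sequent $\emptyset \Rightarrow \xi$ is a member of the closed set $\mc I(\xi)$, which is the ``completeness'' half, while $\mc I(\xi) \subseteq \xi^\lhd$ is the ``soundness'' half saying every $\Gamma \in \mc I(\xi)$ satisfies $\LIO_n \vdash^{cf} \Gamma \Rightarrow \xi$.) The base cases are the atoms: for $X(t)$ and $p(\vec t)$ we have $\mc I = \gamma(X(t))$ resp.\ $\gamma(p(\vec t))$, and both halves follow from the general fact $(*)$ that $\varphi \in \gamma(\varphi) \subseteq \varphi^\lhd$ in $\m{CF}$; for $\bot$ one uses $(\bot\Left)$ on the soundness side (any $\Gamma \in \gamma(\emptyset)$ is in $\emptyset^\lhd$, and $\bot, \Gamma \Rcf$ is derivable so $\bot^\lhd \supseteq \emptyset^\lhd$) and the fact $\bot \in \gamma(\emptyset)$ on the completeness side.

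For the propositional connectives $\wedge, \vee, \rightarrow$ the argument is the standard Maehara--Okada pattern, exploiting that $\mc I(\varphi\star\psi)=\mc I(\varphi)\star\mc I(\psi)$ is computed in the Heyting algebra $\m{CF}^+$ using the explicit formula $X\rightarrow Y = (X \ldd Y^\rhd)^\lhd$ from Lemma \ref{l-heyting}. For instance for $\rightarrow$: for completeness, given $\Gamma \in \mc I(\varphi)$ one has $\Gamma \Rcf \varphi$ by IH, so $\varphi\rightarrow\psi, \Gamma \Rcf \psi$ by $(\rightarrow\Left)$ and $(\mathsf{id})$, hence (using $\psi \in \mc I(\psi)$, $\mc I(\psi)$ closed under weakening) one shows $\varphi\rightarrow\psi \in \mc I(\varphi)\rightarrow\mc I(\psi)$; for soundness, given $\Gamma \in \mc I(\varphi\rightarrow\psi)$, apply it to $\varphi \in \mc I(\varphi)$ (IH) to get $\Gamma \circ \varphi = \Gamma, \varphi \in \mc I(\psi) \subseteq \psi^\lhd$, i.e.\ $\Gamma, \varphi \Rcf \psi$, whence $\Gamma \Rcf \varphi\rightarrow\psi$ by $(\rightarrow\Right)$. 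The first-order quantifier cases use the $\omega$-rules: for $\forall x.\varphi(x)$, completeness needs $\emptyset \Rcf \varphi(t)$ for all $t$ (which follows from $\varphi(t) \in \mc I(\varphi(t)) \subseteq \varphi(t)^\lhd$ by IH applied to each instance) and then $(\omega\Right)$; soundness uses $(\forall x\Left)$. Dually for $\exists x$ with $(\omega\Left)$ and $(\exists x\Right)$.

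The genuinely new cases are the second-order quantifiers, and this is where the $\Omega$-interpretation does its work. For $\forall X.\varphi$ with $\mc I(\forall X.\varphi) = \gamma(|\forall X.\varphi|_{n+1})$: for soundness, let $\Gamma \in \gamma(|\forall X.\varphi|_{n+1})$; since every $\Delta \in |\forall X.\varphi|_{n+1}$ satisfies $\Delta \Rcf \varphi(Y)$, hence $\Delta \in \varphi(Y)^\lhd$ — wait, more carefully, one checks $\forall X.\varphi \in \Delta^\rhd$ for each such $\Delta$ via rule $(\Omega_{n+1}\Left)$, so $\forall X.\varphi \in |\forall X.\varphi|_{n+1}^\rhd$, giving $\gamma(|\forall X.\varphi|_{n+1}) \subseteq \forall X.\varphi^\lhd$; for completeness, one must show $\forall X.\varphi \in \gamma(|\forall X.\varphi|_{n+1})$, i.e.\ $\forall X.\varphi \Rcf \Pi$ whenever $\Pi \in |\forall X.\varphi|_{n+1}^\rhd$ — the key point being that by the IH on the body $\varphi(Y)$ we have $\varphi(Y) \in \mc I(\varphi(Y))$, and since $\varphi(Y) \in \varphi(Y)^\lhd$ trivially, the singleton $\{\varphi(Y)\}$ lies in $|\forall X.\varphi|_{n+1}$ (as $\varphi(Y) \Rcf \varphi(Y)$), so it is dominated by any $\Pi \in |\forall X.\varphi|_{n+1}^\rhd$. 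Dually, $\mc I(\exists X.\varphi) = |\exists X.\varphi|_{n+1}^\lhd$ uses rule $(\Omega_{n+1}\Right)$ for completeness and the membership $\varphi(Y) \Rcf \varphi(Y)$ (so $(\emptyset \Rightarrow \varphi(Y)) \in |\exists X.\varphi|_{n+1}$) for soundness. I expect the main obstacle to be bookkeeping the eigenvariable conditions (choosing $Y \notin \FV(\cdot)$ correctly when feeding $\varphi(Y)$ into the index sets) and making sure the IH is available on the body $\varphi(Y)$, which requires noting that $\varphi(Y) \in \FMP_n$ whenever $\forall X.\varphi \in \FMP_n$ — a point that should follow from the grammar of $\FMP_n$ together with the closure-under-substitution lemma, and needs to be invoked cleanly so that the induction is well-founded on the structure of formulas rather than getting tangled with the level index $n$.
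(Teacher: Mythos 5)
Your treatment of the propositional and first-order cases follows the paper's argument in substance (though you have some rule names swapped: for $\forall x.\varphi$ it is the membership half $\forall x.\varphi \in \mc I(\varphi(t))$ that uses $(\forall x\Left)$ and the inclusion $\mc I(\forall x.\varphi)\subseteq (\forall x.\varphi)^\lhd$ that uses $(\omega\Right)$; and for $n=-1$ the $\omega$-rules are unavailable, so one must choose an eigenvariable $y\notin\Fv(\Gamma)$ and use $(\forall x\Right)$, $(\exists x\Left)$ instead). The genuine gap is in the second-order cases, which are the heart of the lemma. You claim $\{\varphi(Y)\}\in|\forall X.\varphi|_{n+1}$ ``as $\varphi(Y)\Rcf\varphi(Y)$.'' This is false: membership requires $\LIO_n\vdash^{cf}\Delta\Rightarrow\varphi(Y')$ for some $Y'\notin\FV(\Delta)$, and with $\Delta=\{\varphi(Y)\}$ we have $Y\in\FV(\Delta)$, so one would need $\varphi(Y)\Rightarrow\varphi(Y')$ for a \emph{different} variable $Y'$ --- which already fails for $\varphi(X)=X(c)$. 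Moreover, even granting the membership, the conclusion you draw (that $\{\varphi(Y)\}$ is ``dominated by any $\Pi\in|\forall X.\varphi|_{n+1}^{\rhd}$'') would only yield $\varphi(Y)\in\gamma(|\forall X.\varphi|_{n+1})$, not the required $\forall X.\varphi\in\gamma(|\forall X.\varphi|_{n+1})$, and there is no finitary left rule in $\LIO_n$ that converts the former into the latter. The same two defects affect your $\exists X$ case.

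The correct argument is simpler and never touches the body $\varphi(Y)$: the singleton $\{\forall X.\varphi\}$ \emph{itself} belongs to $|\forall X.\varphi|_{n+1}$. Indeed $\LIO_n\vdash^{cf}\forall X.\varphi\Rightarrow\varphi(Y)$ by rule $(\Omega_k\Left)$ (for the $k\le n$ with $\forall X.\varphi\in\FMP_k$, each premise $\Delta\Rightarrow\varphi(Y)$ being cut-free provable by definition of the index set together with the substitution lemma), and the eigenvariable condition $Y\notin\FV(\forall X.\varphi)$ holds because the formula is second-order \emph{closed} --- parameter-freeness is used exactly here. Then $\forall X.\varphi\in|\forall X.\varphi|_{n+1}\subseteq\gamma(|\forall X.\varphi|_{n+1})=\mc I(\forall X.\varphi)$, while $|\forall X.\varphi|_{n+1}\subseteq(\forall X.\varphi)^{\lhd}$ follows from the finitary rule $(\forall X\Right)$, not from ``$(\Omega_{n+1}\Left)$,'' which does not exist in $\LIO_n$. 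Dually, $(\emptyset\Rightarrow\exists X.\varphi)\in|\exists X.\varphi|_{n+1}$ via $(\Omega_k\Right)$ gives $|\exists X.\varphi|_{n+1}^{\lhd}\subseteq(\exists X.\varphi)^{\lhd}$, and $(\exists X\Left)$ gives $\exists X.\varphi\in|\exists X.\varphi|_{n+1}^{\lhd}$. In particular the induction hypothesis is never applied to $\varphi(Y)$; second-order quantified formulas behave as atoms for this induction, which is precisely what the $\Omega$-interpretation buys over the reducibility-candidates argument of Appendix~\ref{a-okada}.
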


\begin{proof}
By induction on the structure of $\xi$.\\

\noindent
(1) $\xi$ is an atomic formula. We have $\mc I(\xi)=\gamma(\xi)$,
hence the claim holds by $(*)$ above.\\

\noindent
(2) $\xi = \varphi \wedge \psi$.
We first show $\varphi\wedge\psi \in 
\mc I(\varphi\wedge \psi)= \mc I(\varphi) \cap \mc I(\psi)$.
Let $(\Gamma\Rightarrow\Pi) \in \mc I(\varphi)^\rhd$. Then 
the induction hypothesis $\varphi \in \mc I(\varphi)$ yields 
$\varphi, \Gamma \Rightarrow^{cf} \Pi$, so 
$\varphi\wedge\psi, \Gamma \Rightarrow^{cf} \Pi$ by rule
$(\wedge\Left)$. That is, $\varphi\wedge\psi \in \mc I(\varphi)^{\rhd\lhd} 
= \mc I(\varphi)$.
Likewise, $\varphi\wedge\psi \in 
\mc I(\psi)$. So $\varphi\wedge\psi \in 
\mc I (\varphi \wedge \psi)$.

We next show 
$\mc I (\varphi \wedge \psi) \subseteq (\varphi \wedge \psi)^\lhd$.
Let $\Gamma \in \mc I (\varphi) \cap \mc I (\psi)$. Then we have 
$\Gamma \Rcf \varphi$ and $\Gamma \Rcf \psi$ by the induction hypotheses.
So $\Gamma \Rcf \varphi\wedge\psi$ by rule $(\wedge \Right)$.
That is, $\Gamma \in (\varphi\wedge\psi)^\lhd$.\\

\noindent
(3) $\xi = \varphi \vee \psi$.
We first show $\varphi\vee\psi \in 
\mc I(\varphi\vee \psi) = 
\gamma(\mc I(\varphi)\cup \mc I(\psi))$.
Let
$(\Gamma\Rightarrow\Pi) \in 
(\mc I(\varphi)\cup \mc I(\psi))^{\rhd}$.
Then the induction hypotheses 
$\varphi \in \mc I(\varphi)$ and 
$\psi \in \mc I(\psi)$ yield
$\varphi, \Gamma\Rcf \Pi$ and $\psi, \Gamma\Rcf \Pi$.
Hence $\varphi\vee\psi, \Gamma\Rcf\Pi$ by rule $(\vee \Left)$.
That is, $\varphi\vee\psi \in 
(\mc I(\varphi)\cup \mc I(\psi))^{\rhd\lhd}
= \mc I(\varphi \vee \psi)$.

We next show $\mc I (\varphi \vee \psi) \subseteq (\varphi \vee \psi)^\lhd$.
Let $\Gamma \in \mc I (\varphi) \cup \mc I (\psi)$, say $\Gamma \in 
\mc I (\varphi)$. Then 
$\Gamma \Rcf \varphi$ by the induction hypothesis. Hence 
$\Gamma \Rcf \varphi\vee\psi$ by rule $(\vee \Right)$.
That is, $\Gamma \in (\varphi\vee\psi)^{\lhd}$.
This proves that 
$\mc I (\varphi \vee \psi) \subseteq (\varphi\vee\psi)^{\lhd}$.\\

\noindent
(4) $\xi = \varphi \rightarrow \psi$.
We first show $\varphi\rightarrow\psi \in \mc I(\varphi\rightarrow \psi)$.
Let $\Sigma \in \mc I (\varphi)$ and $(\Gamma\Rightarrow\Pi) \in 
\mc I (\psi)^{\rhd}$.
Then $\Sigma \Rcf \varphi$ and $\psi,\Gamma\Rcf \Pi$ by the induction
hypotheses.
Hence $\Sigma, \varphi\rightarrow\psi,\Gamma \Rcf \Pi$ by rule
$(\rightarrow\Left)$. Since this holds for 
any $(\Gamma\Rightarrow\Pi) \in \mc I (\psi)^{\rhd}$, we have 
$\Sigma, \varphi\rightarrow\psi \in \mc I (\psi)^{\rhd\lhd}= \mc I (\psi)$.
Since this holds for any $\Sigma \in \mc I (\varphi)$, we conclude 
$\varphi\rightarrow\psi \in  \mc I(\varphi\rightarrow \psi)$.

We next show $\mc I (\varphi \rightarrow \psi) \subseteq 
(\varphi \rightarrow \psi)^\lhd$.
Let $\Gamma \in \mc I (\varphi) \rightarrow \mc I (\psi)$.
Since $\varphi \in \mc I(\varphi)$ and $\mc I(\psi) \subseteq \psi^\lhd$ 
by the induction hypotheses, we have $\varphi, \Gamma \Rcf \psi$. Hence
$\Gamma \Rcf \varphi\rightarrow\psi$ by rule $(\rightarrow\Right)$.
That is, $\Gamma \in (\varphi\rightarrow\psi)^\lhd$.\\

\noindent
(5) $\xi = \forall x.\varphi(x)$.
We first show $\forall x.\varphi(x) \in \mc I(\forall x.\varphi(x))
= \bigcap_{t\in\Tm}\mc I(\varphi(t))$.
Let $t \in \Tm$ and 
$(\Gamma\Rightarrow\Pi) \in \mc I(\varphi(t))^\rhd$. Then the induction 
hypothesis $\varphi(t) \in \mc I(\varphi(t))$ yields 
$\varphi(t), \Gamma \Rightarrow^{cf} \Pi$, so 
$\forall x.\varphi, \Gamma \Rcf \Pi$ by rule
$(\forall x\Left)$. That is, $\forall x.\varphi \in 
\mc I(\varphi(t))^{\rhd\lhd} = \mc I(\varphi(t))$.
Since it holds for any $t\in \Tm$, we conclude
$\forall x.\varphi \in \bigcap_{t\in \Tm} \mc I (\varphi(t))
 = \mc I (\forall x.\varphi(x))$.

We next show $\mc I(\forall x.\varphi(x)) \subseteq 
(\forall x.\varphi(x))^{\lhd}$.
Let $\Gamma \in \bigcap_{t\in \Tm} \mc I (\varphi(t))$. 
The proof splits into two cases. 
\begin{itemize}
\item[(i)] If $n= -1$, we choose a variable $y$ such that 
$y \not\in \Fv(\Gamma)$. Then $\Gamma \in
\mc I (\varphi(y))$. We have
$\Gamma \Rcf \varphi(y)$ by the induction hypothesis, so 
$\Gamma \Rcf \forall x.\varphi(x)$ by rule $(\forall x \Right)$.
That is, $\Gamma \in (\forall x.\varphi(x))^\lhd$.
\item[(ii)] If $n\geq 0$, we have $\Gamma \in \mc I (\varphi(t))
\subseteq \varphi(t)^\lhd$ for every $t\in \Tm$. Hence 
$\Gamma \Rcf\varphi(t)$, so $\Gamma\Rcf\forall x.\varphi(x)$ by 
rule $(\omega\Right)$. 
That is, $\Gamma \in (\forall x.\varphi(x))^\lhd$.
\end{itemize}

\noindent
(6) $\xi = \exists x.\varphi(x)$.
We first show $\exists x.\varphi(x) \in \mc I(\exists x.\varphi(x))$.
 Let 
$(\Gamma\Rightarrow\Pi) \in \left(\bigcup_{t\in \Tm} \mc I (\varphi(t))
\right)^\rhd$. The proof splits into two cases.
\begin{itemize}
\item If $n= -1$, choose a variable $y$ such that $y \not\in \Fv(\Gamma, \Pi)$.
Since $\varphi(y) \in \mc I(\varphi(y)) 
\subseteq \bigcup_{t\in \Tm} \mc I(\varphi(t))$ by the
induction hypothesis, we have
$\varphi(y), \Gamma\Rcf \Pi$.
Hence $\exists x.\varphi, \Gamma\Rcf\Pi$ by rule $(\exists x \Left)$.
That is, $\exists x.\varphi \in 
\left(\bigcup_{t\in \Tm} \mc I(\varphi(t))\right)^{\rhd\lhd} = 
\mc I(\exists x.\varphi(x))$.
\item If $n\geq 0$, we have
$\varphi(t) \in \mc I(\varphi(t)) 
\subseteq \bigcup_{t\in \Tm} \mc I(\varphi(t))$ for every $t\in \Tm$.
Hence 
$\varphi(t), \Gamma\Rcf \Pi$, so
$\exists x.\varphi, \Gamma\Rcf\Pi$ by rule $(\omega \Left)$.
That is, $\exists x.\varphi \in 
\mc I(\exists x.\varphi(x))$.
\end{itemize}

We next show $\mc I(\exists x.\varphi(x)) \subseteq 
(\exists x.\varphi(x))^{\lhd}$.
Let $\Gamma \in \bigcup_{t\in \Tm} \mc I(\varphi(t))$, 
say $\Gamma \in \mc I(\varphi(t))$. Then 
$\Gamma \Rcf \varphi(t)$ by the induction hypothesis. Hence 
$\Gamma \Rcf \exists x. \varphi$ by rule $(\exists x \Right)$.
That is, $\Gamma \in (\exists x.\varphi)^{\lhd}$.
This proves that 
$\mc I (\exists x.\varphi(x))\subseteq (\exists x.\varphi(x))^\lhd$.\\

\noindent
(7) $\xi = \bot$. Omitted.\\

\noindent
(8) $\xi = \forall X.\varphi$.
We have
$\forall X.\varphi \in |\forall X.\varphi|_{n+1} 
\subseteq \mc I(\forall X.\varphi)$, since 
$\forall X.\varphi\Rightarrow \varphi(Y)$ is cut-free provable in
$\LIO_n$. We also have
$|\forall X.\varphi|_{n+1}  \subseteq (\forall X.\varphi)^\lhd$
by rule $(\forall X\Right)$. Hence the claim follows.\\

\noindent
(9) $\xi = \exists X.\varphi$. Similarly to (8).
\end{proof}

\subsection{Algebraic cut elimination by $\Omega$-interpretation}
\label{ss-sound}

Given a sequent $\Gamma \Rightarrow \Pi$ in $\FMP_{n+1}$,
let 
$$
\begin{array}{llll}
\mc I(\Gamma) & := & \fwp (\Fm) & \mbox{if $\Gamma = \emptyset$},\\
 & := & \bigcap_{\varphi \in \Gamma} \mc I(\varphi) 
        & \mbox{otherwise},\\
\mc I(\Pi) & := & \gamma(\emptyset) & \mbox{if $\Pi = \emptyset$},\\
    & := & \mc I (\varphi) & \mbox{if $\Pi = \{\varphi\}$}.
\end{array}
$$
as in Section \ref{s-omegamacneille}.
We then have $\Gamma \in \mc I (\Gamma)$ and
$\mc I (\Pi) \subseteq \Pi^{\lhd}$ by Lemma \ref{l-main}.

Our next goal is to show that 
the interpretation $\mc I$ is sound for $\LIP_{n+1}$.
The proof consists of two steps.

Fix a set variable $X_0$ and $F \in 
\Gal(\m{CF})^\Tm$. Define interpretation 
$\mc I_F: \FMP_n \longrightarrow \Gal(\m{CF})$ 
similarly to $\mc I$, except that 
$\mc I_F (X_0(t)) := F(t)$. Notice that 
we have $\mc I_F(\varphi) = \mc I(\varphi)$ if
$X_0 \not\in \FV(\varphi)$.

\begin{lem}\label{l-sound1}
If $\LIO_n \vdash \Gamma \Rightarrow \Pi$, then 
$\mc I_F(\Gamma) \subseteq \mc I_F(\Pi)$.
\end{lem}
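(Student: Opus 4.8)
The statement is the standard soundness lemma for the syntactic Heyting frame $\m{CF}$, and the plan is to proceed by induction on the derivation of $\Gamma \Rightarrow \Pi$ in $\LIO_n$. Since $\m{W}^+$ is a complete Heyting algebra (Lemma~\ref{l-heyting}), most of the propositional cases are just the generic validity of the Heyting algebra operations, combined with the ``main lemma'' (Lemma~\ref{l-main}) to bridge between the algebraic interpretation and cut-free provability. Concretely, for each rule I would check that the interpretation of the conclusion is $\geq$ (in $\m{W}^+$, i.e.\ $\supseteq$) the meet of the interpretations of the premises, using in each left-rule case the fact $\xi \in \mc I_F(\xi)$ and in each right-rule case the fact $\mc I_F(\xi) \subseteq \xi^\lhd$. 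For $(\mathsf{id})$ one uses $\varphi \in \mc I_F(\varphi)$ together with weakening-closedness of Galois-closed sets; for $(\Cut)$ one uses $\mc I_F(\varphi) \subseteq \varphi^\lhd$ on the left premise and $\varphi \in \mc I_F(\varphi)$ on the right, i.e.\ $\mc I_F$ already ``knows'' that cut is admissible at the semantic level. The $(w)$-admissibility baked into $\m{CF}$ handles the implicit weakenings in the finite-set formulation of sequents.

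The genuinely new cases are the infinitary rules. For $(\omega\Right)$ with conclusion $\Gamma \Rightarrow \forall x.\varphi(x)$, the induction hypothesis gives $\mc I_F(\Gamma) \subseteq \mc I_F(\varphi(t))$ for every $t \in \Tm$, hence $\mc I_F(\Gamma) \subseteq \bigcap_{t} \mc I_F(\varphi(t)) = \mc I_F(\forall x.\varphi(x))$; dually for $(\omega\Left)$. For the $\Omega$-rules one must be careful about which level index appears: in $\LIO_n$ the rule $(\Omega_k\Left)$ with $k \leq n$ may occur, whereas the interpretation clause uses $|\forall X.\varphi|_{n+1}$. Here Lemma~\ref{l-down} is exactly what is needed: $\gamma(|\forall X.\varphi|_{n+1}) = \gamma(|\forall X.\varphi|_k)$ when $\forall X.\varphi \in \FMP_k$, so the premises $\{\mc I_F(\Delta) \subseteq \mc I_F(\Pi)\}_{\Delta \in |\forall X.\varphi|_k}$ (each $\mc I_F(\Delta) = \gamma(\Delta)$ since $\Delta$ is first order, by the main lemma on atoms and conjunction) give $\gamma(|\forall X.\varphi|_k) \subseteq \mc I_F(\Pi)$, hence $\mc I_F(\forall X.\varphi) = \gamma(|\forall X.\varphi|_{n+1}) = \gamma(|\forall X.\varphi|_k) \subseteq \mc I_F(\Pi)$, and then conjoin with $\mc I_F(\Gamma)$. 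The rules $(\Omega_k\Right)$, $(\tilde\Omega_k\Left)$, $(\tilde\Omega_k\Right)$ are handled symmetrically, using in the tilde-cases both the premise indexed by $\varphi(Y)$ (together with the main lemma) and the $\Omega$-indexed premises.

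The case that requires the most thought, and which I expect to be the main obstacle, is $(\forall X\Right)$ (and dually $(\exists X\Left)$): from a premise $\Gamma \Rightarrow \varphi(Y)$ with $Y \notin \FV(\Gamma)$ we must conclude $\mc I_F(\Gamma) \subseteq \mc I_F(\forall X.\varphi) = \gamma(|\forall X.\varphi|_{n+1})$. Unlike the first-order $\omega$-rule, this is \emph{not} an instance of a generic $\bigwedge$; instead one argues directly that $\Gamma$, as a first-order finite set, already lies in $|\forall X.\varphi|_{n+1}$: from $\LIO_n \vdash \Gamma \Rightarrow \varphi(Y)$ one must pass to cut-free provability $\LIO_n \vdash^{cf} \Gamma \Rightarrow \varphi(Y)$ --- which is legitimate by Lemma~\ref{l-cut} --- so $\Gamma \in |\forall X.\varphi|_{n+1}$ and hence $\Gamma \in \gamma(|\forall X.\varphi|_{n+1}) = \mc I_F(\forall X.\varphi)$; since $\mc I_F(\Gamma)$ is weakening-closed and contains $\Gamma$, but we actually need the stronger containment $\mc I_F(\Gamma) \subseteq \mc I_F(\forall X.\varphi)$, one instead observes that any $\Sigma \in \mc I_F(\Gamma)$ satisfies $\Sigma \Rcf \psi$ for each $\psi \in \Gamma$ by the main lemma, whence (using the induction hypothesis $\mc I_F(\Gamma) \subseteq \mc I_F(\varphi(Y)) \subseteq \varphi(Y)^\lhd$) $\Sigma \Rcf \varphi(Y)$, so $\Sigma \in |\forall X.\varphi|_{n+1} \subseteq \mc I_F(\forall X.\varphi)$ --- here one must double-check the eigenvariable condition $Y \notin \FV(\Sigma)$, which is arranged by the standard $\alpha$-renaming convention since $Y \notin \FV(\Gamma)$ and the interpretation of first-order formulas does not introduce $Y$. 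A final subtlety worth a remark is the role of $\mc I_F$ versus $\mc I$: the distinguished variable $X_0$ and the parameter $F$ play no active part in this lemma (they matter only in the two-step soundness argument for $\LIP_{n+1}$ that follows), so throughout one simply treats $\mc I_F$ as ``an interpretation of $\FMP_n$-formulas satisfying the clauses above'', and the induction goes through verbatim.
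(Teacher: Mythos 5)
Your overall strategy (structural induction on the derivation, bridging algebra and cut-free provability via Lemma \ref{l-main}, and using Lemma \ref{l-down} to reconcile the level indices) is the same as the paper's, and your treatment of $(\forall X\Right)$ is essentially the paper's argument. But there is a genuine gap in your $\Omega$-rule cases. You write that for each premise index $\Delta \in |\forall X.\varphi|_k$ one has $\mc I_F(\Delta) = \gamma(\Delta)$ ``since $\Delta$ is first order, by the main lemma.'' Two problems: first, for $k>0$ the sets $\Delta$ range over $\fwp(\FMP_{k-1})$, which is not first order; second, and more seriously, even first-order $\Delta$'s may contain the distinguished variable $X_0$ freely, and Lemma \ref{l-main} is a statement about $\mc I$, not $\mc I_F$. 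When $X_0 \in \FV(\Delta)$ the value $\mc I_F(\Delta)$ depends on the arbitrary parameter $F \in \Gal(\m{CF})^{\Tm}$, and there is no reason that $\Delta \in \mc I_F(\Delta)$ — which is exactly the fact you need to pass from the induction hypotheses $\mc I_F(\Delta) \subseteq \mc I_F(\Pi)$ to $\Delta \in \mc I_F(\Pi)$ and hence to $\gamma(|\forall X.\varphi|_k) \subseteq \mc I_F(\Pi)$.

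The missing idea is a renaming detour through the Substitution Lemma \ref{l-subst}. Given $\Delta = \Delta(X_0) \in |\forall X.\varphi|_k$, one first replaces $X_0$ by a fresh variable $Z$; by Lemma \ref{l-subst} one still has $\Delta(Z) \in |\forall X.\varphi|_k$, so $\Delta(Z), \Gamma \Rightarrow \Pi$ is itself among the infinitely many premises. The induction hypothesis applies to \emph{that} premise, and since $X_0 \notin \FV(\Delta(Z))$ one now legitimately has $\Delta(Z) \in \mc I(\Delta(Z)) = \mc I_F(\Delta(Z))$ by Lemma \ref{l-main}, yielding $\Delta(Z), \Sigma \Rcf \Lambda$ for each $(\Sigma\Rightarrow\Lambda) \in \mc I_F(\Pi)^{\rhd}$. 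A second application of Lemma \ref{l-subst} (substituting $X_0$ back for $Z$) recovers $\Delta(X_0), \Sigma \Rcf \Lambda$, hence $\Delta(X_0) \in \mc I_F(\Pi)^{\rhd\lhd} = \mc I_F(\Pi)$, and only then does Lemma \ref{l-down} close the case. The same renaming is needed in $(\Omega_k\Right)$ for the sequents $(\Delta\Rightarrow\Lambda) \in |\exists X.\varphi|_k$. Without this detour the induction does not go through for an arbitrary $F$, which is precisely the generality required when the lemma is later invoked with $F := \mc I(\tau)$ in the proof of Lemma \ref{l-sound2}.
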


\begin{proof}
When $n=-1$, this follows from 
the standard soundness theorem for the first order intuitionistic logic.
So assume that $n\geq 0$. The proof proceeds by structural 
induction on the derivation. Let us only consider the cases for 
second order quantifiers.

\noindent
(1) The derivation ends with
$$
\infer[(\forall X \Right)]{\Gamma\Rightarrow \forall X.\varphi(X)}{
\Gamma\Rightarrow \varphi (Y) & Y \not\in \FV(\Gamma)}.$$
Let $\Delta \in \mc I_F(\Gamma)$. 
We may assume that $Y \neq X_0$ and $Y \not \in \FV(\Delta)$, since 
otherwise we can rename $Y$ to a new set variable.
By the induction hypothesis and 
Lemma \ref{l-main}, we have
$\Delta \in \mc I_F(\varphi(Y)) =
\mc I(\varphi(Y)) \subseteq \varphi(Y)^{\lhd}$, 
that is, $\Delta \Rcf \varphi(Y)$.
Hence $\Delta \in |\forall X.\varphi|_{n+1} \subseteq 
\mc I_F (\forall X.\varphi)$.\\

\noindent
(2) The derivation ends with
$$
\infer[(\exists X\Left)]{
\exists X.\varphi(X), \Gamma\Rightarrow \Pi}{
\varphi(Y),\Gamma \Rightarrow\Pi & Y\not\in \FV(\Gamma,\Pi)}.
$$
We assume $\Gamma=\emptyset$ for simplicity.
Let $(\Delta\Rightarrow\Lambda) \in \mc I_F(\Pi)^\rhd$. 
We may also assume that $Y \neq X_0$ and 
$Y \not\in \FV(\Delta,\Lambda)$.
By the induction hypothesis and Lemma \ref{l-main}, 
$\varphi(Y) \in \mc I(\varphi(Y)) \subseteq \mc I(\Pi)$,
so
$\varphi(Y), \Delta \Rcf \Lambda$. That is,
$\mc I_F(\Pi)^\rhd \subseteq |\exists X.\varphi|_{n+1}$.
Hence we conclude that $\mc I_F(\exists X.\varphi(X)) \subseteq
\mc I_F(\Pi)$.\\

\noindent
(3) The derivation ends with
$$
\infer[(\Omega_k \Left)]{\forall X.\varphi, \Gamma\Rightarrow\Pi}{
\{\;\Delta, \Gamma \Rightarrow \Pi\;\}_{\Delta\in 
|\forall X.\varphi|_k}}.
$$
We assume $\Gamma = \emptyset$ for simplicity.
We are going to use Lemma \ref{l-down}. 
So let $\Delta = \Delta(X_0) \in |\forall X.\varphi|_k$ and 
$(\Sigma\Rightarrow\Lambda) \in \mc I_F(\Pi)^\rhd$. 
The former means that $\LIO_{k-1} \vdash \Delta \Rightarrow \varphi(Y)$
for some $Y\not\in \FV(\Delta)$.
We have $\LIO_{k-1} \vdash \Delta(Z) \Rightarrow \varphi(Y)$ with $Z$ fresh
by Lemma \ref{l-subst}, so 
$\Delta(Z) \in |\forall X.\varphi|_k$.
Hence $\Delta(Z)\Rightarrow \Pi$ is among the 
premises. By the induction hypothesis,
$\mc I_F (\Delta(Z)) \subseteq \mc I_F(\Pi)$.
Since $\Delta(Z) \in \mc I(\Delta(Z)) = \mc I_F(\Delta(Z))$
by Lemma \ref{l-main}, we have
$\Delta(Z), \Sigma \Rcf \Lambda$. From this, 
$\Delta(X_0), \Sigma \Rcf \Lambda$ follows by Lemma \ref{l-subst}.
Since this holds for any 
$(\Sigma\Rightarrow\Lambda) \in \mc I_F(\Pi)^\rhd$, we have
$\Delta \in \mc I_F(\Pi)^{\rhd\lhd} = \mc I_F(\Pi)$. Finally we conclude
$$
\mc I_F(\forall X.\varphi) = \mc I(\forall X.\varphi) 
= \gamma(|\forall X.\varphi|_k) \subseteq \mc I_F(\Pi)
$$
by Lemma \ref{l-down}.\\

\noindent
(4) The derivation ends with
$$
\infer[(\Omega_k \Right)]{\Gamma\Rightarrow\exists X.\varphi}{
\{\; \Gamma,\Delta \Rightarrow \Lambda \;\}_{(\Delta\Rightarrow\Lambda)\in 
|\exists X.\varphi|_k}}.
$$
Again we are going to use Lemma \ref{l-down}.
So let $(\Delta\Rightarrow\Lambda) \in 
|\exists X.\varphi|_k$ and $\Sigma \in \mc I(\Gamma)$.
We may assume that $X_0 \not\in \FV(\Delta, \Lambda)$, since 
otherwise it can be renamed by a fresh variable as in (3) above.
By the induction hypothesis, 
$\mc I_F(\Gamma, \Delta) \subseteq \mc I_F(\Lambda)$.
Since $\Delta \in \mc I (\Delta) = \mc I_F(\Delta)$ and 
$\mc I_F(\Lambda) \subseteq \Lambda^\lhd$
by Lemma \ref{l-main}, we obtain
$\Sigma, \Delta \Rcf \Lambda$.
This shows that
$$\mc I_F(\Gamma) \subseteq 
|\exists X.\varphi|_{k}^\lhd =
\mc I(\exists X.\varphi) =
\mc I_F(\exists X.\varphi)$$
by Lemma \ref{l-down}.\\

\noindent
(5) The derivation ends with $(\tilde{\Omega}_k \Left)$ or
$(\tilde{\Omega}_k \Right)$. We do not have to consider these cases,
since these are derivable from other rules.
\end{proof}

\begin{lem}\label{l-sound2}
If $\LIP_{n+1} \vdash \Gamma\Rightarrow\Pi$, then 
$\mc I (\Gamma^\circ) \subseteq \mc I(\Pi^\circ)$ holds
for every term substitution $\circ$.
\end{lem}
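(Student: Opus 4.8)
The plan is a structural induction on the $\LIP_{n+1}$-derivation of $\Gamma\Rightarrow\Pi$, carried out uniformly in the term substitution $\circ$; this uniformity is exactly what powers the first-order quantifier steps. Since $\m{CF}^+$ is a complete Heyting algebra by Lemma~\ref{l-heyting}, the cases of $(\mathsf{id})$, $(\Cut)$, the propositional rules, $(\bot \Left)$ and $(\bot \Right)$ are routine instances of soundness of Heyting-valued interpretations; in particular $(\Cut)$ is sound because $\mc I(\varphi^\circ)\cap\mc I(\Gamma^\circ)\subseteq\mc I(\Gamma^\circ)$ holds trivially. For $(\forall x\Right)$ one applies the induction hypothesis to $\circ$ updated by $y\mapsto t$, getting $\mc I(\Gamma^\circ)\subseteq\mc I(\varphi^\circ(t))$ for every $t\in\Tm$ and hence $\mc I(\Gamma^\circ)\subseteq\bigcap_{t\in\Tm}\mc I(\varphi^\circ(t))=\mc I((\forall x.\varphi)^\circ)$; $(\exists x\Left)$ is dual via the frame law $a\wedge\bigvee_{i}b_i=\bigvee_{i}(a\wedge b_i)$ valid in any complete Heyting algebra; and $(\forall x\Left)$, $(\exists x\Right)$ are immediate.

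For the second-order quantifier rules there are two kinds. The eigenvariable rules $(\forall X\Right)$ and $(\exists X\Left)$ are the easy ones: $\forall X.\varphi,\exists X.\varphi\in\FMP_{n+1}$ forces $\varphi\in\FMP_n$, so the minor formula $\varphi(Y)$ lies in $\FMP_n$ and Lemma~\ref{l-main} applies to it. For $(\forall X\Right)$, after renaming $Y$ fresh for $\circ$, take any $\Delta\in\mc I(\Gamma^\circ)$; the induction hypothesis gives $\Delta\in\mc I(\varphi^\circ(Y))$, hence $\Delta\Rcf\varphi^\circ(Y)$ by Lemma~\ref{l-main}, and since $Y\notin\FV(\Delta)$ this says $\Delta\in|\forall X.\varphi^\circ|_{n+1}\subseteq\mc I((\forall X.\varphi)^\circ)$. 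The rule $(\exists X\Left)$ is dual: assuming $\Gamma=\emptyset$ for simplicity, one shows $\mc I(\Pi^\circ)^\rhd\subseteq|\exists X.\varphi^\circ|_{n+1}$ by pushing $\varphi^\circ(Y)\in\mc I(\varphi^\circ(Y))\subseteq\mc I(\Pi^\circ)$ against an arbitrary $(\Sigma\Rightarrow\Lambda)\in\mc I(\Pi^\circ)^\rhd$.

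The main obstacle is the pair of instantiation rules $(\forall X\Left)$ and $(\exists X\Right)$: in $\LIP_{n+1}$ the abstract $\tau$ may belong to $\ABS_{n+1}$, so the minor formula $\varphi(\tau)\in\FMP_{n+1}$ sits one level above the system $\LIO_n$ underlying the frame, and neither Lemma~\ref{l-main} nor the syntactic substitution Lemma~\ref{l-subst} (which only covers $\ABS_n$) applies to it directly. This is exactly what the auxiliary interpretation $\mc I_F$ and Lemma~\ref{l-sound1} are designed for. For $(\forall X\Left)$ it suffices, since $\mc I(\varphi^\circ(\tau^\circ))$ is closed and the induction hypothesis gives $\mc I(\varphi^\circ(\tau^\circ))\cap\mc I(\Gamma^\circ)\subseteq\mc I(\Pi^\circ)$, to prove $|\forall X.\varphi^\circ|_{n+1}\subseteq\mc I(\varphi^\circ(\tau^\circ))$. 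Given $\Delta\in|\forall X.\varphi^\circ|_{n+1}$, i.e.\ $\LIO_n\vdash\Delta\Rightarrow\varphi^\circ(Y)$, I would invoke Lemma~\ref{l-sound1} with $X_0:=Y$ and $F:=\bigl(t\mapsto\mc I(\tau^\circ(t))\bigr)\in\Gal(\m{CF})^\Tm$, obtaining $\mc I_F(\Delta)\subseteq\mc I_F(\varphi^\circ(Y))$; here $\Delta\in\mc I_F(\Delta)=\mc I(\Delta)$ by Lemma~\ref{l-main} together with $Y\notin\FV(\Delta)$, while $\mc I_F(\varphi^\circ(Y))=\mc I(\varphi^\circ(\tau^\circ))$ by a routine ``semantic substitution'' induction on $\varphi$ (its second-order subformulas are untouched, since $\FV(\varphi)\subseteq\{X\}$ forbids a free $X$ inside them). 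Hence $\Delta\in\mc I(\varphi^\circ(\tau^\circ))$. The rule $(\exists X\Right)$ is dual: via the same semantic substitution $\mc I(\varphi^\circ(\tau^\circ))=\mc I_F(\varphi^\circ(Y))$, and then, for each $(\Sigma\Rightarrow\Lambda)\in|\exists X.\varphi^\circ|_{n+1}$, applying Lemma~\ref{l-sound1} to $\LIO_n\vdash\varphi^\circ(Y),\Sigma\Rightarrow\Lambda$ (same $X_0$, $F$, with $Y$ fresh for $\Sigma,\Lambda$) yields $\mc I(\varphi^\circ(\tau^\circ))\cap\mc I(\Sigma)\subseteq\mc I(\Lambda)$, whence $\mc I(\varphi^\circ(\tau^\circ))\subseteq|\exists X.\varphi^\circ|_{n+1}^\lhd=\mc I((\exists X.\varphi)^\circ)$ by adjoining $\Sigma$ to an arbitrary element of $\mc I(\varphi^\circ(\tau^\circ))$ and using Lemma~\ref{l-main} once more; combined with the induction hypothesis this gives $\mc I(\Gamma^\circ)\subseteq\mc I((\exists X.\varphi)^\circ)$. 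The one genuinely fiddly point throughout is the eigenvariable bookkeeping --- renaming $Y$ so it clashes with neither $\circ$ nor the ambient contexts --- which I would dispatch with the variable convention exactly as in Lemma~\ref{l-sound1}.
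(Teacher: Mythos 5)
Your proposal is correct and follows essentially the same route as the paper: a structural induction reusing the eigenvariable cases from Lemma~\ref{l-sound1}, with the instantiation rules $(\forall X\Left)$ and $(\exists X\Right)$ handled by setting $F(t):=\mc I(\tau(t))$, invoking Lemma~\ref{l-sound1} on the cut-free derivations indexing $|\forall X.\varphi|_{n+1}$ (resp.\ $|\exists X.\varphi|_{n+1}$), and using the semantic substitution identity $\mc I_F(\varphi(X_0))=\mc I(\varphi(\tau))$ together with Lemma~\ref{l-main}. The only cosmetic difference is that you make the closure step $\gamma(|\forall X.\varphi|_{n+1})\subseteq \mc I(\varphi(\tau))$ and the weakening-closure used to ``adjoin $\Sigma$'' explicit, which the paper leaves implicit.
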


\begin{proof}
The proof is very similar to that of the previous lemma.
The main difference is that we have to consider rules
$(\forall X\Left)$ and $(\exists X\Right)$ instead of 
$(\Omega_k\Left)$ and $(\Omega_k\Right)$. So let us deal with
only these two rules. For simplicity, we assume that $\circ$ is 
an identity substitution.\\

\noindent
(1) The derivation ends with rule $(\forall X\Left)$.
It is sufficient to show that $\mc I(\forall X.\varphi(X)) \subseteq
\mc I(\varphi(\tau))$ for any $\forall X.\varphi(X) \in \FMP_{n+1}$ and 
$\tau \in \ABS_{n+1}$. 
Let $\Delta \in |\forall X.\varphi|_{n+1}$, that is,
$\Delta \Rcf \varphi(X_0)$ with $X_0$ fresh.
Define $F \in \Gal(\m{CF})^{\Tm}$ by 
$F(t) := \mc I(\tau(t))$.
By Lemma \ref{l-sound1}, 
we have $\mc I_F(\Delta) \subseteq \mc I_F(\varphi(X_0))$.
Notice that:
\begin{itemize}
\item $\mc I_F(\Delta) = \mc I(\Delta)$ because $X_0\not\in \FV(\Delta)$.
\item $\Delta \in \mc I(\Delta)$ by Lemma \ref{l-main}, noting that 
$\Delta \subseteq \FMP_n$.
\item $\mc I_F(\varphi(X_0)) = \mc I(\varphi(\tau))$ by induction 
on the structure of $\varphi(X_0)$.
\end{itemize}
Hence we have $\Delta \in \mc I(\varphi(\tau))$.
This shows that
$\mc I(\forall X.\varphi(X)) = \gamma(|\forall X.\varphi|_{n+1})
\subseteq\mc I(\varphi(\tau))$.\\

\noindent
(2) The derivation ends with rule $(\exists X\Right)$.
It is sufficient to show that $\mc I(\varphi(\tau)) \subseteq
\mc I(\exists X.\varphi(X))$ 
for any $\exists X.\varphi(X)\in \FMP_{n+1}$ and $\tau \in \ABS_{n+1}$. 
Let $(\Delta\Rightarrow\Lambda) \in |\exists X.\varphi|_{n+1}$, that is,
$\varphi(X_0), \Delta \Rcf \Lambda$ with $X_0$ fresh.
Define $F \in \Gal(\m{CF})^{\Tm}$ by 
$F(t) := \mc I(\tau(t))$.
By Lemma \ref{l-sound1}, $\mc I_F(\varphi(X_0),
\Delta) \subseteq \mc I_F(\Lambda)$.
As before, we have:
\begin{itemize}
\item $\mc I_F(\Delta) = \mc I(\Delta)$ and 
$\mc I_F(\Lambda) = \mc I(\Lambda)$. 
\item $\Delta \in \mc I(\Delta)$ and $\mc I(\Lambda) \subseteq \Lambda^{\lhd}$.
\item $\mc I_F(\varphi(Y)) = \mc I(\varphi(\tau))$.
\end{itemize}
Altoghether, this means that $\Sigma, \Delta \Rcf \Lambda$
holds for any $\Sigma \in \mc I(\varphi(\tau))$
and any $(\Delta\Rightarrow\Lambda) \in |\exists X.\varphi|_{n+1}$.
Hence we conclude that 
$\mc I(\varphi(\tau)) \subseteq 
|\exists X.\varphi|_{n+1}^\lhd =
\mc I(\exists X.\varphi(X))$.
\end{proof}

By combining Lemmas \ref{l-main} and \ref{l-sound2}, we obtain
an algebraic proof of (partial) cut elimination.

\begin{thm}\label{t-reduction}
$\LIP_{n+1} \vdash \Gamma\Rightarrow\Pi$ implies
$\LIO_{n} \vdash^{cf} \Gamma\Rightarrow\Pi$ provided that 
$\Gamma\cup\Pi \subseteq \FMP_n$.
\end{thm}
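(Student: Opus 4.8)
The plan is to combine the ``main lemma'' (Lemma \ref{l-main}) with the soundness of the $\Omega$-interpretation (Lemma \ref{l-sound2}), exactly in the style of Okada's algebraic cut elimination. First I would observe that the syntactic Heyting frame $\m{CF}$ is built from cut-free provability in $\LIO_n$, and that by Lemma \ref{l-heyting} (or Theorem \ref{t-regular}) the associated algebra $\m{CF}^+ = \Gal(\m{CF})$ is a complete Heyting algebra, so $\mc{CF}$ is a genuine full Heyting-valued structure. The $\Omega$-interpretation $\mc I$ is well defined on all of $\FMP_{n+1}$ (using the clauses $\mc I(\forall X.\varphi) := \gamma(|\forall X.\varphi|_{n+1})$ and $\mc I(\exists X.\varphi) := |\exists X.\varphi|_{n+1}^\lhd$, which lie in $\Gal(\m{CF})$ by construction). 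The key inputs are then: (i) for every $\xi \in \FMP_n$, $\xi \in \mc I(\xi) \subseteq \xi^\lhd$ (Lemma \ref{l-main}); and (ii) $\LIP_{n+1} \vdash \Gamma \Rightarrow \Pi$ implies $\mc I(\Gamma^\circ) \subseteq \mc I(\Pi^\circ)$ for every term substitution $\circ$ (Lemma \ref{l-sound2}).

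Given these, the argument is short. Suppose $\LIP_{n+1} \vdash \Gamma \Rightarrow \Pi$ with $\Gamma \cup \Pi \subseteq \FMP_n$. Taking $\circ$ to be the identity substitution, Lemma \ref{l-sound2} gives $\mc I(\Gamma) \subseteq \mc I(\Pi)$. Now by Lemma \ref{l-main} applied to each formula in $\Gamma$ we have $\Gamma \in \mc I(\Gamma)$ (here I use that $\Gamma \in \bigcap_{\varphi \in \Gamma}\mc I(\varphi)$ since each $\varphi \in \Gamma$ lies in $\FMP_n$, together with closure under weakening of elements of $\Gal(\m{CF})$; when $\Gamma = \emptyset$ we have $\emptyset \in \fwp(\Fm) = \mc I(\emptyset)$). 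Hence $\Gamma \in \mc I(\Pi)$. By Lemma \ref{l-main} again, $\mc I(\Pi) \subseteq \Pi^\lhd$ (when $\Pi = \{\varphi\}$ with $\varphi \in \FMP_n$; when $\Pi = \emptyset$, $\mc I(\Pi) = \gamma(\emptyset) \subseteq \emptyset^\lhd$ trivially). Therefore $\Gamma \in \Pi^\lhd$, which by the defining property of the frame $\m{CF}$ — namely $\Gamma \in \Pi^\lhd \iff \Gamma \Rightarrow^{cf} \Pi \iff \LIO_n \vdash^{cf} \Gamma \Rightarrow \Pi$ — gives exactly $\LIO_n \vdash^{cf} \Gamma \Rightarrow \Pi$, as required.

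I do not expect a serious obstacle in this final assembly step: the real content has already been isolated into Lemmas \ref{l-main} and \ref{l-sound2}, and all that remains is to chase the membership $\Gamma \in \mc I(\Gamma) \subseteq \mc I(\Pi) \subseteq \Pi^\lhd$ and unwind the definition of the relation $\Rightarrow^{cf}$. The one point requiring a little care is the bookkeeping of side conditions — that the hypothesis $\Gamma \cup \Pi \subseteq \FMP_n$ is exactly what licenses the use of Lemma \ref{l-main} (whose statement is restricted to $\FMP_n$-formulas) on every formula occurring in the sequent, and the separate handling of the empty cases for $\Gamma$ and $\Pi$. A final remark worth including is the consequence for the Takeuti correspondence: for $n \geq 0$ one composes this with the collapsing lemma (Lemma \ref{l-collapse}) to descend from $\LIO_n$ to $\LIO_{n-1}$, and for $n = -1$ the conclusion already reads $\LIP_0 \vdash \Gamma \Rightarrow \Pi$ implies $\LI \vdash^{cf} \Gamma \Rightarrow \Pi$; in either case the entire argument avoids any provability predicate at level $n+1$, which is precisely what will make it locally formalizable in $\ID_{n+1}^i$ in Section \ref{s-formalize}.
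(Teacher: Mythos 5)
Your proposal is correct and is essentially identical to the paper's own proof: both derive $\mc I(\Gamma)\subseteq\mc I(\Pi)$ from Lemma \ref{l-sound2}, sandwich it with $\Gamma\in\mc I(\Gamma)$ and $\mc I(\Pi)\subseteq\Pi^\lhd$ from Lemma \ref{l-main}, and unwind $\Gamma\in\Pi^\lhd$ to cut-free provability in $\LIO_n$. Your extra bookkeeping of the empty cases and the role of the hypothesis $\Gamma\cup\Pi\subseteq\FMP_n$ is accurate and only makes explicit what the paper leaves implicit.
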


\begin{proof}
We have $\mc I(\Gamma) \subseteq \mc I(\Pi)$
by Lemma \ref{l-sound2}, while
$\Gamma \in \mc I(\Gamma)$ and $\mc I(\Pi) \subseteq \Pi^\lhd$
by Lemma \ref{l-main}.
Hence $\Gamma \Rcf \Pi$, that is, $\LIO_n \vdash \Gamma\Rightarrow \Pi$.
\end{proof}

Together with Lemma \ref{l-collapse}, it leads to:
\begin{thm}[Algebraic cut elimination for $\LIP_{n+1}$]\label{t-last}
For every $n\geq -1$,
$\LIP_{n+1} \vdash \Gamma\Rightarrow\Pi$ implies
$\LI \vdash^{cf} \Gamma\Rightarrow\Pi$ provided that 
$\Gamma\cup\Pi \subseteq \Fm$.
\end{thm}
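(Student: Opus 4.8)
The plan is to obtain Theorem \ref{t-last} by combining the algebraic reduction of Theorem \ref{t-reduction} with an iterated application of the Collapsing Lemma (Lemma \ref{l-collapse}), descending level by level from $\LIO_n$ down to $\LIO_{-1} = \LI$. First I would record the trivial but essential inclusion $\Fm = \FMP_{-1} \subseteq \FMP_k$ for every $k \geq -1$, so that the hypothesis $\Gamma \cup \Pi \subseteq \Fm$ is in particular a hypothesis $\Gamma \cup \Pi \subseteq \FMP_k$ for all relevant $k$. In particular $\Gamma \cup \Pi \subseteq \FMP_n$, so Theorem \ref{t-reduction}, applied to the assumption $\LIP_{n+1} \vdash \Gamma \Rightarrow \Pi$, yields $\LIO_n \vdash^{cf} \Gamma \Rightarrow \Pi$.

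Next I would run Lemma \ref{l-collapse} downwards, by a finite induction on $j$ with $0 \leq j \leq n+1$, proving $\LIO_{n-j} \vdash^{cf} \Gamma \Rightarrow \Pi$. The base case $j = 0$ is the conclusion just obtained. For the induction step, suppose $\LIO_{n-j} \vdash^{cf} \Gamma \Rightarrow \Pi$ with $n - j \geq 0$; since $\Gamma \cup \Pi \subseteq \Fm \subseteq \FMP_{n-j-1}$, the side condition of Lemma \ref{l-collapse} is satisfied, and the lemma gives $\LIO_{n-j-1} \vdash^{cf} \Gamma \Rightarrow \Pi$. Taking $j = n+1$ then produces $\LIO_{-1} \vdash^{cf} \Gamma \Rightarrow \Pi$, i.e.\ $\LI \vdash^{cf} \Gamma \Rightarrow \Pi$, which is the desired conclusion. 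When $n = -1$ the descent is empty and Theorem \ref{t-reduction} already delivers $\LIO_{-1} \vdash^{cf} \Gamma \Rightarrow \Pi$ directly, so the case split is harmless.

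The point I would emphasise is that there is essentially no new difficulty at this final stage: all the genuine content lies in Theorem \ref{t-reduction} (the algebraic soundness argument through the $\Omega$-interpretation, Lemmas \ref{l-main}--\ref{l-sound2}) and in the syntactic Collapsing Lemma. The only thing to verify when chaining them — and it is immediate — is that the side condition $\Gamma \cup \Pi \subseteq \FMP_{k-1}$ of the collapsing step remains in force throughout the descent; this holds precisely because the sequent $\Gamma \Rightarrow \Pi$ is, and stays, purely first order. If one wishes to keep track of the metatheory, the same argument shows that partial cut elimination for $\LIP_{n+1}$ is obtained without ever invoking a provability predicate at level $n+1$, so the whole proof formalises locally in $\ID_{n+1}^i$, as announced.
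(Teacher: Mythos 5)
Your proposal is correct and is essentially the paper's own argument: Theorem \ref{t-reduction} gives $\LIO_n \vdash^{cf} \Gamma\Rightarrow\Pi$, and the conclusion follows by iterating the Collapsing Lemma \ref{l-collapse} down to $\LIO_{-1}=\LI$, the side condition holding at every level because $\Fm = \FMP_{-1} \subseteq \FMP_k$ for all $k$. The paper leaves the descent implicit ("together with Lemma \ref{l-collapse}"); you have merely spelled it out.
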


By combining this with Theorem \ref{t-partial2}, we may obtain 
complete cut elimination too.

\begin{cor}
For every $n\geq -1$ and every sequent $\Gamma\Rightarrow\Pi$ of 
$\LIP_{n+1}$,
$\LIP_{n+1} \vdash \Gamma\Rightarrow\Pi$ implies
$\LIP_{n+1} \vdash^{cf} \Gamma\Rightarrow\Pi$.
\end{cor}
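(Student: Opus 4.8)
The plan is to derive complete cut elimination for $\LIP_{n+1}$ by feeding the \emph{partial} cut elimination of Theorem~\ref{t-last} into the P\"{a}ppinghaus-style equivalence of Theorem~\ref{t-partial2}. Concretely, the two ingredients are: (i)~Theorem~\ref{t-last}, which says that any $\LIP_{n+1}$-provable sequent $\Gamma\Rightarrow\Pi$ with $\Gamma\cup\Pi\subseteq\Fm$ is cut-free provable in $\LI$; and (ii)~Theorem~\ref{t-partial2}, which for every $m<\omega$ gives $\IS\vdash\mathsf{CE}(\LIP_m)\leftrightarrow\mathsf{CE}_{\Fm}(\LIP_m)$. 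Since $n\geq -1$ we may instantiate (ii) with $m=n+1\geq 0$, so the Corollary reduces to establishing $\mathsf{CE}_{\Fm}(\LIP_{n+1})$.

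First I would observe that Theorem~\ref{t-last} already \emph{is} the statement $\mathsf{CE}_{\Fm}(\LIP_{n+1})$. The one point worth spelling out is that Theorem~\ref{t-last} phrases its conclusion with $\LI\vdash^{cf}$, whereas $\mathsf{CE}_{\Fm}$ asks for a cut-free derivation in $\LIP_{n+1}$; but on first-order sequents these coincide. Indeed, a cut-free $\LIP_{n+1}$-derivation of a sequent $\Gamma\Rightarrow\Pi$ with $\Gamma\cup\Pi\subseteq\Fm$ has the subformula property, so every formula occurring in it is a first-order subformula of the end sequent and the derivation is already an $\LI$-derivation; the reverse inclusion is trivial. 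Hence Theorem~\ref{t-last} delivers exactly $\mathsf{CE}_{\Fm}(\LIP_{n+1})$.

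It then remains to combine the two. Reasoning in a sufficiently strong metatheory, $\mathsf{CE}_{\Fm}(\LIP_{n+1})$ holds by Theorem~\ref{t-last}, and Theorem~\ref{t-partial2}, which establishes the equivalence $\mathsf{CE}(\LIP_{n+1})\leftrightarrow\mathsf{CE}_{\Fm}(\LIP_{n+1})$ already over $\IS$, turns this into $\mathsf{CE}(\LIP_{n+1})$. Unfolding this $\Pi^0_2$ statement is precisely the content of the Corollary: for every sequent $\Gamma\Rightarrow\Pi$ of $\LIP_{n+1}$, $\LIP_{n+1}\vdash\Gamma\Rightarrow\Pi$ implies $\LIP_{n+1}\vdash^{cf}\Gamma\Rightarrow\Pi$. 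Moreover the whole deduction can be carried out locally in $\ID_{n+1}^i$: the algebraic argument of Section~\ref{s-algebraic} underlying Theorem~\ref{t-last} formalizes there by design, and P\"{a}ppinghaus' argument behind Theorem~\ref{t-partial2} is finitary and restricts without change to the parameter-free fragment.

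I do not expect a genuine obstacle: the substance has been absorbed into Theorems~\ref{t-last} and~\ref{t-partial2}, and the Corollary is a bookkeeping consequence. The only care points are the index shift $m=n+1$ (legitimate because $n\geq -1$) and the identification, via the subformula property, of cut-free $\LI$-provability with cut-free $\LIP_{n+1}$-provability on first-order sequents.
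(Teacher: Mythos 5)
Your proof is correct and follows the paper's own route exactly: the paper derives this corollary in one line by combining Theorem \ref{t-last} (which, as you note, is precisely $\mathsf{CE}_{\Fm}(\LIP_{n+1})$ once cut-free $\LI$-provability is identified with cut-free $\LIP_{n+1}$-provability on first-order end-sequents) with the P\"{a}ppinghaus equivalence of Theorem \ref{t-partial2} instantiated at $m=n+1$. Your extra bookkeeping on the index shift and the subformula-property identification is sound and merely makes explicit what the paper leaves implicit.
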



\subsection{A small modification}\label{ss-modification}

Theorem \ref{t-reduction} gives a self-contained proof of 
the reduction from $\LIP_{n+1}$ to \emph{cut-free} $\LIO_n$.
On the other hand, it is also possible to prove the same thing 
relying on the fact that $\LIO_n$ admits cut elimination (Lemma \ref{l-cut}).
This approach provides a simpler interpretation for 
the formulas in $\FMP_n$ that will play an important
technical role in the next section.
It will also help clarify the essence of our algebraic argument so far.

\begin{lem}\label{l-improvement}
Assume that $\LIO_n$ admits cut elimination. Then 
$$\mc I(\varphi) = \gamma(\varphi)=  \varphi^{\lhd}$$
holds for every $\varphi \in \FMP_n$. As a consequence,
$$
\gamma(\varphi\star\psi) = \gamma(\varphi)\star\gamma(\psi),\qquad
\gamma(\forall x.\varphi(x)) = \bigcap_{t\in \Tm} \gamma(\varphi(t)),\qquad
\gamma(\exists x.\varphi(x)) = 
\gamma\left(\bigcup_{t\in \Tm} \gamma(\varphi(t))\right)
$$
hold for every $\star \in \{\wedge, \vee, \rightarrow\}$ and 
$\varphi, \psi \in \FMP_n$.
\end{lem}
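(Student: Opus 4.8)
The plan is to obtain every claimed identity from the main lemma (Lemma~\ref{l-main}) together with the standing hypothesis that $\LIO_n$ admits cut elimination, the only non-trivial step being to prove $\gamma(\varphi)=\varphi^{\lhd}$ for each $\varphi\in\FMP_n$. The inclusion $\gamma(\varphi)\subseteq\varphi^{\lhd}$ was already noted above (it rests only on the identity axiom, which places the sequent $(\emptyset\Rightarrow\varphi)$ into $\varphi^{\rhd}$). For the converse, take $\Gamma\in\varphi^{\lhd}$, i.e.\ $\LIO_n\vdash^{cf}\Gamma\Rightarrow\varphi$; since $\gamma(\varphi)=\varphi^{\rhd\lhd}$, to conclude $\Gamma\in\gamma(\varphi)$ it suffices to check $\Gamma\Rcf(\Sigma\Rightarrow\Pi)$ for an arbitrary $(\Sigma\Rightarrow\Pi)\in\varphi^{\rhd}$, i.e.\ whenever $\LIO_n\vdash^{cf}\varphi,\Sigma\Rightarrow\Pi$. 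One application of $(\Cut)$ then yields $\LIO_n\vdash\Gamma,\Sigma\Rightarrow\Pi$, and cut elimination for $\LIO_n$ turns this into $\LIO_n\vdash^{cf}\Gamma,\Sigma\Rightarrow\Pi$, that is, $\Gamma\Rcf(\Sigma\Rightarrow\Pi)$, as required. This single cut is the only place where the extra hypothesis of the lemma is used; everything else is Galois-theoretic bookkeeping.

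Once $\gamma(\varphi)=\varphi^{\lhd}$ is in hand, $\mc I(\varphi)=\gamma(\varphi)$ follows by squeezing: Lemma~\ref{l-main} gives $\varphi\in\mc I(\varphi)\subseteq\varphi^{\lhd}$, and since $\mc I(\varphi)\in\Gal(\m{CF})$ is closed and contains $\varphi$, it also contains $\gamma(\varphi)$; hence $\gamma(\varphi)\subseteq\mc I(\varphi)\subseteq\varphi^{\lhd}=\gamma(\varphi)$ and all three agree. The displayed ``consequences'' then drop out by unfolding the inductive clauses defining $\mc I$ and substituting this equality, using that $\FMP_n$ is closed under $\wedge,\vee,\rightarrow$, under the first order quantifiers, and under term substitution, so that $\varphi\star\psi$, $\forall x.\varphi(x)$, $\exists x.\varphi(x)$ and every $\varphi(t)$ again lie in $\FMP_n$: for instance $\gamma(\varphi\star\psi)=\mc I(\varphi\star\psi)=\mc I(\varphi)\star\mc I(\psi)=\gamma(\varphi)\star\gamma(\psi)$ for $\star\in\{\wedge,\vee,\rightarrow\}$, $\gamma(\forall x.\varphi(x))=\mc I(\forall x.\varphi(x))=\bigcap_{t\in\Tm}\mc I(\varphi(t))=\bigcap_{t\in\Tm}\gamma(\varphi(t))$, and $\gamma(\exists x.\varphi(x))=\mc I(\exists x.\varphi(x))=\bigvee_{t\in\Tm}\mc I(\varphi(t))=\gamma\left(\bigcup_{t\in\Tm}\gamma(\varphi(t))\right)$, recalling that $\bigvee$ in $\m{CF}^+$ is $\gamma$ applied to the union.

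I do not anticipate any genuine obstacle; the argument is short once Lemmas~\ref{l-cut} and~\ref{l-main} are available. The one point demanding care is to keep apart the two roles a formula $\varphi$ plays in the polarity $\m{CF}$: as the singleton $\{\varphi\}$ on the $W$-side $\fwp(\Fm)$, where $\gamma(\varphi)$ lives, and as the sequent $(\emptyset\Rightarrow\varphi)$ on the $W'$-side $\SEQ$, where $\varphi^{\rhd}$ and $\varphi^{\lhd}$ are formed; the identity axiom is precisely what bridges the two, since it is what makes the instance $(\Sigma\Rightarrow\Pi):=(\emptyset\Rightarrow\varphi)$ legitimate and thereby underlies $\gamma(\varphi)\subseteq\varphi^{\lhd}$.
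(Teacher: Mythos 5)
Your proposal is correct and follows essentially the same route as the paper: the inclusion $\varphi^{\lhd}\subseteq\gamma(\varphi)$ via a single cut followed by an appeal to cut elimination for $\LIO_n$, the squeeze $\gamma(\varphi)\subseteq\mc I(\varphi)\subseteq\varphi^{\lhd}$ from Lemma~\ref{l-main}, and the consequences by unfolding the inductive clauses of $\mc I$. The only cosmetic difference is that you obtain $\gamma(\varphi)\subseteq\varphi^{\lhd}$ directly from the identity-axiom observation $(*)$ rather than through Lemma~\ref{l-main}, which changes nothing.
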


\begin{proof}
By Lemma \ref{l-main}, we have
$\gamma(\varphi) \subseteq \mc I(\varphi) \subseteq \varphi^{\lhd}$.
Hence it suffices to show that $\varphi^{\lhd} \subseteq \gamma(\varphi)$.
Let $\Sigma \in \varphi^\lhd$ and 
$(\Delta\Rightarrow\Lambda) \in \varphi^{\rhd}$. Then 
we have $\Sigma \Rcf \varphi$ and $\varphi, \Delta\Rcf \Lambda$,
so $\Sigma, \Delta \Rcf \Lambda$ by cut elimination.
Since this holds for any 
$(\Delta\Rightarrow\Lambda) \in \varphi^{\rhd}$, we conclude 
$\Sigma \in \gamma(\varphi)$.
\end{proof}

Lemma \ref{l-improvement} gives us a connection 
with the Lindenbaum algebra.
Let us assume $n=-1$ for simplicity. Because of the lemma,
we may restrict the underlying set $\Gal(\m{CF})$ of 
the Heyting algebra $\m{CF}^+$ to 
$\{ \gamma(\varphi) : \varphi \in \Fm\}$. This results in
a Heyting subalgebra $\m{CF}^+_0$.
Notice that 
$$
\gamma(\varphi)\subseteq \gamma(\psi)
\Iff \LI\vdash \varphi \Rightarrow \psi.
$$
Hence $\m{CF}^+_0$ is isomorphic to
the Lindenbaum algebra for $\LI$
 ($\m L$ in Section \ref{s-omegamacneille}). 
Moreover it is not hard to see that
$\m{CF}^+$ is the MacNeille completion of $\m{CF}^+_0$. To sum up:
\begin{prop}
If $n=-1$, 
$\m{CF}^+$ is isomorphic to the MacNeille completion of 
the Lindenbaum algebra for $\LI$.
\end{prop}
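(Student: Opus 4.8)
The plan is to deduce the proposition from two facts: that $\m{CF}^+_0 := \{\gamma(\varphi) : \varphi\in\Fm\}$ is (isomorphic to) the Lindenbaum algebra $\m L$, and that the inclusion $\m{CF}^+_0 \subseteq \m{CF}^+$ is a MacNeille completion; the result then follows from the uniqueness of MacNeille completions (recalled at the start of Section~\ref{s-macneille}), together with the completeness of $\m{CF}^+$ from Lemma~\ref{l-heyting}. For the first fact I would observe that $\LIO_{-1} = \LI$ admits cut elimination (Lemma~\ref{l-cut}), so Lemma~\ref{l-improvement} applies at level $n = -1$: it gives $\mc I(\varphi) = \gamma(\varphi) = \varphi^\lhd$ for every $\varphi\in\Fm$, as well as $\gamma(\varphi\star\psi) = \gamma(\varphi)\star\gamma(\psi)$ for $\star\in\{\wedge,\vee,\rightarrow\}$ and $\gamma(\bot)=\bot$, $\gamma(\top)=\top$. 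Hence $\m{CF}^+_0$ is closed under the Heyting operations of $\m{CF}^+$, and $[\varphi]\mapsto\gamma(\varphi)$ is a surjective Heyting homomorphism from $\m L$ onto $\m{CF}^+_0$; it is an isomorphism since, using $\gamma(\varphi)=\varphi^\lhd$, we have $\gamma(\varphi)\subseteq\gamma(\psi)$ iff $\varphi\in\psi^\lhd$ iff $\LI\vdash^{cf}\varphi\Rightarrow\psi$ iff $\LI\vdash\varphi\Rightarrow\psi$.

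For the second fact I would verify $\bigvee$-density and $\bigwedge$-density of $\m{CF}^+_0$ in $\m{CF}^+$ directly. For $\bigvee$-density: any $X\in\Gal(\m{CF})$ equals $X = \gamma(X) = \bigvee_{\Gamma\in X}\gamma(\Gamma)$, and each $\gamma(\Gamma)$ equals $\gamma(\bigwedge\Gamma)\in\m{CF}^+_0$ (taking $\bigwedge\emptyset := \top$), because, using cut admissibility and the $\wedge$-left rule of $\LI$, one has $\LI\vdash^{cf}\Gamma,\Sigma\Rightarrow\Pi$ iff $\LI\vdash^{cf}\bigwedge\Gamma,\Sigma\Rightarrow\Pi$; moreover $\gamma(\Gamma)\subseteq X$ since $X$ is closed and $\Gamma\in X$. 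For $\bigwedge$-density: any closed $X$ equals $X^{\rhd\lhd} = \bigcap\{(\Sigma\Rightarrow\Pi)^\lhd : (\Sigma\Rightarrow\Pi)\in X^\rhd\}$, each $(\Sigma\Rightarrow\Pi)^\lhd$ with $(\Sigma\Rightarrow\Pi)\in X^\rhd$ lies above $X$, and each $(\Sigma\Rightarrow\Pi)^\lhd$ equals $\gamma(\bigwedge\Sigma\rightarrow\bigvee\Pi)\in\m{CF}^+_0$ (taking $\bigvee\emptyset := \bot$), because $\LI\vdash^{cf}\Gamma,\Sigma\Rightarrow\Pi$ iff $\LI\vdash^{cf}\Gamma\Rightarrow\bigwedge\Sigma\rightarrow\bigvee\Pi$ by the invertibility of the $\rightarrow$-right and $\bot$-right rules together with cut admissibility. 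Thus $\m{CF}^+_0\subseteq\m{CF}^+$ is $\bigvee$-dense and $\bigwedge$-dense, hence a MacNeille completion, and combining with $\m{CF}^+_0\cong\m L$ and uniqueness we conclude.

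Beyond Lemmas~\ref{l-cut} and~\ref{l-improvement} the proof carries essentially no new content; the only point demanding a little care is the reduction of $\gamma(\Gamma)$ and $(\Sigma\Rightarrow\Pi)^\lhd$ to $\gamma$ of a single first-order formula, which rests on cut admissibility and the invertibility of the $\wedge$-left, $\rightarrow$-right and $\bot$-right rules of $\LI$. A more structural alternative would be to note that the surjections $\Gamma\mapsto[\bigwedge\Gamma]$ and $(\Sigma\Rightarrow\Pi)\mapsto[\bigwedge\Sigma\rightarrow\bigvee\Pi]$ exhibit $\m{CF}$ as a ``cover'' of the polarity $\m{W}_{\m L}$ of Theorem~\ref{t-regular} (compatibly with $\Rightarrow^{cf}$ and $\leq$), which induces a complete-lattice — hence complete-Heyting-algebra — isomorphism $\Gal(\m{CF})\cong\m{W}_{\m L}^+$, and $\m{W}_{\m L}^+$ is the MacNeille completion of $\m L$ by that theorem; this route likewise hinges on nothing harder than cut admissibility for $\LI$.
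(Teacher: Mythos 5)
Your proposal is correct and follows the same route the paper takes in the discussion preceding the proposition: identify $\m{CF}^+_0=\{\gamma(\varphi):\varphi\in\Fm\}$ with the Lindenbaum algebra via Lemma \ref{l-improvement}, then check that $\m{CF}^+_0\subseteq\m{CF}^+$ is $\bigvee$- and $\bigwedge$-dense and invoke uniqueness of MacNeille completions. You supply the details (reducing $\gamma(\Gamma)$ and $(\Sigma\Rightarrow\Pi)^\lhd$ to $\gamma$ of a single formula via invertibility and cut admissibility) that the paper leaves as "not hard to see," and they are right.
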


Hence it seems reasonable to describe the essence of our arugment
as 
$$
\mbox{MacNeille completion} + \mbox{$\Omega$-interpretation}.
$$
This can be compared with the essence of the algebraic proof of
cut elimination for $\LIT$ due to \cite{Maehara91,Okada02}:
$$
\mbox{MacNeille completion} + \mbox{reducibility candidates}.
$$
See Appendix \ref{a-okada} for the latter.

\section{Formalizing cut elimination}\label{s-formalize}

In this section
we outline how to formalize our proof of cut
elimination for $\LIP_{n+1}$ locally within $\ID_{n+1}^i$.
It consists of two steps (Subsections \ref{ss-form1} and \ref{ss-form2}).

\subsection{Formalization in $\ID_{n+1}^i$ (1)}\label{ss-form1}

Recall that
the syntactic proof of cut elimination for 
$\LIP_n$ (in Subsection \ref{ss-syntactic})
relies on a cut-free provability predicate
$\LIO_{k}^{cf}$ for $k=-1, \dots, n$, which are definable in $\ID_{n+1}^i$.
Concretely, there is an $\ID_{n+1}$-formula $\LLIO_{k}^{cf}(x)$ 
whose intended meaning is
$$
\LLIO_k^{cf}(\Code{\Gamma\Rightarrow\Pi}) 
\Iff \LIO_{k} \vdash^{cf} \Gamma\Rightarrow\Pi.
$$
Likewise, we are given formulas $\LLIO_k(x)$, 
$\LLIP_{n+1}(x)$ and $\LLI^{cf}(x)$ in $\ID_{n+1}$
with analogous meanings.
Notice that  $\LLIP_{n+1}(x)$ and $\LLI^{cf}(x)$ are actually
$\Sigma^0_1$ formulas of $\HA$, since the proof systems 
are finitary. 
We assume that Lemma \ref{l-cut} for $\LIO_n$ has been
already formalized in $\ID_{n+1}^i$ (that is part of Theorem \ref{t-ce}). 
Thus:
$$\ID_{n+1}^i \vdash \forall x \in \SSeq.\,
\LLIO_{n}(x)\rightarrow \LLIO_{n}^{cf}(x),
$$
where $\SSeq(x)$ is a unary predicate for the set 
of (codes of) $\LIO_n$-sequents.

Let us now turn to the algebraic side.
Recall that our syntactic frame $\m{CF}$ is defined in terms of 
cut-free provability in $\LIO_n$. Thus the closure operator
$\gamma$ can be formalized as follows.
Given a set variable $X$, let 
$$
\Cl(x, X) \Def \forall y\in \SSeq.\,
[\forall z \in X.\, \LLIO_n^{cf} 
(\Code{\dot{z} \Rightarrow \dot{y}})] \rightarrow 
\LLIO_n^{cf} 
(\Code{\dot{x} \Rightarrow \dot{y}}),
$$
where $\Code{\dot{x} \Rightarrow \dot{y}}$ is a function symbol 
in variables $x,y$, whose intended meaning is a function
that returns $\Code{\Sigma,\Gamma\Rightarrow\Pi}$
given $\Code{\Sigma}$ and $\Code{\Gamma\Rightarrow\Pi}$ as inputs.
The intended meaning of $\Cl$ is that 
$\Cl(\Code{\Delta}, X)$ iff $\Delta \in \gamma(X)$.

Based on this, we can define an
$\ID_{n+1}$-formula $\II_\varphi (x,\vec{y})$ 
for each $\varphi = \varphi(\vec{y}) \in \FMP_{n+1}$
such that
$$
\II_\varphi (\Code{\Delta}, \Code{\vec{t}}) 
\Iff \Delta \in \mc I(\varphi(\vec{t})).
$$
This is possible because of the $\Omega$-interpretation technique,
that allows us to interpret second order quantifiers by 
first order ones. It is not very hard to 
formalize Lemma \ref{l-main}:
$$
\begin{array}{l}
\xi(\vec{t}) \in \mc I(\xi(\vec{t})),\\
\Delta \in \mc I(\xi(\vec{t})) \quad\mbox{implies}\quad
\Delta\Rcf\xi(\vec{t}),
\end{array}
$$
for every $\xi(\vec{x}) \in \FMP_n$, 
$\vec{t} \in \Tm$ and $\Delta \fsubseteq \FMP_n$. 

\begin{lem}\label{l-mainform}
For every formula $\xi = \xi(\vec{y})$ in $\FMP_n$, $\ID_{n+1}^i$ proves
$$
\begin{array}{l}
\forall \vec{y} \in \TTm.\,
\II_{\xi} (\Code{\xi(\vec{\dot{y}})}, \vec{y}),\\
\forall \vec{y} \in \TTm.\,
\forall z \in \FFset.\,
\II_{\xi} (z, \vec{y}) \rightarrow \LLIO_n^{cf}(
\Code{\dot{z} \Rightarrow \xi(\vec{\dot{y}})}).
\end{array}$$
\end{lem}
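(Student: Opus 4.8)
The plan is to prove Lemma \ref{l-mainform} by induction on the structure of $\xi \in \FMP_n$, carefully mirroring the eight-or-nine cases of the set-theoretic proof of Lemma \ref{l-main}, but now as a formal derivation inside $\ID_{n+1}^i$. The key point is that we are really formalizing \emph{two} statements simultaneously (``$\xi(\vec t) \in \mc I(\xi(\vec t))$'' and ``$\Delta \in \mc I(\xi(\vec t))$ implies $\Delta \Rcf \xi(\vec t)$''), so the induction loads both conjuncts; each case uses the inductive hypotheses for the immediate subformulas on \emph{both} conjuncts, exactly as in Lemma \ref{l-main}. Before starting, I would fix the arithmetized vocabulary: the coding $\Code{\cdot}$, the function symbols $\Code{\dot z \Rightarrow \dot y}$ and $\Code{\dot z \Rightarrow \xi(\vec{\dot y})}$, the predicates $\SSeq$, $\TTm$, $\FFset$, the cut-free provability predicate $\LLIO_n^{cf}$, and the formula-indexed family $\II_\xi(x,\vec y)$ whose defining clauses follow the $\Omega$-interpretation: $\II_{\forall X.\varphi}$ and $\II_{\exists X.\varphi}$ are first-order formulas referring to $|\forall X.\varphi|_{n+1}$ and $|\exists X.\varphi|_{n+1}$ via $\Cl$ and $\LLIO_n^{cf}$. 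The metamathematical fact I rely on throughout is that basic syntactic manipulations on $\LIO_n$-derivations -- weakening admissibility, the single-step rules $(\wedge\Left)$, $(\vee\Right)$, $(\forall x\Left)$, $(\omega\Right)$, $(\forall X\Right)$, etc., and the substitution lemma (Lemma \ref{l-subst}) -- are all provably available as operations on codes in $\ID_{n+1}^i$, and that Lemma \ref{l-cut} for $\LIO_n$ has already been formalized there (as stated just before the lemma).

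First I would handle the atomic and propositional cases. For $\xi = p(\vec t)$ or $\xi = X(t)$ we have $\II_\xi$ reducing to $\Cl(\cdot, \{\,\cdot\,\})$, and the two conjuncts are immediate from the formalized version of $(*)$, i.e. from $\varphi \Rcf \varphi$ (the $(\msf{id})$ rule, whose code-level totality is provable). For $\xi = \varphi \wedge \psi$, $\varphi \vee \psi$, $\varphi \rightarrow \psi$ the argument is a line-by-line transcription of cases (2)--(4) of Lemma \ref{l-main}: membership uses the appropriate left rule applied to $\varphi \in \mc I(\varphi)$ (inductive hypothesis, conjunct 1), and the $\subseteq \xi^\lhd$ part uses the corresponding right rule plus conjunct 2 of the hypotheses; for the implication case one additionally invokes conjunct 1 for $\varphi$ and conjunct 2 for $\psi$ together, exactly as in case (4). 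The first-order quantifier cases $\xi = \forall x.\varphi(x)$ and $\xi = \exists x.\varphi(x)$ split on $n$: when $n = -1$ one picks a fresh term variable $y \not\in \Fv(\Gamma)$ and uses $(\forall x\Right)$/$(\exists x\Left)$; when $n \geq 0$ one uses the $\omega$-rules $(\omega\Right)$/$(\omega\Left)$, whose infinitely-many-premises form is legitimate precisely because $\LIO_n$ is defined by an inductive definition available in $\ID_{n+1}^i$ -- this is where the choice of metatheory is essential. Formally, to apply $(\omega\Right)$ we must produce, \emph{uniformly in $t \in \Tm$}, a code for a cut-free derivation of $\Gamma \Rightarrow \varphi(t)$ from $\Gamma \in \bigcap_t \mc I(\varphi(t))$; this uniformity is exactly what the formalized conjunct 2 of the inductive hypothesis for $\varphi$ delivers.

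The interesting cases are the second-order quantifiers, $\xi = \forall X.\varphi$ and $\xi = \exists X.\varphi$ (cases (8)--(9) of Lemma \ref{l-main}). Here $\II_{\forall X.\varphi}(z,\vec y)$ unfolds, by the $\Omega$-interpretation, to ``$\Cl(z, |\forall X.\varphi|_{n+1})$'', i.e. to $\Cl$ applied to the set $\{\Code{\Delta} : \LLIO_{n}^{cf}(\Code{\Delta \Rightarrow \varphi(\dot Y)})$ for some fresh $Y\}$ -- and crucially this is a genuine $\ID_{n+1}$-formula because $\LLIO_{n}^{cf}$ is available at level $n+1$. For conjunct 1 I argue that $\forall X.\varphi \in |\forall X.\varphi|_{n+1}$, because $\forall X.\varphi \Rightarrow \varphi(Y)$ is cut-free provable in $\LIO_n$ (a one-step $(\forall X\Left)$ -- wait, rather $(\Omega\Left)$ together with the identity, exactly as in the excerpt's discussion of ``$\LIO_n \vdash \forall X.\varphi \Rightarrow \varphi(\tau)$''); the point is that the code of this derivation is produced by a primitive-recursive operation, so $\ID_{n+1}^i$ proves its existence. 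Then $|\forall X.\varphi|_{n+1} \subseteq \gamma(|\forall X.\varphi|_{n+1}) = \mc I(\forall X.\varphi)$ by the formalized closure axiom. For conjunct 2, $\Delta \in \mc I(\forall X.\varphi) = \gamma(|\forall X.\varphi|_{n+1})$ means $\Cl(\Code\Delta, |\forall X.\varphi|_{n+1})$; unfolding $\Cl$ with $y := \Code{\forall X.\varphi}$ and using that $\varphi(Y) \Rightarrow \varphi(Y)$ puts $\{\varphi(Y)\}$ (hence, after $(\forall X\Right)$, the singleton $\{\forall X.\varphi\}$ read as $\Delta' \Rightarrow \forall X.\varphi$) into the relevant relation, one extracts $\Delta \Rcf \forall X.\varphi$ via rule $(\forall X\Right)$ applied fibrewise -- again a primitive-recursive operation on codes. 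The $\exists X.\varphi$ case is dual, using $|\exists X.\varphi|_{n+1}^\lhd$ in place of $\gamma(|\forall X.\varphi|_{n+1})$ and rule $(\Omega\Right)$-style reasoning on the right.

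The main obstacle I anticipate is \emph{not} any individual case -- each is a faithful transcription of Lemma \ref{l-main} -- but the bookkeeping needed to make the induction go through formally: one must check that the metalevel induction on formula structure can be carried out as an \emph{external} induction (a schema, one instance per $\xi \in \FMP_n$), that $\II_\xi$ is genuinely definable in $\ID_{n+1}$ uniformly in $\xi$ with the right complexity (the $\forall X$ clause is the delicate one, as it invokes $\LLIO_n^{cf}$, which is why $\ID_{n+1}^i$ and not $\ID_n^i$ is needed), and that all the ``primitive recursive in the derivation'' operations on codes -- weakening, rule applications, the substitution of Lemma \ref{l-subst}, and the cut-elimination operator of Lemma \ref{l-cut} -- are provably total in $\ID_{n+1}^i$ and commute with coding as expected. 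Since these are exactly the ingredients whose availability the excerpt has been setting up (finitary provability predicates are $\Sigma^0_1$ in $\HA$; $\LLIO_k$ for $k \le n$ live in $\ID_{n+1}$; Lemma \ref{l-cut} is already formalized), the proof reduces to making these routine verifications; I would state the case analysis and flag that each step is a formal counterpart of the corresponding step in the proof of Lemma \ref{l-main}, leaving the arithmetical details to the reader as is standard for this kind of formalization argument.
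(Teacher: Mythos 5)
Your proposal is correct and matches the paper's intent exactly: the paper gives no explicit proof of Lemma \ref{l-mainform}, presenting it as the routine arithmetization of Lemma \ref{l-main} (``it is not very hard to formalize Lemma \ref{l-main}''), and your case-by-case transcription of that structural induction is precisely the intended argument. The points you single out as the real work --- the external (schematic) induction on $\xi$, the definability of $\II_{\xi}$ via $\LLIO_n^{cf}$ at level $n+1$, and the provable totality in $\ID_{n+1}^i$ of the code-level rule applications and substitution operations --- are exactly the bookkeeping the paper leaves implicit.
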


Here $\TTm(x)$ and $\FFset(x)$ are unary predicates for 
the set of (codes of) terms and the set of (codes of) finite formula sets
$\Gamma, \Delta, \dots$. In addition,
$\Code{\xi(\vec{\dot{y}})}$ expresses a function in variables $\vec{y}$ that 
returns $\Code{\xi(\vec{t})}$ when given $\Code{\vec{t}}$ as inputs.
$\Code{\dot{z} \Rightarrow \xi(\vec{\dot{y}})}$ should be understood
accordingly.

Now the backbone of our argument is Lemma \ref{l-sound2}.
So suppose that a derivation $\pi_0$ of $\Gamma_0\Rightarrow\Pi_0$
in $\LIP_{n+1}$ is given, where $\Gamma_0 \cup \Pi_0 \subseteq \FMP_n$. 
Since there are only finitely many formulas occurring in $\pi_0$,
we obtain a formula $\II (x, y, \vec{z})$ such that 
$$
\II (\Code{\Delta}, \Code{\Gamma}, \Code{\vec{t}}) 
\Iff \Delta \in \mc I(\Gamma(\vec{t})))
$$
for any $\Delta \fsubseteq \FMP_{n}$ and 
$\Gamma = \Gamma(\vec{y})$ that consists of 
formulas occurring in $\pi_0$.

We would like to show that $\ID_{n+1}^i$ proves 
that $\mc I (\Gamma_0) \subseteq \mc I(\Pi_0)$
(formally expressed by using predicate $\II$ above).
If successful,
we may further obtain 
$$\ID_{n+1}^i \vdash \LLIO^{cf}_{n} (\Code{\Gamma_0\Rightarrow\Pi_0})$$
with the help of Lemma \ref{l-mainform} (see the proof of 
Theorem \ref{t-reduction}).
Combined with a formalized proof of Lemma \ref{l-collapse},
we will be able to conclude
$$\ID_{n+1}^i \vdash \LLI^{cf} (\Code{\Gamma_0\Rightarrow\Pi_0}).$$

The hardest part of the whole work is to property formalize 
Lemma \ref{l-sound1}, which is a prerequisite for 
Lemma \ref{l-sound2}.
We will argue that it is indeed possible
in the next subsection.

\subsection{Local formalization in $\ID_{n+1}^i$ (2)}\label{ss-form2}

Before addressing Lemma \ref{l-sound1}, a bit of preliminary 
is needed. 

We fix a variable $X_0$, a formula $\forall X.\varphi_0(X)$ and 
abstract $\tau_0$ that occur in 
the derivation $\pi_0$ of $\Gamma_0\Rightarrow\Pi_0$ in
$\LIP_{n+1}$.
Lemma \ref{l-sound1} is invoked by letting 
$F:= \mc I(\tau_0)$ and by considering 
a cut-free derivation of $\Delta \Rightarrow \varphi_0(X_0)$ 
in $\LIO_n$ with $X_0 \not\in \FV(\Delta)$
(see case (1) in the proof of Lemma \ref{l-sound2}).
Actually, there is also a dual case corresponding to case (2), 
but let us forget about it for simplicity.
So, there is an $\ID_{n+1}$-formula 
$\mathsf{F}(x,y)$ whose intended meaning is that 
$$\mathsf{F}(\Code{\Delta}, \Code{t}) \Iff
\Delta \in F(t) \Iff \Delta \in \mc I (\tau (t)).$$

We define the \emph{subformula relation} to be the transitive reflexive
closure of the following:
$$
\varphi, \psi \sqsubseteq \varphi\star\psi,\qquad
\varphi(x) \sqsubseteq Qx. \varphi(x),
$$
where $\star\in \{\wedge,\vee,\rightarrow\}$ and $Q\in \{\forall,\exists\}$.
That is, $QX. \varphi(X)$ does not have any proper subformula.
Clearly the set $\Sf(\varphi_0)$ of subformulas of $\varphi_0$
is finite. Hence as before, there is a formula 
$\IF(x,y,\vec{z})$ such that
$$
\IF (\Code{\Delta}, \Code{\Gamma}, \Code{\vec{t}}) 
\Iff \Delta \in \mc I_F(\Gamma(\vec{t})))
$$
for any $\Delta\fsubseteq \FMP_{n}$ and 
$\Gamma = \Gamma(\vec{y}) \fsubseteq \Sf(\varphi_0)$.

Let $\SEQ(\varphi_0)$ be the (finite) set of sequents 
that consist of formulas in $\Sf(\varphi_0)$.
Then Lemma \ref{l-sound1} can be formalized as follows.

\begin{lem}
$\ID_{n+1}^i$ proves the following statement (suitably formalized):
for any sequent $\Gamma(\vec{y}) \Rightarrow \Pi(\vec{y})$ in 
$\SEQ(\varphi_0)$, for any terms $\vec{t}$ and 
for any $\Delta\fsubseteq \FMP_{n}$ with $X_0 \not\in\FV(\Delta)$,
$$
\LIO_n \vdash^{cf} \Delta, \Gamma(\vec{t})\Rightarrow\Pi(\vec{t})
\quad \mbox{implies}\quad
\gamma(\Delta)\cap \mc I_F(\Gamma(\vec{t})) \subseteq \mc I_F(\Pi
(\vec{t})).$$
Also, for any 
$\Gamma(\vec{y}) \fsubseteq \Sf(\varphi_0)$, 
for any terms $\vec{t}$ 
and for any sequent $\Delta\Rightarrow\Pi$ with 
$\Delta\cup\Pi \subseteq \FMP_{n}$ and $X_0\not\in\FV(\Delta,\Pi)$,
$$
\LIO_n \vdash^{cf} \Gamma(\vec{t}), \Delta \Rightarrow \Lambda
\quad \mbox{implies}\quad
I_F(\Gamma(\vec{t})) \cap \gamma(\Delta)\subseteq \Lambda^\lhd.$$
\end{lem}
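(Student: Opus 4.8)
The plan is to prove both assertions by a single induction on the cut-free provability predicate $\LLIO_n^{cf}$, that is, by the induction schema $(\Lfp_2)$ attached to this inductive definition --- which is precisely what $\ID_{n+1}^i$ makes available (and no more). First I would make the statement formal. Since $\SEQ(\varphi_0)$ and $\Sf(\varphi_0)$ are finite, and since $\gamma$ and the relevant restrictions of $\mc I_F$ are represented inside $\ID_{n+1}$ by the formulas $\Cl(x,X)$ and $\IF(x,y,\vec z)$ --- both of which mention only $\LLIO_n^{cf}$, an $\ID_{n+1}$-predicate, so that no cut-free provability predicate for $\LIO_{n+1}$ (which would need $\ID_{n+2}$) is ever invoked --- one writes down a single $\ID_{n+1}$-formula $\mathsf{Sound}(s)$ whose content is: ``if the sequent coded by $s$ can be presented as $\Delta,\Gamma(\vec t)\Rightarrow\Pi(\vec t)$ with $(\Gamma\Rightarrow\Pi)\in\SEQ(\varphi_0)$, $\vec t\in\Tm$, $\Delta\fsubseteq\FMP_n$ and $X_0\notin\FV(\Delta)$, then $\gamma(\Delta)\cap\mc I_F(\Gamma(\vec t))\subseteq\mc I_F(\Pi(\vec t))$; and dually for the presentation $\Gamma(\vec t),\Delta\Rightarrow\Lambda$ with $\Gamma\fsubseteq\Sf(\varphi_0)$ and $X_0\notin\FV(\Delta,\Lambda)$.'' The goal then reads $\ID_{n+1}^i\vdash\forall s\in\SSeq.\,\LLIO_n^{cf}(s)\to\mathsf{Sound}(s)$, and the two halves of the lemma are obtained simultaneously because $\mathsf{Sound}$ is a conjunction of the two clauses.

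For the induction step one shows that $\{s:\mathsf{Sound}(s)\}$ is closed under every rule of $\LIO_n$; this is the formalized transcription of the proof of Lemma \ref{l-sound1} together with the relevant clauses of Lemma \ref{l-main}. The propositional and first-order cases go through verbatim, appealing to the already-formalized Lemma \ref{l-mainform}. The substance is in the cases $(\forall X\Right)$, $(\exists X\Left)$, $(\Omega_k\Left)$, $(\Omega_k\Right)$ and their tilde-variants. Here one uses, in formalized form: Lemma \ref{l-subst} to rename $X_0$ and the other live set variables to fresh ones before applying the induction hypothesis to a premise and to transport the conclusion back; Lemma \ref{l-down} to move freely between the level-$k$ and level-$(n{+}1)$ index sets; the provable identity $\IF=\II$ on $X_0$-free formulas, so that a premise antecedent $\Delta$ --- which is $X_0$-free after renaming and consists of $\FMP_n$-formulas --- is caught by $\Delta\in\gamma(\Delta)\subseteq\mc I_F(\Delta)$ via Lemma \ref{l-mainform} and fed into the hypothesis. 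A point that needs care is that cut-free $\LIO_n$-derivations lack the subformula property (the $\Omega$- and $\tilde\Omega$-rules introduce formulas unrelated to the endsequent); this is absorbed by letting the ``$\Delta$''-slot of $\mathsf{Sound}$ range over arbitrary finite $\FMP_n$-sets, on which Lemmas \ref{l-improvement} and \ref{l-mainform} tell us $\mc I_F$ agrees with $\gamma$ and so is handled by $\Cl$ alone, and by treating the right-hand foreign formula of a $\tilde\Omega_k\Left$-premise exactly as in the (cut-using) proof of Lemma \ref{l-sound1}, i.e.\ via the fact that such a rule is simulated by $(\forall X\Right)$ and $(\Omega_k\Left)$ with a reduction already accounted for.

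The main obstacle, as the paper's own phrasing anticipates, is neither the logical content of the step (it is the content of Lemma \ref{l-sound1}, already proved) nor the use of $(\Lfp_2)$ (standard), but the bookkeeping that keeps everything inside the $\ID_{n+1}$-language while the derivation is unwound: one must be scrupulous that every appeal to $\mc I_F(\forall X.\psi)$ or $\mc I_F(\exists X.\psi)$ unfolds, through the $\Omega$-interpretation, to the arithmetical conditions expressing $\gamma(|\forall X.\psi|_{n+1})$ and $|\exists X.\psi|_{n+1}^{\lhd}$ --- which mention only $\LLIO_n^{cf}$ and arithmetic, never a second-order quantifier --- and that the freshness side conditions in the $\Omega$-rule cases are discharged uniformly through the formalized substitution lemma so that the induction hypothesis genuinely applies to the renamed premises. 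Once this discipline is maintained, combining the result with the formalized Lemmas \ref{l-mainform} and \ref{l-collapse} yields $\ID_{n+1}^i\vdash\LLI^{cf}(\Code{\Gamma_0\Rightarrow\Pi_0})$ exactly as in the proofs of Theorems \ref{t-reduction} and \ref{t-last}, which is the converse direction of the Takeuti correspondence between $\LIP_{n+1}$ and $\ID_{n+1}^i$.
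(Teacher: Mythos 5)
Your proposal is correct and follows essentially the same route as the paper: a structural induction on the cut-free derivation formalized via the inductive definition of $\LLIO_n^{cf}$ (i.e.\ the $(\Lfp_2)$ schema in $\ID_{n+1}^i$), using Lemma \ref{l-improvement} to identify $\gamma(\Delta)$ with $\mc I_F(\Delta)$ and $\Lambda^\lhd$ with $\mc I_F(\Lambda)$ so that the statement reduces to a formalized transcription of the proof of Lemma \ref{l-sound1}. The paper dispatches this in three lines as ``almost straightforward''; your elaboration of the bookkeeping (the $\mathsf{Sound}$ predicate, the renaming via Lemma \ref{l-subst}, the level shifts via Lemma \ref{l-down}, and the treatment of the $\tilde\Omega$-rules) is a faithful expansion of what that remark leaves implicit.
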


\begin{proof} 
By structural induction on the cut-free derivation
(that is available in $\ID_{n+1}^i$).
Since $\gamma(\Delta) = \mc I_F(\Delta)$ and $\Lambda^{\lhd} = \mc I_F(\Lambda)$
by Lemma \ref{l-improvement},
the first statement amounts to:
$$
\LIO_n \vdash^{cf} \Delta, \Gamma(\vec{t})\Rightarrow\Pi(\vec{t})
\quad \mbox{implies}\quad
\mc I_F(\Delta,\Gamma(\vec{t})) \subseteq \mc I_F(\Pi(\vec{t})).$$
Hence the proof of Lemma \ref{l-sound1} can be 
formalized almost straightforwardly.
\end{proof}

We also have to ensure that the whole construction is primitive recursive.

\begin{lem}
Given $\forall X.\varphi_0(X) \in \FMP_{n+1}$ and $\tau_0 \in \ABS_{n+1}$,
there is a derivation $\pi$ of the following statement 
(formalized in $\ID_{n+1}^i$):
for any $\Delta\fsubseteq \FMP_{n}$ with $X_0 \not\in\FV(\Delta)$,
$$
\LIO_n \vdash^{cf} \Delta \Rightarrow \varphi_0(X_0)
\quad \mbox{implies}\quad
\Delta \in \mc I_F(\varphi_0(X_0)).$$
Moreover, $\pi$ is computable from 
$\forall X.\varphi_0(X)$ and $\tau$ primitive recursively.
\end{lem}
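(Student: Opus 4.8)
The plan is to derive the displayed implication as a corollary of the preceding lemma (the $\ID_{n+1}^i$-formalization of Lemma~\ref{l-sound1}) together with Lemma~\ref{l-improvement}, and then to read off from the derivation so obtained that it is assembled by a primitive recursive procedure in the syntactic data. Throughout, $F$ denotes $\mc I(\tau_0)$, arithmetized by $\mathsf{F}(x,y)$, and $\mc I_F$ is the interpretation sending $X_0$ to $F$, arithmetized on $\Sf(\varphi_0)$ by $\IF(x,y,\vec z)$; both predicates are obtained from $\forall X.\varphi_0(X)$ and $\tau_0$ primitive recursively, $\mathsf{F}$ being $\IF$ specialised to the body of $\tau_0$ and $\IF$ being defined by a recursion along the finite set $\Sf(\varphi_0)$ that uses the $\Omega$-interpretation clauses for the second order quantifiers.

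First I would settle the auxiliary fact that $\Delta \in \mc I_F(\Delta)$ for every $\Delta \fsubseteq \FMP_n$ with $X_0 \notin \FV(\Delta)$. Since $\LIO_n$ admits cut elimination (Lemma~\ref{l-cut}, already formalized in $\ID_{n+1}^i$ as part of Theorem~\ref{t-ce}), Lemma~\ref{l-improvement} applies, so every first order $\varphi \in \FMP_n$ has $\mc I(\varphi) = \gamma(\varphi) = \varphi^\lhd$; because $X_0 \notin \FV(\Delta)$ the interpretations $\mc I_F$ and $\mc I$ agree on the members of $\Delta$, and $\Delta \in \bigcap_{\varphi \in \Delta} \varphi^\lhd$ is just the statement ``$\LIO_n \vdash^{cf} \Delta \Rightarrow \varphi$ for every $\varphi \in \Delta$'', which follows from $(\msf{id})$ and the admissibility of weakening. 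This costs only a fixed finite piece of $\ID_{n+1}^i$-derivation, and---the point---it is uniform in $\Delta$, since no recursion on the structure of the members of $\Delta$ is involved.

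Next I would instantiate the preceding lemma with (the relevant instance of) the sequent $\varnothing \Rightarrow \varphi_0(X_0)$ and with the hypothesis $\LIO_n \vdash^{cf} \Delta \Rightarrow \varphi_0(X_0)$, obtaining $\gamma(\Delta) \cap \mc I_F(\varnothing) \subseteq \mc I_F(\varphi_0(X_0))$, that is $\gamma(\Delta) \subseteq \mc I_F(\varphi_0(X_0))$ since $\mc I_F(\varnothing)$ is the top element of $\m{CF}^+$. As $\Delta \in \gamma(\Delta)$, this gives $\Delta \in \mc I_F(\varphi_0(X_0))$, the required implication. For the effectiveness clause I would then note that the $\ID_{n+1}^i$-derivation $\pi$ just assembled is the composition of (i) the fixed derivation witnessing $\Delta \in \gamma(\Delta)$, (ii) the $\ID_{n+1}^i$-derivation of the preceding lemma restricted to $\SEQ(\varphi_0)$, and (iii) the $\ID_{n+1}^i$-derivation that formalizes cut elimination for $\LIO_n$ and on which Lemma~\ref{l-improvement} rests. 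Ingredients (ii) and (iii) are each produced by a fixed primitive recursive recipe from the finite objects $\Sf(\varphi_0)$ and $\tau_0$: the induction driving (ii) is an instance of the least fixed point induction $(\Lfp_2)$ for the inductively defined predicate $\LLIO_n^{cf}$---which is precisely why the metatheory must be at level $n+1$---its skeleton being one case per inference rule of $\LIO_n$, with the $(\Omega_k \Left)$ and $(\Omega_k \Right)$ cases appealing to the equally effective arithmetization of the substitution Lemma~\ref{l-subst}. Gluing these pieces by a fixed pattern yields $\pi$ primitive recursively in $\forall X.\varphi_0(X)$ and $\tau_0$.

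The main obstacle I anticipate is bookkeeping rather than anything conceptual: one must verify that each case of the induction underlying the preceding lemma---especially the $\omega$-rule and $\Omega_k$-rule cases, whose infinitely many premises are handled through the first order $\Omega$-interpretation---can be certified by a subderivation whose G\"{o}del number depends primitive recursively on $\varphi_0$, $\tau_0$ and the running subderivation, with no covert use of a non-constructive choice of witnesses; and one must check that the side conditions on free variables (renaming of bound set variables, the use of $X_0 \notin \FV(\Delta)$, freshness in the $\Omega_k$-rule cases) survive the arithmetization intact.
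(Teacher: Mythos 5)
Your proposal is correct and follows essentially the same route as the paper, which disposes of this lemma with a one-line remark that the reasoning is constructive and parametric in $\forall X.\varphi_0(X)$ and $\tau_0$: you instantiate the (unlabelled) formalized version of Lemma \ref{l-sound1} at the sequent $\varnothing\Rightarrow\varphi_0(X_0)$, use $\Delta\in\gamma(\Delta)$ and Lemma \ref{l-improvement} to extract $\Delta\in\mc I_F(\varphi_0(X_0))$, and observe that the whole $\ID_{n+1}^i$-derivation, being driven by $(\Lfp_2)$ for $\LLIO_n^{cf}$ over the finite data $\Sf(\varphi_0)$ and $\tau_0$, is assembled primitive recursively. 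Your version is simply a more explicit account of what the paper leaves implicit.
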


This is certainly true since all the reasoning is constructive 
and parametric in 
$\forall X.\varphi_0(X)$ and $\tau_0$.

Let us now come back to the derivation $\pi_0$
of $\Gamma_0 \Rightarrow\Pi_0$ in $\LIP_{n+1}$.
In the proof of Lemma \ref{l-sound2},
Lemma \ref{l-sound1} is invoked finitely many times
depending on $\pi_0$. Moreover,
we can verity that $\ID_{n+1}^i$ proves
$$
\mc I_F(\varphi_0(Y)) = \mc I(\varphi_0(\tau_0)).
$$
Hence Lemma \ref{l-sound2} can be formalized as follows.
\begin{lem}\label{l-sound2form}
Let $\pi_0$ be a derivation of 
$\Gamma_0 \Rightarrow\Pi_0$ in $\LIP_{n+1}$.
There is a derivation $\pi_1$ of the statement 
$\mc I (\Gamma_0) \subseteq \mc I(\Pi_0)$ 
formalized in $\ID_{n+1}^i$. Moreover, $\pi_1$ is computable
from $\pi_0$ primitive recursively.
\end{lem}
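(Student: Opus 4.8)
The plan is to formalize the inductive proof of Lemma \ref{l-sound2} as a primitive recursive transformation of $\pi_0$, treating the two preceding lemmas (the formalization of Lemma \ref{l-sound1} and the primitive recursive construction lemma above) as black boxes. First I would fix, once and for all, the finitely many formulas, contexts and sequents occurring in $\pi_0$, together with the already available predicates $\II(x,y,\vec{z})$ and $\IF(x,y,\vec{z})$ restricted to them; this turns the assertion $\mc I(\Gamma_0) \subseteq \mc I(\Pi_0)$ into a genuine $\ID_{n+1}$-sentence. Then I would run structural induction on $\pi_0$. For $(\mathsf{id})$, $(\Cut)$, the propositional rules and the first order quantifier rules, the argument is verbatim that of the already formalized Lemma \ref{l-sound1}, since these rules and their interpretations coincide in $\LIP_{n+1}$ and in $\LIO_n$ over formulas of $\FMP_n$; the induction hypotheses are combined via elementary manipulations of $\gamma$, $(\cdot)^\lhd$ and $(\cdot)^\rhd$, all $\ID_{n+1}^i$-provable, and Lemma \ref{l-mainform} supplies $\xi \in \mc I(\xi)$ and $\mc I(\xi) \subseteq \xi^\lhd$ wherever needed.

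The two genuinely new cases are $(\forall X\Left)$ and $(\exists X\Right)$. For an occurrence of $(\forall X\Left)$ in $\pi_0$ introducing $\forall X.\varphi_0(X)$ with abstract $\tau_0$, I would, exactly as in the informal proof of Lemma \ref{l-sound2}, invoke the formalized Lemma \ref{l-sound1} instantiated to $F := \mc I(\tau_0)$ and to cut-free $\LIO_n$-derivations of $\Delta \Rightarrow \varphi_0(X_0)$ with $X_0 \notin \FV(\Delta)$, invoke the primitive recursive construction lemma above, and use the auxiliary identity $\mc I_F(\varphi_0(X_0)) = \mc I(\varphi_0(\tau_0))$ — itself a structural induction on $\varphi_0$ provable already in a weak fragment, hence in $\ID_{n+1}^i$. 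Since $\pi_0$ is finite it mentions only finitely many pairs $(\forall X.\varphi_0, \tau_0)$, so all of these invocations can be gathered into one $\ID_{n+1}^i$-derivation; the case $(\exists X\Right)$ is treated dually, using the second clause of Lemma \ref{l-sound1} and the same identity. Concatenating the pieces according to the shape of $\pi_0$ yields $\pi_1$, and since every step — locating the relevant sub-derivation, instantiating the auxiliary lemmas, splicing the resulting proofs — is effective and uniform in the input, $\pi_1$ is obtained from $\pi_0$ primitive recursively.

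The main difficulty I anticipate is organizational rather than conceptual. One must ensure that the eigenvariable conventions of the informal proof — $Y \neq X_0$, $Y \notin \FV(\Delta,\Lambda)$, renaming a bound set variable to a fresh one — are realized by honest primitive recursive renamings on codes, so that the finitely many instances of Lemma \ref{l-sound1} actually compose; and that $\II$ and $\IF$ are laid out coherently over all subformulas and contexts appearing in $\pi_0$, so that at each rule the induction hypothesis has precisely the syntactic shape required. Once the genuinely hard formalization, that of Lemma \ref{l-sound1}, is in place — as secured in the previous subsection — this bookkeeping is routine, and crucially no provability predicate above level $n+1$ is ever called upon, which is exactly what confines the whole argument to $\ID_{n+1}^i$.
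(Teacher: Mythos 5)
Your proposal is correct and follows essentially the same route as the paper, which dispatches this lemma with ``a matter of routine work'' after having set up exactly the ingredients you use: the formalized Lemma \ref{l-sound1} instantiated at $F := \mc I(\tau_0)$, the primitive recursive construction lemma, the identity $\mc I_F(\varphi_0(Y)) = \mc I(\varphi_0(\tau_0))$, and the observation that $\pi_0$ triggers only finitely many such invocations. Your write-up simply makes the intended routine work explicit, including the bookkeeping on eigenvariables and codes that the paper leaves implicit.
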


This is again a matter of routine work.

As explained before, this lemma together with Lemma \ref{l-mainform}
and (a formalized version of) Lemma \ref{l-collapse} gives rise to a proof
of partial cut elimination for $\LIP_{n+1}$ locally formalized in $\ID_{n+1}^i$.
Let us record this fact (with $m:=n+1$).

\begin{thm}
$\IS$ proves the statement 
that for every sequent $\Gamma\Rightarrow\Pi$ of $\LI$,
$$\LIP_m \vdash\Gamma\Rightarrow\Pi\quad
\mbox{implies}\quad \ID_m^i \vdash \LLI^{cf}(\Code{\Gamma\Rightarrow\Pi}).$$
\end{thm}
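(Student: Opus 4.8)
The plan is to assemble the formalized lemmas of this section — Lemma~\ref{l-mainform}, Lemma~\ref{l-sound2form}, and a formalized version of the collapsing Lemma~\ref{l-collapse} — into a single chain that mirrors the algebraic proof of Theorem~\ref{t-reduction}, and then to observe that every link in the chain is a primitive recursive operation on derivations whose correctness is itself provable in $\IS$. In other words, the present statement is the ``outermost wrapper'': the substantive work has been done in Sections~\ref{s-algebraic} and \ref{ss-form1}--\ref{ss-form2}, and what remains is to compose it and check that the composition is verifiable in a weak metatheory.

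Concretely, I would set $n := m-1$, so $m = n+1$, and argue inside $\IS$. Fix a sequent $\Gamma\Rightarrow\Pi$ of $\LI$; then $\Gamma\cup\Pi\subseteq\Fm = \FMP_{-1}\subseteq\FMP_n$. Assume given a derivation $\pi_0$ of $\Gamma\Rightarrow\Pi$ in $\LIP_m$. By Lemma~\ref{l-sound2form} one obtains, primitive recursively in $\pi_0$, a derivation $\pi_1$ in $\ID_m^i$ of the (suitably formalized) statement $\mc I(\Gamma)\subseteq\mc I(\Pi)$. Next, the two clauses of Lemma~\ref{l-mainform}, instantiated at the finitely many formulas occurring in $\Gamma\cup\Pi$, provide $\ID_m^i$-proofs of $\Gamma\in\mc I(\Gamma)$ and of $\mc I(\Pi)\subseteq\Pi^\lhd$; since Lemma~\ref{l-mainform} is uniform in the formula, these proofs are themselves computable from $\Code{\Gamma\Rightarrow\Pi}$ primitive recursively. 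Combining them with $\pi_1$ exactly as in the proof of Theorem~\ref{t-reduction} yields an $\ID_m^i$-derivation $\pi_2$ of $\LLIO_n^{cf}(\Code{\Gamma\Rightarrow\Pi})$.

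It then remains to descend from $\LIO_n$ to $\LI$. Because $\Gamma\cup\Pi\subseteq\FMP_{k-1}$ for every $k$ with $0\le k\le n$, a formalized version of Lemma~\ref{l-collapse} — available in $\ID_m^i$ since the provability predicates $\LLIO_{-1}^{cf},\dots,\LLIO_{n}^{cf}$ all live there — can be iterated $n+1$ times to transform $\pi_2$ into an $\ID_m^i$-derivation $\pi_3$ of $\LLIO_{-1}^{cf}(\Code{\Gamma\Rightarrow\Pi})$, which is exactly $\LLI^{cf}(\Code{\Gamma\Rightarrow\Pi})$; when $n=-1$ there is nothing to collapse and $\pi_3 := \pi_2$. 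Each of these transformations is primitive recursive, so the map $\pi_0\mapsto\pi_3$ is a composition of primitive recursive functions, which is precisely the witness for the desired implication $\LIP_m\vdash\Gamma\Rightarrow\Pi \Rightarrow \ID_m^i\vdash\LLI^{cf}(\Code{\Gamma\Rightarrow\Pi})$.

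Finally, the whole reasoning formalizes in $\IS$: the function $\pi_0\mapsto\pi_3$ is representable there, and the assertion that on every $\LIP_m$-derivation of a first order sequent it returns a genuine $\ID_m^i$-derivation of the stated conclusion is a $\Pi^0_2$ statement that $\IS$ proves by $\Sigma^0_1$-induction following those primitive recursive definitions — this is exactly why Lemmas~\ref{l-mainform}, \ref{l-sound2form} and \ref{l-collapse} were set up with explicit primitive recursive, formula-uniform constructions. I expect the main obstacle to lie not in this final assembly but upstream, in the soundness argument underlying Lemma~\ref{l-sound2form} (and its prerequisite Lemma~\ref{l-sound1}): carrying out the soundness of the $\Omega$-interpretation for $\LIP_m$ \emph{inside} $\ID_m^i$, that is, interpreting the second order quantifiers by first order joins and meets over the index sets $|\forall X.\varphi|_k$, $|\exists X.\varphi|_k$ and verifying soundness by structural induction on $\pi_0$ in a way that is uniform and primitive recursive in $\pi_0$. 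Once that is granted, the theorem stated here is pure bookkeeping.
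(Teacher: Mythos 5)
Your proposal is correct and follows essentially the same route as the paper: the theorem is recorded there precisely as the assembly of Lemma~\ref{l-sound2form}, Lemma~\ref{l-mainform} (as in the proof of Theorem~\ref{t-reduction}) and a formalized Lemma~\ref{l-collapse}, with all steps primitive recursive in $\pi_0$ so that the whole implication is verifiable in $\IS$. Your closing remark that the real work sits upstream in the formalization of Lemmas~\ref{l-sound1} and~\ref{l-sound2} is exactly the paper's own assessment.
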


Assuming the 1-consistency of $\ID_m^i$, we obtain
$$\LIP_m \vdash\Gamma\Rightarrow\Pi\quad
\mbox{implies}\quad \LI\vdash^{cf}\Gamma\Rightarrow\Pi,$$
that is nothing but a statement of partial cut elimination.
Hence
by combining it with Theorems \ref{t-cc2} and \ref{t-partial2}, 
we finally obtain:

\begin{thm}[Takeuti correspondence between $\ID_m^i$ and 
$\LIP_{m}$]\label{t-final}
For every $m<\omega$,
$$
\IS \vdash \mathsf{CE}(\LIP_m) \leftrightarrow 
\mathsf{1CON}(\ID_m^i).$$
\end{thm}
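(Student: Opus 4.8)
The plan is to prove the two implications separately inside $\IS$ and then conjoin them. For the forward direction $\mathsf{CE}(\LIP_m)\to\mathsf{1CON}(\ID_m^i)$ nothing new is needed: this is exactly Theorem~\ref{t-cc2}, which is obtained by having $\LIP_m$ simulate the reasoning of $\ID_m^i$ --- Lemmas~\ref{l-ind} and~\ref{l-fix} internalize mathematical induction and the least-fixed-point axioms along the translation $\varphi\mapsto\varphi^{\Fix}$ --- and then running the elementary soundness argument for cut-free first-order derivations. So here I would simply cite that theorem.

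For the converse $\mathsf{1CON}(\ID_m^i)\to\mathsf{CE}(\LIP_m)$ I would argue in $\IS$ as follows. Assume $\mathsf{1CON}(\ID_m^i)$, i.e.\ that every $\Sigma^0_1$ sentence provable in $\ID_m^i$ is true. Fix a sequent $\Gamma\Rightarrow\Pi$ of $\LI$ (so $\Gamma\cup\Pi\subseteq\Fm$) with $\LIP_m\vdash\Gamma\Rightarrow\Pi$. The theorem established just above --- the local formalization in $\ID_m^i$ of the algebraic cut-elimination argument carried out in Section~\ref{s-formalize} --- gives $\ID_m^i\vdash\LLI^{cf}(\Code{\Gamma\Rightarrow\Pi})$. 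Since $\LI$ is a finitary calculus, $\LLI^{cf}(x)$ is $\Sigma^0_1$, hence $\LLI^{cf}(\Code{\Gamma\Rightarrow\Pi})$ is a $\Sigma^0_1$ sentence; by the assumed $1$-consistency of $\ID_m^i$ it is therefore true, that is, $\LI\vdash^{cf}\Gamma\Rightarrow\Pi$. Thus $\LIP_m\vdash\Gamma\Rightarrow\Pi$ implies $\LI\vdash^{cf}\Gamma\Rightarrow\Pi$ whenever $\Gamma\cup\Pi\subseteq\Fm$, which is precisely the statement $\mathsf{CE}_{\Fm}(\LIP_m)$ of partial cut elimination. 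Finally I would invoke Theorem~\ref{t-partial2} to upgrade partial cut elimination to complete cut elimination, obtaining $\mathsf{CE}(\LIP_m)$. Each of these steps is an elementary manipulation with $\Sigma^0_1$ and $\Pi^0_2$ sentences, so the whole deduction goes through in $\IS$; conjoining it with the forward direction yields the biconditional.

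There is no serious obstacle left at this point: all the real difficulty has already been absorbed into Section~\ref{s-formalize}, whose hardest ingredients are the local formalization of Lemma~\ref{l-sound1} within $\ID_m^i$, the verification (Lemma~\ref{l-sound2form}) that the $\ID_m^i$-derivation is produced primitive recursively from the input $\LIP_m$-derivation, and the formalized collapsing Lemma~\ref{l-collapse}. What still deserves explicit care when writing up the present theorem is purely bookkeeping: that $\LLI^{cf}$ (and $\LLIP_m$) are genuinely $\Sigma^0_1$, so that $1$-consistency may legitimately be applied; that $\mathsf{CE}(\LIP_m)$, $\mathsf{CE}_{\Fm}(\LIP_m)$ and $\mathsf{1CON}(\ID_m^i)$ are all $\Pi^0_2$; and that the base case $m=0$ (where $\ID_0^i=\HA$ and the formalization is the $n=-1$ instance of Sections~\ref{s-algebraic} and~\ref{s-formalize}) is covered uniformly by the same argument.
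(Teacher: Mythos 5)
Your proposal is correct and follows essentially the same route as the paper: the forward direction is Theorem \ref{t-cc2}, and the converse combines the local formalization theorem of Section \ref{s-formalize} (yielding $\ID_m^i\vdash\LLI^{cf}(\Code{\Gamma\Rightarrow\Pi})$, a $\Sigma^0_1$ fact) with $1$-consistency to get partial cut elimination, upgraded to $\mathsf{CE}(\LIP_m)$ via Theorem \ref{t-partial2}. The bookkeeping points you flag (the $\Sigma^0_1$/$\Pi^0_2$ complexities and the uniform treatment of $m=0$) are exactly what the paper leaves implicit.
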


In the above theorem, $\ID_m^i$ can be replaced by $\ID_m$
by Theorem \ref{t-ci}. Also, $\LIP_m$ can be replaced by
its classical counterpart since our proof of
cut elimination works for classical systems with some minor changes.

\begin{rem}
It is not our original idea to 
combine a syntactic argument based on the $\Omega$-rule 
with a semantic argument 
to save one inductive definition. For instance,
Aehlig \cite{Aehlig05} employs Tait's computability predicate
defined on a provability predicate based on the $\Omega$-rule.
He works on the parameter-free, \emph{negative} fragments of
second order Heyting arithmetic without induction,
and proves partial cut elimination in the corresponding $\ID$-theories.
His result is comparable with ours, but our approach 
based on the MacNeille completion works 
for logical systems with the \emph{full} set of connectives 
(recall that second order definitions of positive connectives $\{\vee,\exists\}$
are not available in the parameter-free setting). Moreover,
it works for \emph{classical} logical systems too
(because the variety of Boolean algebras is closed under 
MacNeille completions).
\end{rem}

\section{Conclusion}

In this paper, we have brought 
the $\Omega$-rule technique originally developed in arithmetic 
into the logical setting, and studied it from 
an algebraic perspective. We have found an intimate connection with
the MacNeille completion (Theorem \ref{t-crucial}), 
that is important in two ways.
First, it provides an unexpected link between ordered algebra
and proof theory. Second, it inspires 
an algebraic form of the $\Omega$-rule, called the $\Omega$-interpretation,
that can be used to give 
an algebraic proof of cut elimination for $\LIP_m$ (with $m <\omega$).
As we have argued in Subsection \ref{ss-modification},
the essence of our approach could be summarized as
$$
\mbox{MacNeille completion} + \mbox{$\Omega$-interpretation}.
$$
This combination, together with some syntactic arguments, 
leads to a cut elimination proof 
which is locally formalizable in $\ID_m^i$.

An outcome is the Takeuti correspondence between $\ID$-theories
and parameter-free logics
(Theorem \ref{t-final}):
$$
\IS \vdash \mathsf{CE}(\LIP_m) \leftrightarrow 
\mathsf{1CON}(\ID_m^i).$$
This result should not be surprising 
for proof theorists at all,
although we do not find any work 
formally proving this in the literature 
(either for the intuitionistic or classical logic).

Our emphasis rather lies in the methodological aspect. The algebraic 
approach works fine not just for full second order logics
but also for their parameter-free fragments. Moreover, it works
uniformly both for the intuitionistic and classical logics because 
of a purely algebraic reason:
the variety of Heyting algebras and that of Boolean algebras
are both closed under MacNeille completions (Theorem \ref{t-limit}).

Our intuitionistic sequent calculus $\LIP_{<\omega}$ roughly 
corresponds to the classical calculus studied in \cite{Takeuti58}.
Hence what we have achieved in this paper
is to algebraically reformulate Takeuti's classical cut elimination 
theorem that accounts for the 1-consistency of $\PICA$
\cite{Takeuti58}.
Our hope is to expand this algebraic approach to 
more recent advanced results in proof theory, although we are 
not optimistic at all.

\section*{Acknowledgment}
The author is grateful to Ryota Akiyoshi for useful comments.

\appendix

\section{Proof of Theorem \ref{t-cc}}\label{a-proof1}

The purpose of this section is to prove Lemma \ref{l-cc3},
which is used in the proof of Theorem \ref{t-cc}.

First of all, recall that the axioms of $\PA$ (and $\HA$) consist of 
the equality axioms, $\forall xy.\, s(x)=s(y)\rightarrow x=y$,
$\forall x. s(x) \neq 0$,
the defining axioms for primitive recursive functions
as well as the induction axioms. 
All are $\Pi^0_1$ sentences except the last ones.


\begin{lem}\label{l-cc0}
Given a term $t$, let $\Nn_t(x) := \Nn(t(x))$.
$\LIP_0$ proves
$$
[\forall x \in{\Nn}.\, \Nn_t(x)\rightarrow\Nn_t(s(x))] \wedge
\Nn_t(0) \rightarrow  \forall y\in \Nn. \Nn_t(y).
$$
\end{lem}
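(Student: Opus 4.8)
The plan is to argue in the spirit of the proof of Lemma~\ref{l-ind}: feed the second order quantifier buried inside $\Nn(y)$ a suitably \emph{relativized} abstract and read off the conclusion. Note first that $\Nn_t(x)=\Nn(t(x))$ is \emph{not} a first order formula, so Lemma~\ref{l-ind} itself does not apply directly; this is exactly why a separate lemma is wanted. After a few propositional inferences together with one $(\forall x\Right)$ for a fresh variable $y$, the task reduces to deriving in $\LIP_0$
$$
[\forall x.\,\Nn(x)\rightarrow(\Nn_t(x)\rightarrow\Nn_t(s(x)))],\ \ \Nn_t(0),\ \ \Nn(y)\ \Rightarrow\ \Nn_t(y),
$$
after which one closes off with the obvious $\forall$- and $\rightarrow$-introductions and $\wedge$-left steps.

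The crucial move is to apply $(\forall X\Left)$ to $\Nn(y)=\forall X.\,Sub(X)\wedge Suc(X)\wedge X(0)\rightarrow X(y)$ with the abstract $\tau:=\lambda x.\,(\Nn_t(x)\wedge\Nn(x))$, which lies in $\ABS_0$ since $\Nn(u)=\forall X.\,\xi(X,u)$ with $\xi\in\Fm$ and $\FV(\xi)\subseteq\{X\}$. It then remains to establish the three conjuncts of the antecedent $Sub(\tau)\wedge Suc(\tau)\wedge\tau(0)$ from the hypotheses at hand. For $\tau(0)=\Nn_t(0)\wedge\Nn(0)$, use the hypothesis $\Nn_t(0)$ and the fact that $\Nn(0)$ is provable in $\LIP_0$. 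For $Suc(\tau)$, given $\Nn_t(x)\wedge\Nn(x)$, feed $\Nn(x)$ to the step hypothesis to get $\Nn_t(s(x))$, and use $Suc(\Nn)$ (a theorem of $\LIP_0$) to get $\Nn(s(x))$. For $Sub(\tau)$, given $u=v\wedge\Nn_t(u)\wedge\Nn(u)$, use $Sub(\Nn)$ (a theorem of $\LIP_0$) to get $\Nn(v)$, and the equality axioms, which yield $u=v\rightarrow t(u)=t(v)$ by induction on $t$, together with $Sub(\Nn)$ again, to get $\Nn(t(v))=\Nn_t(v)$. Plugging these into the instance of the induction principle supplied by $(\forall X\Left)$ (via $(\wedge\Right)$ and $(\rightarrow\Left)$) yields $\tau(y)=\Nn_t(y)\wedge\Nn(y)$, and $(\wedge\Left)$ extracts $\Nn_t(y)$.

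The main obstacle, and the reason for the exact shape of $\tau$, is the relativization. The naive abstract $\lambda x.\,\Nn_t(x)$ already fails at $Suc$, because the step hypothesis $\forall x.\,\Nn(x)\rightarrow(\Nn_t(x)\rightarrow\Nn_t(s(x)))$ is only available under the side condition $x\in\Nn$; conjoining $\Nn(x)$ reinstates precisely that side condition, at the price of having to re-verify $Sub$ and $Suc$ for the composite predicate. Both re-verifications succeed only because $\Nn$ ``carries its own $Sub$'', so that $\Nn(0)$, $Suc(\Nn)$ and $Sub(\Nn)$ are all provable in $\LIP_0$, and because $\lambda x.\,\Nn(t(x))$ inherits substitutivity from $\Nn$ through the equality axioms for the function symbols occurring in $t$. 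Everything else is routine sequent-calculus bookkeeping, entirely parallel to the proof of Lemma~\ref{l-ind}.
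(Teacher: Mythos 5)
Your proof is correct and is precisely the paper's intended argument: the paper gives no details for this lemma beyond ``this can be proved similarly to Lemma~\ref{l-ind},'' and your relativized abstract $\tau=\lambda x.(\Nn_t(x)\wedge\Nn(x))$ is the exact analogue of the $\tau=\lambda x.(\varphi^{\Nn}(x)\wedge\Nn(x))$ used there, with the same three verifications $Sub(\tau)$, $Suc(\tau)$, $\tau(0)$ followed by $(\forall X\Left)$. Your observation that $Sub(\tau)$ needs the congruence axioms $u=v\rightarrow t(u)=t(v)$ is accurate; the lemma's statement silently omits the $\Gamma_{eq}$ that Lemma~\ref{l-ind} carries, but this is harmless since the lemma is only invoked (in Lemma~\ref{l-cc1}) in a context already containing the $\Pi^0_1$ axioms of $\PA$.
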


This can be proved similarly to Lemma \ref{l-ind}.

Given a list $\vec{x} = x_1, \dots, x_n$
of variables, we denote 
the list $\Nn(x_1), \dots, \Nn(x_n)$ by $\Nn(\vec{x})$.
In the following lemmas, when we write $t(\vec{x})$ or $\varphi(\vec{x})$,
we assume that all free variavles in $t$ or $\varphi$ are 
included in $\vec{x}$.

\begin{lem}\label{l-cc1}
For every term $t(\vec{x})$ over $L_{\PA}$, $\LIP_0$ proves
$$
\Nn(\vec{x}), \Gamma \Rightarrow \Nn(t(\vec{x})),
$$
where $\Gamma$ consists of some
$\Pi^0_1$ axioms of $\PA$.
\end{lem}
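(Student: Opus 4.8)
The plan is to argue by induction on the structure of the term $t = t(\vec{x})$. If $t$ is a variable $x_i$, then $\Nn(t(\vec x)) = \Nn(x_i)$ occurs among the antecedents $\Nn(\vec x)$, so the sequent follows from an instance of $(\msf{id})$ by weakening. If $t = 0$, we invoke the provable sequent $\Rightarrow \Nn(0)$ and weaken in $\Nn(\vec x), \Gamma$. If $t = s(u)$ for a term $u(\vec x)$, the induction hypothesis gives $\Nn(\vec x), \Gamma \Rightarrow \Nn(u(\vec x))$, and cutting against (an instance of) $Suc(\Nn) = \forall x.\, \Nn(x) \rightarrow \Nn(s(x))$, which is provable in $\LIP_0$, yields $\Nn(\vec x), \Gamma \Rightarrow \Nn(s(u(\vec x)))$ by elementary first order reasoning.

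The essential case is $t = f(t_1(\vec x), \dots, t_k(\vec x))$ where $f$ is a primitive recursive function symbol. By the induction hypotheses on $t_1, \dots, t_k$, it suffices to prove $\Nn(y_1), \dots, \Nn(y_k), \Gamma \Rightarrow \Nn(f(y_1, \dots, y_k))$ for fresh variables $y_1, \dots, y_k$; the desired sequent then follows by $k$ cuts, using that $\FMP_0$ is closed under term substitution. To establish this, I would run a secondary induction on the way $f$ is generated as a primitive recursive function. The base functions (constant zero, successor, projections) are handled exactly as in the previous paragraph. For a composition $f = g(h_1, \dots, h_m)$, one chains the secondary induction hypotheses for $g$ and the $h_j$'s. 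For a primitive recursion, say $f(0, \vec z) = g(\vec z)$ and $f(s(x), \vec z) = h(x, f(x, \vec z), \vec z)$, one puts these defining equations (which are $\Pi^0_1$ sentences) into $\Gamma$ and applies the form of mathematical induction available in $\LIP_0$ --- concretely Lemma \ref{l-cc0} with the term $\lambda x.\, f(x, \vec z)$, or Lemma \ref{l-ind} directly --- to the formula $\Nn(f(x, \vec z))$ in the variable $x$. The base step uses the secondary induction hypothesis for $g$, the defining axiom $f(0, \vec z) = g(\vec z)$, and $Sub(\Nn)$ to transport $\Nn$ across that equality; the successor step uses the secondary induction hypothesis for $h$, the defining axiom for $f$ at $s(x)$, and $Sub(\Nn)$ again.

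The hard part will be the primitive recursion case, and there the difficulty is purely one of bookkeeping rather than of idea. Two points need care: first, making sure the induction formula and all the equality-transport steps stay within $\FMP_0$ --- this is fine, since $\Nn(r(x))$ lies in $\FMP_0$ for any term $r$ and $\FMP_0$ is closed under substitution; second, controlling $\Gamma$, so that only finitely many $\Pi^0_1$ sentences are needed. Here one observes that each primitive recursive function occurring in $t$ contributes just its finitely many defining axioms, and that transporting $\Nn$ along equalities requires only $Sub(\Nn)$, which is provable in $\LIP_0$ with no extra axioms. Apart from this, the whole argument is routine first order sequent manipulation, entirely parallel to the proofs of Lemmas \ref{l-ind} and \ref{l-cc0}.
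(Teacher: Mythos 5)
Your proposal is correct and takes essentially the same route as the paper: induction on the term, reduction to the sequent $\Nn(\vec{y}),\Gamma\Rightarrow \Nn(f(\vec{y}))$ for each primitive recursive function symbol $f$, a secondary induction on how $f$ is generated, and, in the primitive recursion case, placing the ($\Pi^0_1$) defining axioms of $f$ into $\Gamma$ and applying Lemma \ref{l-cc0} together with $Sub(\Nn)$ to transport $\Nn$ across the defining equalities. The paper only writes out a simplified instance of the recursion case, so your more explicit treatment of composition, projections, and parameters is a faithful filling-in of the same argument.
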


\begin{proof}
We have already seen that $\LIP_0$ proves
$\Nn(0)$ and $\Nn(x)\Rightarrow \Nn(s(x))$. Hence
it is sufficient to show that 
$\LIP_0 \vdash \Nn(\vec{x}) \Rightarrow \Nn(f(\vec{x}))$ 
for each symbol $f$ for a primitive recursive function. 
We only consider a simplified case, where
$f$ is 
defined from constant $c$ and binary function $h$ by the sentence:
$$
\boldsymbol{Def}(f) \Def f(0)=c \wedge \forall x. f(s(x))=h(x, f(x)).
$$
We further assume that the claim has been proved for $c$ and $h$. 
That is, $\LIP_0$ proves
$$
\Gamma \Rightarrow \Nn(c), \qquad  \Nn(x), \Nn(y),\Gamma\Rightarrow \Nn(h(x,y)).
$$
From the former, we obtain
$$\Gamma, \boldsymbol{Def}(f)\Rightarrow \Nn(f(0))$$
by using $Sub(\Nn)$.
From the latter, we obtain
$\Gamma, \boldsymbol{Def}(f), \Nn(x), \Nn(f(x)) \Rightarrow \Nn(f(s(x)))$, so
$$\Gamma, \boldsymbol{Def}(f) \Rightarrow \forall x\in \Nn.\, 
\Nn(f(x)) \rightarrow \Nn(f(s(x))).$$ 
Hence by Lemma \ref{l-cc0}
we obtain 
$\Gamma, \boldsymbol{Def}(f) \Rightarrow \forall y\in\Nn. \Nn(f(y))$.
Therefore,
$$\LIP_0 \vdash \Nn(y), \Gamma, \boldsymbol{Def}(f) \Rightarrow \Nn(f(y))$$
 as required.
\end{proof}

Once Lemma \ref{l-cc1} has been proved, it is routine 
to prove the relativization lemma below.

\begin{lem}[Relativization]\label{l-cc2}
$\LI\vdash \Gamma\Rightarrow\Pi$ implies 
$\LIP_0 \vdash \Nn(\vec{x}), \Gamma^{\Nn} \Rightarrow \Pi^{\Nn}$,
where $\Fv(\Gamma,\Pi) \subseteq \{\vec{x}\}$.
\end{lem}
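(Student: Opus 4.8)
The plan is to argue by induction on the given $\LI$-derivation of $\Gamma\Rightarrow\Pi$. The guiding observation is that the relativization $(\cdot)^{\Nn}$ commutes with all propositional connectives and with $\bot$ — i.e.\ $(\varphi\star\psi)^{\Nn}=\varphi^{\Nn}\star\psi^{\Nn}$ for $\star\in\{\wedge,\vee,\rightarrow\}$, $\bot^{\Nn}=\bot$ — and leaves every atomic formula (including $X(t)$) unchanged. Hence for the axiom $(\mathsf{id})$, for $(\Cut)$, for the $\bot$-rules and for every propositional rule of $\LI$ one obtains the required $\LIP_0$-derivation by applying the \emph{same} rule to the relativized premises supplied by the induction hypothesis, weakening in the extra side formulas $\Nn(\vec{x})$ where needed (harmless, since each such rule tolerates additional antecedent formulas and sequents are sets, so contraction is implicit). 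Thus the whole content of the lemma lies in the four first-order quantifier rules.

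For $(\forall x\Right)$ and $(\exists x\Left)$ the eigenvariable $y$ is chosen fresh, so $y\notin\Fv(\Gamma,\Pi)$ and in particular $y$ does not occur free in $\Nn(\vec{x})$ or in $\Gamma^{\Nn}$. Applying the induction hypothesis to the premise — whose relevant free variables now include $y$ — yields in the universal case a derivation of $\Nn(\vec{x}),\Nn(y),\Gamma^{\Nn}\Rightarrow\varphi^{\Nn}(y)$, and in the existential case a derivation of $\Nn(\vec{x}),\Nn(y),\varphi^{\Nn}(y),\Gamma^{\Nn}\Rightarrow\Pi^{\Nn}$. One then repackages: $(\rightarrow\Right)$ followed by $(\forall x\Right)$ produces $\Nn(\vec{x}),\Gamma^{\Nn}\Rightarrow\forall y.(\Nn(y)\rightarrow\varphi^{\Nn}(y))$, which is literally $\Nn(\vec{x}),\Gamma^{\Nn}\Rightarrow(\forall y.\varphi(y))^{\Nn}$; dually, $(\wedge\Left)$ (used twice, to fuse $\Nn(y)$ with $\varphi^{\Nn}(y)$) followed by $(\exists x\Left)$ produces $\Nn(\vec{x}),(\exists x.\varphi(x))^{\Nn},\Gamma^{\Nn}\Rightarrow\Pi^{\Nn}$, the eigenvariable condition holding because $y$ is fresh.

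For $(\forall x\Left)$ and $(\exists x\Right)$ the witness is a term $t=t(\vec{x})$, and before we can instantiate the relativized quantifier we need $\Nn(t(\vec{x}))$. This is exactly Lemma \ref{l-cc1}: $\LIP_0\vdash\Nn(\vec{x}),\Theta\Rightarrow\Nn(t(\vec{x}))$ for a suitable finite set $\Theta$ of $\Pi^0_1$ axioms of $\PA$ — which is why these axioms appear in the conclusion, and tacitly the statement of the lemma should read $\LIP_0\vdash\Nn(\vec{x}),\Theta,\Gamma^{\Nn}\Rightarrow\Pi^{\Nn}$ (or one regards $\Theta$ as already part of $\Gamma$, the way it is used in the proof of Theorem \ref{t-cc}). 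In the universal case, from the induction hypothesis $\Nn(\vec{x}),\varphi^{\Nn}(t),\Gamma^{\Nn}\Rightarrow\Pi^{\Nn}$ one applies $(\forall x\Left)$ to $(\forall x.\varphi)^{\Nn}=\forall x.(\Nn(x)\rightarrow\varphi^{\Nn}(x))$ to bring $\Nn(t)\rightarrow\varphi^{\Nn}(t)$ into the antecedent, and then discharges it against $\Nn(\vec{x}),\Theta\Rightarrow\Nn(t)$ via $(\rightarrow\Left)$. In the existential case, from $\Nn(\vec{x}),\Gamma^{\Nn}\Rightarrow\varphi^{\Nn}(t)$ and $\Nn(\vec{x}),\Theta\Rightarrow\Nn(t)$ one forms $\Nn(t)\wedge\varphi^{\Nn}(t)$ on the right by $(\wedge\Right)$ and concludes by $(\exists x\Right)$, noting $(\exists x.\varphi)^{\Nn}=\exists x.(\Nn(x)\wedge\varphi^{\Nn}(x))$.

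I do not expect a genuine obstacle: as the text says, once Lemma \ref{l-cc1} is in place the argument is routine, and the construction is plainly primitive recursive in the input $\LI$-derivation, as needed for the application in Theorem \ref{t-cc}. The only points requiring care are bookkeeping ones — ensuring the fresh eigenvariable of $(\forall x\Right)$/$(\exists x\Left)$ is genuinely absent from $\Nn(\vec{x})$ and $\Gamma^{\Nn}$ so that relativization does not disturb the eigenvariable condition, and carrying the $\Pi^0_1$ axioms $\Theta$ of Lemma \ref{l-cc1} through the two witness-term cases — and both are handled uniformly.
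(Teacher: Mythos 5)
Your proposal is correct and is exactly the routine induction on the $\LI$-derivation that the paper leaves to the reader, with the only substantive cases being the witness-term rules $(\forall x\Left)$ and $(\exists x\Right)$ handled via Lemma \ref{l-cc1}. Your observation that the $\Pi^0_1$ axioms $\Theta$ must then be carried in the antecedent (so that the statement should be read as $\LIP_0 \vdash \Nn(\vec{x}),\Theta,\Gamma^{\Nn}\Rightarrow\Pi^{\Nn}$, or $\Theta$ absorbed into $\Gamma$ as in the application in Lemma \ref{l-cc3}) is a correct and necessary reading of the lemma as stated.
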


Now let us put things together.

\begin{lem}\label{l-cc3}
If $\HA$ proves a $\Sigma^0_1$ sentence $\varphi$,
then $\LIP_0$ proves $\Gamma \Rightarrow \varphi$ 
where $\Gamma$ consists of some $\Pi^0_1$ axioms of $\PA$.
\end{lem}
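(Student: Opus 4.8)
The plan is to reduce everything to the relativization lemma (Lemma~\ref{l-cc2}) together with the availability of induction in $\LIP_0$ (Lemma~\ref{l-ind}). First I would unwind the hypothesis $\HA \vdash \varphi$: since $\HA$ is the first order intuitionistic predicate calculus augmented with the axioms recalled above, a derivation of $\varphi$ uses only finitely many axiom instances, so $\LI \vdash \Gamma_0 \Rightarrow \varphi$ for some finite set $\Gamma_0$ of $\HA$-axioms. Split $\Gamma_0 = \Gamma_1 \cup \Gamma_2$, where $\Gamma_1$ collects the $\Pi^0_1$ axioms (equality axioms, $s(x)=s(y)\rightarrow x=y$, $s(x)\neq 0$, the defining equations for primitive recursive functions) and $\Gamma_2$ collects the induction instances actually used. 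Applying Lemma~\ref{l-cc2} (with no free variables, since $\varphi$ and the axioms are sentences) yields $\LIP_0 \vdash \Gamma_0^{\Nn} \Rightarrow \varphi^{\Nn}$.

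Next I would remove the relativized antecedent $\Gamma_0^{\Nn}$. For $\psi \in \Gamma_1$ the matrix of $\psi$ is quantifier-free, so $\psi^{\Nn}$ is merely $\psi$ with its universal quantifiers restricted to $\Nn$; hence $\LI \vdash \psi \Rightarrow \psi^{\Nn}$, a fortiori $\LIP_0 \vdash \psi \Rightarrow \psi^{\Nn}$. For $\psi \in \Gamma_2$, $\psi^{\Nn}$ is, up to trivial first order rearrangement, an instance of the relativized induction schema, which $\LIP_0$ proves from a set $\Gamma_{eq}$ of equality axioms by Lemma~\ref{l-ind}. Cutting these against $\LIP_0 \vdash \Gamma_0^{\Nn} \Rightarrow \varphi^{\Nn}$ gives $\LIP_0 \vdash \Gamma \Rightarrow \varphi^{\Nn}$, where $\Gamma := \Gamma_1 \cup \Gamma_{eq}$ consists of $\Pi^0_1$ axioms of $\PA$ only; one must check here that no induction instance has crept back in through $\Gamma_{eq}$.

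It remains to pass from $\varphi^{\Nn}$ back to $\varphi$ in the succedent. Since $\varphi$ is $\Sigma^0_1$, write $\varphi = \exists y.\,\theta(y)$ with $\theta$ decidable; because the language contains a symbol for every primitive recursive function, one may take $\theta$ to be an equation $t(y)=0$, the required equivalence being provable already in $\LI$ from a finite subset of the defining equations in $\Gamma$. Then $\varphi^{\Nn} = \exists y.\,\Nn(y)\wedge t(y)=0$, and dropping the conjunct $\Nn(y)$ yields $\LIP_0 \vdash \varphi^{\Nn}\Rightarrow\varphi$; cutting, $\LIP_0 \vdash \Gamma\Rightarrow\varphi$ as desired. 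I expect this last step to be the only place needing genuine care: everything upstream is bookkeeping plus Lemmas~\ref{l-cc1}, \ref{l-ind} and \ref{l-cc2}, whereas here one undoes the $\Nn$-relativization of the matrix. If instead of coding the bounded quantifiers away one keeps $\theta$ as a genuine $\Delta_0$ formula, this uses that $\LIP_0$ proves $\Nn$ closed under primitive recursive functions (Lemma~\ref{l-cc1}) and downward closed under $\le$, the latter being yet another instance of the induction principle of Lemma~\ref{l-ind}. Modulo these routine points the proof is complete.
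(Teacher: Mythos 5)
Your proposal is correct and follows essentially the same route as the paper's proof: unwind the $\HA$-derivation to a first order sequent, relativize via Lemma~\ref{l-cc2}, discharge the relativized induction instances by Lemma~\ref{l-ind}, and use $\psi\Rightarrow\psi^{\Nn}$ for the $\Pi^0_1$ axioms and $\varphi^{\Nn}\Rightarrow\varphi$ for the $\Sigma^0_1$ conclusion. The extra care you take over the final step (normalizing the $\Sigma^0_1$ matrix to an equation $t(y)=0$) is a detail the paper glosses over, but it does not change the argument.
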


\begin{proof}
If $\HA$ proves $\varphi$,
$\LI$ proves
$\Gamma, \Delta \Rightarrow \varphi$
where $\Gamma$ consists of $\Pi^0_1$ axioms and 
$\Delta$ of induction axioms.
By Lemmas \ref{l-ind} and \ref{l-cc2}, we obtain
$\LIP_0 \vdash \Gamma^{\Nn} \Rightarrow \varphi^{\Nn}$.
Since each $\psi\in \Gamma$ is $\Pi^0_1$ and $\varphi$ is $\Sigma^0_1$,
we have
$
\psi \Rightarrow \psi^{\Nn}$ and $\varphi^{\Nn} \Rightarrow \varphi$,
so that we finally obtain
$\LIP_0 \vdash \Gamma \Rightarrow\varphi$.
\end{proof}


\section{Proof of Lemma \ref{l-cut}}\label{a-proof2}

First, we define the \emph{rank} of 
each formula $\varphi \in \FMP_n$, denoted by 
$\Rk (\varphi)$, as follows:
\begin{itemize}
\item $\Rk(\bot) = \Rk(X(t)) = \Rk(p(\Vec{t})) = \Rk(\forall X.\xi) 
= \Rk(\exists X.\xi) := 0$,
\item $\Rk(\varphi \star \psi) := 
\max \{\Rk(\varphi), \Rk(\psi)\} +1$ ($\star \in \{\wedge,\vee,\rightarrow\}$),
\item $\Rk(\forall x. \varphi) = \Rk(\exists x. \varphi) := \Rk(\varphi)+1$.
\end{itemize}

Given an ordinal $\alpha \leq \omega$,
we write $\vdash_\alpha \Gamma\Rightarrow\Pi$
if $\Gamma\Rightarrow\Pi$ has a derivation in $\LIO_n$
in which all cut formulas 
are of rank strictly less than $\alpha$.
Thus $\vdash_\omega \Gamma\Rightarrow\Pi$ means 
$\LIO_n \vdash \Gamma\Rightarrow\Pi$, and 
$\vdash_0 \Gamma\Rightarrow\Pi$ means 
$\LIO_n \vdash^{cf} \Gamma\Rightarrow\Pi$.

\begin{lem}\label{l-cut0}
Let $m<\omega$. 
Suppose that $\Rk(\varphi) \leq m$, 
$\vdash_m \varphi,\Gamma\Rightarrow\Pi$ and 
$\vdash_m \Gamma\Rightarrow\varphi$, where
in the derivation of $\Gamma\Rightarrow\varphi$
the main formula of 
the last inference step is the indicated $\varphi$.
Then
$\vdash_m \Gamma\Rightarrow \Pi$.
\end{lem}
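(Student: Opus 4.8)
The plan is to prove Lemma~\ref{l-cut0} by induction on the (well-founded, possibly infinitely branching) derivation witnessing $\vdash_m \varphi,\Gamma\Rightarrow\Pi$, carrying the fixed derivation witnessing $\vdash_m \Gamma\Rightarrow\varphi$ along essentially unchanged. It is convenient to first prove the slightly more general statement in which the two contexts need not coincide — from $\vdash_m \varphi,\Delta\Rightarrow\Pi$ and $\vdash_m \Gamma\Rightarrow\varphi$ (the latter right-principal) with $\Rk(\varphi)\leq m$, conclude $\vdash_m \Gamma,\Delta\Rightarrow\Pi$ — since then all commutation cases become transparent: a left inference acting on the $\Delta$-part or on the succedent simply leaves the passive context $\Gamma$ untouched, and the stated lemma is the instance $\Delta=\Gamma$. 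The only structural prerequisites are the admissibility of left- and right-weakening and a cut-rank-preserving eigenvariable-renaming lemma (a routine refinement of Lemma~\ref{l-subst}), both standard.

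For the induction, inspect the last inference of the derivation of $\varphi,\Delta\Rightarrow\Pi$. If it is $(\mathsf{id})$ or $(\bot\Left)$, then $\Gamma,\Delta\Rightarrow\Pi$ is immediate (as an axiom, or — when the identity is on $\varphi$ — from the right derivation by weakening). If $\varphi$ is not the principal formula of the last inference, we apply the induction hypothesis to each premise (modulo eigenvariable renaming for $(\forall X\Right)$, $(\omega\Right)$, $(\omega\Left)$, $(\exists X\Left)$) and reapply the same rule; for $(\Omega_k\Left)$, $(\Omega_k\Right)$, $(\tilde\Omega_k\Left)$, $(\tilde\Omega_k\Right)$ this is the same move carried out on the whole (infinite) premise family, and the reapplied rule has precisely the required family because the index sets $|\forall X.\psi|_k$, $|\exists X.\psi|_k$ depend only on $\psi$ and $k$, not on the passive context. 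If $\varphi$ is principal and is a propositional combination $\psi_1\star\psi_2$ or a first-order quantification $Qx.\psi$, the right derivation ends with the matching rule, and we replace the cut on $\varphi$ by one or two cuts whose cut formulas lie among $\psi_1,\psi_2,\psi(t)$, each of rank $\leq\Rk(\varphi)-1<m$, hence legitimate in $\vdash_m$.

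The crucial case is $\varphi=\forall X.\psi$ (so $\Rk(\varphi)=0$, forcing $m=0$): the derivation of $\varphi,\Delta\Rightarrow\Pi$ ends with $(\Omega_k\Left)$ on $\varphi$, with premise family $\{\,\Sigma,\Delta\Rightarrow\Pi\,\}_{\Sigma\in|\forall X.\psi|_k}$, while the right derivation ends with $(\forall X\Right)$ from $\Gamma\Rightarrow\psi(Y)$ with $Y\notin\FV(\Gamma)$; but these are exactly the hypotheses of $(\tilde\Omega_k\Left)$, which delivers $\Gamma,\Delta\Rightarrow\Pi$ \emph{without introducing any cut at all}. Dually, $\varphi=\exists X.\psi$ is handled by $(\tilde\Omega_k\Right)$, combining the premise of $(\exists X\Left)$ on the left with the premise family of $(\Omega_k\Right)$ on the right. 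This is the whole point: a naive principal reduction would produce a cut on $\psi(Y)$ of unbounded rank, breaking the induction, and absorbing it into $(\tilde\Omega_k\Left)$/$(\tilde\Omega_k\Right)$ is precisely why those ``superfluous'' rules were built into $\LIO_n$ (Figure~\ref{f-ruleLIO}). I do not expect a single hard case; the real work is the uniform bookkeeping across the second-order levels $k=0,\dots,n$ and across the infinitely branching $\omega$- and $\Omega$-rules, and I would organize it so that it visibly proceeds by structural induction on the inductively-defined cut-free provability predicate, as required for the local formalization in Section~\ref{s-formalize}.
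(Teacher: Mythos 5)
Your proposal is correct and takes essentially the same route as the paper: structural induction on the derivation of $\varphi,\Gamma\Rightarrow\Pi$ with the right-principal derivation of $\Gamma\Rightarrow\varphi$ held fixed, commutation past non-principal inferences, replacement of principal propositional and first-order cuts by cuts of strictly smaller rank, and absorption of the principal $\forall X$/$\exists X$ cases into $(\tilde{\Omega}_k\Left)$ and $(\tilde{\Omega}_k\Right)$ with no new cut. One parenthetical slip: $\Rk(\forall X.\psi)=0\leq m$ does not force $m=0$ (the rank hypothesis is simply vacuous there), but nothing in your argument depends on this, since the $(\tilde{\Omega}_k\Left)$ absorption works for every $m$.
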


\begin{proof}
Let $\pi_l$ be the derivation of $\Gamma\Rightarrow\varphi$
and $\pi_r$ that of $\varphi,\Gamma\Rightarrow\Pi$. We argue
by structural induction on $\pi_r$.
Let us only verify a few cases.\\

\noindent
(1) The main formula of the last inference of $\pi_r$ is not $\varphi$.
In this case, the claim follows immediately from the induction hypothesis.\\

\noindent
(2) $\pi_l$ and $\pi_r$ respectively end with
$$
\infer[(\rightarrow \Right)]{\Gamma \Rightarrow \varphi_1\rightarrow\varphi_2}{
\Gamma, \varphi_1 \Rightarrow \varphi_2},
\qquad
\infer[(\rightarrow \Left)]{
\varphi_1 \rightarrow \varphi_2, \Gamma\Rightarrow\Pi}{
\varphi_1 \rightarrow \varphi_2, \Gamma\Rightarrow\varphi_1 & 
\varphi_2,\varphi_1 \rightarrow \varphi_2, \Gamma\Rightarrow\Pi},
$$
where we assume that the upper sequents of rule $(\rightarrow\Left)$ 
contain $\varphi_1 \rightarrow \varphi_2$ in the antecedent.
The argument would be simpler if the formula is absent.

By the induction hypothesis, we have
$\vdash_m \Gamma \Rightarrow \varphi_1$
and 
$\vdash_m \varphi_2, \Gamma\Rightarrow\Pi$. Hence 
applying $(\Cut)$ on $\varphi_1$ and $\varphi_2$, we obtain
$\vdash_m \Gamma\Rightarrow \Pi$, noting that 
$\Rk(\varphi_i) < \Rk(\varphi_1 \rightarrow\varphi_2)\leq m$ for $i=1,2$.\\

\noindent
(3) $\pi_l$ and $\pi_r$ respectively end with
$$
\infer[(\omega\Right)]{\Gamma\Rightarrow\forall x.\varphi}{
\{\;\Gamma\Rightarrow \varphi(t)\;\}_{t\in \Tm}},
\qquad
\infer[(\forall x\Left)]{\forall x.\varphi, \Gamma\Rightarrow\Pi}{
\varphi(t),\forall x.\varphi, \Gamma\Rightarrow\Pi}.
$$
By the induction hypothesis, we have 
$\vdash_m 
\varphi(t), \Gamma\Rightarrow\Pi$.
Hence applying $(\Cut)$ on $\varphi(t)$, we obtain 
$\vdash_m \Gamma\Rightarrow\Pi$,
noting that $\Rk(\varphi(t)) < m$.\\

\noindent
(4) $\pi_l$ and $\pi_r$ respectively end with
$$
\infer[(\forall X\Right)]{\Gamma\Rightarrow
\forall X.\varphi}{\Gamma\Rightarrow \varphi(Y)}
\qquad
\infer[(\Omega_k\Left)]{
\forall X.\varphi, \Gamma \Rightarrow \Pi}{
\{\;\Delta,\forall X.\varphi, \Gamma\Rightarrow\Pi\;\}_{\Delta \in |\forall X.\xi|_k}}
$$
By the induction hypothesis, we have
$\vdash_m 
\Delta, \Gamma\Rightarrow\Pi$ for 
every $\Delta \in |\forall X.\xi|_k$. Hence we may apply
$$
\infer[(\tilde{\Omega}_k\Left)]{
\Gamma\Rightarrow\Pi}{
\Gamma\Rightarrow \varphi(Y) &
\{\;\Delta,\Gamma\Rightarrow\Pi\;\}_{\Delta \in |\forall X.\xi|_k}}
$$
to conclude 
$\vdash_m \Gamma\Rightarrow\Pi$.
\end{proof}

\begin{lem}\label{l-cut1}
Let $m<\omega$. 
Suppose that $\Rk(\varphi) \leq m$, 
$\vdash_m \varphi,\Gamma\Rightarrow\Pi$ and 
$\vdash_m \Gamma\Rightarrow\varphi$.
Then
$\vdash_m \Gamma\Rightarrow \Pi$.
\end{lem}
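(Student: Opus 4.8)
The plan is to prove Lemma~\ref{l-cut1} by induction on the height of the derivation $\pi_l$ of $\Gamma\Rightarrow\varphi$, reducing the case in which $\varphi$ is the main formula of the last inference of $\pi_l$ to Lemma~\ref{l-cut0}, and commuting the cut on $\varphi$ upward past that inference in all remaining cases. I would use throughout that weakening is admissible in $\LIO_n$ without raising cut-ranks, and that antecedents are \emph{sets}, so that two copies of a formula collapse into one --- this is what silently plays the role of contraction.

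First I would dispose of the trivial cases. If $\pi_l$ ends with $(\mathsf{id})$ then $\varphi\in\Gamma$, so $\varphi,\Gamma\Rightarrow\Pi$ \emph{is} the sequent $\Gamma\Rightarrow\Pi$ and the second hypothesis already supplies $\vdash_m\Gamma\Rightarrow\Pi$; if $\pi_l$ ends with $(\bot\Left)$ then $\bot\in\Gamma$ and $\Gamma\Rightarrow\Pi$ is itself an instance of $(\bot\Left)$. Next the principal case: since the succedent of $\Gamma\Rightarrow\varphi$ is the single formula $\varphi$, if the last inference of $\pi_l$ is a right rule --- one of $(\bot\Right)$, $(\omega\Right)$, $(\exists x\Right)$, $(\forall X\Right)$, $(\Omega_k\Right)$ --- then its main formula is necessarily $\varphi$, and Lemma~\ref{l-cut0} applies directly.

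In every remaining case the last inference $r$ of $\pi_l$ acts on a formula of the antecedent (a left rule, or $(\Cut)$, or one of $(\tilde{\Omega}_k\Left)$, $(\tilde{\Omega}_k\Right)$), and $\varphi$ is not its main formula. Here I would push the cut into the premises of $r$: after weakening each premise so that it still contains the active formula of $r$ in its antecedent, every premise antecedent contains $\Gamma$, so $\pi_r$ can be weakened to match; applying the induction hypothesis to each premise whose succedent is $\varphi$ then removes $\varphi$ from it while keeping the cut-rank below $m$, premises not carrying $\varphi$ (such as the left premise $\Gamma\Rightarrow\chi(Y)$ of $(\tilde{\Omega}_k\Left)$) are carried over verbatim, and reapplying $r$ produces $\vdash_m\Gamma\Rightarrow\Pi$ (the active formula of $r$ reappears inside the set $\Gamma$, so no explicit contraction is needed). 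When $r$ is itself a $(\Cut)$ on some $\psi$ with $\Rk(\psi)<m$, the induction hypothesis applied to its right premise gives $\vdash_m\psi,\Gamma\Rightarrow\Pi$, and one further cut of $\psi$ against the left premise $\Gamma\Rightarrow\psi$ of $r$ gives $\vdash_m\Gamma\Rightarrow\Pi$, the new cut still having rank $<m$. The eigenvariable conditions on $(\exists X\Left)$, $(\tilde{\Omega}_k\Left)$, $(\tilde{\Omega}_k\Right)$ are harmless: the eigenvariable does not occur in the conclusion of $r$, so it can be renamed to avoid $\FV(\Pi)$ before commuting.

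I expect no real obstacle. The genuinely awkward step of cut elimination for the $\Omega$-rule --- a principal cut between $(\forall X\Right)$ and $(\Omega_k\Left)$, and dually between $(\Omega_k\Right)$ and $(\exists X\Left)$ --- has already been absorbed into Lemma~\ref{l-cut0} through the auxiliary rules $(\tilde{\Omega}_k\Left)$ and $(\tilde{\Omega}_k\Right)$, so the present lemma consists only of the commutative cases. The sole points deserving mild care are that the induction hypothesis must be applied uniformly across the (possibly infinitely many) premises of the infinitary rules $(\omega\Left)$, $(\Omega_k\Left)$, $(\tilde{\Omega}_k\Left)$, $(\tilde{\Omega}_k\Right)$, and the eigenvariable renaming just mentioned --- both entirely routine.
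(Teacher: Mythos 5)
Your proof is correct and follows essentially the same route as the paper: structural induction on the derivation of $\Gamma\Rightarrow\varphi$, delegating the case where $\varphi$ is the main formula of the last inference to Lemma~\ref{l-cut0}, and commuting the cut into the premises in all remaining cases (using that antecedents are sets and that weakening is rank-preserving admissible). Your treatment is in fact somewhat more explicit than the paper's about the terminal-$(\Cut)$ case and the eigenvariable renaming, but these are the same routine verifications the paper leaves implicit.
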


\begin{proof}
By structural induction on the derivation of 
$\Gamma\Rightarrow\varphi$. 
If the main formula of the last inference is $\varphi$,
it follows from Lemma \ref{l-cut0}.
Suppose otherwise. For instance, suppose that 
the derivation ends with
$$
\infer[(\Omega_k\Left)]{
\forall X.\xi, \Gamma \Rightarrow \varphi}{
\{\;\Delta,\Gamma\Rightarrow\varphi\;\}_{\Delta \in |\forall X.\xi|_k}}.
$$
By the induction hypothesis, we have
$
\vdash_m \Delta, \Gamma\Rightarrow\Pi$ for every 
$\Delta \in |\forall X.\xi|$. Hence
$\vdash_m
\forall X.\xi, \Gamma\Rightarrow\Pi$ by rule 
$(\Omega_k\Left)$.
\end{proof}

\begin{lem}\label{l-cut2}
Let $m < \omega$. If 
$\vdash_{m+1} \Gamma\Rightarrow \Pi$, then 
$\vdash_{m} \Gamma\Rightarrow \Pi$.
\end{lem}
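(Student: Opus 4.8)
The plan is to prove this by induction on the well-founded derivation $\pi$ witnessing $\vdash_{m+1}\Gamma\Rightarrow\Pi$. Every cut formula occurring in $\pi$ has rank $\le m$, and the goal is to produce a derivation of $\Gamma\Rightarrow\Pi$ in which all cut formulas have rank $<m$; thus the real work is confined to cuts whose cut formula has rank exactly $m$, and it is precisely these that Lemma \ref{l-cut1} was set up to absorb.

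First I would treat the case in which the last inference of $\pi$ is not a cut. Then one applies the induction hypothesis to each premise — for the infinitary rules $(\omega\Right)$, $(\omega\Left)$, $(\Omega_k\Left)$, $(\Omega_k\Right)$ and $(\tilde{\Omega}_k\Left)$, $(\tilde{\Omega}_k\Right)$ this means invoking it simultaneously for all (possibly infinitely many) premises, which is legitimate because $\pi$ is a well-founded tree — and then reapplies the same rule. No cut formula is introduced or removed and the contexts are unchanged, so all eigenvariable and index-set side conditions are preserved and the result is a $\vdash_m$-derivation of $\Gamma\Rightarrow\Pi$.

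Second, suppose the last inference of $\pi$ is a cut,
$$
\infer[(\Cut)]{\Gamma\Rightarrow\Pi}{\Gamma\Rightarrow\varphi & \varphi,\Gamma\Rightarrow\Pi}
$$
with $\Rk(\varphi)\le m$. Applying the induction hypothesis to the two sub-derivations yields $\vdash_m\Gamma\Rightarrow\varphi$ and $\vdash_m\varphi,\Gamma\Rightarrow\Pi$. Since $\Rk(\varphi)\le m$, Lemma \ref{l-cut1} applies verbatim and delivers $\vdash_m\Gamma\Rightarrow\Pi$, which closes the induction. (When $\Rk(\varphi)<m$ one could instead simply reapply $(\Cut)$, so that only the case $\Rk(\varphi)=m$ genuinely needs Lemma \ref{l-cut1}; but the uniform formulation of that lemma makes the case split unnecessary.)

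The argument is essentially routine once Lemmas \ref{l-cut0} and \ref{l-cut1} are available; there is no serious obstacle, only two points that must be handled with care. One is that the induction is genuinely an induction on the well-founded proof tree, so that it covers the infinitely branching $\omega$- and $\Omega$-rules correctly. The other is that Lemma \ref{l-cut1} is stated precisely so as to consume exactly the two $\vdash_m$-derivations together with a cut formula of rank $\le m$ — which is just what the induction hypothesis produces at a cut node — so no intermediate rank bookkeeping is required. This lemma is the inductive core of Lemma \ref{l-cut}: starting from a $\LIO$-derivation whose cut formulas have bounded rank, repeated application lowers that bound step by step until the derivation is cut-free.
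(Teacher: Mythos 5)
Your proof is correct and follows essentially the same route as the paper: structural induction on the derivation, passing the induction hypothesis through every non-cut rule (including the infinitary ones) and discharging a terminal cut of rank $\leq m$ via Lemma \ref{l-cut1}. The paper's own proof merely displays the cut case and leaves the routine cases implicit, so your version is just a more explicit rendering of the same argument.
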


\begin{proof}
By structural induction on the derivation.
Suppose that it ends with
$$
\infer[(\Cut)]{\Gamma\Rightarrow\Pi}{
\Gamma\Rightarrow\varphi & \varphi,\Gamma\Rightarrow\Pi}.
$$
By the induction hypothesis, we have
$\vdash_m \Gamma\Rightarrow\varphi$ and 
$\vdash_m \varphi,\Gamma\Rightarrow\Pi$. Moreover
$\Rk(\varphi)\leq m$. Hence
$\vdash_m \Gamma\Rightarrow\Pi$ by Lemma \ref{l-cut1}.
\end{proof}

\begin{lem}\label{l-cut3}
If
$\vdash_{\omega} \Gamma\Rightarrow \Pi$, then 
$\vdash_{0} \Gamma\Rightarrow \Pi$.
\end{lem}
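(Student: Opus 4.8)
The plan is to prove Lemma \ref{l-cut3} by induction on the given derivation $\pi$ of $\Gamma\Rightarrow\Pi$ in $\LIO_n$ — which is a well-founded, though possibly infinitely branching, tree — showing directly that $\vdash_0\Gamma\Rightarrow\Pi$. There are two cases according to the last inference of $\pi$: either it is a rule other than $(\Cut)$, or it is $(\Cut)$.

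In the first case, let the last rule have premises $\{\Gamma_i\Rightarrow\Pi_i\}_{i\in I}$, where $I$ may be infinite (as for $(\omega\Right)$, $(\omega\Left)$, $(\Omega_k\Left)$, $(\Omega_k\Right)$, $(\tilde{\Omega}_k\Left)$, $(\tilde{\Omega}_k\Right)$). The induction hypothesis gives $\vdash_0\Gamma_i\Rightarrow\Pi_i$ for every $i\in I$; since the rule is not $(\Cut)$, reapplying it to this family of cut-free sub-derivations produces a cut-free derivation of $\Gamma\Rightarrow\Pi$. Nothing changes in the side conditions or in the index sets $|\forall X.\varphi|_k$ and $|\exists X.\varphi|_k$, as these are defined through cut-free provability in the lower calculi $\LIO_{k-1}$ and are independent of the cuts occurring in $\pi$.

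In the second case, $\pi$ ends in a cut on some formula $\varphi$, with premises $\Gamma\Rightarrow\varphi$ and $\varphi,\Gamma\Rightarrow\Pi$; put $m:=\Rk(\varphi)$. By the induction hypothesis we have $\vdash_0\Gamma\Rightarrow\varphi$ and $\vdash_0\varphi,\Gamma\Rightarrow\Pi$. Joining these two cut-free derivations by one application of $(\Cut)$ on $\varphi$ gives a derivation of $\Gamma\Rightarrow\Pi$ whose only cut formula has rank $m$, hence $\vdash_{m+1}\Gamma\Rightarrow\Pi$. Applying Lemma \ref{l-cut2} exactly $m+1$ times then yields $\vdash_m\Gamma\Rightarrow\Pi$, then $\vdash_{m-1}$, and so on down to $\vdash_0\Gamma\Rightarrow\Pi$. (Alternatively one may use Lemma \ref{l-cut1} to pass in a single step from the two $\vdash_m$ premises to a $\vdash_m$ derivation of $\Gamma\Rightarrow\Pi$, and then iterate Lemma \ref{l-cut2} $m$ times.)

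The point worth flagging is why Lemma \ref{l-cut3} does not follow from Lemma \ref{l-cut2} by one appeal: a $\LIO_n$-derivation need not have a uniform bound on its cut ranks, since an $\omega$-branching node may have its $t$-th premise derived with cut rank growing with $t$. Thus ``$\vdash_\omega$'' only asserts that every cut rank is finite, not that there is a common finite bound, so Lemma \ref{l-cut2}, whose hypothesis $\vdash_{m+1}$ fixes $m$, cannot be invoked globally. The induction on the derivation tree sidesteps this by lowering one cut at a time: at each cut only finitely many (namely $m+1$, for that specific cut formula) applications of Lemma \ref{l-cut2} are required, and all bookkeeping stays confined to a single sequent. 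Everything else needed — closure of cut-free derivability under the non-cut rules, and the reduction/elimination steps of Lemmas \ref{l-cut0}, \ref{l-cut1} and \ref{l-cut2} — is already in place, so this is the only delicate ingredient.
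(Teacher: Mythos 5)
Your proof is correct and follows essentially the same route as the paper's: structural induction on the (possibly infinitely branching) derivation, where the cut case recombines the two cut-free subderivations into a $\vdash_{m+1}$ derivation and then applies Lemma \ref{l-cut2} exactly $m+1$ times. Your closing remark on why a single global appeal to Lemma \ref{l-cut2} would fail (no uniform bound on cut ranks across an $\omega$-branching tree) is a correct and worthwhile clarification of the point the paper leaves implicit.
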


\begin{proof}
By structural induction on the derivation.
Suppose that it ends with
$$
\infer[(\Cut)]{\Gamma\Rightarrow\Pi}{
\Gamma\Rightarrow\varphi & \varphi,\Gamma\Rightarrow\Pi}.
$$
By the induction hypothesis, we have
$\vdash_0 \Gamma\Rightarrow\varphi$ and 
$\vdash_0 \varphi,\Gamma\Rightarrow\Pi$. 
Let $\Rk(\varphi) = m$, then we have 
$\vdash_{m+1} \Gamma\Rightarrow\Pi$. Hence applying 
Lemma \ref{l-cut2} $m+1$ times, we obtain
$\vdash_{0} \Gamma\Rightarrow\Pi$.
\end{proof}

This completes the proof of Lemma \ref{l-cut}.

\section{Algebraic proof of cut elimination for $\LIT$}
\label{a-okada}

We here outline an algebraic proof of cut elimination 
for the full second order calculus $\LIT$ that we attribute 
to Maehara \cite{Maehara91}
and Okada \cite{Okada96,Okada02}. 
This will be useful for comparison with the parameter-free case $\LIP_{n+1}$,
that we have addressed in the main text.

Let $\fwp (\FM)$ be the set of finite sets of second order formulas, 
so that $\langle \fwp (\FM), \cup, \emptyset\rangle$ is a commutative 
idempotent monoid, 
and $\SEQ$ be
the set of sequents of $\LIT$. As before, we have
$\ldd : \fwp (\FM) \times \SEQ \longrightarrow \SEQ$ 
defined by 
$\Gamma \ldd (\Sigma\Rightarrow \Pi) := (\Gamma,\Sigma \Rightarrow \Pi)$.
So
$$\m{CF} \Def \langle \fwp(\FM), \SEQ, \Rcf, \cup, \emptyset,
\ldd \rangle$$
is a Heyting frame, where $\Gamma \Rcf (\Sigma\Rightarrow\Pi)$
iff $\LIT\vdash^{cf} \Gamma, \Sigma \Rightarrow \Pi$.

Define a Heyting-valued prestructure $\mc{CF} := \langle \m{CF}^+, \Tm, \mc D, 
\mc L\rangle$ 
by 
$p^{\mc{CF}}(\vec{t})  :=  \gamma(p(\vec{t}))$ for each predicate symbol $p$
and 
$$\mc D \Def 
 \{ F \in \Gal(\m{CF})^\Tm : F \mbox{ matches some } \tau \in 
\ABS\},$$
where $F$ \emph{matches} $\tau$ just in case 
$\tau(t) \in F(t) \subseteq \tau(t)^{\lhd}$ holds for every 
$t \in \Tm$.
This choice of $\mc D \subseteq 
\Gal(\m{CF})^\Tm$ is a logical analogue of Girard's \emph{reducibility
candidates} as noticed by Okada.

For instance, given a set variable $X$, define
$F_X \in \Gal(\m{CF})^\Tm$ by $F_X(t) := \gamma(X(t))$.
Then $X(t) \in F_X(t) \subseteq X(t)^\lhd$. Hence 
$F_X$ matches $X = \lambda x.X(x)$, so belongs to $\mc D$.

Given a set substitution $\bullet: \VAR \longrightarrow \ABS$ and a valuation 
$\mc V: \VAR \longrightarrow {\mc D}$,
we say that $\mc V$ \emph{matches} $\bullet$ 
if $\mc V (X)$ matches $X^\bullet \in \ABS$
for every $X\in \VAR$. That is, $X^\bullet(t) \in \mc V(X(t)) \subseteq
X^\bullet (t)^{\lhd}$ holds for every $X\in \VAR$ and $t\in \Tm$.

\begin{lem}\label{l-okada}
Let $\bullet$ be a set substitution
and $\mc V$ a valuation that matches $\bullet$.
Then for every $\varphi \in \FM$, 
$$\varphi^\bullet \in 
\mc V (\varphi) \subseteq \varphi^{\bullet\lhd}.$$
\end{lem}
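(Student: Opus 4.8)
The plan is to prove Lemma \ref{l-okada} by structural induction on $\varphi \in \FM$, exactly in the spirit of the ``main lemma'' (Lemma \ref{l-main}) but now carrying a set substitution $\bullet$ and a matching valuation $\mc V$ along. The two halves $\varphi^\bullet \in \mc V(\varphi)$ and $\mc V(\varphi) \subseteq \varphi^{\bullet\lhd}$ should be proved simultaneously, since each one feeds the other at the connective steps (the right-introduction-style half uses the left-membership half of the subformulas, and vice versa). For the base cases: when $\varphi = p(\vec t)$ we have $\mc V(\varphi) = \gamma(p(\vec t))$ and $p(\vec t)^\bullet = p(\vec t)$, so both inclusions are exactly property $(*)$, $\varphi \in \gamma(\varphi) \subseteq \varphi^\lhd$; when $\varphi = X(t)$, the matching hypothesis on $\mc V$ gives precisely $X^\bullet(t) \in \mc V(X)(t) = \mc V(X(t)) \subseteq X^\bullet(t)^\lhd$, which is the claim since $(X(t))^\bullet = X^\bullet(t)$; the case $\varphi = \bot$ is handled as in Lemma \ref{l-main}.

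For the inductive steps over the propositional connectives $\wedge,\vee,\rightarrow$ and the first-order quantifiers $\forall x, \exists x$, the arguments are verbatim copies of the corresponding cases in the proof of Lemma \ref{l-main}: one uses the frame rules $(\wedge\Left)/(\wedge\Right)$, $(\vee\Left)/(\vee\Right)$, $(\rightarrow\Left)/(\rightarrow\Right)$, $(\forall x\Left)/(\forall x\Right)$, $(\exists x\Left)/(\exists x\Right)$ of $\LIT$ together with the definitions of $\wedge,\vee,\rightarrow,\bigwedge,\bigvee$ in $\m{CF}^+$. The only bookkeeping difference is that we write $\varphi^\bullet$ in place of $\varphi$ throughout, noting that $\bullet$ commutes with the connectives (e.g. $(\varphi\star\psi)^\bullet = \varphi^\bullet \star \psi^\bullet$ and $(\forall x.\varphi)^\bullet = \forall x.\varphi^\bullet$) and that substituting a term for $x$ commutes with applying $\bullet$. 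For the quantifier cases one should also, as in Lemma \ref{l-main}, split on whether we are in the finitary ($n=-1$) or $\omega$-rule ($n\ge 0$) setting; but here, since $\LIT$ has the ordinary finitary first-order rules, one picks a fresh term variable $y \notin \Fv$ of the relevant context and uses $(\forall x\Right)$ / $(\exists x\Left)$ directly.

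The genuinely new case, and the one I expect to be the main obstacle, is the second-order quantifier: $\varphi = \forall X.\psi(X)$ (and dually $\exists X.\psi(X)$). Here $\mc V(\forall X.\psi) = \bigwedge_{F\in\mc D}\mc V[F/X](\psi)$, so for the first inclusion I must show $(\forall X.\psi)^\bullet = \forall X.\psi^\bullet \in \mc V[F/X](\psi)$ for \emph{every} $F \in \mc D$. Fix such an $F$; by definition of $\mc D$ it matches some $\tau \in \ABS$, i.e. $\tau(t)\in F(t)\subseteq\tau(t)^\lhd$ for all $t$. Update $\bullet$ to $\bullet'$ with $X^{\bullet'} := \tau$; then $\mc V[F/X]$ matches $\bullet'$ (it agrees with $\mc V$ off $X$, and on $X$ the matching condition is exactly the condition that $F$ matches $\tau$), so the induction hypothesis applied to $\psi$ with $\bullet'$ gives $\psi^{\bullet'} = \psi^\bullet(\tau) \in \mc V[F/X](\psi)$. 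Now invoke the $\LIT$-rule $(\forall X\Left)$ in the frame: from $\psi^\bullet(\tau) \in \mc V[F/X](\psi) \subseteq \psi^\bullet(\tau)^\lhd$ one deduces $\forall X.\psi^\bullet \in \mc V[F/X](\psi)$ by the closure argument (pass to $(\cdot)^{\rhd\lhd}$), using that $\psi^\bullet(\tau)$ is a legal instance since $\tau\in\ABS$. For the second inclusion $\mc V(\forall X.\psi) \subseteq (\forall X.\psi)^{\bullet\lhd}$: take the canonical $F_Y$ for a fresh variable $Y$ (which matches $Y = \lambda x.Y(x)$, hence lies in $\mc D$); any $\Gamma \in \bigwedge_F \mc V[F/X](\psi) \subseteq \mc V[F_Y/X](\psi) \subseteq \psi^\bullet(Y)^\lhd$, so $\Gamma \Rcf \psi^\bullet(Y)$ with $Y$ fresh, and then $(\forall X\Right)$ gives $\Gamma \Rcf \forall X.\psi^\bullet$. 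The $\exists X$ case is dual, using $(\exists X\Left)$ with a fresh $F_Y$ for the membership half and $(\exists X\Right)$ with an arbitrary matching $\tau$ for the containment half. The delicate points to get right are: freshness of the bound variables relative to $\FV$ of the contexts (handled by $\alpha$-renaming, as in Lemma \ref{l-subst}), and the fact that updating $\bullet$ to $\bullet'$ does not disturb $\varphi^\bullet$ on formulas not containing $X$ free — both standard but worth stating explicitly.
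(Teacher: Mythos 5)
Your proof is correct and follows essentially the same route as the paper's: structural induction with the matching hypothesis discharging the base case $X(t)$, the first-order steps copied from Lemma \ref{l-main}, and the second-order cases handled by passing from $F\in\mc D$ to a matching abstract $\tau$ (for the membership half, via $(\forall X\Left)$ and closure) and by instantiating at the canonical $F_Y$ for a fresh $Y$ (for the containment half, via $(\forall X\Right)$), with the dual argument for $\exists X$. The only cosmetic difference is that you note explicitly the bookkeeping facts about $\bullet$ commuting with connectives and the finitary first-order rules of $\LIT$, which the paper leaves implicit.
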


\begin{proof}
By induction on the structure of $\varphi$.\\

\noindent
(1) $\varphi$ is an atom $X(t)$. By assumption we have
$X^\bullet (t) \in \mc V(X(t)) \subseteq X^\bullet(t)^\lhd$.\\

\noindent
(2) The outermost connective of $\varphi$ is first order.
Similar to the proof of Lemma \ref{l-main}.\\

\noindent
(3) $\varphi = \forall X.\psi(X)$.
We first prove $(\forall X.\psi)^\bullet \in 
\mc V(\forall X.\psi) = \bigcap_{F\in \mc D}\mc V[F/X](\psi)$.
So let $F \in \mc D$ that matches $\tau \in \ABS$.
We update substitution $\bullet$ by letting $X^\bullet := \tau$
so that $\bullet$ matches $\mc V[F/X]$.
By the induction hypothesis $\psi^\bullet(\tau) \in 
\mc V[F/X](\psi)$. Hence for any $(\Gamma\Rightarrow\Pi) \in 
\mc V[F/X](\psi)^\rhd$, we have $\psi^\bullet(\tau),
\Gamma\Rcf\Pi$. So $(\forall X.\psi)^\bullet,
\Gamma\Rcf\Pi$ by rule $(\forall X\Left)$. Therefore
$(\forall X.\psi)^\bullet \in \mc V[F/X](\psi)$.
This proves $(\forall X.\psi)^\bullet \in 
\mc V(\forall X.\psi)$.

We next prove 
$\mc V(\forall X.\psi) \subseteq (\forall X.\psi)^{\bullet\lhd}$.
Let $\Gamma \in \mc V(\forall X.\psi)$. Choose a variable 
$Y$ such that $Y \not\in \FV(\Gamma)$.
By the induction hypothesis,
we have $\Gamma \in \mc V[F_Y/X](\psi)\subseteq 
(\psi^\bullet(Y))^\lhd$, that is, $\Gamma\Rcf\psi^\bullet(Y)$.
Hence $\Gamma\Rcf (\forall X.\psi)^\bullet$ by rule $(\forall \Right)$.
This proves $\Gamma \in (\forall X.\psi)^{\bullet\lhd}$.\\

\noindent
(3) $\varphi = \exists X.\psi(X)$.
We first prove $(\exists X.\psi)^\bullet \in 
\mc V(\exists X.\psi) = \gamma \left(\bigcup_{F \in\mc D} \mc V[F/X](\psi)
\right)$. Let $(\Gamma\Rightarrow\Pi)\in 
\left(\bigcup_{F \in\mc D} \mc V[F/X](\psi)
\right)^\rhd$ and choose a variable $Y$ such that
$Y \not\in \FV(\Gamma,\Pi)$. By the induction hypothesis,
we have $\psi^\bullet(Y) \in \mc V[F_Y/X](\psi) \subseteq 
\bigcup_{F \in\mc D} \mc V[F/X](\psi)$, so
$\psi^\bullet(Y), \Gamma\Rcf\Pi$.
Hence $(\exists X.\psi)^\bullet, \Gamma\Rcf\Pi$ by rule $(\exists X\Left)$.
Therefore $(\exists X.\psi)^\bullet \in \mc V(\exists X.\psi)$.

We next prove $\mc V(\exists X.\psi) \subseteq 
(\exists X.\psi)^{\bullet\lhd}$.
Let $\Gamma \in 
\bigcup_{F \in\mc D} \mc V[F/X](\psi)$, that is,
$\Gamma \in \mc V[F/X](\psi)$ for some $F \in \mc D$ (that match $\tau$).
By the induction hypothesis, 
$\Gamma \in \mc V[F/X](\psi) \subseteq (\psi^\bullet(\tau))^\lhd$,
so $\Gamma \Rcf \psi^\bullet (\tau)$. Hence
$\Gamma \Rcf (\exists X.\psi)^{\bullet}$ by rule $(\exists X\Right)$,
so 
$\Gamma \in (\exists X.\psi)^{\bullet\lhd}$.
This proves 
$\mc V(\exists X.\psi) \subseteq 
(\exists X.\psi)^{\bullet\lhd}$.
\end{proof}

As a consequence:

\begin{lem}
$\mc{CF}$ is a Heyting structure.
\end{lem}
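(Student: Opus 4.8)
The plan is to reduce the claim to Lemma~\ref{l-okada}. Recall that a Heyting-valued prestructure is a \emph{structure} exactly when $\mc V(\tau) \in \mc D$ holds for every valuation $\mc V : \VAR \longrightarrow \mc D$ and every abstract $\tau \in \ABS$, and that $\mc D$ consists of those $F \in \Gal(\m{CF})^\Tm$ matching some abstract. So, fixing such a $\mc V$ and $\tau = \lambda x.\varphi(x)$, I must produce an abstract in $\ABS$ that $\mc V(\tau)$ matches.

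First I would build a set substitution out of $\mc V$. For each $X \in \VAR$, the assumption $\mc V(X) \in \mc D$ supplies an abstract $\tau_X \in \ABS$ which $\mc V(X)$ matches; put $X^\bullet := \tau_X$. Then, by construction, $\mc V$ matches $\bullet$ in the sense of the text: $X^\bullet(t) \in \mc V(X(t)) \subseteq X^\bullet(t)^\lhd$ for all $X$ and all $t \in \Tm$.

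Next I would show that $\mc V(\tau)$ matches the abstract $\tau^\bullet := \lambda x.\varphi^\bullet(x) \in \ABS$. Fix $t \in \Tm$. Since a set substitution only rewrites atoms $Y(s)$, it commutes with term substitution, so $\tau(t)^\bullet = \tau^\bullet(t)$; and by the definition of the interpretation on abstracts, $\mc V(\tau)(t) = \mc V(\tau(t))$. Applying Lemma~\ref{l-okada} to the formula $\tau(t) \in \FM$, with the substitution $\bullet$ and the valuation $\mc V$ (which matches $\bullet$ by the previous step), gives $\tau(t)^\bullet \in \mc V(\tau(t)) \subseteq \tau(t)^{\bullet\lhd}$, i.e.\ $\tau^\bullet(t) \in \mc V(\tau)(t) \subseteq \tau^\bullet(t)^\lhd$. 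Since $t$ was arbitrary, $\mc V(\tau)$ matches $\tau^\bullet$; hence $\mc V(\tau) \in \mc D$, and $\mc{CF}$ is a Heyting structure.

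I do not expect any genuine obstacle: all the work is already done in Lemma~\ref{l-okada}. The only points needing a little care are that the substitution $\bullet$ can be chosen uniformly out of $\mc V$ (relying on the variable convention so that no clashes occur), and the two purely definitional identities $\tau(t)^\bullet = \tau^\bullet(t)$ and $\mc V(\tau)(t) = \mc V(\tau(t))$, which just unwind the notions of substitution and of the interpretation of abstracts.
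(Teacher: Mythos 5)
Your proof is correct and follows essentially the same route as the paper: build the set substitution $\bullet$ from the valuation via the matching abstracts $\tau_X$, observe that $\mc V$ matches $\bullet$, and apply Lemma~\ref{l-okada} to conclude that $\mc V(\tau)$ matches $\tau^\bullet$. The two definitional identities you flag ($\tau(t)^\bullet = \tau^\bullet(t)$ and $\mc V(\tau)(t) = \mc V(\tau(t))$) are left implicit in the paper but are exactly the right points to check.
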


\begin{proof}
Let $\mc V$ be a valuation and $\tau$ an abstract.
Our goal is to show that $\mc V(\tau) \in \mc D$, that is,
there is some $\tau_0$ such that 
$\tau_0(t) \in \mc V(\tau(t)) \subseteq \tau_0(t)^\lhd$
holds for every $t\in \Tm$. Here 
$\mc V(\tau)$ is defined by 
$\mc V(\tau)(t) := \mc V(\tau(t))$.

Since $\mc V$ is a valuation into $\mc D$, every set variable $X$
is associated with an abstract $\tau_X$ so that
$\mc V(X)$ matches $\tau_X$. Define a set substitution $\bullet$
by $X^\bullet := \tau_X$. Then $\mc V$ matches $\bullet$.
Hence by Lemma \ref{l-okada} we obtain
$\tau^{\bullet}(t) \in \mc V(\tau(t)) \subseteq \tau^\bullet(t)^\lhd$.
Thus $\tau^\bullet$ is the desired abstract.
\end{proof}

Next, define a valuation $\mc I$
by $\mc I (X) := F_X$. We then have
$X(t) \in \mc I(X(t)) = \gamma(X(t)) \subseteq X(t)^\lhd$ for every $X\in \VAR$ and 
$t\in \Tm$, so 
$\mc I$ matches the identity substitution. Hence 
we have $\varphi \in 
\mc I (\varphi) \subseteq \varphi^{\lhd}$ for every formula
$\varphi\in \FM$.

\begin{thm}[Algebraic cut elimination for $\LIT$] \label{t-mo}
For every sequent $\Gamma\Rightarrow\Pi$, the following are equivalent.
\begin{enumerate}
\item $\Gamma\Rightarrow\Pi$ is provable in $\LIT$.
\item $\Gamma\Rightarrow\Pi$ is valid in all Heyting-valued structures.
\item $\Gamma\Rightarrow\Pi$ is cut-free provable in $\LIT$.
\end{enumerate}
\end{thm}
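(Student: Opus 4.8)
The plan is to prove the chain $(1)\Rightarrow(2)\Rightarrow(3)\Rightarrow(1)$, so that all three conditions become equivalent. Two of the three arrows are essentially immediate. For $(1)\Rightarrow(2)$ I would invoke soundness: Lemma~\ref{l-sound} already asserts that every $\LIT$-provable sequent is valid in all Heyting-valued structures, the proof being the routine induction on the derivation. For $(3)\Rightarrow(1)$ there is nothing to do, since a cut-free derivation of $\Gamma\Rightarrow\Pi$ in $\LIT$ is in particular a derivation of it in $\LIT$.

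The substance of the theorem is the arrow $(2)\Rightarrow(3)$, and the strategy is to test validity against the syntactic structure $\mc{CF}$ built on cut-free provability in $\LIT$. First I would record that $\m{CF}^+$ is a complete Heyting algebra by Lemma~\ref{l-heyting}, and that $\mc{CF}=\langle\m{CF}^+,\Tm,\mc D,\mc L\rangle$ --- with $\mc D$ the reducibility-candidate domain consisting of those $F\in\Gal(\m{CF})^\Tm$ that match some abstract --- is a genuine Heyting structure, as established just above; hence $\mc{CF}$ is among the structures quantified over in~$(2)$. Next I would take the canonical valuation $\mc I$ with $\mc I(X):=F_X$; since $\mc I$ matches the identity set substitution, Lemma~\ref{l-okada} yields $\varphi\in\mc I(\varphi)\subseteq\varphi^{\lhd}$ for every $\varphi\in\FM$. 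Now assume $\Gamma\Rightarrow\Pi$ is valid in all Heyting structures; instantiating validity at $\mc{CF}$, the valuation $\mc I$, and the identity term substitution gives $\mc I(\Gamma)\subseteq\mc I(\Pi)$. On one hand $\Gamma\in\mc I(\Gamma)$: when $\Gamma=\emptyset$ this is trivial since $\mc I(\emptyset)=\top$, and otherwise $\mc I(\Gamma)=\bigcap_{\varphi\in\Gamma}\mc I(\varphi)$, and for each $\varphi\in\Gamma$ the facts $\varphi\in\mc I(\varphi)$ and weakening-closure of $\mc I(\varphi)\in\Gal(\m{CF})$ (together with $\{\varphi\}\subseteq\Gamma$) give $\Gamma\in\mc I(\varphi)$. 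On the other hand $\mc I(\Pi)\subseteq\Pi^{\lhd}$: for $\Pi=\{\psi\}$ this is the right half of Lemma~\ref{l-okada}, and for $\Pi=\emptyset$ it holds because $\mc I(\Pi)=\gamma(\emptyset)=\bot$ is contained in the Galois-closed set $\Pi^{\lhd}$. Combining the two inclusions gives $\Gamma\in\Pi^{\lhd}$, that is $\LIT\vdash^{cf}\Gamma\Rightarrow\Pi$, which is~$(3)$.

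I expect no real obstacle inside the proof of Theorem~\ref{t-mo} itself, since the work has been front-loaded into the two lemmas it quotes. The genuinely delicate ingredient is Lemma~\ref{l-okada} at the second-order quantifier cases, where the ``matching'' requirement defining $\mc D$ --- the logical analogue of Girard's reducibility candidates --- is precisely what allows an arbitrary $F\in\mc D$ to be pushed through the induction hypotheses for $\forall X.\psi$ and $\exists X.\psi$; and the companion fact that $\mc{CF}$ is actually a Heyting structure, i.e.\ $\mc V(\tau)\in\mc D$ for every valuation $\mc V$ and abstract $\tau$, is itself a non-obvious consequence of that lemma obtained via a matching set substitution. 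Once those are granted, the only care needed here is the bookkeeping for empty antecedents and succedents, which is dispatched using the weakening-closure and $\gamma$-closure properties of Galois-closed sets recorded right after the definition of $\mc{CF}$.
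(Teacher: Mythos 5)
Your proposal is correct and follows essentially the same route as the paper: $(1)\Rightarrow(2)$ by Lemma \ref{l-sound}, $(3)\Rightarrow(1)$ trivially, and $(2)\Rightarrow(3)$ by instantiating validity at the cut-free structure $\mc{CF}$ with the canonical valuation $\mc I$ and reading off $\Gamma\in\mc I(\Gamma)\subseteq\mc I(\Pi)\subseteq\Pi^{\lhd}$ from Lemma \ref{l-okada}. The only difference is that you spell out the empty-$\Gamma$/$\Pi$ bookkeeping that the paper leaves implicit.
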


\begin{proof}
((1) $\Rightarrow$ (2)) By Lemma \ref{l-sound}.\\
((2) $\Rightarrow$ (3)) By
$\Gamma \in \mc I (\Gamma) \subseteq \mc I (\Pi) \subseteq 
\Pi^{\lhd}$.\\
((3) $\Rightarrow$ (1)) Trivial.
\end{proof}


\begin{thebibliography}{Banaschewski56}

\bibitem{Aehlig05}
K.\ Aehlig.
\newblock Induction and inductive definitions in fragments of second order arithmetic.
\newblock {\em Journal of Symbolic Logic}, 70:1087--1107, 2005.

\bibitem{Aehlig08}
K.\ Aehlig.
\newblock Parameter-free polymorphic types.
\newblock {\em Annals of Pure and Applied Logic}, 156:3--12, 2008.

\bibitem{Akiyoshi17}
R.\ Akiyoshi.
An ordinal-free proof of the complete cut-elimination theorem
for $\Pi^1_1\mbox{-}CA + BI$ with the $\omega$-rule.
\emph{IfCoLog Journal of Logics and their Applications}, 4(4):867--883, 
2017.

\bibitem{Akiyoshi_Mints11}
R.\ Akiyoshi and G.\ Mints. 
\newblock An extension of the Omega-rule.
\newblock {\em Archive for Mathematical logic}, 55(3):593--603, 2016.

\bibitem{Akiyoshi_Terui16}
R.\ Akiyoshi and K.\ Terui.
Strong normalization for the parameter-free polymorphic lambda calculus 
based on the Omega-rule. \emph{Proceedings of FSCD 2016},
5:1--15, 2016.

\bibitem{Altenkirch_Coquand01}
T.\ Altenkirch and T.\ Coquand. 
\newblock 
A finitary subsystem of the polymorphic $\lambda$-calculus.
\newblock \emph{Proceedings of TLCA 2001},
22--28, 2001.

\bibitem{Arai18}
T.\ Arai. Cut-eliminability in second order logic calculi.
\emph{Annals of the Japan Association of for Philosophy of Science},
27:47--60, 2018.

\bibitem{Banaschewski56}
B.\ Banaschewski.
H\"{u}llensysteme und Erweiterungen von Quasi-Ordnungen.
\emph{Zeitschrift f\"{u}r Mathematische Logik und Grundlagen der Mathematik},
2: 35--46, 1956.


\bibitem{HB04}
J. Harding and G. Bezhanishvili.
MacNeille completions of Heyting algebras.
\emph{The Houston Journal of Mathematics}, 30(4):937--952, 2004.

\bibitem{Birkhoff40}
G.\ Birkhoff. \emph{Lattice Theory}. AMS, 1940.

\bibitem{Buchholz81a}
W.\ Buchholz.
\newblock The {$\Omega_{\mu +1}$}-rule.
\newblock In \cite{BFPS81}, 188--233, 1981.

\bibitem{Buchholz01}
W.\ Buchholz.
\newblock Explaining the {G}entzen-{T}akeuti reduction steps.
\newblock {\em Archive for Mathematical Logic}, 40:255--272, 2001.

\bibitem{BFPS81}
W.\ Buchholz, S.\ Feferman, W.\ Pohlers and W.\
Sieg.
\newblock {\em Iterated Inductive Definitions and Subsystems of
  Analysis: Recent Proof-Theoretical Studies}, LNM 897, Springer, 1981.

\bibitem{Buchholz_Schutte88}
W.\ Buchholz and K.\ Sch{\"{u}}tte.
\newblock {\em Proof Theory of Impredicative Subsystems of Analysis},
\newblock 
Bibliopolis, 1988.



\bibitem{CGT12}
A.\ Ciabattoni, N. Galatos and K. Terui. 
Algebraic proof theory for substructural logics: 
cut-elimination and completions. \emph{Annals of 
Pure and Applied Logic}, 163(3):266-290, 2012.


\bibitem{DP02}
B.A.\ Davey and H.A.\ Priestley. 
\emph{Introduction to lattices and order, second edition}. Cambridge University
Press, 2002.

\bibitem{Funayama44}
N.\ Funayama.
On the completion by cuts of distributive lattices.
\emph{Proceedings of the Imperial Academy, Tokyo}, 20:1--2, 1944.

\bibitem{GJ13}
N.\ Galatos and P.\ Jipsen. 
Residuated frames with applications to decidability. 
\emph{Transactions of the AMS}, 365(3):1219--1249, 2013.

\bibitem{GH01}
M.\ Gehrke and J.\ Harding.
Bounded lattice expansions.
\emph{Journal of Algebra}, 238(1):345--371, 2001.

\bibitem{GJ04}
M.\ Gehrke and B.\ J\'{o}nsson.
Bounded distributive lattice expansions.
\emph{Mathematica Scandinavica}, 94(1):13--45, 2004.

\bibitem{Girard87}
J.-Y.\ Girard. \emph{Proof theory and logical complexity, Vol.\ I.}
Bibliopolis, 1987.


\bibitem{JS11}
G.\ J\"{a}ger and T.\ Studer. 
A Buchholz rule for modal fixed point logics.
\emph{Logica Universalis} 5(1):1--19, 2011.

\bibitem{JT51}
B. J\'{o}nsson and A. Tarski. Boolean algebras with operators
I. \emph{American Journal of Mathematics}, 73: 891--939, 1951.

\bibitem{Maehara91}
S.\ Maehara. Lattice-valued representation of the cut-elimination theorem.
\emph{Tsukuba Journal of Mathematics}, 15(9):509--521, 1991.

\bibitem{MS12}
G.\ Mints and T.\ Studer. 
Cut-elimination for the mu-calculus with one variable. 
\emph{Fixed Points in Computer Science},
 77: 47--54, 2012.

\bibitem{Okada96}
M.\ Okada.
Phase semantics for higher order completeness, 
cut-elimination and normalization proofs (extended abstract).
\emph{Electric Notes in Theoretical Computer Science}, 3: 154, 1996.

\bibitem{Okada99}
M.\ Okada. 
Phase semantic cut-elimination and normalization proofs of 
first- and higher order linear logic. 
\emph{Theoretical Computer Science}, 227:333--396, 1999.

\bibitem{Okada02}
M.\ Okada.
A uniform semantic proof for cut-elimination and completeness 
of various first and higher order logics. 
\emph{Theoretical Computer Science}, 281(1-2): 471--498, 2002.

\bibitem{Pappinghaus83}
P.\ P\"{a}ppinghaus. 
Completeness properties of classical theories of finite type
and the normal form theorem. \emph{Dissertationes Mathematicae},
207, 1983.

\bibitem{Schmidt56}
J.\ Schmidt.
Zur Kennzeichnung der Dedekind-MacNeilleschen Hulle einer geordneten Menge.
\emph{Archiv der Mathematik}, 7:241--249, 1956.

\bibitem{Tait66}
W. Tait. 
A nonconstructive proof of 
Gentzen's Hauptsatz for second 
order predicate logic. 
\emph{Bulletin of American Mathematical Society}, 72:980--983, 1966.

\bibitem{Takeuti53}
G.\ Takeuti. 
On the generalized logic calculus. 
\emph{Japanese Journal of Mathematics}, 23:39--96, 1953.

\bibitem{Takeuti58b}
G.\ Takeuti. 
Remark on the fundamental conjecture of GLC. 
\emph{Journal of the Mathematical Society of Japan},
10(1):44--45, 1958.

\bibitem{Takeuti58}
G.\ Takeuti. 
On the fundamental conjecture of GLC V.
\emph{Journal of the Mathematical Society of Japan},
10(2):121--134, 1958.


\bibitem{TV07}
M.\ Theunissen and Y.\ Venema.
MacNeille completions of lattice expansions.
\emph{Algebra Universalis}, 57:143--193, 2007.















\end{thebibliography}
\end{document}